\author{
	Yin Tat Lee \thanks{
	\texttt{yintat@uw.edu}
	University of Washington \& Microsoft Research Redmond. Part of the work done while visiting Simons Institute for the Theory of Computing at Berkeley.}
	\and
	Zhao Song \thanks{
	\texttt{zhaosong@uw.edu}
	University of Washington \& UT-Austin. Part of the work done while visiting University of Washington and hosted by Yin Tat Lee. Part of the work done while visiting Simons Institute for the Theory of Computing at Berkeley and hosted by Peter L. Bartlett, Nikhil Srivastava, Santosh Vempala, and David P. Woodruff.}
	\and
	Qiuyi Zhang \thanks{
	\texttt{qiuyi@math.berkeley.edu}
	University of California, Berkeley.}
}
\date{}
\title{Solving Empirical Risk Minimization in \\the Current Matrix Multiplication Time}
\newtheorem{theorem}{Theorem}[section]
\newtheorem{lemma}[theorem]{Lemma}
\newtheorem{definition}[theorem]{Definition}
\newtheorem{corollary}[theorem]{Corollary}
\newtheorem{fact}[theorem]{Fact}
\newtheorem{remark}[theorem]{Remark}
\newcommand{\wh}{\widehat}
\newcommand{\wt}{\widetilde}
\newcommand{\ov}{\overline}
\newcommand{\R}{\mathbb{R}}
\newcommand{\LHS}{\mathrm{LHS}}
\renewcommand{\varepsilon}{\epsilon}
\renewcommand{\tilde}{\wt}
\renewcommand{\d}{\mathrm{d}}
\DeclareMathOperator*{\E}{{\bf {E}}}
\DeclareMathOperator{\poly}{poly}
\DeclareMathOperator{\new}{new}
\DeclareMathOperator{\pre}{pre}
\DeclareMathOperator{\old}{old}
\DeclareMathOperator{\tr}{tr}
\algnewcommand{\LineComment}[1]{\State \(\triangleright\) #1}
\newcommand*{\RN}[1]{\expandafter\@slowromancap\romannumeral #1@}
\newcommand{\define}[4][ignore]{%
  \ifstrequal{#1}{ignore}{}{
  \@namedef{thmtitle@#2}{#1}}%
  \@namedef{thm@#2}{#4}%
  \@namedef{thmtypen@#2}{lemma}%
  \newtheorem{thmtype@#2}[theorem]{#3}%
  \newtheorem*{thmtypealt@#2}{#3~\ref{#2}}%
}
\newcommand{\state}[1]{%
  \@namedef{curthm}{#1}
  \@ifundefined{thmtitle@#1}{
  \begin{thmtype@#1}
    }{
  \begin{thmtype@#1}[\@nameuse{thmtitle@#1}]
  }
    \label{#1}
    \@nameuse{thm@#1}
  \end{thmtype@#1}
  \@ifundefined{thmdone@#1}{
  \@namedef{thmdone@#1}{stated}%
  }{}
}
\newcommand{\restate}[1]{%
  \@namedef{curthm}{#1}
  \@ifundefined{thmtitle@#1}{
    \begin{thmtypealt@#1}
    }{
  \begin{thmtypealt@#1}[\@nameuse{thmtitle@#1}]
  }
    \@nameuse{thm@#1}
  \end{thmtypealt@#1}
  \@ifundefined{thmdone@#1}{
  \@namedef{thmdone@#1}{stated}%
  }{}
}
\newcommand{\thmlabel}[1]{
  \@ifundefined{thmdone@\@nameuse{curthm}}{\label{#1}
    }{\tag*{\eqref{#1}}}
}
\begin{document}

\begin{titlepage}
  \maketitle
  \begin{abstract}
Many convex problems in machine learning and computer science share the same form: 
 \begin{align*}
 \min_{x} \sum_{i} f_i( A_i x + b_i),
 \end{align*}
where $f_i$ are convex functions on $\R^{n_i}$ with constant $n_i$, $A_i \in \R^{n_i \times d}$, $b_i \in \R^{n_i}$ and $\sum_i n_i = n$. This problem generalizes linear programming and includes many problems in empirical risk minimization.

In this paper, we give an algorithm that runs in time
\begin{align*}
O^* (  ( n^{\omega} + n^{2.5 - \alpha/2} + n^{2+ 1/6} ) \log (n / \delta)  )
\end{align*}
where $\omega$ is the exponent of matrix multiplication, $\alpha$ is the dual exponent of matrix multiplication, and $\delta$ is the relative accuracy. 
Note that the runtime has only a log dependence on the condition numbers or other data dependent parameters and these are captured in $\delta$. 
For the current bound $\omega \sim 2.38$ [Vassilevska Williams'12, Le Gall'14] and $\alpha \sim 0.31$ [Le Gall, Urrutia'18], our runtime $O^* ( n^{\omega} \log (n / \delta))$ matches the current best for solving a dense least squares regression problem, a special case of the problem we consider. Very recently, [Alman'18] proved that all the current known techniques can not give a better $\omega$ below $2.168$ which is larger than our $2+1/6$.

Our result generalizes the very recent result of solving linear programs in the  current matrix multiplication time [Cohen, Lee, Song'19] to a more broad class of problems. Our algorithm proposes two concepts which are different from [Cohen, Lee, Song'19] :\\
$\bullet$ We give a robust deterministic central path method, whereas the previous one is a stochastic central path which updates weights by a random sparse vector. \\
$\bullet$ We propose an efficient data-structure to maintain the central path of interior point methods even when the weights update vector is dense.

  \end{abstract}
  \thispagestyle{empty}
\end{titlepage}

\section{Introduction}

Empirical Risk Minimization (ERM) problem is a fundamental question in statistical machine learning. There are a huge number of papers that have considered this topic \cite{n83,v92,pj92,n04,bbm05,bb08,njls09,mb11,fgrw12,lsb12,jz13,v13,sz13,db14,dbl14,fgks15,db16,slcny17,zyj17,zx17,zwxxz17,gs17,ms17,ns17,akklns17,c18,jlgj18} as almost all convex optimization machine learning can be phrased in the ERM framework \cite{sb14,v92}. While the statistical convergence properties and generalization bounds for ERM are well-understood, a general runtime bound for general ERM is not known although fast runtime bounds do exist for specific instances \cite{akps19}. 

Examples of applications of ERM include linear regression, LASSO \cite{t96}, elastic net \cite{zh05}, logistic regression \cite{c58,hls13}, support vector machines \cite{cv95}, $\ell_p$ regression \cite{c05,ddhkm09,bcll18,akps19}, quantile regression \cite{k00,kh01,k05}, AdaBoost \cite{fs97}, kernel regression \cite{n64,w64}, and mean-field variational inference \cite{xjr02}.

The classical Empirical Risk Minimization problem is defined as
\begin{align*}
\min_{x} \sum_{i=1}^m f_i ( a_i^\top x + b_i )
\end{align*}
where $f_i : \R \rightarrow \R$ is a convex function, $a_i \in \R^d$, and $b_i \in \R$, $\forall i \in [m]$. Note that this formulation also captures most standard forms of regularization as well.

Letting $y_i = a_i^\top x + b_i$, and $z_i = f_i( a_i^\top x + b_i )$ allows us to rewrite the original problem in the following sense,
\begin{align}\label{eq:apply_main_theorem_and_get_ERM}
\min_{x,y,z} ~ & ~ \sum_{ i = 1 }^m z_i \\
\text{~s.t.~} & ~ A x + b = y \notag \\
& ~ (y_i, z_i) \in K_i = \{ (y_i , z_i) : f_i(y_i) \leq z_i \}, \forall i \in [m] \notag
\end{align}
We can consider a more general version where dimension of $K_i$ can be arbitrary, e.g. $n_i$. Therefore, 
we come to study the general $n$-variable form $$\min_{ x \in \prod_{i=1}^m K_i, A x = b } c^\top x$$ where $\sum_{i=1}^m n_i = n$. We state our main result for solving the general model.
\begin{theorem}[Main result, informal version of Theorem~\ref{thm:main_result_formal}]\label{thm:main_result_informal}
Given a matrix $A \in \R^{d \times n}$, two vectors $b \in \R^d$, $c \in \R^n$, and $m$ compact convex sets $K_1, K_2, \cdots, K_m$. Assume that there is no redundant constraints and $n_i = O(1)$, $\forall i \in [m]$. There is an algorithm (procedure \textsc{Main} in Algorithm~\ref{alg:main}) that solves
\begin{align*}
\min_{ x \in \prod_{i=1}^m K_i , A x = b } c^\top x
\end{align*}
up to $\delta$ precision and runs in expected time
\begin{align*}
\wt{O} \left( ( n^{\omega+o(1)} + n^{2.5-\alpha/2 + o(1)} + n^{2+1/6 + o(1) } ) \cdot \log ( \frac{n}{\delta} ) \right)
\end{align*}
where $\omega$ is the exponent of matrix multiplication, $\alpha$ is the dual exponent of matrix multiplication. 

For the current value of $\omega \sim 2.38$ \cite{w12,l14} and $\alpha \sim 0.31$ \cite{gu18}, the expected time is simply $n^{\omega + o(1)} \wt{O}(\log ( \frac{n}{\delta} ) )$.
\end{theorem}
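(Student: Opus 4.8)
The plan is to solve the problem with an interior-point / central-path method, generalizing the linear-programming algorithm of Cohen--Lee--Song. Since each block $K_i \subseteq \R^{n_i}$ has $n_i = O(1)$, it carries a self-concordant barrier $\phi_i$ with parameter $\nu_i = O(1)$ (e.g. the universal barrier, whose gradient and Hessian are computable in $O(1)$ time in constant dimension); then $\phi = \sum_i \phi_i$ is a barrier for $\prod_i K_i$ with parameter $\nu = \sum_i \nu_i = O(n)$, and ``no redundant constraints'' guarantees the barrier for $\{Ax=b\}\cap\prod_i K_i$ is well behaved (and $A$ has full row rank). Define the central path $x(t) = \arg\min\{ c^\top x/t + \phi(x) : Ax = b\}$; as $t\to 0$ it converges to the optimum, and a short-step scheme reaches $\delta$-accuracy in $\widetilde O(\sqrt\nu\,\log(1/\delta)) = \widetilde O(\sqrt n\,\log(1/\delta))$ steps. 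The work in a step is dominated by applying the projection operator $P_W = W^{1/2}A^\top(AWA^\top)^{-1}AW^{1/2}$ in the Hessian metric, where $W = (\nabla^2\phi(x))^{-1}$ is block diagonal with $O(1)\times O(1)$ blocks; done from scratch this costs $\approx n^\omega$ per step, for a total of $n^{\omega+1/2}$, which is what we must beat.

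The first ingredient is a \emph{robust deterministic central path}. Instead of following the exact path, I would maintain a triple $(x,t,\widetilde W)$ where $\widetilde W$ is only required to be multiplicatively close (block by block) to $(\nabla^2\phi(x))^{-1}$, and track a centrality potential such as $\Phi = \sum_i \cosh(\lambda\,\|g_i\|)$, where $g_i$ is the $i$-th block of the gradient of the penalized objective measured in the local norm. The key lemma to prove is that one approximate Newton step, computed using $\widetilde W$ and an \emph{approximate} projection $\widetilde P \approx P_{\widetilde W}$, together with a multiplicative decrease of $t$ by $1-\Theta(1/\sqrt n)$, keeps $\Phi$ bounded (contracts it toward its floor). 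This decouples the progress analysis from exact linear algebra and, crucially, lets the algorithm \emph{defer} updating coordinates of $\widetilde W$ whose true Hessian has not yet drifted by more than a constant factor.

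The second and main ingredient is a data structure that maintains an approximation to $(A\widetilde W A^\top)^{-1}$, and hence supports the query ``apply $P_{\widetilde W}$ to a given vector'', under changes to $\widetilde W$. Its pieces: (i) keep $\widetilde W$ as a slowly-changing base plus a correction supported on a maintained set $S$ of coordinates, periodically rebuilding the base from scratch in $n^\omega$ time; (ii) fold the rank-$|S|$ correction into the stored inverse via the Woodbury identity, which is where a rectangular multiplication of an $n\times |S|$ matrix by an $|S|\times n$ matrix occurs --- the source of the dual matrix-multiplication exponent $\alpha$ in the bound; (iii) handle the fact that the \emph{deterministic} robust path may nudge \emph{every} coordinate of $\widetilde W$ at each step (the dense-update obstacle, absent in the sparse stochastic path of CLS) by sorting coordinates by pending change and committing only the largest batch of a tuned size $n^a$, charging the uncommitted drift to an amortization argument that bounds how often any coordinate can move a lot. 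The hard part --- and where I expect the real work --- is exactly this amortized accounting: one must choose the batch size $n^a$ and the rebuild period so as to simultaneously (a) keep $\widetilde W$ within the multiplicative tolerance the robust-path lemma requires, at every coordinate and every step, and (b) make the total cost over all $\widetilde O(\sqrt n)$ steps equal $\widetilde O(n^\omega + n^{2.5-\alpha/2} + n^{2+1/6})$, the last term arising from the optimal trade-off between batch size and rebuild frequency.

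Finally I would assemble the algorithm: initialize via a modified ``big-$M$'' / homotopy instance whose central path at a known parameter is easy to locate; run the robust central path driven by the data structure; and terminate by rounding the final near-central iterate to a point that is $\delta$-optimal and $\delta$-feasible, with all dependence on condition numbers and on the geometry of the $K_i$ absorbed into the $\log(n/\delta)$ factor. Correctness follows from the robust-path invariant ($\Phi$ stays bounded $\Rightarrow$ the iterate stays near $x(t)$) together with the standard endgame analysis; the running time is the $\widetilde O(\sqrt n)$ iteration count times the per-iteration cost, re-summed through the data-structure amortization to give the claimed bound.
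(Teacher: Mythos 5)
Your outline correctly reproduces two of the paper's three ingredients: the robust ($\ell_\infty$-tolerant) central path with a soft-max-type potential over the blockwise centrality errors, and the Woodbury-based maintenance of $(A\wt V A^\top)^{-1}$ with sorted batching of size $n^a$ and amortized accounting, which is where the $n^{\omega}$ and $n^{2.5-\alpha/2}$ terms come from. But there is a genuine gap in the per-iteration cost: even with the inverse perfectly maintained, your plan still \emph{applies} the dense $n\times n$ projection $P_{\wt V}=\wt V^{1/2}A^\top(A\wt VA^\top)^{-1}A\wt V^{1/2}$ to a dense direction $h$ at every step. That matrix--vector product costs $\Theta(nd)=\Theta(n^2)$ when $d=\Theta(n)$, and over the $\wt O(\sqrt n)$ iterations it gives $\wt O(n^{2.5})$, which already exceeds the claimed bound $n^{\omega}+n^{2.5-\alpha/2}+n^{2+1/6}$ for the current $\omega\approx 2.373$. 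You correctly note that the deterministic robust path produces dense updates (unlike the sparse stochastic steps of Cohen--Lee--Song), but the batching/amortization you describe only controls how often $\wt V$ and the stored inverse must be refreshed; it does nothing to cheapen the matvec itself.

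The missing idea in the paper is sketching the \emph{output} of the projection: precompute $[R_1^\top;\dots;R_T^\top]^\top P$ for a stack of $\wt O(\sqrt n)$ independent $b\times n$ sketches (total cost $\wt O(n^{\omega})$ by fast rectangular multiplication, folded into the same maintenance data structure), maintain the exact iterates $x,s$ only \emph{implicitly} as $u_1+Fu_2$ and $u_3+Gu_4$, and maintain explicit approximations $\ov x,\ov s$ by adding the cheap sketched steps $R_l^\top R_l P(\cdot)$ at cost $\wt O(nb)=\wt O(n^{1.5})$ per iteration, resynchronizing $\ov x\leftarrow x$ periodically. One must then prove (via a Bernstein bound over the at most $\sqrt n$ iterations between resets) that the accumulated $\ell_\infty$ sketching error in each block stays below the $O(1/\mathrm{poly}\log n)$ tolerance that the robust central path lemma affords --- this is precisely why the robust path is needed in the first place, and it is the step your proposal would have to add to close the argument.
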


\begin{remark}
More precisely, when $n_i$ is super constant, our running time depends polynomially on $\max_{i\in [m]} n_i$ (but not exponential dependence).
\end{remark}

Also note that our runtime depends on diameter, but logarithmically to the diameter. So, it can be applied to linear program by imposing an artificial bound on the solution.

\subsection{Related Work}

First-order algorithms for ERM are well-studied and a long series of accelerated stochastic gradient descent algorithms have been developed and optimized \cite{n98,jz13,xz14,sz14,fgks15,lmh15,mlf15,ay16,rhsps16,s16,ah16,slb17,ms17,lmh17,ljcj17,a17a,a17b,a18a,a18b}. However, these rates depend polynomially on the Lipschitz constant of $\nabla f_i$ and in order to achieve a $\log(1/\epsilon)$ dependence, the runtime will also have to depend on the strong convexity of the $\sum_i f_i$. In this paper, we want to focus on algorithms that depend logarithmically on diameter/smoothness/strong convexity constants, as well as the error parameter $\epsilon$.  Note that gradient descent and a direct application of Newton's method do not belong to these class of algorithms, but for example, interior point method and ellipsoid method does.

Therefore, in order to achieve high-accuracy solutions for non-smooth and non strongly convex case, most convex optimization problems will rely on second-order methods, often under the general interior point method (IPM) or some sort of iterative refinement framework. So, we note that our algorithm is thus optimal in this general setting since second-order methods require at least $n^{\omega}$ runtime for general matrix inversion. 

Our algorithm applies the interior point method framework to solve ERM. The most general interior point methods require $O(\sqrt{n})$-iterations of linear system solves \cite{n98}, requiring a naive runtime bound of $O(n^{\omega + 1/2})$. Using the inverse maintenance technique \cite{v89b,cls18}, one can improve the running time for LP to $O(n^{\omega})$. This essentially implies that almost all convex optimization problems can be solved, up to subpolynomial factors, as fast as linear regression or matrix inversion!

The specific case of $\ell_2$ regression can be solved in $O(n^\omega)$ time since the solution is explicitly given by solving a linear system. In the more general case of $\ell_p$ regression, \cite{bcll18} proposed a $\widetilde{O}_p(n^{|1/2 -1/p|})$-iteration iterative solver with a naive $O(n^\omega)$ system solve at each step. Recently, \cite{akps19} improved the runtime to $\widetilde{O}_p(n^{\max{(\omega,7/3)}})$, which is current matrix multiplication time as $\omega > 7/3$. However, both these results depend exponentially on $p$ and fail to be impressive for large $p$. Otherwise, we are unaware of other ERM formulations that have have general runtime bounds for obtaining high-accuracy solutions.

Recently several works \cite{aw18a,aw18b,a18} try to show the limitation of current known techniques for improving matrix multiplication time. Alman and Vassilevska Williams \cite{aw18b} proved limitations of using the Galactic method applied to many tensors of interest (including Coppersmith-Winograd tensors \cite{cw87}). More recently, Alman \cite{a18} proved that by applying the Universal method on those tensors, we cannot hope to achieve any running time better than $n^{2.168}$ which is already above our $n^{2+1/6}$.  


\section{Overview of Techniques}

In this section, we discuss the key ideas in this paper. Generalizing the stochastic sparse update approach of \cite{cls18} to our setting is a natural first step to speeding up the matrix-vector multiplication that is needed in each iteration of the interior point method. In linear programs, maintaining approximate complementary slackness means that we maintain $x, s$ to be close multiplicatively to the central path under some notion of distance. However, the generalized notion of complementary slackness requires a barrier-dependent notion of distance. Specifically, if $\phi(x)$ is a barrier function, then our distance is now defined as our function gradient being small in a norm depending on $\nabla^2 \phi(x)$. One key fact of the stochastic sparse update is that the variance introduced does not perturb the approximation too much, which requires understanding the second derivative of the distance function. For our setting, this would require bounding the 4th derivative of $\phi(x)$, which may not exist for self-concordant functions. So, the stochastic approach may not work algorithmically (not just in the analysis) if $\phi(x)$ is assumed to be simply self-concordant. Even when assumptions on the 4th derivative of $\phi(x)$ are made, the analysis will become significantly more complicated due to the 4th derivative terms. To avoid these problems, the main contributions of this paper is to 1) introduce a robust version of the central path and 2) exploit the robustness via sketching to apply the desired matrix-vector multiplication fast.

More generally, our main observation is that one can generally speed up an iterative method using sketching if the method is robust in a certain sense. To speed up interior point methods, in Section~\ref{sec:robust_central_path_intro} and \ref{sec:robust_central_path}, we give a robust version of the interior point method; and in Section~\ref{sec:central_path_maintenance}, we give a
data structure to maintain the sketch; and in Section~\ref{sec:combine}, we show how to combine them together. We provide several basic notations and definitions for numerical linear algebra in Section~\ref{sec:preli}. In Section~\ref{sec:initial_point_termination_condition}, we provide some classical lemmas from the literature of interior point methods. In Section~\ref{sec:fastjl}, we prove some basic properties of the sketching matrix. Now, we first begin with an overview of our robust central path and then proceed with an overview of sketching iterative methods.

\subsection{Central Path Method}

We consider the following optimization problem
\begin{equation}
\min_{x\in\prod_{i=1}^{m}K_{i},Ax=b}c^{\top}x\label{eq:origin_problem}
\end{equation}
where $\prod_{i=1}^{m}K_{i}$ is the direct product of $m$ low-dimensional
convex sets $K_{i}$. We let $x_{i}$ be the $i$-th block of $x$
corresponding to $K_{i}$. Interior point methods consider the path of solutions to the following optimization
problem: 
\begin{equation}
x(t) = \arg\min_{Ax=b}c^{\top}x+t\sum_{i=1}^{m}\phi_{i}(x_{i})\label{eq:panelized_problem}
\end{equation}
where $\phi_{i}:K_{i}\rightarrow\R$ are self-concordant barrier functions. This parameterized path is commonly known as the {\it central path}. Many algorithms solve the original problem (\ref{eq:origin_problem}) by following the central path as the path parameter is decreased $t \to 0$. The rate at which we decrease $t$ and subsequently the runtimes of these path-following algorithms are usually governed by the self-concordance properties of the barrier functions we use.

\begin{definition}
We call a function $\phi$ a $\nu$ self-concordant
barrier for $K$ if $\mathrm{dom}\phi=K$ and for any $x\in\mathrm{dom}\phi$
and for any $u\in\R^{n}$
\[
|D^{3}\phi(x)[u,u,u]|\leq2\|u\|_{x}^{3/2}\quad\text{and}\quad\|\nabla\phi(x)\|_{x}^{*}\leq\sqrt{\nu}
\]
where $\|v\|_{x}:=\|v\|_{\nabla^{2}\phi(x)}$ and $\|v\|_{x}^{*}:=\|v\|_{\nabla^{2}\phi(x)^{-1}}$, for any vector $v$.
\end{definition}

\begin{remark}It is known that $\nu\geq1$ for any self-concordant
barrier function.\end{remark}

Nesterov and Nemirovsky showed that for any open convex set $K\subset\R^{n}$,
there is a $O(n)$ self-concordant barrier function \cite{n98}. In
this paper, the convex set $K_{i}$ we considered has $O(1)$ dimension.  While Nesterov and Nemirovsky gave formulas for the universal barrier; in practice, most ERM problems lend themselves to explicit $O(1)$ self-concordant barriers for majority of the convex functions people use. For example, for the set $\{x: \|x\|<1\}$, we use $-\log(1-\|x\|^2)$; for the set $\{x: x>0\}$, we use $-\log(x)$, and so on. That is the reason why we assume the gradient and hessian can be computed in $O(1)$ time. Therefore, in this paper, we assume
a $\nu_{i}$ self-concordant barrier $\phi_{i}$ is provided and that
we can compute $\nabla\phi_{i}$ and $\nabla^{2}\phi_{i}$ in $O(1)$
time. The main result we will use about self-concordance is that the
norm $\|\cdot\|_{x}$ is stable when we change $x$.

\begin{theorem}[Theorem 4.1.6 in \cite{n98}]\label{thm:hessiansc}
If $\phi$ is a self-concordant barrier and if $\|y-x\|_{x}<1$, then
we have : 
\begin{align*}
(1-\|y-x\|_{x})^{2}\nabla^{2}\phi(x)\preceq\nabla^{2}\phi(y)\preceq\frac{1}{(1-\|y-x\|_{x})^{2}}\nabla^{2}\phi(x).
\end{align*}
\end{theorem}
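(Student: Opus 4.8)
The plan is to fix the displacement direction $h=y-x$, parametrize the segment by $w_t=x+th$ for $t\in[0,1]$ (which lies in $\mathrm{dom}\,\phi$, since the Dikin ball $\{z:\|z-x\|_x<1\}$ is contained in the domain of a self-concordant barrier), and reduce the operator inequality to, for each fixed $u\in\R^n$, a one-dimensional differential inequality for $g(t):=u^{\top}\nabla^2\phi(w_t)u=\|u\|_{w_t}^2$ that is driven directly by the self-concordance bound $|D^3\phi(w)[v,v,v]|\le 2\|v\|_w^3$.

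First I would upgrade the cubic self-concordance inequality to the mixed trilinear form
\[
|D^3\phi(w)[u,u,v]|\ \le\ 2\,\|u\|_w^2\,\|v\|_w\qquad\text{for all }u,v\in\R^n.
\]
This is the standard polarization estimate for a symmetric trilinear form dominated on its diagonal by a positive-definite quadratic form: after the change of variables $u\mapsto\nabla^2\phi(w)^{-1/2}u$, $v\mapsto\nabla^2\phi(w)^{-1/2}v$ (legitimate because $\nabla^2\phi(w)\succ 0$ on the interior of the domain), the claim reduces to the statement that a symmetric trilinear form bounded by $\|\cdot\|_2^3$ on the diagonal is bounded by $\|u\|_2^2\,\|v\|_2$ once two of its arguments are made equal.

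Second, I would control the local norm of $h$ itself along the segment. Put $a(t)=D^2\phi(w_t)[h,h]=\|h\|_{w_t}^2$; then $a'(t)=D^3\phi(w_t)[h,h,h]$, so $|a'(t)|\le 2a(t)^{3/2}$, whence $\mu(t):=a(t)^{-1/2}$ satisfies $|\mu'(t)|\le 1$. With $r:=\|y-x\|_x<1$ we have $\mu(0)=r^{-1}$, so $\mu(t)\ge r^{-1}-t>0$ for $t\in[0,1]$, i.e.
\[
\|h\|_{w_t}\ \le\ \frac{r}{1-rt},\qquad t\in[0,1].
\]
Third, I run the analogous argument for an arbitrary fixed $u$: $g'(t)=D^3\phi(w_t)[u,u,h]$, so the mixed bound of step one combined with step two gives
\[
\Big|\frac{d}{dt}\log g(t)\Big|=\frac{|g'(t)|}{g(t)}\ \le\ 2\,\|h\|_{w_t}\ \le\ \frac{2r}{1-rt}.
\]
Integrating over $[0,1]$ yields $\big|\log g(1)-\log g(0)\big|\le-2\log(1-r)$, that is $(1-r)^2 g(0)\le g(1)\le(1-r)^{-2}g(0)$. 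Since $g(0)=u^{\top}\nabla^2\phi(x)u$ and $g(1)=u^{\top}\nabla^2\phi(y)u$ and $u$ was arbitrary, this is exactly the asserted sandwich $(1-r)^2\nabla^2\phi(x)\preceq\nabla^2\phi(y)\preceq(1-r)^{-2}\nabla^2\phi(x)$.

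The main obstacle is step one: extracting the mixed bound $|D^3\phi(w)[u,u,v]|\le 2\|u\|_w^2\|v\|_w$ from the pure-diagonal bound without losing the constant $2$ — a crude polarization identity would produce a worse constant and hence a worse-than-$(1-r)^{-2}$ factor. One must instead use the two-repeated-argument refinement of polarization (equivalently, an extremality argument on the quadratic form $u\mapsto D^3\phi(w)[u,u,v]$). Once that lemma is in hand, the remainder is a routine one-variable ODE comparison as carried out above.
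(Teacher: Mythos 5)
Your proof is correct. The paper does not prove this statement itself — it cites it as Theorem 4.1.6 of Nesterov's lecture notes — and your argument is essentially the standard proof given there: pass to the mixed bound $|D^{3}\phi(w)[u,u,v]|\le 2\|u\|_{w}^{2}\|v\|_{w}$ (valid with the same constant because the norm of a symmetric trilinear form on a Hilbert space is attained on the diagonal), bound $\|h\|_{w_t}\le r/(1-rt)$ by the one-dimensional self-concordance ODE, and integrate $|(\log g)'|\le 2\|h\|_{w_t}$.
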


In general, we can simply think of $\phi_{i}$ as a function penalizing
any point $x_{i}\notin K_{i}$. It is known how to transform the original
problem \eqref{eq:origin_problem} by adding $O(n)$ many variables
and constraints so that
\begin{itemize}
\item The minimizer $x(t)$ at $t=1$ is explicitly given.
\item One can obtain an approximate solution of the original problem using
the minimizer at small $t$ in linear time.
\end{itemize}
For completeness, we show how to do it in Lemma \ref{lem:feasible_LP}.
Therefore, it suffices to study how we can move efficiently from $x(1)$
to $x(\epsilon)$ for some tiny $\epsilon$ where $x(t)$ is again the minimizer
of the problem (\ref{eq:panelized_problem}).

\subsection{Robust Central Path}

In the standard interior point method, we use a tight $\ell_2$-bound to control how far we can deviate from $x(t)$ during the entirety of the algorithm. Specifically, if we denote $\gamma_i^t(x_i)$ as the appropriate measure of error (this will be specified later and is often called the Newton Decrement) in each block coordinate $x_i$ at path parameter $t$, then as we let $t\to 0$, the old invariant that we are maintaining is,
\begin{align*}
 \Phi_{\old}^{t}(x) = \sum_{i=1}^m \gamma_i^t(x_i)^2 \leq O(1).
 \end{align*}
It can be shown that a Newton step in the standard direction will allow for us to maintain $\Phi_{\old}^{t}$ to be small even as we decrease $t$ by a multiplicative factor of $O(m^{-1/2})$ in each iteration, thereby giving a standard $O(\sqrt{m})$ iteration analysis. Therefore, the standard approach can be seen as trying to remain within a small $\ell_2$ neighborhood of the central path by centering with Newton steps after making small decreases in the path parameter $t$. Note however that if each $\gamma_i$ can be perturbed by an error that is $\Omega(m^{-1/2})$, $\Phi_{\old}^{t}(x)$ can easily become too large for the potential argument to work.

To make our analysis more robust, we introduce a robust version that maintains the soft-max potential:
\begin{align*}
\Phi_{\new}^t(x) = \sum_{i=1}^m \exp(\lambda \gamma_i^t(x_i)) \leq O(m) 
\end{align*}
for some $\lambda = \Theta(\log m)$. The robust central path is simply the region of all $x$ that satisfies our potential inequality. We will specify the right constants later but we always make $\lambda$ large enough to ensure that $\gamma_i \leq 1$ for all $x$ in the robust central path. Now note that a $\ell_\infty$ perturbation of $\gamma$ translates into a small multiplicative change in $\Phi^t$, tolerating errors on each $\gamma_i$ of up to $O(1/\text{poly}\log(n))$. 

However, maintaining $\Phi_{\new}^t(x) \leq O(m)$ is not obvious because the robust central path is a much wider region of $x$ than the typical $\ell_2$-neighborhood around the central path. We will show later how to modify the standard Newton direction to maintain $\Phi_{\new}^t(x) \leq O(m)$ as we decrease $t$. Specifically, we will show that a variant of gradient descent of $\Phi_{\new}^t$ in the Hessian norm suffices to provide the correct guarantees.

\subsection{Speeding up via Sketching} 

To motivate our sketching algorithm, we consider an imaginary iterative method
\[
z^{(k+1)}\leftarrow z^{(k)}+P\cdot F(z^{(k)})
\]
where $P$ is some dense matrix and $F(z)$ is some simple formula
that can be computed efficiently in linear time. Note that the cost
per iteration is dominated by multiplying $P$ with a vector, which
takes $O(n^{2})$ time. To avoid the cost of multiplication, instead
of storing the solution explicitly, we store it implicitly by $z^{(k)}=P\cdot u^{(k)}$.
Now, the algorithm becomes
\[
u^{(k+1)}\leftarrow u^{(k)}+F(P\cdot u^{(k)}).
\]
This algorithm is as expensive as the previous one except that we
switch the location of $P$. However, if we know the algorithm is
robust under perturbation of the $z^{(k)}$ term in $F(z^{(k)})$,
we can instead do
\[
u^{(k+1)}\leftarrow u^{(k)}+F( R^{\top} R P \cdot u^{(k)} )
\]
for some random Gaussian matrix $R : \R^{b \times n}$. Note that the
matrix $R P$ is fixed throughout the whole algorithm and can
be precomputed. Therefore, the cost of per iteration decreases from
$O(n^{2})$ to $O(nb)$.

For our problem, we need to make two adjustments. First, we need to
sketch the change of $z$, that is $F(P \cdot u^{(k)})$, instead of
$z^{(k)}$ directly because the change of $z$ is smaller and this
creates a smaller error. Second, we need to use a fresh random $R$
every iteration to avoid the randomness dependence issue in the proof.
For the imaginary iterative process, it becomes
\begin{align*}
\overline{z}^{(k+1)} & \leftarrow\overline{z}^{(k)}+R^{(k)\top}R^{(k)}P\cdot F(\overline{z}^{(k)}),\\
u^{(k+1)} & \leftarrow u^{(k)}+F(\overline{z}^{(k)}).
\end{align*}
After some iterations, $\overline{z}^{(k)}$ becomes too far from
$z^{(k)}$ and hence we need to correct the error by setting $z^{(k)}=P\cdot u^{(k)}$,
which zeros the error.

Note that the algorithm explicitly maintains the approximate vector
$\overline{z}$ while implicitly maintaining the exact vector $z$ by
$P u^{(k)}$. This is different from the classical way to sketch Newton
method \cite{pw16,pw17}, which is to simply run $z^{(k+1)}\leftarrow z^{(k)}+R^{\top} R P\cdot F(z^{(k)})$ or use another way to subsample and approximate $P$. Such a scheme relies on the iteration method to fix the error
accumulated in the sketch, while we are actively fixing the error
by having both the approximate explicit vector $\overline{z}$ and
the exact implicit vector $z$.

Without precomputation, the cost of computing $R^{(k)} P$ is in fact higher
than that of $P\cdot F(z^{(k)})$. The first one involves multiplying
multiple vectors with $P$ and the second one involves multiplying 1
vector with $P$. However, we can precompute 
$[ R^{(1)\top}; R^{(2)\top}; \cdots; R^{(T)\top} ]^\top \cdot P$
by fast matrix multiplication. This decreases the cost of multiplying
$1$ vector with $P$ to $n^{\omega-1}$ per vector. This is a huge
saving from $n^{2}$. In our algorithm, we end up using only $\tilde{O}(n)$
random vectors in total and hence the total cost is still roughly
$n^{\omega}$.

\subsection{Maintaining the Sketch}

The matrix $P$ we use in interior point methods is of the form
\[
P=\sqrt{W}A^{\top}(AWA^{\top})^{-1}A\sqrt{W}
\]
where $W$ is some block diagonal matrix. \cite{cls18} showed
one can approximately maintain the matrix $P$ with total cost $\tilde{O}(n^{\omega})$
across all iterations of interior point method. However, the cost
of applying the dense matrix $P$ with a vector $z$ is roughly $O(n\|z\|_{0})$
which is $O(n^{2})$ for dense vectors. Since interior point methods
takes at least $\sqrt{n}$ iterations in general, this gives a total
runtime of $O(n^{2.5})$. The key idea in \cite{cls18} is that one can
design a stochastic interior point method such that each step only
need to multiply $P$ with a vector of density $\tilde{O}(\sqrt{n})$.
This bypasses the $n^{2.5}$ bottleneck.

In this paper, we do not have this issue because we only need to compute
$RPz$ which is much cheaper than $Pz$. We summarize why it suffices to maintain
$RP$ throughout the algorithm. In general, for interior point method,
the vector $z$ is roughly an unit vector and since $P$ is an orthogonal
projection, we have $\|Pz\|_{2}=O(1)$. One simple insight we have
is that if we multiply a random $\sqrt{n}\times n$ matrix $R$ with
values $\pm\frac{1}{\sqrt{n}}$ by $Pz$, we have $\|RPz\|_{\infty}=\tilde{O}(\frac{1}{\sqrt{n}})$
(Lemma \ref{lem:sketch_vector}). Since there are $\tilde{O}(\sqrt{n})$ iterations in
interior point method, the total error is roughly $\tilde{O}(1)$
in a correctly reweighed $\ell_{\infty}$ norm. In Section \ref{sec:robust_central_path}, we
showed that this is exactly what interior point method needs for convergence.
Furthermore, we note that though each step needs to use a fresh random
matrix $R_{l}$ of size $\sqrt{n}\times n$, the random matrices
$[ R_{1}^\top ; R_{2}^\top ; \cdots ; R_{T}^\top ]^\top$ we need can all fit into $\tilde{O}(n)\times n$
budget. Therefore, throughout the algorithm, we simply need to maintain
the matrix $[R_{1}^\top ; R_{2}^\top ; \cdots ; R_{T}^\top ]^\top P$ which can be done with total
cost $\tilde{O}(n^{\omega})$ across all iterations using idea similar
to \cite{cls18}.

The only reason the data structure looks complicated is that when
the block matrix $W$ changes in different location in $\sqrt{W}A^{\top}(AWA^{\top})^{-1}A\sqrt{W}$,
we need to update the matrix $[R_{1};R_{2};\cdots;R_{T}]P$ appropriately.
This gives us few simple cases to handle in the algorithm and in the
proof. For the intuition on how to maintain $P$ under $W$ change,
see \cite[Section 2.2 and 5.1]{cls18}.

\subsection{Fast rectangular matrix multiplication}
Given two size $n \times n$ matrices, the time of multiplying them is $n^{2.81} < n^3$ by applying Strassen's original algorithm \cite{s69}. The current best running time takes $n^{\omega}$ time where $\omega < 2.373$ \cite{w12,l14}. One natural extension of multiplying two square matrices is multiplying two rectangular matrices. What is the running time of multiplying one $n \times n^a$ matrix with another $n^a \times n$ matrix? Let $\alpha$ denote the largest upper bound of $a$ such that multiplying two rectangular matrices takes $n^{2+o(1)}$ time. The $\alpha$ is called the dual exponent of matrix multiplication, and the state-of-the-art result is $\alpha = 0.31$ \cite{gu18}. We use the similar idea as \cite{cls18} to delay the low-rank update when the rank is small.

\section{Preliminaries}\label{sec:preli}

Given a vector $x \in \R^n$ and $m$ compact convex sets $K_1 \subset \R^{n_1}, K_2 \subset \R^{n_2}, \cdots, K_m \subset \R^{n_m}$ with $\sum_{i=1}^m n_i = n$. We use $x_i$ to denote the $i$-th block of $x$, then $x \in \prod_{i=1}^m K_i$ if $x_i \in K_i$, $\forall i \in [m]$. 

We say a block diagonal matrix $A \in \oplus_{i=1}^m \R^{n_i \times n_i}$ if $A$ can be written as
\begin{align*}
A = \begin{bmatrix}
A_1 & & & \\
& A_2 & & \\
& & \ddots &  \\
& & & A_m
\end{bmatrix}
\end{align*}
where $A_1 \in \R^{n_1 \times n_1}$, $A_2 \in \R^{n_2 \times n_2}$, and $A_m \in \R^{n_m \times n_m}$. For a matrix $A$, we use $\| A \|_F$ to denote its Frobenius norm and use $\| A \|$ to denote its operator norm. There are some trivial facts $\| A B \|_2 \leq \| A \|_2 \cdot \| B\|_2$ and $\| A B \|_F \leq \| A \|_F \cdot \| B \|_2$.

For notation convenience, we assume the number of variables $n\geq10$ and there are no redundant constraints. In particular, this implies that the constraint matrix $A$ is full rank.

For a positive integer $n$, let $[n]$ denote the set $\{1,2,\cdots,n\}$.

For any function $f$, we define $\wt{O}(f)$ to be $f\cdot \log^{O(1)}(f)$. In addition to $O(\cdot)$ notation, for two functions $f,g$, we use the shorthand $f\lesssim g$ (resp. $\gtrsim$) to indicate that $f\leq C g$ (resp. $\geq$) for some absolute constant $C$. For any function $f$, we use $\mathrm{dom} f$ to denote the domain of function $f$.

For a vector $v$, We denote $\| v \|$ as the standard Euclidean norm of $v$ and for a symmetric PSD matrix $A$, we let $\|v\|_A = (v^\top A v)^{1/2}$. For a convex function $f(x)$ that is clear from context, we denote $\|v\|_x = \|v\|_{\nabla^2 f(x)}$ and $\|v \|_x^* = \|v\|_{\nabla^2f(x)^{-1}}$.

\global\long\def\ideal{\mathrm{ideal}}

\section{Robust Central Path}\label{sec:robust_central_path_intro}

In this section we show how to move move efficiently from $x(1)$
to $x(\epsilon)$ for some tiny $\epsilon$ by staying on a robust version of the central path. Because we are maintaining values that are slightly off-center, we show that our analysis still goes through despite $\ell_\infty$ perturbations on the order of $O(1/\text{poly}\log(n))$.

\subsection{Newton Step}

To follow the path $x(t)$, we consider the optimality condition of
(\ref{eq:panelized_problem}):
\begin{align*}
s/t+\nabla\phi(x) & =0,\\
Ax & =b,\\
A^{\top}y+s & =c
\end{align*}
where $\nabla\phi(x)=(\nabla\phi_{1}(x_{1}),\nabla\phi_{2}(x_{2}),\cdots,\nabla\phi_{m}(x_{m}))$.
To handle the error incurred in the progress, we consider the perturbed
central path
\begin{align*}
s/t+\nabla\phi(x) & =\mu,\\
Ax & =b,\\
A^{\top}y+s & =c
\end{align*}
where $\mu$ represent the error between the original central path
and our central path. Each iteration, we decrease $t$ by a certain
factor. It may increase the error term $\mu$. Therefore, we need
a step to decrease the norm of $\mu$. The Newton method to move $\mu$
to $\mu+h$ is given by
\begin{align*}
\frac{1}{t}\cdot\delta_{s}^{\ideal}+\nabla^{2}\phi(x)\cdot\delta_{x}^{\ideal}= & ~h,\\
A\delta_{x}^{\ideal}= & ~0,\\
A^{\top}\delta_{y}^{\ideal}+\delta_{s}^{\ideal}= & ~0
\end{align*}
where $\nabla^{2}\phi(x)$ is a block diagonal matrix with the $i$-th
block is given by $\nabla^{2}\phi_{i}(x_{i})$. Letting $W=(\nabla^{2}\phi(x))^{-1}$,
we can solve this: 
\begin{align*}
\delta_{y}^{\ideal}= & ~-t\cdot\left(AWA^{\top}\right)^{-1}AWh,\\
\delta_{s}^{\ideal}= & ~t\cdot A^{\top}\left(AWA^{\top}\right)^{-1}AWh,\\
\delta_{x}^{\ideal}= & ~Wh-WA^{\top}\left(AWA^{\top}\right)^{-1}AWh.
\end{align*}

We define projection matrix $P\in\R^{n\times n}$ as follows 
\[
P=W^{1/2}A^{\top}\left(AWA^{\top}\right)^{-1}AW^{1/2}
\]
and then we rewrite them 
\begin{align}
\delta_{x}^{\ideal}= & ~W^{1/2}(I-P)W^{1/2}\delta_{\mu},\label{eq:x_ideal}\\
\delta_{s}^{\ideal}= & ~tW^{-1/2}PW^{1/2}\delta_{\mu}.\label{eq:s_ideal}
\end{align}
One standard way to analyze the central path is to measure the error
by $\|\mu\|_{\nabla^{2}\phi(x)^{-1}}$ and uses the step induced by
$h=-\mu$. One can easily prove that if $\|\mu\|_{\nabla^{2}\phi(x)^{-1}}<\frac{1}{10}$,
one step of Newton step decreases the norm by a constant factor. Therefore,
one can alternatively decrease $t$ and do a Newton step to follow
the path.

\subsection{Robust Central Path Method}

In this section, we develop a central path method that is robust under
certain $\ell_{\infty}$ perturbations. Due to the $\ell_{\infty}$
perturbation, we measure the error $\mu$ by a soft max instead of
the $\ell_{2}$ type potential:

\begin{definition}For each $i\in[m]$, let $\mu_{i}^{t}(x,s)\in\R^{n_{i}}$
and $\gamma_{i}^{t}(x,s)\in\R$ be defined as follows: 
\begin{align}
\mu_{i}^{t}(x,s)= & ~s_{i}/t+\nabla\phi_{i}(x_{i}),\label{eq:def_mu_i_t}\\
\gamma_{i}^{t}(x,s)= & ~\|\mu_{i}^{t}(x,s)\|_{\nabla^{2}\phi_{i}(x_{i})^{-1}},\label{eq:def_gamma_i_t}
\end{align}
and we define potential function $\Phi$ as follows:
\begin{align*}
\Phi^{t}(x,s)= & \ \sum_{i=1}^{m}\exp(\lambda\gamma_{i}^{t}(x,s))
\end{align*}
where $\lambda=O(\log m)$.\end{definition}
The {\it robust central path} is the region $(x,s)$ that satisfies $\Phi^t(x,s) \leq O(m)$. To run our convergence argument, we will be setting $\lambda$ appropriately so that staying on the robust central path will guarantee a $\ell_\infty$ bound on $\gamma$. Then, we will show how to maintain
$\Phi^{t}(x,s)$ to be small throughout the algorithm while decreasing $t$, always staying on the robust central path. This is broken into a two step analysis: the progress step (decreasing $t$) and the centering step (moving $x,s$ to decrease $\gamma$). 

It is important to note that to follow the robust central path, we no longer pick the standard Newton direction by setting $h = -\mu$. To explain how we pick our centering step, suppose we can move $\mu\to\mu+h$
arbitrarily with the only restriction on the distance $\|h\|_{\nabla^{2}\phi(x)^{-1}}=\alpha$.
Then, the natural step would be
\[
h=\arg\min_{\|h\|_{\nabla^{2}\phi(x)^{-1}}=\alpha}\left\langle \nabla f(\mu(x,s)),h\right\rangle 
\]
where $f(\mu)=\sum_{i=1}^{m}\exp(\lambda\|\mu\|_{\nabla^{2}\phi_{i}(x_{i})^{-1}})$.
Note that 
\[
\nabla f(\mu^{t}(x,s))_{i}=\lambda\exp(\lambda\gamma_{i}^{t}(x,s))/\gamma_{i}^{t}(x,s)\cdot\nabla^{2}\phi_{i}(x_{i})^{-1}\mu_{i}^{t}(x,s).
\]
Therefore, the solution for the minimization problem is
\[
h_{i}^{\ideal}=-\alpha\cdot c_{i}^{t}(x,s)^{\ideal}\mu_{i}^{t}(x,s)\in\R^{n_{i}},
\]
where $\mu_{i}^{t}(x,s)\in\R^{n_{i}}$ is defined as Eq.~\eqref{eq:def_mu_i_t}
and $c_{i}^{t}(x,s)\in\R$ is defined as 
\[
c_{i}^{t}(x,s)^{\ideal}=\frac{\exp(\lambda\gamma_{i}^{t}(x,s))/\gamma_{i}^{t}(x,s)}{(\sum_{i=1}^{m}\exp(2\lambda\gamma_{i}^{t}(x,s)))^{1/2}}.
\]
Eq.~\eqref{eq:x_ideal} and Eq.~\eqref{eq:s_ideal} gives the corresponding
ideal step on $x$ and $s$.

Now, we discuss the perturbed version of this algorithm. Instead of
using the exact $x$ and $s$ in the formula of $h$, we use a $\ov x$
which is approximately close to $x$ and a $\overline{s}$ which is
close to $s$. Precisely, we have
\begin{equation}
h_{i}=-\alpha\cdot c_{i}^{t}(\overline{x},\overline{s})\mu_{i}^{t}(\overline{x},\overline{s})\label{eq:def_h_i_t}
\end{equation}
where
\begin{equation}
c_{i}^{t}(x,s)=\begin{cases}
\frac{\exp(\lambda\gamma_{i}^{t}(x,s))/\gamma_{i}^{t}(x,s)}{(\sum_{i=1}^{m}\exp(2\lambda\gamma_{i}^{t}(x,s)))^{1/2}} & \text{if }\gamma_{i}^{t}(x,s)\geq96\sqrt{\alpha}\\
0 & \text{otherwise}
\end{cases}.\label{eq:def_c_i_t}
\end{equation}
Note that our definition of $c_i^{t}$ ensures that $c_i^{t}(x,s)\leq\frac{1}{96\sqrt{\alpha}}$
regardless of the value of $\gamma_i^t(x,s)$. This makes sure we do not move too much in
any coordinates and indeed when $\gamma_i^t$ is small, it is fine to set $c_i^t = 0$. Furthermore, for the formula on $\delta_{x}$ and $\delta_{s}$,
we use some matrix $\wt V$ that is close to $(\nabla^{2}\phi(x))^{-1}$.
Precisely, we have
\begin{align}
\delta_{x}= & ~\wt V^{1/2}(I-\wt P)\wt V^{1/2}h,\label{eq:delta_x}\\
\delta_{s}= & ~t\cdot\wt V^{-1/2}\wt P~\wt V^{1/2}h.\label{eq:delta_s}
\end{align}
where
\[
\wt P=\wt V^{1/2}A^{\top}(A\wt VA^{\top})^{-1}A\wt V^{1/2}.
\]

Here we give a quick summary of our algorithm. (The more detailed of our algorithm can be found in Algorithm~\ref{alg:robust_central_path} and \ref{alg:main} in Section~\ref{sec:combine}.)
\begin{itemize}
\item $\textsc{RobustIPM}$($A,b,c,\phi,\delta$)
\begin{itemize}
\item $\lambda=2^{16}\log(m)$, $\alpha=2^{-20}\lambda^{-2}$, $\kappa=2^{-10}\alpha$.
\item $\delta = \min(\frac{1}{\lambda},\delta)$.
\item $\nu = \sum_{i=1}^m \nu_{i}$ where $\nu_{i}$ are the self-concordant parameters
of $\phi_{i}$.
\item Modify the convex problem and obtain an initial $x$ and $s$ according
to Lemma \ref{lem:feasible_LP}.
\item $t=1$.
\item While $t>\frac{\delta^{2}}{4\nu}$
\begin{itemize}
\item Find $\overline{x}$ and $\overline{s}$ such that $\|\ov x_{i}-x_{i}\|_{\ov x_{i}}<\alpha$
and $\|\ov s_{i}-s_{i}\|_{\ov x_{i}}^{*}<t\alpha$ for all $i$.
\item Find $\wt V_{i}$ such that $(1-\alpha)(\nabla^{2}\phi_{i}(\overline{x}_{i}))^{-1}\preceq\wt V_{i}\preceq(1+\alpha)(\nabla^{2}\phi_{i}(\overline{x}_{i}))^{-1}$
for all $i$.
\item Compute $h=-\alpha\cdot c_{i}^{t}(\overline{x},\overline{s})\mu_{i}^{t}(\overline{x},\overline{s})$
where
\[
c_{i}^{t}(\overline{x},\overline{s})=\begin{cases}
\frac{\exp(\lambda\gamma_{i}^{t}(\overline{x},\overline{s}))/\gamma_{i}^{t}(\overline{x},\overline{s})}{(\sum_{i=1}^{m}\exp(2\lambda\gamma_{i}^{t}(\overline{x},\overline{s})))^{1/2}} & \text{if }\gamma_{i}^{t}(\overline{x},\overline{s})\geq96\sqrt{\alpha}\\
0 & \text{otherwise}
\end{cases}.
\]
and $\mu_{i}^{t}(\overline{x},\overline{s})=~\overline{s}_{i}/t+\nabla\phi_{i}(\overline{x}_{i})$
and $\gamma_{i}^{t}(\overline{x},\overline{s})=\|\mu_{i}^{t}(\overline{x},\overline{s})\|_{\nabla^{2}\phi_{i}(\overline{x}_{i})^{-1}}$
\item Let $\wt P=\wt V^{1/2}A^{\top}(A\wt VA^{\top})^{-1}A\wt V^{1/2}$.
\item Compute $\delta_{x}=~\wt V^{1/2}(I-\wt P)\wt V^{1/2}h$ and $\delta_{s}=~t\cdot\wt V^{-1/2}\wt P~\wt V^{1/2}h.$
\item Move $x\leftarrow x+\delta_{x}$, $s\leftarrow s+\delta_{s}$.
\item $t^{\new}=(1-\frac{\kappa}{\sqrt{\nu}})t$.
\end{itemize}
\item Return an approximation solution of the convex problem according to
Lemma \ref{lem:feasible_LP}.
\end{itemize}
\end{itemize}
\begin{theorem}[Robust Interior Point Method]\label{thm:roubst_IPM} Consider a convex problem
$\min_{Ax=b,x\in\prod_{i=1}^{m}K_{i}}c^{\top}x$ where $K_{i}$ are
compact convex sets. For each $i \in [m]$, we are given a $\nu_{i}$-self
concordant barrier function $\phi_{i}$ for $K_{i}$. Let $\nu = \sum_{i=1}^m \nu_i$. Also, we are
given $x^{(0)}=\arg\min_{x}\sum_{i=1}^m \phi_{i}(x_{i})$. Assume that
\begin{enumerate}
\item Diameter of the set: For any $x \in \prod_{i=1}^m K_{i}$, we have that $\|x\|_{2}\leq R$.
\item Lipschitz constant of the program: $\|c\|_{2}\leq L$.
\end{enumerate}
Then, the algorithm $\textsc{RobustIPM}$ finds a vector $x$ such
that 
\begin{align*}
c^{\top}x & \leq\min_{Ax=b,x\in\prod_{i=1}^m K_{i}}c^{\top}x+LR\cdot\delta,\\
\|Ax-b\|_{1} & \leq3\delta\cdot\left(R\sum_{i,j}|A_{i,j}|+\|b\|_{1}\right),\\
x & \in\prod_{i=1}^m K_{i}.
\end{align*}
in $O(\sqrt{\nu}\log^{2}m\log(\frac{\nu}{\delta}))$ iterations.

\end{theorem}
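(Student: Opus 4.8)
The plan is to run the standard potential-reduction analysis for interior point methods, but tracking the soft-max potential $\Phi^t(x,s) = \sum_i \exp(\lambda \gamma_i^t(x,s))$ instead of the $\ell_2$ potential, and to split each iteration into two parts: the \emph{progress step} (decreasing $t \to t^{\new} = (1-\kappa/\sqrt\nu) t$, holding $x,s$ fixed) and the \emph{centering step} (moving $x \to x+\delta_x$, $s\to s+\delta_s$ with the step of Eq.~\eqref{eq:delta_x}, \eqref{eq:delta_s}, holding $t$ fixed). The invariant to maintain is $\Phi^t(x,s) \le O(m)$, which by the choice $\lambda = \Theta(\log m)$ forces $\gamma_i^t \le 1/2$ (say) for every $i$ throughout, so that Theorem~\ref{thm:hessiansc} applies and all the Hessian norms $\|\cdot\|_{x_i}$, $\|\cdot\|_{\ov x_i}$, $\|\cdot\|_{\wt V_i^{-1}}$ are mutually within a $(1\pm O(\alpha))$ factor.

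First I would handle the progress step. Since $\mu_i^t(x,s) = s_i/t + \nabla\phi_i(x_i)$, decreasing $t$ to $t^{\new}$ changes $\mu_i$ additively by $(\frac1{t^{\new}} - \frac1t) s_i = \frac{\kappa/\sqrt\nu}{1-\kappa/\sqrt\nu}\cdot \frac{s_i}{t}$, and $\frac{s_i}{t} = \mu_i^t - \nabla\phi_i(x_i)$. Using $\gamma_i^t \le 1/2$ and $\|\nabla\phi_i(x_i)\|^*_{x_i} \le \sqrt{\nu_i}$ (the self-concordance gradient bound), the $\ell_{x_i}^*$-norm of this change is $O(\frac{\kappa}{\sqrt\nu}(\gamma_i^t + \sqrt{\nu_i}))$. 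Summing the effect on $\Phi$: each $\gamma_i$ moves by at most this amount, so $\exp(\lambda\gamma_i)$ grows by a factor $\exp(O(\lambda\kappa(\gamma_i+\sqrt{\nu_i})/\sqrt\nu))$; a short computation (using $\sum_i \sqrt{\nu_i} \le \sqrt{m\nu}$ and $\kappa = 2^{-10}\alpha = \Theta(\lambda^{-2})$) shows $\Phi^{t^{\new}}(x,s) \le (1 + O(\kappa\lambda/\sqrt\nu))\,\Phi^t(x,s) + O(\kappa\lambda m/\sqrt\nu)$, i.e. the progress step inflates the potential by at most an additive $O(\lambda\kappa m/\sqrt\nu)$ plus a tiny multiplicative factor.

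Next the centering step, which is the crux. I would show that the step $h_i = -\alpha c_i^t(\ov x,\ov s)\mu_i^t(\ov x,\ov s)$ is, up to the approximation errors, a descent step for $f(\mu) = \sum_i \exp(\lambda\gamma_i)$ in the Hessian norm, decreasing $\Phi$ by a multiplicative $(1 - \Omega(\alpha\lambda))$ whenever $\Phi \ge \Omega(m)$ — enough to cancel the progress-step growth. The mechanics: the \emph{ideal} move would set $\mu_i \mapsto \mu_i + h_i^{\ideal}$ exactly, giving $\langle \nabla f(\mu), h^{\ideal}\rangle = -\alpha (\sum_i \exp(2\lambda\gamma_i))^{1/2}\cdot\lambda \cdot (\text{sum of }\exp(\lambda\gamma_i)\text{ over active }i)/\dots$, which one checks is $\le -\Omega(\alpha\lambda)\,\Phi^t$ on the active coordinates, and the inactive coordinates ($\gamma_i < 96\sqrt\alpha$) contribute $O(m)$ to $\Phi$ and are harmless. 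The actual move differs from the ideal one in three ways: (i) we use $\ov x,\ov s$ instead of $x,s$ in computing $c_i, \mu_i$; (ii) we use $\wt V$ instead of $(\nabla^2\phi(x))^{-1}$ in $\delta_x, \delta_s$; and (iii) the projection $I - \wt P$ means $\mu$ does not move by exactly $h$ — there is a component killed by the constraint $A\delta_x = 0$ / $A^\top\delta_y + \delta_s = 0$. For (iii), the key point is that the projection only helps: since $\frac1t\delta_s + \nabla^2\phi(x)\delta_x$ equals $h$ \emph{projected}, and we can control the change in $\nabla\phi$ via the second-order Taylor error $D^2\phi$, which is $O(\|\delta_x\|_x^2) = O(\alpha^2)$ per block by self-concordance. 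For (i) and (ii), the hypotheses $\|\ov x_i - x_i\|_{\ov x_i} < \alpha$, $\|\ov s_i - s_i\|^*_{\ov x_i} < t\alpha$, and $(1-\alpha)\nabla^2\phi_i(\ov x_i)^{-1} \preceq \wt V_i \preceq (1+\alpha)\nabla^2\phi_i(\ov x_i)^{-1}$, together with Theorem~\ref{thm:hessiansc} to swap $x_i$ for $\ov x_i$, show that the realized per-coordinate move of $\mu_i$ equals $h_i$ up to an $\ell^*_{x_i}$-error of $O(\alpha^{3/2})$ (the $96\sqrt\alpha$ threshold in the definition of $c_i^t$ is exactly what makes this relative error absorb into the $\Omega(\alpha\lambda)$ decrease). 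So the realized decrease is still $\le -\Omega(\alpha\lambda)\Phi^t + O(\alpha\lambda m)$.

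Finally I would assemble: combining the two steps, one iteration sends $\Phi^t(x,s) \to \Phi^{t^{\new}}(x^{\new},s^{\new}) \le (1 - \Omega(\alpha\lambda))\Phi + O(\alpha\lambda m)$, so starting from the initial point (where $\mu = 0$, hence $\Phi = m$ — this uses Lemma~\ref{lem:feasible_LP} to produce $x^{(0)} = \arg\min\sum\phi_i$ and $s^{(0)}$ with $s^{(0)}/t + \nabla\phi(x^{(0)}) = 0$ at $t=1$) the invariant $\Phi \le 2m$ (say) is maintained for all iterations, hence $\gamma_i^t \le 1/2$ throughout and $x \in \prod K_i$ at every step. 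The number of iterations to drive $t$ from $1$ to $\frac{\delta^2}{4\nu}$ with a $(1-\kappa/\sqrt\nu)$ decrease per step is $O(\frac{\sqrt\nu}{\kappa}\log(\frac\nu{\delta})) = O(\sqrt\nu \log^2 m \log(\frac\nu\delta))$ since $\kappa = \Theta(\log^{-2} m)$. For the output guarantees, at termination $\gamma_i^t \le 1/2$ and $t \le \delta^2/(4\nu)$ give $\|s/t + \nabla\phi(x)\|$ small in the appropriate norm; feeding this and the near-feasibility $\|Ax - b\|$ (which stays controlled because each $\delta_x$ exactly satisfies $A\delta_x = 0$ up to the $\wt P$ approximation, contributing a geometric error) into the standard duality-gap bound of Lemma~\ref{lem:feasible_LP} yields $c^\top x \le \mathrm{OPT} + LR\delta$ and the stated $\ell_1$-feasibility bound. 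The main obstacle is the centering-step estimate — specifically verifying that \emph{all three} sources of perturbation (stale $\ov x,\ov s$; approximate $\wt V$; the projection defect) degrade the clean $\langle \nabla f, h^{\ideal}\rangle$ bound by only a constant factor, which is where the specific constants $\lambda = 2^{16}\log m$, $\alpha = 2^{-20}\lambda^{-2}$, $\kappa = 2^{-10}\alpha$, and the $96\sqrt\alpha$ activation threshold are chosen to make the inequalities close.
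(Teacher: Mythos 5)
Your overall route is the same as the paper's: maintain the soft-max potential $\Phi^t$, split each iteration into a $t$-decrease and a centering step, treat the stale $(\ov{x},\ov{s})$ and the approximate $\wt V$ as $\ell_\infty$-type perturbations, and finish with the initial-point/termination reductions of Lemma~\ref{lem:feasible_LP} and the duality-gap bound. However, there is one quantitative claim at the heart of your centering-step analysis that is false as stated: the centering step does \emph{not} decrease $\Phi$ by a multiplicative $(1-\Omega(\alpha\lambda))$ whenever $\Phi \ge \Omega(m)$. The steepest-descent step of norm $\alpha$ against $f(\mu)=\sum_i \exp(\lambda\gamma_i)$ yields a first-order decrease of $\alpha\|\nabla f\| = \alpha\lambda\,\zeta$ where $\zeta := (\sum_i \exp(2\lambda\gamma_i))^{1/2}$, and by Cauchy--Schwarz $\zeta$ can be as small as $\Phi/\sqrt{m}$ (e.g.\ when all $\gamma_i$ are equal and large, so that $\Phi\ge\Omega(m)$ but the decrease is only $\alpha\lambda\Phi/\sqrt m$). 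Consequently your recursion $\Phi \to (1-\Omega(\alpha\lambda))\Phi + O(\alpha\lambda m)$ and the resulting invariant $\Phi\le 2m$ do not follow. The correct bookkeeping (this is exactly what Lemmas~\ref{lem:changes_in_xs}, \ref{lem:changes_in_z}, \ref{lem:changes_in_t} do) measures \emph{all three} effects --- the centering decrease $-\frac{\alpha\lambda}{5}\zeta$, the progress-step increase $+10\kappa\lambda\,\zeta$, and the perturbation errors $+\frac{\alpha\lambda}{5}\zeta$ --- against the same quantity $\zeta$, uses $\kappa \ll \alpha$ to get a net decrease of $\Omega(\alpha\lambda)\zeta \ge \Omega(\alpha\lambda/\sqrt m)\,\Phi$, and adds the $\sqrt m\,\lambda\exp(192\lambda\sqrt\alpha)$ term contributed by the inactive coordinates. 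The fixed point of that recursion is $\Theta(m/\alpha)$, not $\Theta(m)$; the invariant one can actually maintain is $\Phi^t \le 80m/\alpha$ (Lemma~\ref{lem:robust_central_path_part3}), which still forces $\gamma_i \le \log(80m/\alpha)/\lambda \le 1$ by the choice of $\lambda$, so your downstream conclusions survive once the invariant is weakened accordingly.

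Two smaller inaccuracies. First, there is no ``projection defect'' in the $\mu$-update: by construction $\frac{1}{t}\delta_s + \wt V^{-1}\delta_x = \wt V^{-1/2}\wt P\wt V^{1/2}h + \wt V^{-1/2}(I-\wt P)\wt V^{1/2}h = h$ exactly, so your error source (iii) does not exist; the only perturbations are the Hessian mismatches ($\wt V^{-1}$ vs.\ $\nabla^2\phi(\ov x)$ vs.\ $\nabla^2\phi(x^{(u)})$ along the segment) and the use of $(\ov x,\ov s)$ inside $h$. Second, the $\ell_1$-infeasibility in the theorem statement does not come from any ``geometric error'' accumulating in $Ax=b$ (which is preserved exactly since $A\delta_x=0$); it comes entirely from the homogenization variable $\tau$ of the modified program in Lemma~\ref{lem:feasible_LP}, via $A\ov x_{1:n}-b=(Ax^{(0)}-b)\tau$ with $\tau\le 3\delta$. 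Neither of these breaks your argument, but the $(1-\Omega(\alpha\lambda))$ claim must be repaired as above before the potential-maintenance step is sound.
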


\begin{proof}Lemma \ref{lem:feasible_LP} shows that the initial
$x$ and $s$ satisfies
\[
\|s+\nabla\phi(x)\|_{x}^{*}\leq\delta\leq\frac{1}{ \lambda}
\]
where the last inequality is due to our step $\delta\leftarrow\min(\frac{1}{\lambda},\delta)$.
This implies that $\gamma_{i}^{1}(x,s)=\|s_{i}+\nabla\phi_{i}(x_{i})\|_{x_{i}}^{*}\leq\frac{1}{ \lambda}$
and hence $\Phi^{1}(x,s)\leq e\cdot m\leq80\frac{m}{\alpha}$ for
the initial $x$ and $s$. Apply Lemma \ref{lem:robust_central_path_part3}
repetitively, we have that $\Phi^{t}(x,s)\leq80\frac{m}{\alpha}$
during the whole algorithm. In particular, we have this at the end
of the algorithm. This implies that
\[
\|s_{i}+\nabla\phi_{i}(x_{i})\|_{x_{i}}^{*}\leq\frac{\log(80\frac{m}{\alpha})}{\lambda}\leq1
\]
at the end. Therefore, we can apply Lemma \ref{lem:apx_center_imply_gap}
to show that
\[
\left\langle c,x\right\rangle \leq\left\langle c,x^{*}\right\rangle +4t\nu\leq\left\langle c,x^{*}\right\rangle +\delta^{2}
\]
where we used the stop condition for $t$ at the end. Note that this
guarantee holds for the modified convex program. Since the error is
$\delta^{2}$, Lemma \ref{lem:feasible_LP} shows how to get an approximate
solution for the original convex program with error $LR\cdot\delta$.

The number of steps follows from the fact we decrease $t$ by $1-\frac{1}{\sqrt{\nu}\log^{2}m}$
factor every iteration.

\end{proof}

\addcontentsline{toc}{section}{References}
\bibliographystyle{alpha}
\bibliography{ref}

\newpage
\appendix
\section*{Appendix}
\section{Robust Central Path}\label{sec:robust_central_path}

The goal of this section is to analyze robust central path. We provide an outline in Section~\ref{sec:robust_central_path_outline}. In Section~\ref{sec:robust_central_path_changes_in_mu_gamma}, we bound the changes in $\mu$ and $\gamma$. In Section~\ref{sec:robust_central_path_changes_in_xs}, we analyze the changes from $(x,x,s)$ to $(x^{\new},x,s^{\new})$. In Section~\ref{sec:robust_central_path_changes_in_z}, we analyze the changes from $(x^{\new},x,s^{\new})$ to $(x^{\new},x^{\new},s^{\new})$. We bound the changes in $t$ in Section~\ref{sec:robust_central_path_changes_in_t}. Finally, we analyze entire changes of potential function in Section~\ref{sec:robust_central_path_changes_in_potential}.

\subsection{Outline of Analysis}\label{sec:robust_central_path_outline}

\begin{table}
\begin{centering}
\begin{tabular}{|l|l|l|}
\hline 
Statement & Section & Parameters\tabularnewline
\hline 
Lemma~\ref{lem:changes_in_mu} & Section~\ref{sec:robust_central_path_changes_in_mu_gamma} & $\mu_{i}^{t}(x,s)\rightarrow\mu_{i}^{t}(x^{\new},s^{\new})$\tabularnewline
\hline 
Lemma~\ref{lem:changes_in_gamma} & Section~\ref{sec:robust_central_path_changes_in_mu_gamma} & $\gamma_{i}^{t}(x,x,s)\rightarrow\gamma_{i}^{t}(x^{\new},x,s^{\new})$\tabularnewline
\hline 
Lemma~\ref{lem:changes_in_xs} & Section~\ref{sec:robust_central_path_changes_in_xs} & $\Phi(x,x,s)\rightarrow\Phi(x^{\new},x,s^{\new})$\tabularnewline
\hline 
Lemma~\ref{lem:changes_in_z} & Section~\ref{sec:robust_central_path_changes_in_z} & $\Phi(x^{\new},x,s^{\new})\rightarrow\Phi(x^{\new},x^{\new},s^{\new})$\tabularnewline
\hline 
Lemma~\ref{lem:changes_in_t} & Section~\ref{sec:robust_central_path_changes_in_t} & $\Phi^{t}\rightarrow\Phi^{t^{\new}}$\tabularnewline
\hline 
Lemma~\ref{lem:robust_central_path_part3} & Section~\ref{sec:robust_central_path_changes_in_potential} & $\Phi^{t}(x,s)\rightarrow\Phi^{t^{\new}}(x^{\new},s^{\new})$\tabularnewline
\hline 
\end{tabular}
\par\end{centering}
\caption{Bounding the changes of different variables}
\end{table}

Basically, the main proof is just a simple calculation on how $\Phi^{t}(x,s)$
changes during 1 iteration. It could be compared to the proof of $\ell_\infty$ potential reduction arguments for the convergence of long-step interior point methods, although the main difficulty arises from the perturbations from stepping using $\ov{x},\ov{s}$ instead of $x,s$.

To organize the calculations, we note that the term $\gamma_{i}^{t}(x,s)=~\|\mu_{i}^{t}(x,s)\|_{\nabla^{2}\phi_{i}(x_{i})^{-1}}$
has two terms involving $x$, one in the $\mu$ term and one in the Hessian. Hence, we separate how different $x$
affect the potential by defining
\begin{align*}
\gamma_{i}^{t}(x,z,s) & =~\|\mu_{i}^{t}(x,s)\|_{\nabla^{2}\phi_{i}(z_{i})^{-1}},\\
\Phi^{t}(x,z,s) & =\ \sum_{i=1}^{m}\exp(\lambda\gamma_{i}^{t}(x,z,s)).
\end{align*}
One difference between our proof and standard $\ell_2$ proofs of interior point is that we assume the barrier function is decomposable. We define $\alpha_{i}=\|\delta_{x,i}\|_{\ov{x}_{i}}$
is the ``step'' size of the coordinate $i$. One crucial fact we
are using is that sum of squares of the step sizes is small.

\begin{lemma}\label{lem:alpha_i}
Let $\alpha$ denote the parameter in \textsc{RobustIPM}. For all $i\in [m]$, let $\alpha_{i}=\|\delta_{x,i}\|_{\ov{x}_{i}}$.
Then,
\[
\sum_{i=1}^m \alpha_{i}^{2}\leq4\alpha^{2}.
\]
\end{lemma}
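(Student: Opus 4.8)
The plan is to compute $\delta_x$ explicitly from the formula \eqref{eq:delta_x}, use the fact that $\widetilde P$ is an orthogonal projection (in the $\widetilde V^{-1}$ inner product, so that $\widetilde V^{1/2}(I-\widetilde P)\widetilde V^{1/2}$ is contractive in the right sense), and then bound $\sum_i \alpha_i^2 = \sum_i \|\delta_{x,i}\|_{\ov x_i}^2$ by $\|h\|^2$ in the appropriate norm. First I would observe that, since $\widetilde V_i \approx (\nabla^2\phi_i(\ov x_i))^{-1}$ up to a $(1\pm\alpha)$ factor, the norm $\|\cdot\|_{\ov x_i} = \|\cdot\|_{\nabla^2\phi_i(\ov x_i)}$ is comparable to $\|\cdot\|_{\widetilde V_i^{-1}}$ up to constants close to $1$; so it suffices to bound $\sum_i \|\delta_{x,i}\|_{\widetilde V_i^{-1}}^2 = \|\widetilde V^{-1/2}\delta_x\|_2^2$. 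Plugging in $\delta_x = \widetilde V^{1/2}(I-\widetilde P)\widetilde V^{1/2}h$ gives $\widetilde V^{-1/2}\delta_x = (I-\widetilde P)\widetilde V^{1/2}h$, and since $I-\widetilde P$ is an orthogonal projection matrix (idempotent and symmetric, as $\widetilde P = \widetilde V^{1/2}A^\top(A\widetilde V A^\top)^{-1}A\widetilde V^{1/2}$ is), we get $\|\widetilde V^{-1/2}\delta_x\|_2 \le \|\widetilde V^{1/2}h\|_2$.

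Next I would bound $\|\widetilde V^{1/2}h\|_2^2 = \sum_i \|\widetilde V_i^{1/2} h_i\|_2^2$, again using $\widetilde V_i \approx (\nabla^2\phi_i(\ov x_i))^{-1}$ to replace this by roughly $\sum_i \|h_i\|_{\ov x_i}^{*2}$ up to a constant. Now from the definition \eqref{eq:def_h_i_t}, $h_i = -\alpha\cdot c_i^t(\ov x,\ov s)\,\mu_i^t(\ov x,\ov s)$, so $\|h_i\|_{\ov x_i}^* = \alpha\, c_i^t(\ov x,\ov s)\,\|\mu_i^t(\ov x,\ov s)\|_{\nabla^2\phi_i(\ov x_i)^{-1}} = \alpha\, c_i^t(\ov x,\ov s)\,\gamma_i^t(\ov x,\ov s)$. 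By the definition \eqref{eq:def_c_i_t} of $c_i^t$, the product $c_i^t(\ov x,\ov s)\,\gamma_i^t(\ov x,\ov s)$ equals $\exp(\lambda\gamma_i^t)/(\sum_j \exp(2\lambda\gamma_j^t))^{1/2}$ when $\gamma_i^t \ge 96\sqrt\alpha$ and $0$ otherwise; in either case $\sum_i (c_i^t\gamma_i^t)^2 \le \sum_i \exp(2\lambda\gamma_i^t)/\sum_j\exp(2\lambda\gamma_j^t) = 1$. Hence $\sum_i \|h_i\|_{\ov x_i}^{*2} \le \alpha^2$, and tracking the $(1\pm\alpha)$-type constants through the two norm-comparison steps yields $\sum_i \alpha_i^2 \le 4\alpha^2$ (the slack factor $4$ comfortably absorbing the $(1+\alpha)/(1-\alpha)$ factors since $\alpha$ is tiny).

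I expect the only real bookkeeping obstacle to be chaining the norm equivalences cleanly: one must be careful that $\|\delta_{x,i}\|_{\ov x_i}$ is measured in $\nabla^2\phi_i(\ov x_i)$ while the projection identity is cleanest in the $\widetilde V_i^{-1}$ norm, and similarly on the $h$ side. Since $(1-\alpha)(\nabla^2\phi_i(\ov x_i))^{-1}\preceq \widetilde V_i\preceq(1+\alpha)(\nabla^2\phi_i(\ov x_i))^{-1}$ is equivalent to $\frac{1}{1+\alpha}\nabla^2\phi_i(\ov x_i)\preceq \widetilde V_i^{-1}\preceq\frac{1}{1-\alpha}\nabla^2\phi_i(\ov x_i)$, each conversion costs a factor of at most $\frac{1}{1-\alpha}$, and applying it twice (once for $\delta_x$, once for $h$) gives a total factor $\frac{1+\alpha}{1-\alpha}\le 4$ for $\alpha\le 1/2$, which is all we need. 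There is no genuine analytic difficulty here beyond the contractivity of the orthogonal projection and the normalization built into $c_i^t$; it is essentially a one-line calculation dressed in self-concordance-robustness constants.
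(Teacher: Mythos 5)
Your proposal is correct and follows essentially the same route as the paper's proof: both pass from the $\nabla^{2}\phi(\ov{x})$-norm to the $\wt V$-norm using the $(1\pm\alpha)$ spectral approximation, use that $I-\wt P$ is an orthogonal projection to get $\sum_i\alpha_i^2\lesssim h^{\top}\wt V h$, and then invoke the normalization $\sum_i \bigl(c_i^t\gamma_i^t\bigr)^2\le 1$ built into the definition of $c_i^t$ to conclude $h^{\top}\wt V h\lesssim\alpha^2$. The constant bookkeeping matches as well, so no gaps.
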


\begin{proof}Note that
\[
\sum_{i=1}^m \alpha_{i}^{2}=\|\delta_{x}\|_{\ov{x}}^{2}=h^{\top}\wt V^{1/2}(I-\wt P)\wt V^{1/2}\nabla^{2}\phi(\ov{x})\wt V^{1/2}(I-\wt P)\wt V^{1/2}h.
\]
Since $(1-\alpha)(\nabla^{2}\phi_{i}(\ov{x}_{i}))^{-1}\preceq\wt V_{i}\preceq(1+\alpha)(\nabla^{2}\phi_{i}(\ov{x}_{i}))^{-1}$,
we have that
\[
(1-\alpha)(\nabla^{2}\phi(\ov{x}))^{-1}\preceq\wt V\preceq(1+\alpha)(\nabla^{2}\phi(\ov{x}))^{-1}.
\]
Using $\alpha\leq\frac{1}{10000}$, we have that
\[
\sum_{i=1}^m \alpha_{i}^{2} \leq2h^{\top}\wt V^{1/2}(I-\wt P)(I-\wt P)\wt V^{1/2}h\leq2h^{\top}\wt Vh
\]
where we used that $I-\wt P$ is an orthogonal projection at the end.
Finally, we note that
\begin{align*}
h^{\top}\wt Vh 
\leq & ~ 2 \sum_{i=1}^m \|h_{i}\|_{\ov{x}_{i}}^{*2}\\
 = & ~ 2\alpha^{2}\sum_{i=1}^m c_{i}^{t}(\ov{x},\ov{s})^{2}\|\mu_{i}^{t}(\ov{x},\ov{s})\|_{\ov{x}_{i}}^{*2}\\
\leq & ~ 2\alpha^{2}\sum_{i=1}^m \frac{ \exp (2 \lambda \gamma_i^t ( \ov{x}, \ov{s} ) ) /  \gamma_i^t ( \ov{x} , \ov{s} )^2 }{ \sum_{i=1}^m \exp( 2 \lambda \gamma_i^t ( \ov{x}, \ov{s} ) )^{1/2} }  \|\mu_{i}^{t}(\ov{x},\ov{s})\|_{\ov{x}_{i}}^{*2}\\
 = & ~ 2\alpha^{2}\frac{ \sum_{i=1}^m \exp(2\lambda\gamma_{i}^{t}(\ov{x},\ov{s}))}{\sum_{i=1}^m \exp(2\lambda\gamma_{i}^{t}(\ov{x},\ov{s}))} \\
 = & ~ 2\alpha^{2} 
\end{align*}
where the second step follows from definition of $h_i$ \eqref{eq:def_h_i_t}, the third step follows from definition $c_i^t$ \eqref{eq:def_c_i_t}, the fourth step follows from definition of $\gamma_i^t$ \eqref{eq:def_gamma_i_t}.

Therefore, putting it all together, we can show
\begin{align*}
\sum_{i=1}^m \alpha_i^2 \leq 4 \alpha^2.
\end{align*}

\end{proof}

\subsection{Changes in $\mu$ and $\gamma$}\label{sec:robust_central_path_changes_in_mu_gamma}

We provide basic lemmas that bound changes in $\mu, \gamma$ due to the centering steps.

\begin{lemma}[Changes in $\mu$]\label{lem:changes_in_mu}For all
$i\in[m]$, let 
\[
\mu_{i}^{t}(x^{\new},s^{\new})=\mu_{i}^{t}(x,s)+h_{i}+\epsilon_{i}^{(\mu)}.
\]
Then, $\|\epsilon_{i}^{(\mu)}\|_{ x_{i}}^{*}\leq10\alpha\cdot\alpha_{i}$.
\end{lemma}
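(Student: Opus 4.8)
The plan is to expand $\mu_i^t(x^{\new},s^{\new})$ directly using the definition $\mu_i^t(x,s) = s_i/t + \nabla\phi_i(x_i)$ and the updates $x^{\new} = x + \delta_x$, $s^{\new} = s + \delta_s$. Writing
\[
\mu_i^t(x^{\new},s^{\new}) = \frac{s_i + \delta_{s,i}}{t} + \nabla\phi_i(x_i + \delta_{x,i}),
\]
I would isolate the change as $\frac{1}{t}\delta_{s,i} + (\nabla\phi_i(x_i+\delta_{x,i}) - \nabla\phi_i(x_i))$. The idea is to compare this to the ``ideal'' Newton increment $h_i$, which by construction of the Newton step satisfies $\frac{1}{t}\delta_{s}^{\ideal} + \nabla^2\phi(x)\,\delta_x^{\ideal} = h$. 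The error term $\epsilon_i^{(\mu)}$ then splits into two pieces: first, the discrepancy between $\nabla\phi_i(x_i+\delta_{x,i}) - \nabla\phi_i(x_i)$ and its linearization $\nabla^2\phi_i(x_i)\delta_{x,i}$ (a second-order Taylor remainder), and second, the discrepancy coming from the fact that the actual step $\delta_x,\delta_s$ is computed with the approximate objects $\wt V$, $\wt P$, $\ov x$, $\ov s$ rather than the exact $W=(\nabla^2\phi(x))^{-1}$, $P$, $x$, $s$.

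For the first piece, I would invoke self-concordance. Since $\|\delta_{x,i}\|_{\ov x_i} = \alpha_i$ and $\alpha_i \lesssim \alpha$ is tiny (by Lemma~\ref{lem:alpha_i}), Theorem~\ref{thm:hessiansc} gives $\nabla^2\phi_i(x_i + \tau\delta_{x,i}) \preceq (1-O(\alpha_i))^{-2}\nabla^2\phi_i(x_i)$ for all $\tau\in[0,1]$, and also lets me pass between the $x_i$-norm and the $\ov x_i$-norm at bounded multiplicative cost. Then the Taylor remainder
\[
\nabla\phi_i(x_i+\delta_{x,i}) - \nabla\phi_i(x_i) - \nabla^2\phi_i(x_i)\delta_{x,i} = \int_0^1 (\nabla^2\phi_i(x_i+\tau\delta_{x,i}) - \nabla^2\phi_i(x_i))\,\delta_{x,i}\,\d\tau
\]
has $x_i$-dual-norm bounded by $O(\alpha_i)\cdot\|\delta_{x,i}\|_{x_i} = O(\alpha_i^2)$, which is of the desired form $O(\alpha\cdot\alpha_i)$ since $\alpha_i \le O(\alpha)$.

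For the second piece, I would account for the perturbations $\|\ov x_i - x_i\|_{\ov x_i} < \alpha$, $\|\ov s_i - s_i\|^*_{\ov x_i} < t\alpha$, and $(1-\alpha)(\nabla^2\phi_i(\ov x_i))^{-1}\preceq \wt V_i \preceq (1+\alpha)(\nabla^2\phi_i(\ov x_i))^{-1}$, tracking how each propagates through the formulas \eqref{eq:delta_x}, \eqref{eq:delta_s} for $\delta_x,\delta_s$ and through $h_i$ in \eqref{eq:def_h_i_t}; here the bound $c_i^t \le \frac{1}{96\sqrt\alpha}$ and $\gamma_i^t \le 1$ on the robust central path keep $\|h_i\|^*_{\ov x_i}$ under control, and each mismatch contributes $O(\alpha)$ relative error to a quantity of size $O(\alpha_i)$, hence $O(\alpha\cdot\alpha_i)$ absolutely. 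The main obstacle is the bookkeeping in this second part: there are several sources of error (the $\ov x$ vs $x$ gap inside $\nabla\phi$ and inside the Hessian norm, the $\ov s$ vs $s$ gap, and the $\wt V$ vs $W$ gap), and one must be careful that the orthogonal-projection structure ($\|I-\wt P\|\le 1$) is used to avoid any blow-up, and that all constants combine to land under the clean bound $10\alpha\cdot\alpha_i$ rather than something with a worse constant or an extra $\log$ factor. Assembling the two pieces by the triangle inequality in $\|\cdot\|^*_{x_i}$ then yields the claim.
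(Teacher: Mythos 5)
Your first piece (the Taylor remainder for $\nabla\phi_i$) is handled correctly, but your second piece rests on a misidentification of what $\epsilon_i^{(\mu)}$ is, and as described it would not land under the bound $10\alpha\cdot\alpha_i$. The key fact you are missing is that the actual step and the actual $h_i$ are \emph{exactly} consistent: from \eqref{eq:delta_x} and \eqref{eq:delta_s} one has the algebraic identity $\frac{1}{t}\delta_{s,i}+\wt V_i^{-1}\delta_{x,i}=h_i$ (because $\wt P+(I-\wt P)=I$ and $\wt V$ is block diagonal), where $h_i$ is the very vector from \eqref{eq:def_h_i_t} appearing in the lemma statement. Consequently
\[
\epsilon_{i}^{(\mu)}=\int_{0}^{1}\bigl(\nabla^{2}\phi_{i}(x_{i}^{(u)})-\nabla^{2}\phi_{i}(\ov{x}_{i})\bigr)\delta_{x,i}\,\d u+\bigl(\nabla^{2}\phi_{i}(\ov{x}_{i})-\wt V_{i}^{-1}\bigr)\delta_{x,i},
\]
i.e.\ the entire error is a Hessian/$\wt V$ mismatch applied to $\delta_{x,i}$; each factor multiplying $\delta_{x,i}$ is $O(\alpha)$ spectrally (by self-concordance and the assumption on $\wt V_i$), giving $O(\alpha)\cdot\|\delta_{x,i}\|_{\ov x_i}=O(\alpha\,\alpha_i)$. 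Neither $\wt P$, nor the $\ov s$-versus-$s$ gap, nor the $\ov x$-versus-$x$ gap inside $\mu_i^t(\ov x,\ov s)$ enters $\epsilon_i^{(\mu)}$ at all, and no bound on $\|h_i\|_{\ov x_i}^*$ is needed.

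Your plan instead charges to $\epsilon_i^{(\mu)}$ the discrepancy between the computed $h_i=-\alpha\,c_i^t(\ov x,\ov s)\mu_i^t(\ov x,\ov s)$ and an idealized increment built from $x,s,W,P$. That discrepancy is real but of the wrong size for this lemma: for instance, replacing $\mu_i^t(\ov x,\ov s)$ by $\mu_i^t(x,s)$ costs $\alpha\,c_i^t(\ov x,\ov s)\cdot\|\mu_i^t(x,s)-\mu_i^t(\ov x,\ov s)\|_{x_i}^*\le 4\alpha^2 c_i^t(\ov x,\ov s)$ by Lemma~\ref{lem:changes_in_gamma_help}, and since $c_i^t$ can be as large as $\frac{1}{96\sqrt{\alpha}}$ this is $\Theta(\alpha^{3/2})$, which is not $O(\alpha\cdot\alpha_i)$ when $\alpha_i\ll\sqrt{\alpha}$; there is no reason a perturbation of $h_i$ should scale with the per-block step size $\alpha_i$. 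The paper keeps these $h$-perturbations out of $\epsilon_i^{(\mu)}$ precisely for this reason and accounts for them one lemma later, in $\epsilon_i^{(\gamma)}\le 10\alpha(\alpha\,c_i^t(\ov x,\ov s)+\alpha_i)$ of Lemma~\ref{lem:changes_in_gamma}, where the extra $\alpha\,c_i^t$ term lives. The fix is therefore not more careful bookkeeping of your second piece, but using the exact identity above so that your second piece collapses to $(\nabla^{2}\phi_{i}(x_{i})-\wt V_{i}^{-1})\delta_{x,i}$ (or its $\ov x_i$-centered analogue) and nothing else.
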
 \begin{proof}

Let $x^{(u)}=ux^{\new}+(1-u)x$ and $\mu_{i}^{\new}=\mu_{i}^{t}(x^{\new},s^{\new})$.
The definition of $\mu$ (\ref{eq:def_mu_i_t}) shows that
\begin{align*}
\mu_{i}^{\new} & =\mu_{i}+\frac{1}{t}\delta_{s,i}+\nabla\phi_{i}(x_{i}^{\new})-\nabla\phi_{i}(x_{i})\\
 & =\mu_{i}+\frac{1}{t}\delta_{s,i}+\int_{0}^{1}\nabla^{2}\phi_{i}(x_{i}^{(u)})\delta_{x,i}\, \d u\\
 & =\mu_{i}+\frac{1}{t}\delta_{s,i}+\nabla^{2}\phi_{i}(\ov{x}_{i})\delta_{x,i}+\int_{0}^{1}\left(\nabla^{2}\phi_{i}(x_{i}^{(u)})-\nabla^{2}\phi_{i}(\ov{x}_{i})\right)\delta_{x,i} \,  \d u.
\end{align*}
By the definition of $\delta_{x}$ and $\delta_{s}$ (\ref{eq:delta_x})
and (\ref{eq:delta_s}), we have that $\frac{1}{t}\delta_{s,i}+\wt V_{i}^{-1}\delta_{x,i}=h_{i}.$
Hence, we have
\[
\mu_{i}^{\new}=\mu_{i}+h_{i}+\epsilon_{i}^{(\mu)}
\]
where 
\begin{equation}
\epsilon_{i}^{(\mu)}=\int_{0}^{1}\left(\nabla^{2}\phi_{i}(x_{i}^{(u)})-\nabla^{2}\phi_{i}(\ov{x}_{i})\right)\delta_{x,i}\, \d u+(\nabla^{2}\phi_{i}(\ov{x}_{i})-\wt V_{i}^{-1})\delta_{x,i}.\label{eq:change_mu_1}
\end{equation}

To bound $\epsilon_{i}^{(\mu)}$, we note that
\begin{align*}
\|x_{i}^{(t)}-\ov x_{i}\|_{\ov x_{i}}\leq\|x_{i}^{(t)}-x_{i}\|_{\ov x_{i}}+\|x_{i}-\ov x_{i}\|_{\ov x_{i}}\leq\|\delta_{x,i}\|_{\ov x_{i}}+\alpha = \alpha_i + \alpha  \leq3\alpha
\end{align*}
where the first step follows from triangle inequality, the third step follows from definition of $\alpha_i$ (Lemma~\ref{lem:alpha_i}), and the last step follows from $\alpha_i \leq 2 \alpha$ (Lemma~\ref{lem:alpha_i}). 

Using $\alpha\leq\frac{1}{100}$,
Theorem~\ref{thm:hessiansc} shows that
\[
-7\alpha\cdot\nabla^{2}\phi_{i}(\ov x_{i})\preceq\nabla^{2}\phi_{i}(x_{i}^{(u)})-\nabla^{2}\phi_{i}(\ov{x}_{i})\preceq7\alpha\cdot\nabla^{2}\phi_{i}(\ov x_{i}).
\]
Equivalently, we have
\[
(\nabla^{2}\phi_{i}(x_{i}^{(u)})-\nabla^{2}\phi_{i}(\ov{x}_{i})) \cdot (\nabla^{2}\phi_{i}(\ov x_{i}))^{-1} \cdot (\nabla^{2}\phi_{i}(x_{i}^{(u)})-\nabla^{2}\phi_{i}(\ov{x}_{i}))\preceq(7\alpha)^{2}\cdot\nabla^{2}\phi_{i}(\ov x_{i}).
\]
Using this, we have
\begin{align}
\left\| \int_{0}^{1}\left(\nabla^{2}\phi_{i}(x_{i}^{(u)})-\nabla^{2}\phi_{i}(\ov{x}_{i})\right)\delta_{x,i} \, \d u \right\|_{\ov{x}_{i}}^{*} & \leq\int_{0}^{1} \left\|\left(\nabla^{2}\phi_{i}(x_{i}^{(u)})-\nabla^{2}\phi_{i}(\ov{x}_{i})\right)\delta_{x,i} \right\|_{\ov{x}_{i}}^{*} \,\d u\nonumber \\
 & \leq7\alpha\|\delta_{x,i}\|_{\ov{x}_{i}}=7\alpha\cdot\alpha_{i},\label{eq:change_mu_2}
\end{align}
where the last step follows from definition of $\alpha_i$ (Lemma~\ref{lem:alpha_i}).

For the other term in $\epsilon_{i}^{(\mu)}$, we note that
\[
(1-2\alpha) \cdot (\nabla^{2}\phi_{i}(\ov{x}_{i}))\preceq\wt V_{i}^{-1}\preceq(1+2\alpha) \cdot (\nabla^{2}\phi_{i}(\ov{x}_{i})).
\]
Hence, we have
\begin{equation}
\left\| (\nabla^{2}\phi_{i}(\ov{x}_{i})-\wt V_{i}^{-1})\delta_{x,i} \right\|_{\ov{x}_{i}}^{*}\leq2\alpha\|\delta_{x,i}\|_{\ov{x}_{i}}=2\alpha\cdot\alpha_{i}.\label{eq:change_mu_3}
\end{equation}
Combining (\ref{eq:change_mu_1}), (\ref{eq:change_mu_2}) and (\ref{eq:change_mu_3}),
we have
\[
\|\epsilon_{i}^{(\mu)}\|_{\ov x_{i}}^{*}\leq9\alpha\cdot\alpha_{i}.
\]
Finally, we use the fact that $x_{i}$ and $\ov{x}_{i}$ are
$\alpha$ close and hence again by self-concordance, $\|\epsilon_{i}^{(\mu)}\|_{x_{i}}^{*}\leq 10 \alpha\cdot\alpha_{i}$.

\end{proof}

Before bounding the change of $\gamma$, we first prove a helper lemma:

\begin{lemma}\label{lem:changes_in_gamma_help}For all $i\in[m]$,
we have
\[
\|\mu_{i}^{t}(x,s)-\mu_{i}^{t}(\ov{x},\ov{s})\|_{x_{i}}^{*}\leq4\alpha.
\]
\end{lemma}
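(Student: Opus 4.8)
The plan is to expand $\mu_i^t(x,s) - \mu_i^t(\ov x_i, \ov s_i)$ using the definition \eqref{eq:def_mu_i_t} and bound the two resulting pieces separately. Writing out the definition, $\mu_i^t(x,s) - \mu_i^t(\ov x,\ov s) = \frac{1}{t}(s_i - \ov s_i) + (\nabla\phi_i(x_i) - \nabla\phi_i(\ov x_i))$, so the norm we want to bound is at most $\frac{1}{t}\|s_i - \ov s_i\|_{x_i}^* + \|\nabla\phi_i(x_i) - \nabla\phi_i(\ov x_i)\|_{x_i}^*$.

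For the first term, the algorithm guarantees $\|\ov s_i - s_i\|_{\ov x_i}^* < t\alpha$, and since $x_i$ and $\ov x_i$ are $\alpha$-close in the $\|\cdot\|_{\ov x_i}$ norm, Theorem~\ref{thm:hessiansc} lets me convert this to the $\|\cdot\|_{x_i}^*$ norm at the cost of a factor like $(1-\alpha)^{-1}$, so $\frac{1}{t}\|s_i - \ov s_i\|_{x_i}^* \leq \alpha/(1-\alpha) \leq 2\alpha$ using $\alpha \leq \frac{1}{100}$. For the second term, I would write $\nabla\phi_i(x_i) - \nabla\phi_i(\ov x_i) = \int_0^1 \nabla^2\phi_i(\ov x_i + u(x_i - \ov x_i))[x_i - \ov x_i]\,\d u$, so that $\|\nabla\phi_i(x_i) - \nabla\phi_i(\ov x_i)\|_{\ov x_i}^* \leq \int_0^1 \|\nabla^2\phi_i(x_i^{(u)})(x_i-\ov x_i)\|_{\ov x_i}^*\,\d u$; along the segment each intermediate point stays within $\alpha$ of $\ov x_i$, so Theorem~\ref{thm:hessiansc} bounds $\nabla^2\phi_i(x_i^{(u)}) \preceq (1-\alpha)^{-2}\nabla^2\phi_i(\ov x_i)$, giving this integral at most $(1-\alpha)^{-2}\|x_i - \ov x_i\|_{\ov x_i} \leq (1-\alpha)^{-2}\alpha \leq 2\alpha$. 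Converting the $\ov x_i$-norm back to the $x_i$-norm via self-concordance once more costs another $(1-\alpha)^{-1}$ factor. Adding the two contributions and absorbing the constants using $\alpha \leq \frac{1}{100}$ yields the claimed bound of $4\alpha$.

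The main obstacle is simply bookkeeping the self-concordance conversions carefully: every time we switch between the $\ov x_i$-Hessian norm and the $x_i$-Hessian norm we pick up a $(1\pm\alpha)$-type factor, and we need to make sure these accumulated factors, together with the factor from the integral along the segment, still fit under the constant $4$. Since $\alpha \leq 2^{-20}\lambda^{-2} \leq \frac{1}{100}$ this is comfortably true, but the proof should state the conversion steps explicitly rather than hand-wave them. There is no genuine analytic difficulty here — this is a routine "triangle inequality plus stability of the Hessian" estimate, and its only role is to feed into the subsequent bound on the change of $\gamma$.
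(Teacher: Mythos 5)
Your proposal is correct and follows essentially the same route as the paper: split $\mu_i^t(x,s)-\mu_i^t(\ov x,\ov s)$ into the $\frac{1}{t}(s_i-\ov s_i)$ piece and the gradient-difference piece, bound the latter via the integral of the Hessian along the segment together with Theorem~\ref{thm:hessiansc}, and convert between the $\ov x_i$- and $x_i$-dual norms at a $(1-\alpha)^{-1}$ cost. One small caution: if you literally add your rounded intermediate bounds ($2\alpha$ plus $2\alpha$ times another conversion factor) you land slightly above $4\alpha$, so in the write-up you should carry the exact factors $\alpha/(1-\alpha)+\alpha/(1-\alpha)^{3}\leq 3\alpha$ (as the paper effectively does, getting $3\alpha$ in the $\ov x_i$-norm before the final conversion) rather than rounding each piece to $2\alpha$ first.
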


\begin{proof}Note that
\[
\|\mu_{i}^{t}(x,s)-\mu_{i}^{t}(\ov{x},\ov{s})\|_{\ov x_{i}}^{*}=\frac{1}{t}\|s_{i}-\ov{s}_{i}\|_{\ov x_{i}}^{*}+\|\nabla\phi_{i}(x_{i})-\nabla\phi_{i}(\ov{x}_{i})\|_{\ov x_{i}}^{*}.
\]
For the first term, we have $\|s_{i}-\ov{s}_{i}\|_{\ov x_{i}}^{*}\leq t\alpha$.

For the second term, let $x_{i}^{(u)}=ux_{i}+(1-u)\ov{x}_{i}$.
Since $x_{i}$ is close enough to $\ov{x}_{i}$, Theorem~\ref{thm:hessiansc}
shows that $\nabla^{2}\phi_{i}(x_{i}^{(u)})\preceq2\cdot\nabla^{2}\phi_{i}(\ov{x}_{i})$.
Hence, we have 
\begin{align*}
\| \nabla\phi_{i}(x_{i})-\nabla\phi_{i}(\ov{x}_{i}) \|_{\ov x_{i}}^{*}= \left\|\int_{0}^{1}\nabla^{2}\phi_{i}(x_{i}^{(u)})\cdot(x_{i}-\ov{x}_{i})du \right\|_{\ov x_{i}}^{*}\leq2\|x_{i}-\ov{x}_{i}\|_{\ov x_{i}}=2\alpha.
\end{align*}
Hence, we have $\|\mu_{i}^{t}(x,s)-\mu_{i}^{t}(\ov{x},\ov{s})\|_{\ov{x}_{i}}^{*}\leq3\alpha$
and using again $x_{i}$ is close enough to $\ov{x}_{i}$ to
get the final result.

\end{proof}

\begin{lemma}[Changes in $\gamma$]\label{lem:changes_in_gamma}For
all $i\in[m]$, let 
\[
\gamma_{i}^{t}(x^{\new},x,s^{\new})\leq(1-\alpha\cdot c_{i}^{t}(\ov{x},\ov{s}))\gamma_{i}^{t}(x,x,s)+\epsilon_{i}^{(\gamma)}.
\]
then $\epsilon_{i}^{(\gamma)}\leq10\alpha\cdot(\alpha c_{i}^{t}(\ov{x},\ov{s})+\alpha_{i})$.
Furthermore, we have $|\gamma_{i}^{t}(x^{\new},x,s^{\new})-\gamma_{i}^{t}(x,x,s)|\leq3\alpha.$
\end{lemma}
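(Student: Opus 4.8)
The plan is to control $\gamma_i^t(x^{\new},x,s^{\new}) = \|\mu_i^t(x^{\new},s^{\new})\|_{x_i}^*$ by plugging in the decomposition from Lemma~\ref{lem:changes_in_mu}, namely $\mu_i^t(x^{\new},s^{\new}) = \mu_i^t(x,s) + h_i + \epsilon_i^{(\mu)}$, and then tracking three sources of deviation: (a) the main progress term $h_i$, (b) the ``ideal-vs-perturbed'' gap coming from the fact that $h_i$ is built from $\ov x,\ov s$ rather than $x,s$, and (c) the Hessian-approximation error $\epsilon_i^{(\mu)}$. First I would apply the triangle inequality in the $\|\cdot\|_{x_i}^*$ norm to write
\begin{align*}
\gamma_i^t(x^{\new},x,s^{\new}) \leq \|\mu_i^t(x,s) + h_i\|_{x_i}^* + \|\epsilon_i^{(\mu)}\|_{x_i}^*,
\end{align*}
and then bound the last term by $10\alpha\cdot\alpha_i$ using Lemma~\ref{lem:changes_in_mu} directly.

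The crux is the first term. Recall $h_i = -\alpha\, c_i^t(\ov x,\ov s)\,\mu_i^t(\ov x,\ov s)$. I would add and subtract $-\alpha\, c_i^t(\ov x,\ov s)\,\mu_i^t(x,s)$ so that
\begin{align*}
\mu_i^t(x,s) + h_i = \bigl(1 - \alpha\, c_i^t(\ov x,\ov s)\bigr)\mu_i^t(x,s) - \alpha\, c_i^t(\ov x,\ov s)\bigl(\mu_i^t(\ov x,\ov s) - \mu_i^t(x,s)\bigr).
\end{align*}
Taking $\|\cdot\|_{x_i}^*$ norms, the first piece contributes $(1-\alpha c_i^t(\ov x,\ov s))\gamma_i^t(x,x,s)$ exactly (here I use that $\alpha c_i^t \leq 1$, which holds since $c_i^t \leq 1/(96\sqrt\alpha)$ and $\alpha$ is tiny, so the scalar factor is nonnegative and can be pulled out). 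The second piece is bounded by $\alpha\, c_i^t(\ov x,\ov s)\cdot\|\mu_i^t(\ov x,\ov s) - \mu_i^t(x,s)\|_{x_i}^* \leq 4\alpha\cdot\alpha\, c_i^t(\ov x,\ov s)$ by Lemma~\ref{lem:changes_in_gamma_help}. Collecting, we get $\epsilon_i^{(\gamma)} \leq 4\alpha^2 c_i^t(\ov x,\ov s) + 10\alpha\alpha_i \leq 10\alpha(\alpha c_i^t(\ov x,\ov s) + \alpha_i)$, as claimed.

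For the ``furthermore'' bound $|\gamma_i^t(x^{\new},x,s^{\new}) - \gamma_i^t(x,x,s)| \leq 3\alpha$, I would argue directly from $\mu_i^t(x^{\new},s^{\new}) - \mu_i^t(x,s) = h_i + \epsilon_i^{(\mu)}$, so that the change in $\gamma$ (reverse triangle inequality in the fixed norm $\|\cdot\|_{x_i}^*$) is at most $\|h_i\|_{x_i}^* + \|\epsilon_i^{(\mu)}\|_{x_i}^*$. The first is $\alpha\, c_i^t(\ov x,\ov s)\,\|\mu_i^t(\ov x,\ov s)\|_{x_i}^*$; since $\alpha c_i^t(\ov x,\ov s) \leq \frac{\sqrt\alpha}{96}\cdot\frac{\exp(\lambda\gamma_i^t(\ov x,\ov s))}{\gamma_i^t(\ov x,\ov s)(\sum_j \exp(2\lambda\gamma_j^t(\ov x,\ov s)))^{1/2}}$, the ratio $\exp(\lambda\gamma_i^t)/(\sum_j\exp(2\lambda\gamma_j^t))^{1/2} \leq 1$ and $\|\mu_i^t(\ov x,\ov s)\|_{x_i}^* \approx \gamma_i^t(\ov x,\ov s)$ up to a constant (using self-concordance to compare the $\ov x_i$ and $x_i$ norms, via Lemma~\ref{lem:changes_in_gamma_help} or Theorem~\ref{thm:hessiansc}), so $\|h_i\|_{x_i}^* = O(\sqrt\alpha)$; combined with $\|\epsilon_i^{(\mu)}\|_{x_i}^* \leq 10\alpha\alpha_i \leq 20\alpha^2$ this is $\leq 3\alpha$ for $\alpha$ small enough.

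I expect the main obstacle to be bookkeeping the norm changes cleanly: every bound involves quantities measured in slightly different Hessian norms ($\|\cdot\|_{x_i}$, $\|\cdot\|_{\ov x_i}$, $\|\cdot\|_{x_i^{(u)}}$), and one must repeatedly invoke Theorem~\ref{thm:hessiansc} (or Lemma~\ref{lem:changes_in_gamma_help}) to pass between them while keeping the accumulated constants below the stated thresholds. None of these steps is deep, but the constant-chasing — in particular verifying $\alpha c_i^t \le 1$ and that $\|h_i\|_{x_i}^* = O(\sqrt\alpha)$ rather than $O(\alpha)$ — is where an error would most easily creep in.
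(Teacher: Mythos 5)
Your proof of the first claim is correct and follows the paper's argument essentially verbatim: the same splitting of $\mu_i^t(x,s)+h_i$ into $(1-\alpha c_i^t(\ov{x},\ov{s}))\mu_i^t(x,s)$ plus the cross term $\alpha c_i^t(\ov{x},\ov{s})(\mu_i^t(x,s)-\mu_i^t(\ov{x},\ov{s}))$, the same use of Lemma~\ref{lem:changes_in_gamma_help} to get $4\alpha^2 c_i^t$, and the same $10\alpha\alpha_i$ from Lemma~\ref{lem:changes_in_mu}.

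The second claim is where you have a genuine gap, and it is exactly the constant you flagged as risky. Your bound $\|h_i\|_{x_i}^*=O(\sqrt{\alpha})$ is too weak to conclude $|\gamma_i^t(x^{\new},x,s^{\new})-\gamma_i^t(x,x,s)|\leq 3\alpha$: since $\alpha=2^{-20}\lambda^{-2}$ is tiny, $O(\sqrt{\alpha})\leq 3\alpha$ is false (it would force $\alpha\gtrsim 1$), so "for $\alpha$ small enough" goes the wrong way. The loss comes from your first substitution $\alpha\leq\frac{\sqrt{\alpha}}{96}$, which discards a factor of $96\sqrt{\alpha}$ that you need. The correct accounting keeps the exact cancellation between the $1/\gamma_i^t(\ov{x},\ov{s})$ inside $c_i^t$ and the factor $\|\mu_i^t(\ov{x},\ov{s})\|_{\ov{x}_i}^*=\gamma_i^t(\ov{x},\ov{s})$:
\begin{align*}
\|h_i\|_{\ov{x}_i}^* \;=\; \alpha\, c_i^t(\ov{x},\ov{s})\,\gamma_i^t(\ov{x},\ov{s}) \;=\; \alpha\cdot\frac{\exp(\lambda\gamma_i^t(\ov{x},\ov{s}))}{\big(\sum_{j}\exp(2\lambda\gamma_j^t(\ov{x},\ov{s}))\big)^{1/2}} \;\leq\; \alpha,
\end{align*}
because $\exp(2\lambda\gamma_i^t)\leq\sum_j\exp(2\lambda\gamma_j^t)$. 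Self-concordance then gives $\|h_i\|_{x_i}^*\leq 2\alpha$, and together with $\|\epsilon_i^{(\mu)}\|_{x_i}^*\leq 10\alpha\alpha_i\leq 20\alpha^2\leq\alpha$ this yields the stated $3\alpha$, which is how the paper closes the argument. The rest of your reasoning for the second claim (reverse triangle inequality in the fixed norm $\|\cdot\|_{x_i}^*$) is fine once this bound is repaired.
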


\begin{proof}

For the first claim, Lemma \ref{lem:changes_in_mu}, the definition
of $\gamma$ (\ref{eq:def_gamma_i_t}), $h$ (\ref{eq:def_h_i_t})
and $c$ (\ref{eq:def_c_i_t}) shows that 
\begin{align*}
\gamma_{i}^{t}(x^{\new},x,s^{\new}) & =\|\mu_{i}^{t}(x,s)+h_{i}+\epsilon_{i}^{(\mu)}\|_{x_{i}}^{*}\\
 & =\|(1-\alpha\cdot c_{i}^{t}(\ov{x},\ov{s}))\mu_{i}^{t}(x,s)+\epsilon_{i}\|_{x_{i}}^{*}
\end{align*}
where $\epsilon_{i}=\alpha\cdot c_{i}^{t}(\ov{x},\ov{s})(\mu_{i}^{t}(x,s)-\mu_{i}^{t}(\ov{x},\ov{s}))+\epsilon_{i}^{(\mu)}.$

From the definition of $c_{i}^{t}$, we have that $c_{i}^{t}\leq\frac{1}{96\sqrt{\alpha}}\leq\frac{1}{\alpha}$
and hence $0\leq1-\alpha\cdot c_{i}^{t}(\ov{x},\ov{s})\leq1$.
Therefore, we have
\begin{align}
\gamma_{i}^{t}(x^{\new},x,s^{\new})\leq & (1-\alpha\cdot c_{i}^{t}(\ov{x},\ov{s}))\gamma_{i}^{t}(x,x,s)+\|\epsilon_{i}\|_{x_{i}}^{*}.\label{eq:gamma_change_gamma}
\end{align}
Now, we bound $\|\epsilon_{i}\|_{x_{i}}^{*}$:
\begin{align}
\|\epsilon_{i}\|_{x_{i}}^{*} & \leq\alpha c_{i}^{t}(\ov{x},\ov{s})\cdot\|\mu_{i}^{t}(x,s)-\mu_{i}^{t}(\ov{x},\ov{s})\|_{x_{i}}^{*}+\|\epsilon_{i}^{(\mu)}\|_{x_{i}}^{*}\nonumber \\
 & \leq4\alpha^{2}c_{i}^{t}(\ov{x},\ov{s})+10\alpha\cdot\alpha_{i}\label{eq:gamma_change_eps}
\end{align}
where we used Lemma \ref{lem:changes_in_gamma_help} and Lemma \ref{lem:changes_in_mu}
at the end.

For the second claim, we have
\[
\left|\gamma_{i}^{t}(x^{\new},x,s^{\new})-\gamma_{i}^{t}(x,x,s)\right|\leq\|h_{i}+\epsilon_{i}^{(\mu)}\|_{x_{i}}^{*}\leq2\alpha+10\alpha\cdot\alpha_{i}
\]
where we used (\ref{eq:gamma_change_eps}) and that $\|h_{i}\|_{x_{i}}^{*}\leq2\|h\|_{\ov{x}}^{*}\leq2\alpha$.
From Lemma \ref{lem:alpha_i} and that $\alpha\leq\frac{1}{10000}$,
we have $10\alpha\cdot\alpha_{i}\leq20\alpha^{2}\leq\alpha$.

\end{proof}

\subsection{Movement from $(x,x,s)$ to $(x^{\protect\new},x,s^{\protect\new})$}\label{sec:robust_central_path_changes_in_xs}

In the previous section, we see that $\gamma_i$ will be expected to decrease by a factor of $\alpha \cdot c_i^t$ up to some small perturbations. We show that our potential $\Phi^t$ will therefore decrease significantly.  

\begin{lemma}[Movement along the first and third parameters]\label{lem:changes_in_xs}Assume
that $\gamma_{i}^{t}(x,x,s)\leq1$ for all $i$. We
have 
\[
\Phi^{t}(x^{\new},x,s^{\new})\leq\Phi^{t}(x,x,s)-\frac{\alpha\lambda}{5} \left(\sum_{i=1}^m \exp(2\lambda\gamma_{i}^{t}(\ov{x},\ov{s})) \right)^{1/2}+\sqrt{m}\lambda\cdot\exp(192\lambda\sqrt{\alpha}).
\]
Note that $\gamma$ is a function that has three inputs. We use $\gamma(x,s)$ to denote $\gamma(x,x,s)$ for simplicity. 
\end{lemma}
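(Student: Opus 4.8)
The plan is to track how the soft-max potential $\Phi^t(x,z,s) = \sum_i \exp(\lambda \gamma_i^t(x,z,s))$ changes when we replace $(x,s)$ by $(x^{\new},s^{\new})$ while keeping the Hessian block at $z=x$. By Lemma~\ref{lem:changes_in_gamma}, for each $i$ we have $\gamma_i^{\new} := \gamma_i^t(x^{\new},x,s^{\new}) \leq (1-\alpha c_i^t(\ov x,\ov s))\gamma_i^t(x,x,s) + \epsilon_i^{(\gamma)}$ with $\epsilon_i^{(\gamma)} \leq 10\alpha(\alpha c_i^t + \alpha_i)$, and also $|\gamma_i^{\new} - \gamma_i^t(x,x,s)| \leq 3\alpha$. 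First I would write $\exp(\lambda\gamma_i^{\new})$ in terms of $\exp(\lambda\gamma_i^t(x,x,s))$: since $\gamma_i^{\new} - \gamma_i^t(x,x,s) = -\alpha c_i^t \gamma_i^t(x,x,s) + \epsilon_i^{(\gamma)}$ is small (bounded by $3\alpha$), I can use the elementary inequality $e^{a} \le 1 + a + a^2$ valid for $|a|\le 1$, or more simply $e^{\lambda\gamma_i^{\new}} \le e^{\lambda\gamma_i^t(x,x,s)}(1 + \lambda(\gamma_i^{\new}-\gamma_i^t(x,x,s)) + (\lambda\cdot 3\alpha)^2)$, exploiting that $\lambda\alpha$ is tiny by the parameter choices ($\alpha = 2^{-20}\lambda^{-2}$).

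Next I would split the first-order term. The "good" part is $-\lambda\alpha c_i^t(\ov x,\ov s)\gamma_i^t(x,x,s)\cdot e^{\lambda\gamma_i^t(x,x,s)}$; the "error" part is $\lambda\epsilon_i^{(\gamma)} e^{\lambda\gamma_i^t(x,x,s)}$. Here is the crucial step: I want to compare $\gamma_i^t(x,x,s)$ and $e^{\lambda\gamma_i^t(x,x,s)}$ against the quantities $\gamma_i^t(\ov x,\ov s)$ and $\exp(\lambda\gamma_i^t(\ov x,\ov s))$ that appear inside $c_i^t(\ov x,\ov s)$ — because $c_i^t(\ov x,\ov s) = \exp(\lambda\gamma_i^t(\ov x,\ov s))/\gamma_i^t(\ov x,\ov s) \,/\, (\sum_j \exp(2\lambda\gamma_j^t(\ov x,\ov s)))^{1/2}$ when $\gamma_i^t(\ov x,\ov s)\ge 96\sqrt\alpha$, and is $0$ otherwise. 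Using Lemma~\ref{lem:changes_in_gamma_help} ($\|\mu_i^t(x,s) - \mu_i^t(\ov x,\ov s)\|^*_{x_i} \le 4\alpha$) together with the Hessian stability Theorem~\ref{thm:hessiansc}, one gets $|\gamma_i^t(x,x,s) - \gamma_i^t(\ov x,\ov s)| \lesssim \alpha$, hence $\gamma_i^t(x,x,s) \cdot e^{\lambda \gamma_i^t(x,x,s)}$ and $\gamma_i^t(\ov x,\ov s)\cdot e^{\lambda\gamma_i^t(\ov x,\ov s)}$ agree up to a multiplicative $(1\pm O(\lambda\alpha))$ factor. On the index set where $c_i^t \neq 0$, substituting the formula for $c_i^t(\ov x,\ov s)$ collapses the good term into $-\lambda\alpha \cdot \exp(2\lambda\gamma_i^t(\ov x,\ov s)) / (\sum_j \exp(2\lambda\gamma_j^t(\ov x,\ov s)))^{1/2}$ (up to the above factor), and summing over $i$ telescopes the numerator against the denominator to yield a clean $-\lambda\alpha (\sum_j \exp(2\lambda\gamma_j^t(\ov x,\ov s)))^{1/2}$; the constant degrades from $1$ to $1/5$ to absorb all the $(1\pm O(\lambda\alpha))$ slack. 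On the complementary index set where $c_i^t = 0$ (i.e. $\gamma_i^t(\ov x,\ov s) < 96\sqrt\alpha$), there is no negative term, but $\gamma_i^t(x,x,s) \le 96\sqrt\alpha + O(\alpha) \le 100\sqrt\alpha$, so $e^{\lambda\gamma_i^t(x,x,s)} \le \exp(100\lambda\sqrt\alpha)$, and the first-order error contribution per coordinate is at most $\lambda\epsilon_i^{(\gamma)}\exp(100\lambda\sqrt\alpha)$ — but actually on this set $\epsilon_i^{(\gamma)}$ only needs the $\alpha_i$ bound, and one shows the total is bounded by something like $\sqrt m \lambda \exp(192\lambda\sqrt\alpha)$ using $\sum_i\alpha_i^2 \le 4\alpha^2$ (Lemma~\ref{lem:alpha_i}) and Cauchy--Schwarz.

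Finally I would collect the error terms. The second-order terms contribute $\sum_i (\lambda\cdot 3\alpha)^2 e^{\lambda\gamma_i^t(x,x,s)} = O(\lambda^2\alpha^2)\Phi^t(x,x,s)$, which is negligible. The first-order error $\sum_i \lambda\epsilon_i^{(\gamma)} e^{\lambda\gamma_i^t(x,x,s)}$ with $\epsilon_i^{(\gamma)} \le 10\alpha(\alpha c_i^t + \alpha_i)$ splits into a piece proportional to $\sum_i c_i^t e^{\lambda\gamma_i^t}$ — which by the same collapse is at most $O(\lambda\alpha^2)\cdot(\sum_j e^{2\lambda\gamma_j^t(\ov x,\ov s)})^{1/2}$ and is swallowed by shrinking the main negative term's constant — and a piece $\sum_i 10\lambda\alpha\alpha_i e^{\lambda\gamma_i^t(x,x,s)}$, which by Cauchy--Schwarz is at most $10\lambda\alpha(\sum_i\alpha_i^2)^{1/2}(\sum_i e^{2\lambda\gamma_i^t(x,x,s)})^{1/2} \le 20\lambda\alpha^2 (\sum_i e^{2\lambda\gamma_i^t})^{1/2}$, again absorbed. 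The genuinely unabsorbable residue is exactly the contribution from coordinates with $c_i^t = 0$ where the exponential is as large as $\exp(\Theta(\lambda\sqrt\alpha))$ and no cancellation happens — this is the $\sqrt m\lambda\exp(192\lambda\sqrt\alpha)$ additive term. I expect the main obstacle to be the bookkeeping in the "collapse" step: carefully tracking that replacing $\gamma_i^t(x,x,s)$ by $\gamma_i^t(\ov x,\ov s)$ inside both the exponentials and the $c_i^t$ formula only costs a controlled multiplicative factor, and that this factor (raised to various small powers) stays within the budget that lets the constant in front of $(\sum_j e^{2\lambda\gamma_j^t(\ov x,\ov s)})^{1/2}$ drop only to $1/5$ rather than becoming negative or blowing up. The interplay between the threshold $96\sqrt\alpha$ in the definition of $c_i^t$ and the constant $192$ in the final error term (note $192 = 2\cdot 96$) is where the case split must be done with the right slack.
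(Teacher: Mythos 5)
Your proposal follows the same architecture as the paper's proof: first-order expansion of the potential, splitting the increment $\gamma_i^{\new}-\gamma_i^t(x,x,s)$ into the drift $-\alpha c_i^t\gamma_i^t$ plus the error $\epsilon_i^{(\gamma)}$ from Lemma~\ref{lem:changes_in_gamma}, transferring between $(x,s)$- and $(\ov x,\ov s)$-quantities at additive cost $O(\alpha)$ in $\gamma$ via Lemma~\ref{lem:changes_in_gamma_help}, collapsing the drift through the explicit formula for $c_i^t$ to produce $-\Theta(\alpha\lambda)\bigl(\sum_i\exp(2\lambda\gamma_i^t(\ov x,\ov s))\bigr)^{1/2}$, charging the thresholded indices $\gamma_i^t(\ov x,\ov s)<96\sqrt\alpha$ to the additive $\sqrt m\lambda\exp(192\lambda\sqrt\alpha)$ term, and absorbing the error terms by Cauchy--Schwarz with Lemma~\ref{lem:alpha_i}. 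The only structural difference is that the paper uses the mean value theorem, writing $\Phi^{\new}-\Phi=\lambda\sum_ie^{\lambda\gamma_i^{(\zeta)}}(\gamma_i^{(1)}-\gamma_i^{(0)})$ and controlling $e^{\lambda\gamma_i^{(\zeta)}}$ within a factor $2$ of $e^{\lambda\gamma_i^t(\ov x,\ov s)}$, so no second-order remainder ever appears.

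That difference is where your write-up has a quantitative slip: you bound the second-order remainder uniformly by $\sum_i(3\lambda\alpha)^2e^{\lambda\gamma_i^t(x,x,s)}=O(\lambda^2\alpha^2)\Phi^t(x,x,s)$ and call it negligible, but it is not. Since $\Phi^t(x,x,s)$ can be as large as $\sqrt m\cdot\bigl(\sum_ie^{2\lambda\gamma_i^t}\bigr)^{1/2}$, absorbing this into the drift term $\frac{\alpha\lambda}{5}\bigl(\sum_ie^{2\lambda\gamma_i^t(\ov x,\ov s)}\bigr)^{1/2}$ would require $\lambda\alpha\sqrt m=O(1)$, which fails for large $m$ given $\lambda\alpha=\Theta(1/\log m)$; it cannot be charged to the $\sqrt m\lambda\exp(192\lambda\sqrt\alpha)$ term either. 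The fix is routine: keep the remainder per-coordinate as $\sum_i(\lambda\Delta_i)^2e^{\lambda\gamma_i}$ with $|\Delta_i|\le\alpha c_i^t\gamma_i^t(\ov x,\ov s)+O(\alpha\alpha_i)$, note that $\sum_i\bigl(c_i^t\gamma_i^t(\ov x,\ov s)\bigr)^2e^{\lambda\gamma_i}\le\max_ie^{\lambda\gamma_i}\le\bigl(\sum_ie^{2\lambda\gamma_i}\bigr)^{1/2}$ and $\sum_i\alpha_i^2\le4\alpha^2$, which makes the remainder $O(\lambda^2\alpha^2)\bigl(\sum_ie^{2\lambda\gamma_i}\bigr)^{1/2}$ and genuinely absorbable --- or simply adopt the mean-value form as the paper does. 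With that repair your argument goes through and matches the paper's.
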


\begin{proof}Let $\Phi^{\new}=\Phi^{t}(x^{\new},x,s^{\new})$, $\Phi=\Phi^{t}(x,x,s)$,
\[
\gamma^{(u)}=u\gamma_{i}^{t}(x^{\new},x,s^{\new})+(1-u)\gamma_{i}^{t}(x,x,s).
\]
Then, we have that
\[
\Phi^{\new}-\Phi= \sum_{i=1}^m (e^{\lambda\gamma_{i}^{(1)}}-e^{\lambda\gamma_{i}^{(0)}})=\lambda \sum_{i=1}^m e^{\lambda\gamma_{i}^{(\zeta)}}(\gamma_{i}^{(1)}-\gamma_{i}^{(0)})
\]
for some $0\leq\zeta\leq1$. Let $v_{i}=\gamma_{i}^{(1)}-\gamma_{i}^{(0)}$.
Lemma \ref{lem:changes_in_gamma} shows that 
\[
v_{i}\leq-\alpha\cdot c_{i}^{t}(\ov{x},\ov{s})\cdot\gamma_{i}^{t}(x,x,s)+\epsilon_{i}^{(\gamma)}=-\alpha\cdot c_{i}^{t}(\ov{x},\ov{s})\cdot\gamma_{i}^{(0)}+\epsilon_{i}^{(\gamma)}
\]
and hence
\begin{align}
\frac{\Phi^{\new}-\Phi}{\lambda} & \leq-\alpha\sum_{i=1}^m c_{i}^{t}(\ov{x},\ov{s})\cdot\gamma_{i}^{(0)}\exp(\lambda\gamma_{i}^{(\zeta)}) + \sum_{i=1}^m \epsilon_{i}^{(\gamma)}\exp(\lambda\gamma_{i}^{(\zeta)}).\label{eq:change_xs_main}
\end{align}

To bound the first term in (\ref{eq:change_xs_main}), we first relate
$\gamma_{i}^{(0)}$, $\gamma_{i}^{(\zeta)}$ and $\gamma_{i}^{t}(\ov{x},\ov{s})$ .
Lemma \ref{lem:changes_in_gamma} shows that
\begin{equation}
|\gamma_{i}^{(0)}-\gamma_{i}^{(\zeta)}|\leq|\gamma_{i}^{(0)}-\gamma_{i}^{(1)}|\leq3\alpha.\label{eq:gamma_0_zeta}
\end{equation}
Finally, we have
\begin{align}\label{eq:gamma_r}
\left|\gamma_{i}^{t}(\ov{x},\ov{s})-\gamma_{i}^{(0)}\right| 
= & ~ \left|\gamma_{i}^{t}(\ov{x},\ov{x},\ov{s})-\gamma_{i}^{t}(x,x,s)\right| \notag \\
\leq & ~ \left|\gamma_{i}^{t}(\ov{x},\ov{x},\ov{s})-\gamma_{i}^{t}(x,\ov{x},s)\right|+\left|\gamma_{i}^{t}(x,\ov{x},s)-\gamma_{i}^{t}(x,x,s)\right|\nonumber \\
\leq & ~ \|\mu_{i}^{t}(x,s)-\mu_{i}^{t}(\ov{x},\ov{s})\|_{\ov{x}_{i}}^{*}+\left|\|\mu_{i}^{t}(x,s)\|_{\ov{x}_{i}}^{*}-\|\mu_{i}^{t}(x,s)\|_{x_{i}}^{*}\right|\nonumber \\
\leq & ~ 2\|\mu_{i}^{t}(x,s)-\mu_{i}^{t}(\ov{x},\ov{s})\|_{x_{i}}^{*} +\left|\|\mu_{i}^{t}(x,s)\|_{\ov{x}_{i}}^{*}-\|\mu_{i}^{t}(x,s)\|_{x_{i}}^{*}\right|\nonumber \\
\leq & ~ 2\|\mu_{i}^{t}(x,s)-\mu_{i}^{t}(\ov{x},\ov{s})\|_{x_{i}}^{*}+2\alpha\|\mu_{i}^{t}(x,s)\|_{x_{i}}^{*}\nonumber \\
\leq & ~ 8\alpha+2\alpha = 10\alpha 
\end{align}
where the first step follows from definition, the second and third step follows from triangle inequality, the fourth step follows from $ \|\mu_{i}^{t}(x,s)-\mu_{i}^{t}(\ov{x},\ov{s})\|_{ \ov{x}_{i} }^{*} \leq 2\|\mu_{i}^{t}(x,s)-\mu_{i}^{t}(\ov{x},\ov{s})\|_{x_{i}}^{*}$, the fifth step follows from self-concordance, the sixth step follows from Lemma~\ref{lem:changes_in_gamma_help} and that $\|\mu_{i}^{t}(x,s)\|_{x_{i}}^{*}=\gamma_{i}^{t}(x,x,s)\leq1$ for all $i$

Using (\ref{eq:gamma_0_zeta}) and (\ref{eq:gamma_r}), we have
\begin{align}
 & ~ \sum_{i=1}^m c_{i}^{t}(\ov{x},\ov{s})\cdot\gamma_{i}^{(0)}\exp(\lambda\gamma_{i}^{(\zeta)} )\nonumber \\
 = & ~ \sum_{i=1}^m c_{i}^{t}(\ov{x},\ov{s})\cdot\gamma_{i}^{(0)}\exp(\lambda \gamma_i^t ( \ov{x}, \ov{s} ) - \lambda \gamma_i^t ( \ov{x}, \ov{s} ) + \lambda \gamma_i^{(0)} - \lambda \gamma_i^{(0)} + \lambda \gamma_i^{(\zeta)} )\nonumber \\
\geq & ~ \sum_{i=1}^m c_{i}^{t}(\ov{x},\ov{s})\cdot\gamma_{i}^{(0)}\exp(\lambda\gamma_{i}^{t}(\ov{x},\ov{s})-13\lambda\alpha)\nonumber \\
\geq & ~ \frac{1}{2}\sum_{i=1}^m c_{i}^{t}(\ov{x},\ov{s})\cdot\gamma_{i}^{(0)}\exp(\lambda\gamma_{i}^{t}(\ov{x},\ov{s}))\nonumber \\
\geq & ~ \frac{1}{2}\sum_{i=1}^m c_{i}^{t}(\ov{x},\ov{s})\cdot\gamma_{i}^{t}(\ov{x},\ov{s})\exp(\lambda\gamma_{i}^{t}(\ov{x},\ov{s}))-3\alpha\sum_{i=1}^m c_{i}^{t}(\ov{x},\ov{s})\exp(\lambda\gamma_{i}^{t}(\ov{x},\ov{s})).\label{eq:change_xs_main_1}
\end{align}
where the third step follows from $ \exp(-13 \lambda \alpha) \geq 1/2 $, and the last step follows from \eqref{eq:gamma_0_zeta}.

For the first term in (\ref{eq:change_xs_main_1}), we have
\begin{align*}
& ~ \sum_{i=1}^m c_{i}^{t}(\ov{x},\ov{s})\cdot\gamma_{i}^{t}(\ov{x},\ov{s})\exp(\lambda\gamma_{i}^{t}(\ov{x},\ov{s})) \\
= & ~ \sum_{\gamma_{i}^{t}(\ov{x},\ov{s})\geq96\sqrt{\alpha}}\frac{\exp(2\lambda\cdot\gamma_{i}^{t}(\ov{x},\ov{s}))}{(\sum_{i=1}^m \exp(2\lambda\gamma_{i}^{t}(\ov{x},\ov{s})))^{1/2}}\\
= & ~ \sum_{i = 1}^m \frac{\exp(2\lambda\cdot\gamma_{i}^{t}(\ov{x},\ov{s}))}{(\sum_{i=1}^m \exp(2\lambda\gamma_{i}^{t}(\ov{x},\ov{s})))^{1/2}} - \sum_{\gamma_{i}^{t}(\ov{x},\ov{s}) < 96\sqrt{\alpha}}\frac{\exp(2\lambda\cdot\gamma_{i}^{t}(\ov{x},\ov{s}))}{(\sum_{i=1}^m \exp(2\lambda\gamma_{i}^{t}(\ov{x},\ov{s})))^{1/2}} \\
\geq & ~ \left( \sum_{i=1}^m \exp(2\lambda\gamma_{i}^{t}(\ov{x},\ov{s})) \right)^{1/2}-\frac{m\cdot\exp(192\lambda\cdot\sqrt{\alpha})}{(\sum_{i=1}^m \exp(2\lambda\gamma_{i}^{t}(\ov{x},\ov{s})))^{1/2}}.
\end{align*}
So, if $\sum_{i=1}^m \exp(2\lambda\gamma_{i}^{t}(\ov{x},\ov{s}))\geq m\cdot\exp(192\lambda\cdot\sqrt{\alpha})$,
we have 
\[
\sum_{i=1}^m c_{i}^{t}(\ov{x},\ov{s})\cdot\gamma_{i}^{t}(\ov{x},\ov{s})\exp(\lambda\gamma_{i}^{t}(\ov{x},\ov{s}))\geq \left(\sum_{i=1}^m \exp(2\lambda\gamma_{i}^{t}(\ov{x},\ov{s})) \right)^{1/2}-\sqrt{m}\cdot\exp(192\lambda\sqrt{\alpha}).
\]
Note that if $\sum_{i=1}^m \exp(2\lambda\gamma_{i}^{t}(\ov{x},\ov{s}))\leq m\cdot\exp(192\lambda\cdot\sqrt{\alpha})$,
this is still true because left hand side is lower bounded by $0$.
For the second term in (\ref{eq:change_xs_main_1}), we have
\begin{align*}
\sum_{i=1}^m c_{i}^{t}(\ov{x},\ov{s})\exp(\lambda\gamma_{i}^{t}(\ov{x},\ov{s})) 
= & ~ \sum_{\gamma_{i}^{t}(\ov{x},\ov{s})\geq96\sqrt{\alpha}}\frac{\exp(\lambda\cdot\gamma_{i}^{t}(\ov{x},\ov{s}))/\gamma_{i}^{t}(\ov{x},\ov{s})}{(\sum_{i=1}^m \exp(2\lambda\gamma_{i}^{t}(\ov{x},\ov{s})))^{1/2}}\exp(\lambda\gamma_{i}^{t}(\ov{x},\ov{s}))\\
\leq & ~ \frac{1}{96\sqrt{\alpha}} \sum_{\gamma_{i}^{t}(\ov{x},\ov{s})\geq96\sqrt{\alpha}}\frac{\exp(2\lambda\cdot\gamma_{i}^{t}(\ov{x},\ov{s}))}{(\sum_{i=1}^m \exp(2\lambda\gamma_{i}^{t}(\ov{x},\ov{s})))^{1/2}} \\
\leq & ~ \frac{1}{96\sqrt{\alpha}} \sum_{i=1}^m \frac{\exp(2\lambda\cdot\gamma_{i}^{t}(\ov{x},\ov{s}))}{(\sum_{i=1}^m \exp(2\lambda\gamma_{i}^{t}(\ov{x},\ov{s})))^{1/2}} \\
= & ~ \frac{1}{96\sqrt{\alpha}} \left(\sum_{i=1}^m \exp(2\lambda\gamma_{i}^{t}(\ov{x},\ov{s})) \right)^{1/2}.
\end{align*}
where the second step follows $\frac{1}{\gamma_i^t( \ov{x} , \ov{s} )} \leq \frac{1}{ 96 \sqrt{\alpha} }$, and the third step follows from each term in the summation is non-negative.

Combining the bounds for both first and second term in (\ref{eq:change_xs_main_1}),
we have
\begin{align}
\sum_{i=1}^m c_{i}^{t}(\ov{x},\ov{s})\cdot\gamma_{i}^{(0)}\exp(\lambda\gamma_{i}^{(\zeta)})
\geq & ~ \frac{1}{2}\left( \left(\sum_{i=1}^m \exp(2\lambda\gamma_{i}^{t}(\ov{x},\ov{s})) \right)^{1/2}-\sqrt{m}\cdot\exp(192\lambda\sqrt{\alpha})\right)\nonumber \\
 & ~ -\frac{3\alpha}{96\sqrt{\alpha}} \left(\sum_{i=1}^m \exp(2\lambda\gamma_{i}^{t}(\ov{x},\ov{s})) \right)^{1/2}\nonumber \\
\geq & ~ \frac{2}{5} \left(\sum_{i=1}^m \exp(2\lambda\gamma_{i}^{t}(\ov{x},\ov{s})) \right)^{1/2}-\sqrt{m}\cdot\exp(192\lambda\sqrt{\alpha}).\label{eq:change_xs_main_1_result}
\end{align}
where the last step follows from $\frac{1}{2} - \frac{3\alpha}{96\sqrt{\alpha}} \geq \frac{1}{2} - \frac{3}{96} = \frac{45}{96} \geq \frac{2}{5}$.

For the second term in (\ref{eq:change_xs_main}), we note that $|\gamma_{i}^{(\zeta)}-\gamma_{i}^{t}(\ov{x},\ov{s})|\leq13\alpha\leq\frac{1}{2\lambda}$
by (\ref{eq:gamma_0_zeta}) and (\ref{eq:gamma_r}). Hence,
\[
\sum_{i=1}^m \epsilon_{i}^{(\gamma)}\exp(\lambda\gamma_{i}^{(\zeta)})\leq2\sum_{i=1}^m \epsilon_{i}^{(\gamma)}\exp(\lambda\gamma_{i}^{t}(\ov{x},\ov{s})).
\]
Now, we use $\epsilon_{i}^{(\gamma)}\leq10\alpha\cdot(\alpha c_{i}^{t}(\ov{x},\ov{s})+\alpha_{i})$
(Lemma \ref{lem:changes_in_gamma}) to get
\begin{align*}
\sum_{i=1}^m \epsilon_{i}^{(\gamma)}\exp(\lambda\gamma_{i}^{(\zeta)}) 
\leq & ~ 20 \alpha \sum_{i=1}^m (\alpha c_{i}^{t}(\ov{x},\ov{s})+\alpha_{i})\cdot\exp(\lambda\gamma_{i}^{t}(\ov{x},\ov{s}))\\
\leq & ~ 20 \alpha \left(\sum_{i=1}^m (\alpha c_{i}^{t}(\ov{x},\ov{s})+\alpha_{i})^{2} \right)^{1/2} \left(\sum_{i=1}^m \exp(2\lambda\gamma_{i}^{t}(\ov{x},\ov{s})) \right)^{1/2}.
\end{align*}
where the last step follows from Cauchy-Schwarz inequality.

Note that by using Cauchy-Schwarz, 
\begin{align*}
\left( \sum_{i=1}^m (\alpha c_{i}^{t}(\ov{x},\ov{s})+\alpha_{i})^{2} \right)^{1/2} 
\leq & ~ \alpha \left(\sum_{i=1}^m c_{i}^{t}(\ov{x},\ov{s})^{2} \right)^{1/2} + \left(\sum_{i=1}^m \alpha_{i}^{2} \right)^{1/2}\\
\leq & ~ \alpha\cdot\frac{1}{96\sqrt{\alpha}}+2\alpha\leq\frac{\sqrt{\alpha}}{90}.
\end{align*}
where we used the definition of $c_{i}^{t}$, Lemma \ref{lem:alpha_i}
and $\alpha\leq\frac{1}{2^{24}}$. Together, we conclude 
\begin{equation}
\sum_{i=1}^m \epsilon_{i}^{(\gamma)}\exp(\lambda\gamma_{i}^{(\zeta)})\leq\frac{1}{5}\alpha\left( \sum_{i=1}^m \exp(2\lambda\gamma_{i}^{t}(\ov{x},\ov{s})) \right)^{1/2}.\label{eq:change_xs_main_2}
\end{equation}

Combining (\ref{eq:change_xs_main_1_result}) and (\ref{eq:change_xs_main_2})
to (\ref{eq:change_xs_main}) gives
\begin{align*}
\frac{\Phi^{\new}-\Phi}{\lambda} 
\leq & ~ - \frac{2}{5} \alpha \left(\sum_{i=1}^m \exp(2\lambda\gamma_{i}^{t}(\ov{x},\ov{s})) \right)^{1/2}+\sqrt{m}\cdot\exp(192\lambda\sqrt{\alpha})+\frac{1}{5}\alpha \left(\sum_{i=1}^m \exp(2\lambda\gamma_{i}^{t}(\ov{x},\ov{s})) \right)^{1/2}\\
 = & ~ - \frac{1}{5} \alpha \left(\sum_{i=1}^m \exp(2\lambda\gamma_{i}^{t}(\ov{x},\ov{s})) \right)^{1/2}+\sqrt{m}\cdot\exp(192\lambda\sqrt{\alpha}).
\end{align*}
where the last step follows from merging the first term with the third term.
\end{proof}

\subsection{Movement from $(x^{\protect\new},x,s^{\protect\new})$
to $(x^{\protect\new},x^{\protect\new},s^{\protect\new})$}\label{sec:robust_central_path_changes_in_z}

Next, we must analyze the potential change when we change the second
term.

\begin{lemma}[Movement along the second parameter]\label{lem:changes_in_z}
Assume that $\|\gamma^{t}(x,x,s)\|_{\infty} \leq 1$. Then we have
\[
\Phi^{t}(x^{\new},x^{\new},s^{\new})\leq\Phi^{t}(x^{\new},x,s^{\new})+12\alpha(\|\gamma^{t}(x,x,s)\|_{\infty}+3\alpha)\lambda\left(\sum_{i=1}^{m}\exp(2\lambda\gamma_{i}^{t}(x,x,s))\right)^{1/2}.
\]
\end{lemma}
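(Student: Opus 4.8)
The plan is to bound the difference $\Phi^t(x^{\new},x^{\new},s^{\new}) - \Phi^t(x^{\new},x,s^{\new})$ term by term, using the fact that the only thing changing is the Hessian norm in which we measure the fixed vector $\mu_i^t(x^{\new},s^{\new})$. Writing $\Phi^{\new\new} = \Phi^t(x^{\new},x^{\new},s^{\new})$ and $\Phi^{\new} = \Phi^t(x^{\new},x,s^{\new})$, I would apply the mean value theorem in the exponent exactly as in Lemma~\ref{lem:changes_in_xs}: there is some $\zeta_i \in [0,1]$ with
\[
\Phi^{\new\new} - \Phi^{\new} = \lambda \sum_{i=1}^m e^{\lambda \gamma_i^{(\zeta_i)}} \cdot \left( \gamma_i^t(x^{\new},x^{\new},s^{\new}) - \gamma_i^t(x^{\new},x,s^{\new}) \right),
\]
where $\gamma_i^{(\zeta_i)}$ interpolates the two relevant $\gamma$ values. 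So the two tasks are (i) bound the per-coordinate change $|\gamma_i^t(x^{\new},x^{\new},s^{\new}) - \gamma_i^t(x^{\new},x,s^{\new})|$, and (ii) relate $e^{\lambda \gamma_i^{(\zeta_i)}}$ back to $e^{\lambda \gamma_i^t(x,x,s)}$ so the final bound is in terms of the starting $\gamma$'s, then apply Cauchy--Schwarz.

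For task (i), the two $\gamma$'s differ only in the Hessian: $\gamma_i^t(x^{\new},x^{\new},s^{\new}) = \|\mu_i^t(x^{\new},s^{\new})\|_{\nabla^2\phi_i(x_i^{\new})^{-1}}$ versus $\|\mu_i^t(x^{\new},s^{\new})\|_{\nabla^2\phi_i(x_i)^{-1}}$. Since $\|x_i^{\new} - x_i\|_{x_i} = \alpha_i$, Theorem~\ref{thm:hessiansc} gives $(1-\alpha_i)^2 \nabla^2\phi_i(x_i) \preceq \nabla^2\phi_i(x_i^{\new})$, hence for the inverses the dual norms are within a factor $(1 \pm O(\alpha_i))$ of each other (using $\alpha_i$ small, from Lemma~\ref{lem:alpha_i}). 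Therefore the change is at most roughly $3\alpha_i \cdot \gamma_i^t(x^{\new},x,s^{\new})$, and since $\gamma_i^t(x^{\new},x,s^{\new}) \le \gamma_i^t(x,x,s) + 3\alpha \le \|\gamma^t(x,x,s)\|_\infty + 3\alpha$ by Lemma~\ref{lem:changes_in_gamma}, this is bounded by $O(\alpha_i)(\|\gamma^t(x,x,s)\|_\infty + 3\alpha)$.

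For task (ii), I need $\gamma_i^{(\zeta_i)}$, which lies between $\gamma_i^t(x^{\new},x,s^{\new})$ and $\gamma_i^t(x^{\new},x^{\new},s^{\new})$, to be comparable to $\gamma_i^t(x,x,s)$. Both endpoints are within $O(\alpha)$ of $\gamma_i^t(x,x,s)$ (the first by Lemma~\ref{lem:changes_in_gamma}, the second additionally by the $O(\alpha_i)$ bound just obtained, with $\alpha_i \le 2\alpha$), so $|\gamma_i^{(\zeta_i)} - \gamma_i^t(x,x,s)| = O(\alpha) \le \frac{1}{2\lambda}$ by the parameter choices, giving $e^{\lambda \gamma_i^{(\zeta_i)}} \le 2 e^{\lambda \gamma_i^t(x,x,s)}$. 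Combining, $\Phi^{\new\new} - \Phi^{\new} \le \lambda \sum_i 2 e^{\lambda\gamma_i^t(x,x,s)} \cdot O(\alpha_i)(\|\gamma^t(x,x,s)\|_\infty + 3\alpha)$, and Cauchy--Schwarz on $\sum_i \alpha_i e^{\lambda\gamma_i^t(x,x,s)} \le (\sum_i \alpha_i^2)^{1/2} (\sum_i e^{2\lambda\gamma_i^t(x,x,s)})^{1/2} \le 2\alpha (\sum_i e^{2\lambda\gamma_i^t(x,x,s)})^{1/2}$ via Lemma~\ref{lem:alpha_i} closes the argument with the stated constant $12$ after tracking the numeric factors.

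The main obstacle I anticipate is purely bookkeeping of constants: making sure the Hessian-change factor in task (i) is genuinely $O(\alpha_i)$ (not just $O(\alpha)$), since keeping the $\alpha_i$ dependence is what lets Cauchy--Schwarz against $(\sum \alpha_i^2)^{1/2} \le 2\alpha$ produce the clean $12\alpha$ coefficient rather than a worse $\sqrt{m}$-type loss; and verifying $|\gamma_i^{(\zeta_i)} - \gamma_i^t(x,x,s)| \le \frac{1}{2\lambda}$ holds under $\alpha = 2^{-20}\lambda^{-2}$ so that the $e^{\lambda\cdot(\cdot)} \le 2$ step is legitimate. There is no conceptual difficulty beyond carefully chaining Theorem~\ref{thm:hessiansc}, Lemma~\ref{lem:changes_in_gamma}, and Lemma~\ref{lem:alpha_i}.
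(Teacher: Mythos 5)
Your proposal is correct and follows essentially the same route as the paper's proof: self-concordance (Theorem~\ref{thm:hessiansc}) gives the multiplicative $(1+O(\alpha_i))$ change in $\gamma_i$ when the Hessian argument moves from $x$ to $x^{\new}$, Lemma~\ref{lem:changes_in_gamma} transfers the exponent back to $\gamma_i^t(x,x,s)$ at the cost of a factor $\exp(3\lambda\alpha)\le 2$, and Cauchy--Schwarz against $(\sum_i\alpha_i^2)^{1/2}\le 2\alpha$ from Lemma~\ref{lem:alpha_i} produces the $12\alpha$ coefficient. The only nit is that $\alpha_i=\|\delta_{x,i}\|_{\ov{x}_i}$ is measured at $\ov{x}_i$, so $\|x_i^{\new}-x_i\|_{x_i}\le 2\alpha_i$ rather than $=\alpha_i$, which is exactly the bookkeeping you flagged and does not affect the argument.
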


\begin{proof} We can upper bound $\Phi^{t}(x^{\new},x^{\new},s^{\new})$
as follows 
\begin{align*}
\Phi^{t}(x^{\new},x^{\new},s^{\new}) & =\sum_{i=1}^{m}\exp(\lambda\gamma_{i}^{t}(x^{\new},x^{\new},s^{\new}))\\
 & \leq\sum_{i=1}^{m}\exp(\lambda\gamma_{i}^{t}(x^{\new},x,s^{\new})(1+2\alpha_{i})).
\end{align*}
where the second step follows from $\gamma_{i}^{t}(x^{\new},x^{\new},s^{\new})\leq\gamma_{i}^{t}(x^{\new},x,s^{\new})\cdot(1+2\alpha_{i})$
by self-concordance (Theorem~\ref{thm:hessiansc}) and $\|x_{i}^{\new}-x_{i}\|_{x_{i}}\leq2\|x_{i}^{\new}-x_{i}\|_{\ov{x}_{i}}\leq2\alpha_{i}$.

Now, by Lemma~\ref{lem:changes_in_gamma}, we note that $\gamma_{i}^{t}(x^{\new},x,s^{\new})\leq\gamma_{i}^{t}(x,x,s)+3\alpha\leq1+3\alpha$
and that $\alpha\leq\frac{1}{100\lambda}$. Hence, by a simple taylor expansion, we have
\begin{align*}
 & \Phi^{t}(x^{\new},x^{\new},s^{\new})\\
\leq & \sum_{i=1}^{m}\exp(\lambda\gamma_{i}^{t}(x^{\new},x,s^{\new}))+3\sum_{i=1}^{m}\alpha_{i}\exp(\lambda\gamma_{i}^{t}(x^{\new},x,s^{\new}))\gamma_{i}^{t}(x^{\new},x,s^{\new}).
\end{align*}
Finally, we bound the last term by
\begin{align*}
 & \sum_{i=1}^{m}\exp(\lambda\gamma_{i}^{t}(x^{\new},x,s^{\new}))\gamma_{i}^{t}(x^{\new},x,s^{\new})\alpha_{i}\\
\leq & ~ \sum_{i=1}^{m}\exp(\lambda\gamma_{i}^{t}(x,x,s)+3\lambda\alpha)(\gamma_{i}^{t}(x,x,s)+3\alpha)\alpha_{i}\\
\leq & ~ 2(\|\gamma^{t}(x,x,s)\|_{\infty}+3\alpha)\sum_{i=1}^{m}\exp(\lambda\gamma_{i}^{t}(x,x,s))\alpha_{i}\\
\leq & ~ 2(\|\gamma^{t}(x,x,s)\|_{\infty}+3\alpha) \left( \sum_{i=1}^{m}\exp( 2 \lambda\gamma_{i}^{t}(x,x,s)) \right)^{1/2} \left( \sum_{i=1}^m \alpha_{i}^2 \right)^{1/2} \\
\leq & ~ 4\alpha(\|\gamma^{t}(x,x,s)\|_{\infty}+3\alpha)\left(\sum_{i=1}^{m}\exp(2\lambda\gamma_{i}^{t}(x,x,s))\right)^{1/2},
\end{align*}
where the first step follows from $\lambda \gamma_i^t(x^{\new},x,s^{\new}) \leq \exp( \lambda \gamma_i^t(x,x,s) + 3 \lambda \alpha )$, the second step follows $\exp(3 \lambda \alpha) \leq 2$, the third step follows from Cauchy-Schwarz inequality, the last step follows from $\sum_{i=1}^m \alpha_{i}^{2} \leq 4\alpha^{2}$.

\end{proof}

\subsection{Movement of $t$}\label{sec:robust_central_path_changes_in_t}

Lastly, we analyze the effect of setting $t\to t^{\new}$.

\begin{lemma}[Movement in $t$]\label{lem:changes_in_t}For any
$x,s$ such that $\gamma_{i}^{t}(x,s)\leq1$ for all $i$, let $t^{\new}=\left(1-\frac{\kappa}{\sqrt{ \nu }}\right)t$ where $\nu = \sum_{i=1}^m \nu_i$,
we have
\[
\Phi^{t^{\new}}(x,s)\leq\Phi^{t}(x,s)+10\kappa\lambda\left(\sum_{i=1}^{m}\exp(2\lambda\gamma_{i}^{t}(x,s))\right)^{1/2}.
\]
\end{lemma}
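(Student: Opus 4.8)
The plan is to isolate the only part of $\gamma_i^t(x,s)=\|\mu_i^t(x,s)\|_{\nabla^2\phi_i(x_i)^{-1}}$ that depends on $t$ --- namely the term $s_i/t$ inside $\mu_i^t(x,s)=s_i/t+\nabla\phi_i(x_i)$ --- track its first-order change under $t\mapsto t^{\new}=(1-\tfrac{\kappa}{\sqrt\nu})t$, and then feed the resulting per-coordinate bound on $\gamma_i^{t^{\new}}(x,s)-\gamma_i^t(x,s)$ into the soft-max potential $\Phi^t=\sum_i\exp(\lambda\gamma_i^t)$.

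First I would write $\tfrac1{t^{\new}}-\tfrac1t=\tfrac1t\cdot\eta$ with $\eta:=\tfrac{\kappa/\sqrt\nu}{1-\kappa/\sqrt\nu}>0$, so that
$\mu_i^{t^{\new}}(x,s)=\mu_i^t(x,s)+\eta\, s_i/t=(1+\eta)\mu_i^t(x,s)-\eta\nabla\phi_i(x_i)$,
using $s_i/t=\mu_i^t(x,s)-\nabla\phi_i(x_i)$. Taking $\|\cdot\|_{x_i}^{*}$, the triangle inequality plus the barrier bound $\|\nabla\phi_i(x_i)\|_{x_i}^{*}\le\sqrt{\nu_i}$ gives $\gamma_i^{t^{\new}}(x,s)\le(1+\eta)\gamma_i^t(x,s)+\eta\sqrt{\nu_i}$, hence $(\gamma_i^{t^{\new}}(x,s)-\gamma_i^t(x,s))_{+}\le\eta(\gamma_i^t(x,s)+\sqrt{\nu_i})$. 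Since $\gamma_i^t(x,s)\le1$ and $\sqrt{\nu_i}\le\sqrt\nu$, this change is $O(\kappa)$, so $\lambda$ times it stays well below $1$ by the choices $\kappa=2^{-10}\alpha$ and $\alpha=2^{-20}\lambda^{-2}$.

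Next I would pass to the potential: by the mean value theorem $\exp(\lambda\gamma_i^{t^{\new}})-\exp(\lambda\gamma_i^t)\le\lambda(\gamma_i^{t^{\new}}-\gamma_i^t)_{+}\exp(\lambda\gamma_i^{t^{\new}})$, and because $\lambda(\gamma_i^{t^{\new}}-\gamma_i^t)_{+}$ is tiny we may replace $\exp(\lambda\gamma_i^{t^{\new}})$ by $2\exp(\lambda\gamma_i^t)$. Summing and inserting the per-coordinate bound yields $\Phi^{t^{\new}}(x,s)-\Phi^t(x,s)\le 2\lambda\eta\sum_{i=1}^m(\gamma_i^t+\sqrt{\nu_i})\exp(\lambda\gamma_i^t)$. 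Cauchy--Schwarz splits this as $(\sum_i\exp(2\lambda\gamma_i^t))^{1/2}$ times $(\sum_i(\gamma_i^t+\sqrt{\nu_i})^2)^{1/2}$; the latter is at most $(2\sum_i(\gamma_i^t)^2+2\sum_i\nu_i)^{1/2}\le(2m+2\nu)^{1/2}\le2\sqrt\nu$, using $\gamma_i^t\le1$ and $m\le\nu$ (each $\nu_i\ge1$). Finally $\eta\sqrt\nu=\tfrac{\kappa}{1-\kappa/\sqrt\nu}\le2\kappa$ since $\kappa/\sqrt\nu\le\tfrac12$, giving $\Phi^{t^{\new}}(x,s)-\Phi^t(x,s)\le8\kappa\lambda(\sum_i\exp(2\lambda\gamma_i^t))^{1/2}\le10\kappa\lambda(\sum_i\exp(2\lambda\gamma_i^t))^{1/2}$.

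The computation is essentially mechanical and there is no real difficulty; the only things that need care are keeping the argument of the exponential controlled in the mean value step (this is exactly where $\kappa\ll\lambda^{-2}$ is used) and remembering that $m\le\nu$, so that the scattered $\sqrt{\nu_i}$ contributions collapse into one $\sqrt\nu$ factor that cancels against $\eta\sqrt\nu$ --- this is the structural reason the bound is linear in $\kappa$ rather than in $\kappa\sqrt m$. One could alternatively derive the estimate $(\gamma_i^{t^{\new}}-\gamma_i^t)_{+}\le\eta(\gamma_i^t+\sqrt{\nu_i})$ by differentiating $\gamma_i^t$ in $t$ directly, but the algebraic identity for $\mu_i^{t^{\new}}$ above is cleaner and avoids integrating in $t$.
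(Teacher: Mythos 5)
Your proof is correct and follows essentially the same route as the paper's: the same decomposition $\mu_i^{t^{\new}}=(1+\eta)\mu_i^t-\eta\nabla\phi_i(x_i)$, the barrier bound $\|\nabla\phi_i(x_i)\|_{x_i}^{*}\le\sqrt{\nu_i}$, a first-order expansion of the exponential with a factor $2$, and Cauchy--Schwarz. The only (cosmetic) difference is that the paper absorbs the $\gamma_i^t\le 1$ term into $\sqrt{\nu_i}$ before applying Cauchy--Schwarz (so that $\sum_i\nu_i/\nu=1$), whereas you keep the two terms separate and use $m\le\nu$ at the end; the constants work out either way.
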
 \begin{proof}
Note that 
\begin{align*}
\gamma_{i}^{t^{\new}}(x,s)= & ~\left\Vert \frac{s}{t^{\new}}+\nabla\phi_{i}(x_{i})\right\Vert _{x_{i}}^{*}\\
= & ~\left\Vert \frac{s}{t(1-\kappa/\sqrt{ \nu })}+\nabla\phi_{i}(x_{i})\right\Vert _{x_{i}}^{*}\\
\leq & ~(1+2\kappa/\sqrt{\nu})\gamma_{i}^{t}(x,s)+2\|(\kappa/\sqrt{\nu})\nabla\phi_{i}(x_{i})\|_{x_{i}}^{*}\\
\leq & ~(1+2\kappa/\sqrt{\nu})\gamma_{i}^{t}(x,s)+3\kappa\sqrt{\nu_{i}}/\sqrt{\nu}\\
\leq & \ \gamma_{i}^{t}(x,s)+5\kappa\sqrt{\nu_{i}}/\sqrt{\nu}
\end{align*}
where the first step follows from definition, the second step follows
from $t^{\new}=t(1-\kappa/\sqrt{ \nu })$, the second last step
follows from the fact that our barriers are $\nu_{i}$-self-concordant
and the last step used $\gamma_{i}^{t}(x,s)\leq1$ and $\nu_{i}\geq1$.
Using that $5\kappa\leq\frac{1}{10\lambda}$ and $\gamma_{i}^{t}(x,s)\leq1$,
we have by simple taylor expansion,
\begin{align*}
\Phi^{t^{\new}}(x,s) \leq & ~ \sum_{i=1}^m \exp(\lambda\gamma_{i}^{t}(x,s))+2\lambda\sum_{i=1}^m \exp(\lambda\gamma_{i}^{t}(x,s))\left(5\kappa\sqrt{ \nu_i / \nu }\right)\\
= & ~ \sum_{i=1}^m \exp(\lambda\gamma_{i}^{t}(x,s))+10\kappa\lambda\sum_{i=1}^m \exp(\lambda\gamma_{i}^{t}(x,s))\left( \sqrt{ \nu_i / \nu }\right)\\
 \leq & ~ \sum_{i=1}^m \exp(\lambda\gamma_{i}^{t}(x,s))+ 10\kappa\lambda\left(\sum_{i=1}^{m}\exp(2\lambda\gamma_{i}^{t}(x,s))\right)^{1/2} \left( \sum_{i=1}^m \frac{\nu_i}{\nu} \right)^{1/2} \\
 = & ~ \sum_{i=1}^m \exp(\lambda\gamma_{i}^{t}(x,s))+ 10\kappa\lambda\left(\sum_{i=1}^{m}\exp(2\lambda\gamma_{i}^{t}(x,s))\right)^{1/2},
\end{align*}
where the third step follows from Cauchy-Schwarz, and the last step follows from $\sum_{i=1}^m \nu_i = \nu$.
\end{proof}

\subsection{Potential Maintenance}\label{sec:robust_central_path_changes_in_potential}

Putting it all together, we can show that our potential $\Phi^t$ can be maintained to be small throughout our algorithm.

\begin{lemma}[Potential Maintenance]\label{lem:robust_central_path_part3}If
$\Phi^{t}(x,s)\leq80\frac{m}{\alpha}$, then
\[
\Phi^{t^{\new}}(x^{\new},s^{\new})\leq \left( 1-\frac{\alpha\lambda}{40\sqrt{m}} \right) \Phi^{t}(x,s)+\sqrt{m}\lambda\cdot\exp(192\lambda\sqrt{\alpha}).
\]
In particularly, we have $\Phi^{t^{\new}}(x^{\new},s^{\new})\leq80\frac{m}{\alpha}$.

\end{lemma}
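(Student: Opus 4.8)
The plan is to chain, in the order the algorithm updates its state, the four results of Sections~\ref{sec:robust_central_path_changes_in_xs}--\ref{sec:robust_central_path_changes_in_t}: the progress $(x,x,s)\to(x^{\new},x,s^{\new})$ via Lemma~\ref{lem:changes_in_xs}, then $(x^{\new},x,s^{\new})\to(x^{\new},x^{\new},s^{\new})$ via Lemma~\ref{lem:changes_in_z}, then $t\to t^{\new}$ via Lemma~\ref{lem:changes_in_t}, and to show the single genuinely negative term produced by Lemma~\ref{lem:changes_in_xs} dominates all positive error terms. Throughout, write $\Psi(y,w):=\bigl(\sum_{i=1}^{m}\exp(2\lambda\gamma_i^t(y,y,w))\bigr)^{1/2}$.

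\textbf{Step 1 (the $\ell_\infty$ bound).} First I would use the hypothesis $\Phi^t(x,s)\le 80\frac m\alpha$ to deduce $\gamma_i^t(x,s)\le\frac1\lambda\log(80\frac m\alpha)$ for every $i$, since one exponential cannot exceed the whole sum. With $\lambda=2^{16}\log m$, $\alpha=2^{-20}\lambda^{-2}$, $\kappa=2^{-10}\alpha$, the right side is $O(\log m/\lambda)\le 2^{-15}$, in particular $<1$; this is what licenses Lemmas~\ref{lem:changes_in_xs},~\ref{lem:changes_in_z},~\ref{lem:changes_in_t}. Moreover, combining the second claim of Lemma~\ref{lem:changes_in_gamma} ($|\gamma_i^t(x^{\new},x,s^{\new})-\gamma_i^t(x,x,s)|\le3\alpha$) with the self-concordance step $\gamma_i^t(x^{\new},x^{\new},s^{\new})\le(1+2\alpha_i)\gamma_i^t(x^{\new},x,s^{\new})$ from the proof of Lemma~\ref{lem:changes_in_z} and $\sum_i\alpha_i^2\le4\alpha^2$ (Lemma~\ref{lem:alpha_i}), I also get $\gamma_i^t(x^{\new},s^{\new})<1$, so Lemma~\ref{lem:changes_in_t} may be applied at the new point.

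\textbf{Step 2 (chaining).} Lemma~\ref{lem:changes_in_xs} gives
\[
\Phi^t(x^{\new},x,s^{\new})\le\Phi^t(x,s)-\tfrac{\alpha\lambda}{5}\,\Psi(\ov x,\ov s)+\sqrt m\,\lambda\exp(192\lambda\sqrt\alpha);
\]
Lemma~\ref{lem:changes_in_z} adds at most $12\alpha(\|\gamma^t(x,s)\|_\infty+3\alpha)\lambda\,\Psi(x,s)$; and Lemma~\ref{lem:changes_in_t}, applied at $(x^{\new},s^{\new})$ with $\Phi^{t^{\new}}(x^{\new},s^{\new})=\Phi^{t^{\new}}(x^{\new},x^{\new},s^{\new})$, adds at most $10\kappa\lambda\,\Psi(x^{\new},s^{\new})$. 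The key quantitative point is that all $\gamma_i^t$ appearing here differ only additively by $O(\alpha)$ (by \eqref{eq:gamma_r}, Lemma~\ref{lem:changes_in_gamma}, Theorem~\ref{thm:hessiansc}, and $\sum_i\alpha_i^2\le4\alpha^2$) and $\alpha\lambda\ll1$, so $\Psi(\ov x,\ov s)$, $\Psi(x,s)$, $\Psi(x^{\new},s^{\new})$ all agree within a factor $2$.

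\textbf{Step 3 (absorbing the errors and closing).} Using that comparison, the negative term is $\le-\tfrac{\alpha\lambda}{10}\Psi(x,s)$, while by Step 1 and the parameter choices $\|\gamma^t(x,s)\|_\infty\le2^{-15}$ and $\kappa=2^{-10}\alpha$, the two positive $\Psi$-proportional contributions sum to at most $\tfrac{\alpha\lambda}{20}\Psi(x,s)$; hence the net coefficient of $\Psi(x,s)$ is $\le-\tfrac{\alpha\lambda}{20}$. Cauchy--Schwarz gives $\Psi(x,s)\ge\Phi^t(x,s)/\sqrt m$, so
\[
\Phi^{t^{\new}}(x^{\new},s^{\new})\le\Bigl(1-\tfrac{\alpha\lambda}{20\sqrt m}\Bigr)\Phi^t(x,s)+\sqrt m\,\lambda\exp(192\lambda\sqrt\alpha)\le\Bigl(1-\tfrac{\alpha\lambda}{40\sqrt m}\Bigr)\Phi^t(x,s)+\sqrt m\,\lambda\exp(192\lambda\sqrt\alpha),
\]
which is the stated inequality. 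For the ``in particular'' part, plug $\Phi^t(x,s)=80\frac m\alpha$ into the recursion: the multiplicative term subtracts $2\sqrt m\,\lambda$, while $192\lambda\sqrt\alpha=192\cdot2^{-10}<\ln2$ forces $\sqrt m\,\lambda\exp(192\lambda\sqrt\alpha)\le2\sqrt m\,\lambda$, so $\Phi^{t^{\new}}(x^{\new},s^{\new})\le80\frac m\alpha$.

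\textbf{Main obstacle.} The only real work is the bookkeeping in Steps~1--3: checking that the $\gamma\le1$ hypothesis survives all three state updates, and that each $\Psi$ at an auxiliary point is comparable up to an absolute constant to $\Psi(x,s)$, so the lone negative term of Lemma~\ref{lem:changes_in_xs} genuinely swallows the positive contributions of Lemmas~\ref{lem:changes_in_z} and~\ref{lem:changes_in_t} together with the factor-$2$ losses. Once the parameter inequalities $\alpha\lambda\ll1$, $\|\gamma^t(x,s)\|_\infty\le2^{-15}$, $\kappa=2^{-10}\alpha$ and $192\lambda\sqrt\alpha<\ln2$ are in hand, the rest is routine arithmetic.
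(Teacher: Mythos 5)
Your proposal is correct and follows essentially the same route as the paper: chain Lemmas~\ref{lem:changes_in_xs}, \ref{lem:changes_in_z}, \ref{lem:changes_in_t}, show the various $\zeta$-quantities at $(\ov x,\ov s)$, $(x,s)$, $(x^{\new},s^{\new})$ agree within a factor of $2$, control $\|\gamma^t(x,s)\|_\infty$ from the potential hypothesis, and finish with $\Phi^t(x,s)\le\sqrt m\,\zeta(x,s)$. The only nit is your intermediate claim $\|\gamma^t(x,s)\|_\infty\le 2^{-15}$, which requires $m$ very large; the correct bound from the parameters is $\|\gamma^t(x,s)\|_\infty\le\log(80\frac m\alpha)/\lambda\le\frac1{480}$, which still yields $12\alpha\lambda\|\gamma^t(x,s)\|_\infty\le\frac{\alpha\lambda}{40}$ and hence your Step~3 arithmetic and conclusion unchanged.
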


\begin{proof}Let 
\[
\zeta(x,s)=\left(\sum_{i=1}^{m}\exp(2\lambda\gamma_{i}^{t}(x,s))\right)^{1/2}.
\]
By combining our previous lemmas, 
\begin{align}
 & \Phi^{t^{\new}}(x^{\new},s^{\new})\nonumber \\
\leq & \Phi^{t}(x^{\new},s^{\new})+10\kappa\lambda\cdot\zeta(x^{\new},s^{\new})\nonumber \\
\leq & \Phi^{t}(x^{\new},x,s^{\new})+12\alpha\lambda(\|\gamma^{t}(x,s)\|_{\infty}+3\alpha)\cdot\zeta(x,s)+10\kappa\lambda\cdot\zeta(x^{\new},s^{\new})\nonumber \\
\leq & \Phi^{t}(x,x,s)-\frac{\alpha\lambda}{5}\zeta(\ov{x},\ov{s})+\sqrt{m}\lambda\cdot\exp(192\lambda\sqrt{\alpha})\nonumber \\
 & +12\alpha\lambda(\|\gamma^{t}(x,s)\|_{\infty}+3\alpha)\cdot\zeta(x,s)+10\kappa\lambda\cdot\zeta(x^{\new},s^{\new})\label{eq:part3_phi}
\end{align}
where the first step follows from Lemma~\ref{lem:changes_in_t},
the second step follows from Lemma~\ref{lem:changes_in_z}, and the
last step follows from Lemma~\ref{lem:changes_in_xs}. We note that
in all lemma above, we used that fact that $\|\gamma^{t}\|_{\infty}\leq1$
(for different combination of $x$, $\ov{x}$, $x^{\new}$,
$s$, $\ov{s}$, $s^{\new}$) which we will show later.

We can upper bound $\gamma_{i}^{t}(x^{\new},s^{\new})$ in the following sense,
\begin{equation}
\gamma_{i}^{t}(x^{\new},s^{\new})\leq\gamma_{i}^{t}(x^{\new},x,s^{\new})+2\alpha\leq\gamma_{i}^{t}(x,x,s)+5\alpha.\label{eq:part3_gamma_new}
\end{equation}
where the first step follows from self-concordance and $\gamma_i \leq 1$, the second step follows from Lemma~\ref{lem:changes_in_gamma}. 

Hence, since $\zeta$ changes multiplicatively when $\gamma$ changes additively, $\zeta(x^{\new},s^{\new})\leq2\zeta(x,s).$ 

Lemma \ref{lem:changes_in_gamma_help} shows that $\|\mu_{i}^{t}(x,s)-\mu_{i}^{t}(\ov{x},\ov{s})\|_{x_{i}}^{*}\leq4\alpha$
and hence
\begin{align}
\zeta(\ov{x},\ov{s}) 
\geq & ~ \frac{2}{3}\left(\sum_{i=1}^{m}\exp(2\lambda\gamma_{i}^{t}(\ov{x},x,\ov{s}))\right)^{1/2}\nonumber \\
\geq & ~ \frac{2}{3}\left(\sum_{i=1}^{m}\exp(2\lambda\gamma_{i}^{t}(x,x,s)-8\alpha\lambda)\right)^{1/2}\nonumber \\
\geq & ~ \frac{1}{2}\zeta(x,s).\label{eq:part3_gamma_bar}
\end{align}
Combining (\ref{eq:part3_gamma_new}) and (\ref{eq:part3_gamma_bar})
into (\ref{eq:part3_phi}) gives
\begin{align*}
 & ~ \Phi^{t^{\new}}(x^{\new},s^{\new})\\
\geq & ~ \Phi^{t}(x,s)+\left(12\alpha\lambda(\| \gamma^{t}(x,s) \|_{\infty}+3\alpha)+20\kappa\lambda-\frac{\alpha\lambda}{10}\right)\cdot\zeta(x,s)+\sqrt{m}\lambda\cdot\exp(192\lambda\sqrt{\alpha})\\
\geq & ~ \Phi^{t}(x,s)+\left(12\alpha\lambda\| \gamma^{t}(x,s) \|_{\infty}-\frac{\alpha\lambda}{20}\right)\cdot\zeta(x,s)+\sqrt{m}\lambda\cdot\exp(192\lambda\sqrt{\alpha})
\end{align*}
where the last step follows from $\kappa\leq\frac{\alpha}{1000}$ and $\alpha\leq\frac{1}{10000}$.

Finally, we need to bound $\|\gamma^{t}(x,s)\|_{\infty}$. The
bound for other $\|\gamma^{t}\|_{\infty}$, i.e. for different combination of $x$, $\ov{x}$, $x^{\new}$,
$s$, $\ov{s}$, $s^{\new}$, are similar. We note
that 
\[
\Phi^{t}(x,s)\leq80\frac{m}{\alpha}
\]
implies that $\|\gamma^{t}(x,s)\|_{\infty}\leq\frac{\log ( 80\frac{m}{\alpha} )}{\lambda}$.
Hence, by our choice of $\lambda$ and $\alpha$, we have that $\lambda\geq480\log(80\frac{m}{\alpha})$
and hence
\[
12\alpha\lambda\|\gamma^{t}(x,s)\|_{\infty}\leq\frac{\alpha\lambda}{40}.
\]
Finally, using $\Phi^{t}(x,s)\leq\sqrt{m}\cdot\zeta(x,s)$, we have
\begin{align*}
\Phi^{t^{\new}}(x^{\new},s^{\new}) & \geq\Phi^{t}(x,s)-\frac{\alpha\lambda}{40}\zeta(x,s)+\sqrt{m}\lambda\cdot\exp(192\lambda\sqrt{\alpha})\\
 & \geq \left(1-\frac{\alpha\lambda}{40\sqrt{m}}\right)\Phi^{t}(x,s)+\sqrt{m}\lambda\cdot\exp(192\lambda\sqrt{\alpha}).
\end{align*}
Since $\lambda\leq\frac{1}{400\sqrt{\alpha}}$, we have $\Phi^{t}(x,s)\leq80\frac{m}{\alpha}$
implies $\Phi^{t^{\new}}(x^{\new},s^{\new})\leq80\frac{m}{\alpha}$.

\end{proof}

\section{Central Path Maintenance}\label{sec:central_path_maintenance}

\begin{table}
\begin{center}
  \begin{tabular}{ | l | l | l | l | l | l | }
    \hline
    Name & Type & Statement & Algorithm & Input & Output \\ \hline
    \textsc{Initialize} & public & Lemma~\ref{lem:central_path_maintenance_initialize} & Alg.~\ref{alg:central_path_maintenance_main} & $A,x,s,\ov{W},\epsilon_{mp},a,b$ & $\emptyset$ \\ \hline
    \textsc{Update} & public & Lemma~\ref{lem:central_path_maintenance_update} & Alg.~\ref{alg:central_path_maintenance_update} & $\ov{W}$ & $\emptyset$ \\ \hline
    \textsc{FullUpdate} & private & Lemma~\ref{lem:central_path_maintenance_fullupdate} & Alg.~\ref{alg:central_path_maintenance_fullupdate} & $\ov{W}$ & $\emptyset$ \\ \hline
    \textsc{PartialUpdate} & private & Lemma~\ref{lem:central_path_maintenance_partialupdate} & Alg.~\ref{alg:central_path_maintenance_update} & $\ov{W}$ & $\emptyset$ \\ \hline
    \textsc{Query} & public & Lemma~\ref{lem:central_path_maintenance_query} & Alg.~\ref{alg:central_path_maintenance_main} & $\emptyset$ & $\ov{x},\ov{s}$ \\ \hline
    \textsc{MultiplyMove} & public & Lemma~\ref{lem:central_path_maintenance_multiply_move} & Alg.~\ref{alg:central_path_maintenance_multiply_move} & $h,t$ & $\emptyset$ \\ \hline
    \textsc{Multiply} & private & Lemma~\ref{lem:central_path_maintenance_multiply} & Alg.~\ref{alg:central_path_maintenance_multiply_move} & $h,t$ & $\emptyset$ \\ \hline
    \textsc{Move} & private & Lemma~\ref{lem:central_path_maintenance_move} & Alg.~\ref{alg:central_path_maintenance_multiply_move} & $\emptyset$ & $\emptyset$ \\ \hline
  \end{tabular}
\end{center}\caption{Summary of data structure \textsc{CentralPathMaintenance}}\label{tab:central_path_maintenance}
\end{table}

The goal of this section is to present a data-structure to perform our centering steps in $\widetilde{O}(n^{\omega-1/2})$ amortized time and prove a theoretical guarantee of it. The original idea of inverse maintenance is from Michael B. Cohen \cite{l17}, then \cite{cls18} used it to get faster running time for solving Linear Programs. 
 Because a simple matrix vector product would require $O(n^2)$ time, our speedup comes via a low-rank embedding that provides $\ell_\infty$ guarantees, which is unlike the sparse vector approach of \cite{cls18}. In fact, we are unsure if moving in a sparse direction $h$ can have sufficiently controlled noise to show convergence. Here, we give a stochastic version that is faster for dense direction $h$. 
 
\begin{theorem}[Central path maintenance]\label{thm:central_path_maintenance}
Given a full rank matrix $A \in \R^{d \times n}$ with $n \geq d$, a tolerance parameter $0 < \epsilon_{mp} < 1/4$ and a block diagonal structure $n = \sum_{i=1}^m n_i$. Given any positive number $a$ such $a \leq \alpha$ where $\alpha$ is the dual exponent of matrix multiplication. Given any linear sketch of size $b$, there is a randomized data structure \textsc{CentralPathMaintenance} (in Algorithm~\ref{alg:central_path_maintenance_main}, \ref{alg:central_path_maintenance_update}, \ref{alg:central_path_maintenance_multiply_move}) that approximately maintains the projection matrices
\begin{align*}
\sqrt{W} A^\top ( A W A^\top )^{-1} A \sqrt{W}
\end{align*}
for positive block diagonal psd matrix $W \oplus_i \R^{n_i \times n_i}$;
exactly implicitly maintains central path parameters $(x,s)$ and approximately explicitly maintains path parameters through the following five operations:

1. $\textsc{Initialize}(\ov{W}^{(0)}, \cdots)$ : Assume $\ov{W}^{(0)} \in \otimes_i \R^{n_i \times n_i}$. Initialize all the parameters in $O(n^{\omega})$ time.

2. $\textsc{Update}( \ov{W} )$ : Assume $\ov{W} \in \oplus_i \R^{n_i \times n_i} $. Output a block diagonal matrix $\wt{V} \oplus_i \R^{n_i \times n_i}$ such that
\begin{align*}
( 1 - \epsilon_{mp} ) \wt{v}_i \preceq \ov{w}_i \preceq ( 1 + \epsilon_{mp} ) \wt{v}_i .
\end{align*}

3. $\textsc{Query}()$ : Output $(\ov{x}, \ov{s})$ such that $\| \ov{x} - x \|_{\wt{V}^{-1}} \leq \epsilon_{mp}$ and $\| \ov{s} - s \|_{\wt{V}} \leq t \epsilon_{mp}$ where $t$ is the last $t$ used in \textsc{MultiplyMove}, where $\epsilon_{mp} = \alpha \log^2 (nT) \frac{ n^{1/4} }{ \sqrt{b} }$ and the success probability is $1-1/\poly(nT)$. This step takes $O(n)$ time.

4. $\textsc{MultiplyMove}(h,t)$ : It outputs nothing. It implicitly maintains:
\begin{align*}
x = x + \wt{V}^{1/2} ( I - \wt{P} ) \wt{V}^{1/2} h,
s = s + t \wt{V}^{-1/2} \wt{P} \wt{V}^{1/2} h.
\end{align*}
where $\wt{P} = \wt{V}^{1/2} A^\top ( A \wt{V} A^\top )^{-1} A \wt{V}^{1/2}$.
It also explicitly maintains $\ov{x}, \ov{s}$. Assuming $t$ is decreasing, each call takes $O( nb + n^{a \omega + o(1)} + n^a \| h \|_0 + n^{1.5} )$ amortized time.

Let $\ov{W}^{(0)}$ be the initial matrix and $\ov{W}^{(1)}, \cdots, \ov{W}^{(T)}$ be the (random) update sequence. Under the assumption that there is a sequence of matrix $W^{(0)}, \cdots, W^{(T)} \in \oplus_{i=1}^{m} \R^{n_i \times n_i}$ satisfies for all $k$
\begin{align*}
\left\| w_i^{-1/2} ( \ov{w}_i - w_i ) w_i^{-1/2} \right\|_F \leq & ~ \epsilon_{mp}, \\
\sum_{ i = 1 }^{ m } \left\| (w_i^{(k)})^{-1/2} ( \E [ w_i^{(k+1)} ] - w_i^{(k)} ) ( w_i^{(k)} )^{-1/2} \right\|_F^2 \leq & ~ C_1^2, \\
\sum_{ i = 1 }^{ m } \left( \E \left[ \left\| (w_i^{(k)})^{-1/2} ( w_i^{(k+1)}  - w_i^{(k)} ) ( w_i^{(k)} )^{-1/2} \right\|_F^2 \right] \right)^2 \leq & ~ C_2^2, \\
\left\| (w_i^{(k)})^{-1/2} ( w_i^{(k+1)}  - w_i^{(k)} ) ( w_i^{(k)} )^{-1/2}  \right\|_F \leq & ~ \frac{1}{4}.
\end{align*}
where $w_i^{(k)}$ is the $i$-th block of $W^{(k)}$, $\forall i \in [m]$.

Then, the amortized expected time per call of \textsc{Update}$(w)$ is
\begin{align*}
( C_1 / \epsilon_{mp} + C_2 / \epsilon_{mp}^2 ) \cdot ( n^{\omega - 1/2 + o(1)} + n^{2- a/2 +o(1)}) .
\end{align*}
\end{theorem}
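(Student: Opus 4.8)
The plan is to build \textsc{CentralPathMaintenance} in the layers indicated by Table~\ref{tab:central_path_maintenance} and then assemble the per-operation guarantees into the amortized bound. First I would fix the representation invariants. The structure stores an approximate block-diagonal weight $\wt V$ maintained so that $(1-\epsilon_{mp})\wt v_i \preceq \ov w_i \preceq (1+\epsilon_{mp})\wt v_i$ (this is exactly the output of \textsc{Update}), a precomputed sketched projection $\Gamma := [R_1^\top;\cdots;R_T^\top]^\top \sqrt{\wt V}A^\top(A\wt V A^\top)^{-1}A\sqrt{\wt V}$, an \emph{implicit} exact pair $(x,s)$ stored as $x=\wt V^{1/2}A^\top u_x + (\text{short correction})$ and $s=t\,\wt V^{-1/2}A^\top u_s + (\text{short correction})$ with $u_x,u_s$ updated in $O(\|h\|_0)$ time per \textsc{MultiplyMove}, and an \emph{explicit} approximate pair $(\ov x,\ov s)$ updated each step by adding the sketched increment obtained from one row-block $R_\ell$ of $\Gamma$ applied to $h$. \textsc{Query} just reads off $(\ov x,\ov s)$ in $O(n)$ time; the correctness of the implicit updates in \textsc{MultiplyMove} (that $x,s$ evolve exactly as $x+\wt V^{1/2}(I-\wt P)\wt V^{1/2}h$, $s+t\wt V^{-1/2}\wt P\wt V^{1/2}h$) is a direct substitution check once $\wt P=\wt V^{1/2}A^\top(A\wt V A^\top)^{-1}A\wt V^{1/2}$. \textsc{Initialize} computes $\Gamma$ with one (rectangular) matrix multiplication in $O(n^\omega)$ time, using that all the $R_\ell$ together fit in an $\wt O(n)\times n$ budget since we take only $\wt O(\sqrt n)$ of them, each of size $b\times n$ with $b=\wt O(\sqrt n)$.

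Next I would prove the \textsc{Query} accuracy $\|\ov x - x\|_{\wt V^{-1}}\le\epsilon_{mp}$ and $\|\ov s - s\|_{\wt V}\le t\epsilon_{mp}$. The key input is the sketch lemma (Lemma~\ref{lem:sketch_vector}) together with the fact that the ideal increment is, after the $\wt V^{1/2}$ change of variables, $(I-\wt P)$ applied to an $O(\alpha)$-norm vector (because $\|h\|_{\ov x}^*=O(\alpha)$ by Lemma~\ref{lem:alpha_i} and $I-\wt P$ is an orthogonal projection), hence an $O(1)$-norm vector; applying a fresh $\pm1/\sqrt b$ sketch $R_\ell$ makes the per-step error $\wt O(n^{1/4}/\sqrt b)$ in the correctly reweighted $\ell_\infty$ norm, and summing over $\wt O(\sqrt n)$ iterations gives $\wt O(\alpha\,\log^2(nT)\,n^{1/4}/\sqrt b)=\epsilon_{mp}$ with probability $1-1/\poly(nT)$ after a union bound over steps. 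Whenever this accumulated sketch error would cross the threshold we pay an $O(n)$ \emph{reset}, recomputing the exact $x$ (resp.\ $s$) from $u_x$ (resp.\ $u_s$) and re-zeroing $\ov x$ (resp.\ $\ov s$); this happens only $\wt O(1)$ times over the whole run, contributing $\wt O(n)$ which is absorbed, and the $n^{1.5}$ term in the \textsc{MultiplyMove} amortized cost is exactly these $O(n)$ resets amortized across $\wt O(\sqrt n)$ iterations (the remaining $nb$, $n^{a\omega+o(1)}$, and $n^a\|h\|_0$ come from applying one row-block of $\Gamma$ and from flushing the pending low-rank correction to $\wt V$).

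The heart of the proof, and the main obstacle, is the amortized cost of \textsc{Update}. Following the inverse-maintenance scheme of \cite{cls18}, I would split \textsc{Update} into \textsc{PartialUpdate} (accumulate the block-coordinates whose $\ov w_i$ has drifted beyond $\epsilon_{mp}$ from the recorded $\wt v_i$, and once their number reaches $n^a$ apply the corresponding rank-$\le n^a$ correction to $\Gamma$, $(A\wt V A^\top)^{-1}$, and the stored vectors with a rectangular multiply in $n^{a\omega+o(1)}$ time) and \textsc{FullUpdate} (rebuild from scratch in $O(n^\omega)$ time once the accumulated rank crosses a doubling scale). Bucketing coordinates by drift magnitude, a rebuild at ``rank scale'' $r=2^j$ is forced only after enough coordinates have each drifted by $\gtrsim\epsilon_{mp}$; the new ingredient relative to \cite{cls18} is that the drift of $W$ is random and given only in expectation, so I would apply a Freedman/Bernstein-type martingale concentration, using the hypothesis bounds — $C_1$ on $\sum_i\|\cdot\|_F$ of the per-step expected change, $C_2$ on the $\ell_2$-aggregate of per-step variances, and the almost-sure $1/4$ block bound — to show that the expected number of rank-$r$ rebuilds over $T$ steps is $\wt O\big((C_1/\epsilon_{mp}+C_2/\epsilon_{mp}^2)\cdot T/\sqrt{n/r}\big)$. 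Summing (cost of a rank-$r$ rebuild)$\times$(expected frequency) over the $O(\log n)$ scales, plus the $n^{a\omega+o(1)}$ cost of each batched rectangular correction weighted by its frequency, and dividing by $T$, yields the claimed $(C_1/\epsilon_{mp}+C_2/\epsilon_{mp}^2)(n^{\omega-1/2+o(1)}+n^{2-a/2+o(1)})$ amortized expected time, the second term arising from balancing $n^{a\omega+o(1)}$ against the rank-$n^a$ rebuild frequency exactly as in the $\alpha$-dependent analysis of \cite{cls18}. The block-diagonal structure with $n_i=O(1)$ affects only constants, since each block inverse and each Frobenius norm is over an $O(1)\times O(1)$ matrix; the main care is making the martingale bookkeeping consistent with the lazy (batched) rank updates so that a rebuild is charged against drift that has genuinely occurred and not been double-counted.
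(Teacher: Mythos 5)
Your treatment of \textsc{Initialize}, \textsc{Query}, \textsc{MultiplyMove}, and the sketch-accuracy guarantee is essentially the paper's: the implicit/explicit split, the precomputed $\wt O(n)\times n$ batch of sketches, and a Bernstein bound on the accumulated per-block sketch error over the $\wt O(\sqrt n)$ iterations between resets (Lemmas~\ref{lem:accuracy_of_ov_x_respect_to_x} and \ref{lem:accuracy_of_ov_s_respect_to_s}) all match. The gap is in the heart of the theorem, the amortized cost of \textsc{Update}. You propose a direct bucketing-by-drift argument with a Freedman-type bound on the number of rank-$r$ rebuilds, but as stated it does not close. First, your frequency formula is inverted: you claim $\wt O\bigl((C_1/\epsilon_{mp}+C_2/\epsilon_{mp}^2)\,T/\sqrt{n/r}\bigr)$ rank-$r$ rebuilds, which \emph{increases} with $r$; the bound one actually needs (and which reproduces the stated runtime when multiplied by the rectangular-MM cost of a rank-$r$ correction) scales like $T\,C_1/(\epsilon_{mp}\sqrt r)$, i.e.\ decreasing in $r$ --- with your formula the full ($r=n$) rebuilds alone would cost $n^{\omega}$ amortized rather than $n^{\omega-1/2}$. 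Second, and more fundamentally, the hypotheses only control the per-step drift in aggregate ($\ell_2$ over blocks of the expected change, and the $C_2$ bound on variances); a block can drift toward the $\epsilon_{mp}$ threshold and back, and the algorithm's lazy selection rule (the while loop that extends the update set $r\to 1.5r$ while the sorted drifts are comparable) means a rank-$r$ update is not cleanly ``caused'' by $r$ blocks each having accumulated $\epsilon_{mp}$ of fresh drift. A charging scheme that just buckets coordinates by drift magnitude will either double-count or fail to match the non-linear cost profile $r\,g_r\,n^{2}$ of a rank-$r$ rectangular correction.

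The paper resolves exactly these two issues with the weighted, soft-capped potential $\Phi_k=\sum_i g_i\,\psi(x^{(k)}_{\tau_k(i)})$: the matrix-valued $\psi$ of Lemma~\ref{lem:def_psi_matrix} caps each block's contribution at $\epsilon_{mp}$ (so oscillation cannot inflate the potential), its Lipschitz constants $L_1=O(1)$, $L_2=O(1/\epsilon_{mp})$ convert the $C_1$ and $C_2$ hypotheses into an expected per-step increase of $O(C_1+C_2/\epsilon_{mp})\sqrt{\log n}\,(n^{-a/2}+n^{\omega-5/2})$ (the ``$W$ move,'' Lemma~\ref{lem:W_move}), and the decreasing weights $g_i$ are chosen so that each rank-$r_k$ update provably decreases the potential by $\Omega(\epsilon_{mp}\,r_k g_{r_k}/\log n)$ (the ``$V$ move,'' Lemma~\ref{lem:v_move}), which is commensurate with the $O(r_k g_{r_k} n^{2+o(1)})$ cost of that update. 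Summing and dividing by $T$ gives the stated amortized bound without ever needing a martingale/union-bound over rebuild events. To repair your argument you would either have to import this potential (at which point you are reproducing the paper's proof) or supply a genuinely new charging scheme that is robust to drift cancellation and to the lazy batched selection rule; the sketch as written does not provide one.
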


\begin{remark}
For our algorithm, we have $C_1 = O(1/\log^2 n)$, $C_2 = O(1/\log^4 n)$ and $\epsilon_{mp}= O(1/\log^2 n)$. Note that the input of $\textsc{Update}$ $\ov{W}$ can move a lot. It is working as long as $\ov{W}$ is close to some $W$ that is slowly moving. In our application, our $W$ satisfies $C_1,C_2$ deterministically. We keep it for possible future applications. 
\end{remark}


\begin{algorithm}[!t]\caption{Central Path Maintenance Data Structure - Initial, Query, Move}\label{alg:central_path_maintenance_main} {
\begin{algorithmic}[1]
\State {\bf datastructure} \textsc{CentralPathMaintenance} \Comment{Theorem~\ref{thm:central_path_maintenance}}
\\
\State {\bf private : members}
  \State \hspace{4mm} $\ov{W} \in \otimes_{i\in [m]} \R^{n_i \times n_i}$  \Comment{Target vector, $\ov{W}$ is $\epsilon_w$-close to $W$}
  \State \hspace{4mm} $V, \wt{V} \in \otimes_{i\in [m]} \R^{n_i \times n_i}$   \Comment{Approximate vector}
  \State \hspace{4mm} $A \in \R^{d \times n}$ \Comment{Constraints matrix}
  \State \hspace{4mm} $M \in \R^{n \times n}$ \Comment{Approximate Projection Matrix}
  \State \hspace{4mm} $\epsilon_{mp} \in (0,1/4)$ \Comment{Tolerance}
  \State \hspace{4mm} $a \in (0,\alpha]$ \Comment{Batch Size for Update ($n^a$)}
  \State \hspace{4mm} $b \in \mathbb{Z}_+$ \Comment{Sketch size of one sketching matrix}
  \State \hspace{4mm} $R \in \R^{n^{1+o(1)} \times n}$ \Comment{A list of sketching matrices}
  \State \hspace{4mm} $Q \in \R^{b \times n}$ \Comment{Sketched matrices}
  \State \hspace{4mm} $u_1 \in \R^n, F \in \R^{n \times n}, u_2 \in \R^n$ \Comment{Implicit representation of $x$, $x = u_1 + F \cdot u_2$}
  \State \hspace{4mm} $u_3 \in \R^n, G \in \R^{n \times n}, u_4 \in \R^n$ \Comment{Implicit representation of $s$, $s = u_3 + G \cdot u_4$}
   \State \hspace{4mm} $\ov{x}$, $\ov{s} \in \R^n$ \Comment{Central path parameters, maintain explicitly}
  \State \hspace{4mm} $l \in \mathbb{Z}_+$ \Comment{Randomness counter, $R_l \in \R^{b \times n}$}
  \State \hspace{4mm} $t^{\pre} \in \R_+$ \Comment{Tracking the changes of $t$}
\State {\bf end members}
\State
  \State {\bf public : procedure }\textsc{Initialize}{$(A,x,s,W,\epsilon_{mp},a,b)$} \Comment{Lemma~\ref{lem:central_path_maintenance_initialize}}
  \State \hspace{4mm} \Comment{parameters will {\it never change} after initialization}
  \State \hspace{4mm} $A \leftarrow A$, $a \leftarrow a$, $b \leftarrow b$, $\epsilon_{mp} \leftarrow \epsilon_{mp}$
  \State \hspace{4mm} \Comment{parameters will {\it still change} after initialization}
  \State \hspace{4mm} $\ov{W} \leftarrow W$, $V \leftarrow W$, $\wt{V} \leftarrow V$
  \State \hspace{4mm} Choose $R_{l} \in \R^{b \times n}$ to be sketching matrix, $\forall l \in [ \sqrt{n} ]$ \Comment{Lemma~\ref{lem:sketch_vector}}
  \State \hspace{4mm} $R \leftarrow [ R_1^\top , R_2^\top , \cdots ]^\top$ \Comment{Batch them into one matrix $R$}
  \State \hspace{4mm} $M \leftarrow A^\top ( A V A^\top)^{-1} A $, $Q \leftarrow R \sqrt{ \wt{V} } M$ \Comment{Initialize projection matrices}
  \State \hspace{4mm} $u_1 \leftarrow x$, $u_2 \leftarrow 0$, $u_3 \leftarrow s$, $u_4 \leftarrow 0$ \Comment{Initialize $x$ and $s$}
  \State \hspace{4mm} $\ov{x} \leftarrow x$, $\ov{s} \leftarrow s$
  \State \hspace{4mm} $l \leftarrow 1$
  \State {\bf end procedure}
  \State
  \State {\bf public : procedure }\textsc{Query}{$()$} \Comment{Lemma~\ref{lem:central_path_maintenance_query}}
    \State \hspace{4mm} \Return $(\ov{x}, \ov{s})$
  \State {\bf end procedure}
  \State
\State {\bf end datastructure}
\end{algorithmic}}
\end{algorithm}

\begin{algorithm}[!t]\caption{Central Path Maintenance Data Structure - Update and PartialUpdate}\label{alg:central_path_maintenance_update}{
\begin{algorithmic}[1]
\State {\bf datastructure} \textsc{CentralPathMaintenance} \Comment{Theorem~\ref{thm:central_path_maintenance}}
  \State
  \State {\bf public : procedure }\textsc{Update}{$(\ov{W}^{\new})$} \Comment{Lemma~\ref{lem:central_path_maintenance_update}, $\ov{W}^{\new}$ is close to $W^{\new}$}
    \State \hspace{4mm} $\ov{y}_i \leftarrow v_i^{-1/2} \ov{w}^{\new}_i v_i^{-1/2}  - 1$, $\forall i \in [ m ]$
    \State \hspace{4mm} $r \leftarrow$ the number of indices $i$ such that $\| \ov{y}_i \|_F \geq \epsilon_{mp}$
    \State \hspace{4mm} {\bf if} {$r < n^a$} {\bf then}
    \State \hspace{8mm} \textsc{PartialUpdate}($\ov{W}^{\new}$)
    \State \hspace{4mm} {\bf else} 
    \State \hspace{8mm} \textsc{FullUpdate}($\ov{W}^{\new}$) \Comment{Algorithm~\ref{alg:central_path_maintenance_fullupdate}}
    \State \hspace{4mm} {\bf end if}
  \State {\bf procedure}
  
  \State
  \State {\bf private : procedure }\textsc{PartialUpdate}{$(\ov{W}^{\new})$} \Comment{Lemma~\ref{lem:central_path_maintenance_partialupdate}}
      \State \hspace{4mm} $\ov{W} \leftarrow \ov{W}^{\new}$
      \State \hspace{4mm} $\tilde{v}^{\new}_{i} \leftarrow \begin{cases} v_{i} & \text{if } (1-\epsilon_{mp}) v_i \preceq \ov{w}_i \preceq (1+\epsilon_{mp}) v_i\\ 
w_{i} & \text{otherwise} \end{cases}$
      \State \hspace{4mm} $F^{\new} \leftarrow F + ( ( \wt{V}^{\new} )^{ 1/2} - ( \wt{V} )^{ 1/2} ) M$ \Comment{only takes $n^{1+a}$ time, instead of $n^2$}
      \State \hspace{4mm} $G^{\new} \leftarrow G + ( ( \wt{V}^{\new} )^{-1/2} - ( \wt{V} )^{-1/2} ) M$
      \State \hspace{4mm} $u_1 \leftarrow u_1 + (F - F^{\new}) u_2$, $u_3 \leftarrow u_3 + (G - G^{\new}) u_4$
      \State \hspace{4mm} $F \leftarrow F^{\new}$, $G \leftarrow G^{\new}$
      \State \hspace{4mm} Let $\wh{S}$ denote the blocks where $\wt{V}$ and $\wt{V}^{\new}$ are different
      \State \hspace{4mm} $\ov{x}_{\wh{S}} \leftarrow (u_1)_{\wh{S}} + (F u_2)_{\wh{S}}$, $\ov{s}_{\wh{S}} \leftarrow (u_3)_{\wh{S}} + (G u_2)_{\wh{S}}$ \Comment{make sure $x$ and $\ov{x}$ are close, similarly for $s$ and $\ov{s}$}
  \State {\bf end procedure}
  \\
\State {\bf end datastructure}
\end{algorithmic}}
\end{algorithm}

\begin{algorithm}[!t]\caption{Central Path Maintenance Data Structure - Full Update}\label{alg:central_path_maintenance_fullupdate}{
\begin{algorithmic}[1]
\State {\bf datastructure} \textsc{CentralPathMaintenance} \Comment{Theorem~\ref{thm:central_path_maintenance}}
  \State
  \State {\bf private : procedure }\textsc{FullUpdate}{$(\ov{W}^{\new})$} \Comment{Lemma~\ref{lem:central_path_maintenance_fullupdate}}
     \State \hspace{4mm} $\ov{y}_i \leftarrow v_i^{-1/2} \ov{w}^{\new}_i v_i^{-1/2}  - 1$, $\forall i \in [m]$
    \State \hspace{4mm} $r \leftarrow$ the number of indices $i$ such that $\| \ov{y}_i \|_F \geq \epsilon_{mp}$
    \State \hspace{4mm} Let $\ov{\pi} : [m] \rightarrow [m]$ be a sorting permutation such that $\| \ov{y}_{ \ov{\pi} (i)} \|_F \geq \| \ov{y}_{ \ov{\pi} (i+1)} \|_F$
      \State \hspace{4mm} {\bf while} {$1.5 \cdot r < m$ and $\| \ov{y}_{ \ov{\pi} (1.5 r)} \|_F \geq ( 1 - 1 / \log m ) \| \ov{y}_{ \ov{\pi} (r)} \|_F $}
        \State \hspace{8mm} $r \leftarrow \min(\lceil 1.5 \cdot r  \rceil, m)$
      \State \hspace{4mm} {\bf end while}
      \State \hspace{4mm} $v^{\new}_{ \ov{\pi} (i)} \leftarrow \begin{cases} \ov{w}^{\new}_{ \ov{\pi} (i)} & i \in \{1,2,\cdots,r\} \\ v_{ \ov{\pi} (i)} & i \in \{r+1, \cdots, m\} \end{cases}$
      \\
      \Comment{Compute $M^{\new} = A^\top ( A V^{\new} A^\top )^{-1} A$ via Matrix Woodbury}
      \State \hspace{4mm} $\Delta \leftarrow V^{\new} - V$ \Comment{$\Delta \in \R^{n \times n}$ and $\| \Delta \|_0 = r$}
      \State \hspace{4mm} $\Gamma \leftarrow \sqrt{ V^{\new} } - \sqrt{ V } $
      \State \hspace{4mm} Let $S \leftarrow \ov{\pi} ( [r] )$ be the first $r$ indices in the permutation
      \State \hspace{4mm} Let $M_{*,S}\in \R^{n \times O(r)}$ be the $r$ column-blocks from $S$ of $M$
      \State \hspace{4mm} Let $M_{S,S},\Delta_{S,S} \in \R^{O(r) \times O(r)}$ be the $r$ row-blocks and column-blocks from $S$ of $M$, $\Delta$
      \State \hspace{4mm} $M^{\new} \leftarrow M - M_{*,S} \cdot ( \Delta^{-1}_{S,S} + M_{S,S} )^{-1} \cdot (M_{*,S})^\top$  \label{lin:offlinewoodburry} \Comment{Update $M$}
      \State \hspace{4mm} $Q^{\new} \leftarrow Q + R \cdot ( \Gamma \cdot M^{\new} ) + R \cdot \sqrt{V} \cdot ( M^{\new} - M )$ \Comment{Update $Q$}
      \State \hspace{4mm} $\ov{W} \leftarrow \ov{W}^{\new}$, $V \leftarrow V^{\new}$, $M \leftarrow M^{\new}$, $Q \leftarrow Q^{\new}$ \Comment{Update in memory}
      \State \hspace{4mm} $\tilde{v}_{i} \leftarrow \begin{cases} v_{i} & \text{if } (1-\epsilon_{mp}) v_i \preceq \ov{w}_i \preceq (1+\epsilon_{mp}) v_i\\ 
w_{i} & \text{otherwise} \end{cases}$\label{lin:vtilde}
      \State \hspace{4mm} $F^{\new} \leftarrow \sqrt{ \wt{V} } M$, $G^{\new} \leftarrow \frac{1}{ \sqrt{ \wt{V} } } M$ 
      \State \hspace{4mm} $u_1 \leftarrow u_1 + (F - F^{\new}) u_2$, $u_3 \leftarrow u_3 + (G - G^{\new}) u_4$
      \State \hspace{4mm} $F \leftarrow F^{\new}$, $G \leftarrow G^{\new}$
      \State \hspace{4mm} Let $\wh{S}$ denote the blocks where $\wt{V}$ and $\wt{V}^{\new}$ are different
      \State \hspace{4mm} $\ov{x}_{\wh{S}} \leftarrow (u_1)_{\wh{S}} + (F u_2)_{\wh{S}}$, $\ov{s}_{\wh{S}} \leftarrow (u_3)_{\wh{S}} + (G u_2)_{\wh{S}}$ \Comment{make sure $x$ and $\ov{x}$ are close, similarly for $s$ and $\ov{s}$}
      \State \hspace{4mm} $t^{\pre} \leftarrow t$
  \State {\bf end procedure}
  \State
\State {\bf end datastructure}
\end{algorithmic}}
\end{algorithm}

\begin{algorithm}[!t]\caption{Central Path Maintenance Data Structure - Multiply and Move}\label{alg:central_path_maintenance_multiply_move} {
\begin{algorithmic}[1]
\State {\bf datastructure} \textsc{CentralPathMaintenance} \Comment{Theorem~\ref{thm:central_path_maintenance}}
  \State
  \State {\bf public : procedure} \textsc{MultiplyAndMove}{$(h,t)$} \Comment{Lemma~\ref{lem:central_path_maintenance_multiply_move}}
    \State \hspace{4mm} \textsc{Multiply}{$(h,t)$} 
    \State \hspace{4mm} \textsc{Move}$()$
  \State {\bf end procedure}
  \State
  \State {\bf private : procedure} \textsc{Multiply} {$(h,t)$} \Comment{Lemma~\ref{lem:central_path_maintenance_multiply}}
    \State \hspace{4mm} Let $\wt{S}$ be the indices $i$ such that $ (1-\epsilon_{mp}) v_i \preceq \ov{w}_i \preceq (1+\epsilon_{mp}) v_i$ is false.
    \State \hspace{4mm} $\wt{\Delta} \leftarrow \wt{V} - V$
    \State \hspace{4mm} $\wt{\Gamma} \leftarrow \sqrt{ \wt{V} } - \sqrt{ V }$
    \State \hspace{4mm} $\delta_m \leftarrow ( ( \wt{\Delta}_{\wt{S},\wt{S}}^{-1} + M_{\wt{S},\wt{S}} )^{-1}  \cdot ( (M_{\wt{S},*})^\top \sqrt{\wt{V}} h )  ) $ \Comment{$|\wt{S}| \leq n^a$}
    \State \hspace{4mm} \Comment{Compute $\wt{\delta}_x = \wt{V}^{1/2} (I - R^\top R \wt{P}) \wt{V}^{1/2} h$}
    \State \hspace{4mm} $ \wt{\delta}_x \leftarrow \wt{V} h - \Big(  ( R_l^\top \cdot (  (Q_l + R_l \cdot \wt{\Gamma} \cdot M )\cdot \sqrt{\wt{V}} \cdot h ) ) -  ( R_l^\top \cdot( ( Q_{l,\wt{S}} + R_l \cdot \wt{\Gamma} \cdot M_{\wt{S},*} )  \cdot \delta_m ) )  \Big) $
     \State \hspace{4mm} \Comment{Compute $\wt{\delta}_s = t \wt{V}^{-1/2} R^\top R \wt{P} \wt{V}^{1/2} h$}
    \State \hspace{4mm} $ \wt{\delta}_s \leftarrow t \cdot \wt{V}^{-1} \cdot \Big(  ( R_l^\top \cdot ( ( Q + R_l \cdot \wt{\Gamma} \cdot M  ) \cdot \sqrt{\wt{V}} \cdot h ) ) -  ( R_l^\top \cdot( ( Q_{l,\wt{S}} + R_l \cdot \wt{\Gamma} \cdot M_{\wt{S},*} ) \cdot \delta_m ) )  \Big) $
    \State \hspace{4mm} $l \leftarrow l + 1$ \Comment{Increasing the randomness counter, and using the new randomness next time}
    \State \hspace{4mm} \Comment{Implicitly maintain $x = x + \wt{V}^{1/2} ( I - \wt{P} ) \wt{V}^{1/2} h$}
    \State \hspace{4mm} $u_1 \leftarrow u_1 + \wt{V} h$ 
    \State \hspace{4mm} $u_2 \leftarrow u_2 - \sqrt{ \wt{V} } h + {\bf 1}_{\wt{S}} \delta_m $ 
    \State \hspace{4mm} \Comment{Implicitly maintain $s = s + t \wt{V}^{-1/2} \wt{P} \wt{V}^{1/2} h$}
    \State \hspace{4mm} $u_3 \leftarrow u_3 + 0$ 
    \State \hspace{4mm} $u_4 \leftarrow u_4 - t \sqrt{ \wt{V} } h + t {\bf 1}_{\wt{S}} \delta_m  $
  \State {\bf end procedure}
  \State
  \State {\bf private : procedure} {\textsc{Move}}{$()$} \Comment{Lemma~\ref{lem:central_path_maintenance_move}}
    \State \hspace{4mm} {\bf if} {$ l > \sqrt{n} $ or $t \geq t^{\pre} / 2$} \Comment{Variance is large enough} 
        \State \hspace{8mm} $x \leftarrow u_1 + F u_2$, $s \leftarrow u_3 + F u_4$
        \State \hspace{8mm} \textsc{Initialize}($A,x,s,\ov{W},\epsilon_{mp},a,b$) \Comment{Algorithm~\ref{alg:central_path_maintenance_main}} 
    \State \hspace{4mm} {\bf else}
        \State \hspace{8mm} $\ov{x} \leftarrow \ov{x} + \wt{\delta}_x$, $\ov{s} \leftarrow \ov{s} + \wt{\delta}_s$ \Comment{Update $\ov{x},\ov{s}$}
    \State \hspace{4mm} {\bf end if}
    \State \Return $( \ov{x} , \ov{s} )$
  \State {\bf end procedure}
  \State
\State {\bf end datastructure}
\end{algorithmic}}
\end{algorithm}

\subsection{Proof of Theorem~\ref{thm:central_path_maintenance}}

We follow the proof-sketch as \cite{cls18}. The proof contains four parts : 1) Definition of $X$ and $Y$, 2) We need to assume sorting, 3) We provide the definition of potential function, 4) We write the potential function.

\paragraph{Definition of matrices $X$ and $Y$.} Let us consider the $k$-th round of the algorithm. For all $i\in [m]$, matrix $\ov{y}_i^{(k)} \in \R^{n_i \times n_i}$ is constructed based on procedure \textsc{Update} (Algorithm~\ref{alg:central_path_maintenance_update}) :
\begin{align*}
\ov{y}_i^{(k)} = \frac{ \ov{w}_i^{ ( k + 1) } }{ v_i^{(k)} } - I.
\end{align*}
and $\ov{\pi}$ is a permutation such that $\| \ov{y}_{ \ov{\pi}(i) }^{(k)} \|_F \geq \| \ov{y}_{ \ov{\pi}(i + 1) }^{(k)} \|_F$.

For the purpose of analysis : for all $i \in [m]$, we define $x_i^{(k)}$, $x_i^{(k)}$ and $y_i^{(k)} \in \R^{n_i \times n_i}$ as follows:
\begin{align*}
x_i^{(k)} = \frac{ w_i^{(k)} }{ v_i^{(k)} } - I, ~~~ y_i^{(k)} = \frac{ w_i^{ ( k + 1) } }{ v_i^{(k)} } - I, ~~~ x_i^{ (k+1) } = \frac{ w_i^{ (k+1) } }{ v_i^{(k+1)} } - I,
\end{align*}
where $\frac{ w_i^{(k)} }{ v_i^{(k)} }$ denotes $ ( v_i^{(k)} )^{-1/2} w_i^{(k)} ( v_i^{(k)} )^{-1/2} $.

It is not hard to observe the difference between $x_i^{(k)}$ and $y_i^{(k)}$ is that $w$ is changing. We call it ``$w$ move''. Similarly, the difference between $y_i^{(k)}$ and $x_i^{(k+1)}$ is that $v$ is changing. We call it ``$v$ move''.

For each $i$, we define $\beta_i$ as follows
\begin{align*}
\beta_i = \| ( w_i^{(k)} )^{-1/2} ( \E[ w_i^{(k+1)} ] - w_i^{(k)} ) ( w_i^{(k)} )^{-1/2} \|_F,
\end{align*}
then one of assumption becomes 
\begin{align*}
\sum_{i=1}^m \beta_i^2 \leq C_1^2.
\end{align*}

\paragraph{Assume sorting for diagonal blocks.} Without loss of generality, we can assume the diagonal blocks of matrix $x^{(k)} \in \oplus_{i=1}^m \R^{n_i \times n_i}$ are sorted such that $\| x_i^{(k)} \|_F \geq \| x_{i+1}^{(k)} \|_F$. In \cite{cls18}, $x_i^{(k)}$ is a scalar. They sorted the sequence based on absolute value. In our situation, $x_i^{(k)}$ is a matrix. We sort the sequence based on Frobenius norm. Let $\tau$ permutation such that $\| x_{ \tau(i) }^{ (k + 1) } \|_F \geq \| x_{ \tau(i+1) }^{ (k + 1) } \|_F$. Let $\pi$ denote the permutation such that $\| y_{\pi(i)}^{(k)} \|_F \geq \| y_{ \pi(i+1) }^{(k)} \|_F$.

\paragraph{Definition of Potential function.} 
We define three functions $g$, $\psi$ and $\Phi_k$ here. The definition of $\psi$ is different from \cite{cls18}, since we need to handle matrix. The definitions of $g$ and $\Phi_k$ are the same as \cite{cls18}.

For the completeness, we still provide a definition of $g$. Let $g$ be defined as
\begin{align*}
g_i 
=
\begin{cases}
n^{-a}, & \text{~if~} i < n^a ; \\
i^{ \frac{ \omega - 2 }{ 1 - a } } n^{ - \frac{ a (\omega - 2) }{ 1 - a } }, & \text{~otherwise}.
\end{cases}
\end{align*}

In \cite{cls18}, the input of function $\psi : \R \rightarrow \R$ has to be a number. We allow matrix here. Let $\psi$ : square matrix $\rightarrow \R$ be defined by
\begin{align}\label{eq:def_psi_matrix}
\psi(x) = \begin{cases}
\frac{ \| x \|_F^2 }{ 2 \epsilon_{mp} }, & \| x \|_F \in [0, \epsilon_{mp} ] ; \\
\epsilon_{mp} - \frac{ ( 4 \epsilon_{mp}^2 - \| x \|_F^2 )^2 }{ 18 \epsilon_{mp}^3 }, & \| x \|_F \in ( \epsilon_{mp} , 2 \epsilon_{mp} ] ; \\
\epsilon_{mp}, & \| x \|_F \in (2\epsilon_{mp}, + \infty).
\end{cases}
\end{align}
where $\| x \|_F$ denotes the Frobenius norm of square matrix $x$, and let $L_1 = \max_x D_x \psi [h] / \| H \|_F $, $L_2 = \max_{x} D_{x}^2 \psi[ h , h ] / \| H \|_F^2$ where $h$ is the vectorization of matrix $H$.

For the completeness, we define the potential at the $k$-th round by
\begin{align*}
\Phi_k = \sum_{i=1}^m g_i \cdot \psi( x_{ \tau_k(i) }^{(k)} )
\end{align*}
where $\tau_k(i)$ is the permutation such that $\| x_{ \tau_k(i) }^{(k)} \|_F \geq \| x_{ \tau_k(i+1) }^{(k)} \|_F$. (Note that in \cite{cls18} $\| \cdot \|_F$ should be $| \cdot |$.)

\paragraph{Rewriting the potential, and bounding it.}
Following the ideas in \cite{cls18}, we can rewrite $\Phi_{k+1} - \Phi_k$ into two terms: the first term is $w$ move, and the second term is $v$ move. For the completeness, we still provide a proof.
\begin{align*}
\Phi_{k+1} - \Phi_k = & ~ \sum_{i=1}^m g_i \cdot \left( \psi( x_{\tau(i)}^{(k+1)} ) - \psi( x_i^{(k)} )  \right) \\
= & ~ \sum_{i=1}^m g_i \cdot \underbrace{ \left( \psi( y_{\pi(i)}^{(k)} ) - \psi ( x_i^{(k)} ) \right) }_{W~\text{move}} - \sum_{i=1}^m g_i \cdot \underbrace{ \left( \psi( y_{\pi(i)}^{(k)} ) - \psi( x_{\tau(i)}^{(k+1)} ) \right) }_{V~\text{move}}
\end{align*}
Using Lemma~\ref{lem:W_move}, we can bound the first term. Using Lemma~\ref{lem:v_move}, we can bound the second term.

\subsection{Initialization time, update time, query time, move time, multiply time}

\begin{remark}
In terms of implementing this data-structure, we only need three operations \textsc{Initialize}, \textsc{Update}, and \textsc{Query}. However, in order to make the proof more understoodable, we split \textsc{Update} into many operations : \textsc{FullUpdate}, \textsc{PartialUpdate}, \textsc{Multiply} and \textsc{Move}. We give a list of operations in Table~\ref{tab:central_path_maintenance}.
\end{remark}

\begin{lemma}[Initialization]\label{lem:central_path_maintenance_initialize}
The initialization time of data-structure \textsc{CentralPathMaintenance} (Algorithm~\ref{alg:central_path_maintenance_main}) is $O(n^{\omega+o(1)})$.
\end{lemma}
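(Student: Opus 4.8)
The plan is to walk line by line through procedure \textsc{Initialize} in Algorithm~\ref{alg:central_path_maintenance_main} and charge each assignment, showing that the bottleneck is the formation of the projection matrix $M$ and of the sketched matrix $Q$, both of which cost $O(n^{\omega+o(1)})$, while everything else is $O(n^{2+o(1)})$. Most lines are cheap: copying $A$ and setting the scalars $a,b,\epsilon_{mp}$ costs $O(n^2)$; the block-diagonal assignments $\ov W\leftarrow W$, $V\leftarrow W$, $\wt V\leftarrow V$ and the vector assignments $u_1\leftarrow x$, $u_2\leftarrow 0$, $u_3\leftarrow s$, $u_4\leftarrow 0$, $\ov x\leftarrow x$, $\ov s\leftarrow s$, $l\leftarrow 1$ all cost $O(n)$, using that $W\in\oplus_{i=1}^m\R^{n_i\times n_i}$ has $m\le n$ blocks each of size $O(1)$. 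Sampling the sketching matrices $R_1,\dots,R_{\sqrt n}$ via Lemma~\ref{lem:sketch_vector} and stacking them into $R=[R_1^\top,\cdots]^\top\in\R^{n^{1+o(1)}\times n}$ produces $n^{2+o(1)}$ entries, each generated in $n^{o(1)}$ time; since $\omega\ge 2$ this is $O(n^{\omega+o(1)})$.

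The two remaining lines are the bottleneck. For $M\leftarrow A^\top(AVA^\top)^{-1}A$: since $V$ is block diagonal, $AV$ costs $O(n^2)$; the product $(AV)A^\top$ is a $(d\times n)\times(n\times d)$ multiplication and costs $O(n^\omega)$ after padding to size $n$; because $A$ is full rank (Section~\ref{sec:preli}) and $V\succ 0$, the $d\times d$ matrix $AVA^\top$ is invertible and can be inverted in $O(d^\omega)\le O(n^\omega)$ time; the two further multiplications $A^\top\cdot(AVA^\top)^{-1}$ and then $(\cdot)\cdot A$ each cost $O(n^\omega)$. Hence $M$ is formed in $O(n^\omega)$ time. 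For $Q\leftarrow R\sqrt{\wt V}M$: first compute $\sqrt{\wt V}M$, which costs $O(n^2)$ since $\sqrt{\wt V}$ is block diagonal with $O(1)$-size blocks; then $R\cdot(\sqrt{\wt V}M)$ multiplies an $n^{1+o(1)}\times n$ matrix by an $n\times n$ matrix, which we carry out by partitioning the rows of $R$ into $n^{o(1)}$ groups of at most $n$ rows and performing $n^{o(1)}$ square $n\times n$ multiplications, for a total of $n^{\omega+o(1)}$.

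Combining, every line of \textsc{Initialize} runs in $O(n^{\omega+o(1)})$ time, which proves the lemma. The only point requiring care is the formation of $Q$: one should not treat $R\sqrt{\wt V}M$ as a single rectangular product and appeal to a generic rectangular bound, but instead slice $R$ into $n^{o(1)}$ square blocks so that fast square matrix multiplication applies. The rest is routine bookkeeping, relying on $\omega\ge 2$ to absorb the $O(n^{2+o(1)})$ cost of generating $R$, and on the block-diagonal structure of $V$ and $\wt V$ to keep the multiplications by $\sqrt{\wt V}$ and $V$ at $O(n^2)$.
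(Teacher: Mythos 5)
Your proposal is correct and follows essentially the same route as the paper: identify the two dominant computations, namely forming $M = A^\top(AVA^\top)^{-1}A$ and forming $Q = R\sqrt{\wt V}M$, and bound each by $O(n^{\omega+o(1)})$ (the paper states the slightly sharper $O(n^2 d^{\omega-2})$ for the first, but both are within the claimed bound since $d\le n$). Your additional bookkeeping for the cheap lines, the generation of $R$, and the explicit slicing of $R$ into square blocks only fills in details the paper leaves implicit.
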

\begin{proof}
The running time is mainly dominated by two parts, the first part is computing $A^\top (A V A^\top)^{-1} A$, this takes $O(n^2 d^{\omega-2})$ time.

The second part is computing $R \sqrt{ \wt{V} } M$. This takes $O( n^{\omega + o(1)} )$ time.

\end{proof}

\begin{lemma}[Update time]\label{lem:central_path_maintenance_update}
The update time of data-structure \textsc{CentralPathMaintenance} (Algorithm~\ref{alg:central_path_maintenance_update}) is $O(r g_r n^{2 + o(1)})$ where $r$ is the number of indices we updated in $V$.
\end{lemma}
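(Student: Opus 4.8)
The plan is to bound the running time of each of the two branches of \textsc{Update} separately. \textsc{Update} first forms $\ov{y}_i=v_i^{-1/2}\ov{w}_i^{\new}v_i^{-1/2}-I$ for every $i$ and counts the number $r$ of blocks with $\|\ov{y}_i\|_F\ge\epsilon_{mp}$; since every $n_i=O(1)$, this preamble (and, in the full case, the sorting permutation $\ov{\pi}$) costs $\wt{O}(n)$ and will be dominated in either branch. If $r<n^a$ we run \textsc{PartialUpdate}; otherwise we run \textsc{FullUpdate}, in which the while loop may enlarge $r$, but I keep writing $r$ for its final value, i.e.\ the number of blocks of $V$ that are actually overwritten.

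For \textsc{PartialUpdate} I would observe that $V$ and $M$ are never touched and only $O(r)$ block-rows of $F$ and $G$ change — this is precisely the point of separating off the partial case (cf.\ the ``$n^{1+a}$ time'' annotation in Algorithm~\ref{alg:central_path_maintenance_update}). Hence $F^{\new}-F=((\wt{V}^{\new})^{1/2}-\wt{V}^{1/2})M$ has $O(r)$ nonzero block-rows and is formed in $O(rn)$ time, as are $G^{\new}-G$, the updates $u_1\leftarrow u_1+(F-F^{\new})u_2$, $u_3\leftarrow u_3+(G-G^{\new})u_4$, and the recomputation of $\ov{x}_{\wh{S}},\ov{s}_{\wh{S}}$ on the $|\wh{S}|\le O(r)$ affected blocks. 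Since $g_r=n^{-a}$ when $r<n^a$ and $a<1$, the total $O(rn)$ is $O(rg_r n^{2+o(1)})$, giving the claim in this branch.

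For \textsc{FullUpdate} I would walk through the steps in order. The while loop is $O(\log m)$ iterations of $O(1)$ work and is dominated. The Matrix-Woodbury step $M^{\new}=M-M_{*,S}(\Delta^{-1}_{S,S}+M_{S,S})^{-1}(M_{*,S})^{\top}$ needs: inverting the $O(r)$-block-diagonal $\Delta_{S,S}$ in $O(r)$ time, inverting the $O(r)\times O(r)$ matrix $\Delta^{-1}_{S,S}+M_{S,S}$ in $O(r^{\omega})$ time, and the products of shapes $(n\times O(r))\cdot(O(r)\times O(r))\cdot(O(r)\times n)$, dominated by the $(n\times O(r))\cdot(O(r)\times n)$ multiplication; decomposing the latter into $r\times r$ sub-blocks (or invoking fast rectangular matrix multiplication directly) costs $O(n^{2}r^{\omega-2})$. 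The update $Q^{\new}=Q+R(\Gamma M^{\new})+R\sqrt{V}(M^{\new}-M)$ multiplies the $n^{1+o(1)}\times n$ matrix $R$ by $\Gamma M^{\new}$, which is supported on $O(r)$ block-rows and obtained by rescaling in $O(rn)$ time, and by the rank-$O(r)$ matrix $M^{\new}-M$ (available in factored form from Woodbury); both products cost $O(n^{2+o(1)}r^{\omega-2})$. The remaining work — recomputing $F^{\new}=\sqrt{\wt{V}}M^{\new}$ and $G^{\new}=\tfrac{1}{\sqrt{\wt{V}}}M^{\new}$ (rescaling block-rows of $M^{\new}$), applying the corrections to $u_1,u_3$, and recomputing $\ov{x}_{\wh{S}},\ov{s}_{\wh{S}}$ — costs $O(n^{2}+rn)$. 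To collect all terms, since here $r\ge n^a$ one has $n^{2}\le n^{2+a}\le rg_r n^{2+o(1)}$, and because $r\le n$ and $\omega<3$ also $r^{\omega},\,nr^{\omega-1},\,rn\le n^{2+o(1)}r^{\omega-2}$, so the total is $O(n^{2+o(1)}r^{\omega-2})$, which I claim is $O(rg_r n^{2+o(1)})$.

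The hard part — really the only nontrivial step — is justifying that last equality, $O(n^{2}r^{\omega-2})=O(rg_r n^{2+o(1)})$, equivalently $r^{\omega-2}\le r\,g_r\,n^{o(1)}$, across both regimes of $r$. With $g_r=r^{(\omega-2)/(1-a)}n^{-a(\omega-2)/(1-a)}$ for $r\ge n^a$, plugging in $r\ge n^a$ and clearing denominators reduces this to $3(1-a)\ge\omega(1-a)$, i.e.\ $\omega\le 3$, which always holds; the small regime ($r<n^a$, $g_r=n^{-a}$) was handled above. This is exactly the property for which $g$ was designed, and the computation mirrors that of \cite{cls18}. Everything else in the proof is routine accounting of sparse and block-diagonal matrix and matrix--vector products, using $n_i=O(1)$ throughout to keep block-local operations of size $O(1)$.
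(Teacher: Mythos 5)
Your proposal is correct and follows essentially the same route as the paper, which simply splits into the \textsc{PartialUpdate} and \textsc{FullUpdate} branches (Lemmas~\ref{lem:central_path_maintenance_partialupdate} and~\ref{lem:central_path_maintenance_fullupdate}) and, for the full case, reduces the cost to the rank-$r$ Woodbury update of $M$ and the corresponding update of $Q$, deferring the final accounting $n^{2+o(1)}r^{\omega-2}=O(r g_r n^{2+o(1)})$ to Lemma 5.4 of \cite{cls18}. You supply exactly the details the paper outsources — the rectangular-multiplication cost $O(n^2 r^{\omega-2})$ and the verification that $r^{\omega-2}\le r\,g_r\,n^{o(1)}$ reduces to $\omega\le 3$ — so no gap.
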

\begin{proof}
It is trivially follows from combining Lemma~\ref{lem:central_path_maintenance_partialupdate} and Lemma~\ref{lem:central_path_maintenance_fullupdate}.
\end{proof}

\begin{lemma}[Partial Update time]\label{lem:central_path_maintenance_partialupdate}
The partial update time of data-structure \textsc{CentralPathMaintenance} (Algorithm~\ref{alg:central_path_maintenance_update}) is $O(n^{1+a})$.
\end{lemma}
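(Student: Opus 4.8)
The plan is to walk through procedure \textsc{PartialUpdate} in Algorithm~\ref{alg:central_path_maintenance_update} line by line and show that every line runs in $O(n^{1+a})$ time. The single structural fact that everything rests on, which I would establish first, is that the set $\wh{S}$ of block indices on which $\wt{V}^{\new}$ differs from $\wt{V}$ satisfies $|\wh{S}| = O(n^{a})$. Recall that \textsc{PartialUpdate} is entered only when $r < n^{a}$, where $r$ counts the indices $i$ with $\|\ov{y}_i\|_F \geq \epsilon_{mp}$ and $\ov{y}_i = v_i^{-1/2}\ov{w}_i^{\new}v_i^{-1/2} - I$. Conjugating by $v_i^{1/2}$, the test $(1-\epsilon_{mp})v_i \preceq \ov{w}_i^{\new} \preceq (1+\epsilon_{mp})v_i$ is exactly $\|\ov{y}_i\|_2 \leq \epsilon_{mp}$, which fails only when $\|\ov{y}_i\|_F \geq \|\ov{y}_i\|_2 > \epsilon_{mp}$; hence at most $r < n^{a}$ blocks receive $\wt{v}_i^{\new} = w_i$ and all others receive $\wt{v}_i^{\new} = v_i$. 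Combined with the invariant that immediately after any \textsc{Initialize} or \textsc{FullUpdate} we have $\wt{V} = V$, and that between such resets only \textsc{PartialUpdate} calls occur, each leaving fewer than $n^{a}$ blocks with $\wt{v}_i \neq v_i$, we get $\wh{S} \subseteq \{i : \wt{v}_i \neq v_i \text{ before}\} \cup \{i : \wt{v}_i^{\new} \neq v_i \text{ after}\}$ and therefore $|\wh{S}| < 2n^{a} = O(n^{a})$.

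Granting $|\wh{S}| = O(n^{a})$, the matrices $(\wt{V}^{\new})^{1/2} - \wt{V}^{1/2}$ and $(\wt{V}^{\new})^{-1/2} - \wt{V}^{-1/2}$ are block diagonal, supported on the $O(n^{a})$ blocks of $\wh{S}$, each of constant size $n_i = O(1)$, and are computed in $O(n^{a})$ time (each block square root is an $O(1) \times O(1)$ computation). Thus $F^{\new} - F = ((\wt{V}^{\new})^{1/2} - \wt{V}^{1/2})M$ is nonzero only in the row-blocks indexed by $\wh{S}$, and is formed by multiplying, for each $i \in \wh{S}$, the $O(1) \times O(1)$ matrix $(\wt{v}_i^{\new})^{1/2} - \wt{v}_i^{1/2}$ into the $O(1) \times n$ row-block $M_{i,*}$, at cost $O(n)$ per block and $O(n^{1+a})$ total; the same holds for $G^{\new} - G$. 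Crucially the full $n \times n$ arrays $F$ and $G$ are never re-materialized: we overwrite only their $O(n^{a})$ changed rows, which is again $O(n^{1+a})$.

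For the remaining updates, $u_1 \leftarrow u_1 + (F - F^{\new})u_2$ and $u_3 \leftarrow u_3 + (G - G^{\new})u_4$ each multiply a matrix with only $O(n^{a})$ nonzero rows against a dense vector and add the resulting $O(n^{a})$-sparse vector, costing $O(n^{1+a})$. Refreshing $\ov{x}_{\wh{S}}$ and $\ov{s}_{\wh{S}}$ amounts to taking $|\wh{S}| = O(n^{a})$ inner products of the relevant length-$n$ rows of $F$ and $G$ against the stored implicit vectors, again $O(n^{1+a})$. Finally, recomputing $\wt{v}_i^{\new}$ for every $i$ tests one constant-size spectral inequality per block, for $O(m) = O(n)$ total. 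Summing all contributions, \textsc{PartialUpdate} costs $O(n^{1+a})$.

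The step I expect to be the main obstacle is the first one: making the bound $|\wh{S}| = O(n^{a})$ fully rigorous. It is not a local property of \textsc{PartialUpdate} but an invariant that must be shown to be preserved by \emph{every} code path of the data structure — \textsc{FullUpdate} resetting $V$ (and hence $\wt{V}$) so that $\wt{V} = V$ holds right afterward, and the $r < n^{a}$ gate keeping each \textsc{PartialUpdate} from growing the discrepancy past $n^{a}$ — and one must check that the occurrence of $w_i$ (rather than $\ov{w}_i$) in the assignment of $\wt{v}_i^{\new}$, which can change the value of an already-``far'' block, still leaves at most $O(n^{a})$ changed blocks. Everything after that is bookkeeping: ensuring that each ``sparse-times-dense'' product above is genuinely implemented by touching only the $O(n^{a})$ affected coordinate blocks rather than by a naive $n \times n$ pass.
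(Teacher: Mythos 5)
Your proposal is correct and follows essentially the same route as the paper's proof: account for each line of \textsc{PartialUpdate} and observe that every update touches only the $O(n^a)$ row-blocks on which $\wt{V}$ changes, so each line costs $O(n^{1+a})$. The paper simply asserts the $O(n^a)$ sparsity of the changed blocks, whereas you justify it via the $r<n^a$ gate and the reset invariant after \textsc{Initialize}/\textsc{FullUpdate}; that extra bookkeeping is consistent with the algorithm and does not change the argument.
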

\begin{proof}

We first analyze the running time of $F$ update, the update equation of $F$ in algorithm is
\begin{align*}
F^{\new} \leftarrow & ~ F + ( (\wt{V}^{\new})^{1/2} - (\wt{V})^{1/2} ) M \\
F \leftarrow & F^{\new}
\end{align*}
which can be implemented as
\begin{align*}
F \leftarrow F + ( (\wt{V}^{\new})^{1/2} - (\wt{V})^{1/2} ) M
\end{align*}
where we only need to change $n^a$ row-blocks of $F$. It takes $O(n^{1+a})$ time.

Similarly, for the update time of $G$.

Next we analyze the update time of $u_1$, the update equation of $u_1$ is
\begin{align*}
u_1 \leftarrow u_1 + (F - F^{\new}) u_2
\end{align*}
Note that the difference between $F$ and $F^{\new}$ is only $n^a$ row-blocks, thus it takes $n^{1+a}$ time to update.

Finally we analyze the update time of $\ov{x}$. Let $\wh{S}$ denote the blocks where $\wt{V}$ and $\wt{V}^{\new}$ are different.
\begin{align*}
\ov{x}_{\wh{S}} \leftarrow (u_1)_{\wh{S}} + (F u_2)_{\wh{S}}
\end{align*}
This also can be done in $n^{1+a}$ time, since  $\wh{S}$ indicates only $n^a$ blocks.

Therefore, the overall running time is $O(n^{1+a})$.

\end{proof}

\begin{lemma}[Full Update time]\label{lem:central_path_maintenance_fullupdate}
The full update time of data-structure \textsc{CentralPathMaintenance} (Algorithm~\ref{alg:central_path_maintenance_fullupdate}) is $O(r g_r n^{2 + o(1)})$ where $r$ is the number of indices we updated in $V$.
\end{lemma}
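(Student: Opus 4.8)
The plan is to walk through \textsc{FullUpdate} (Algorithm~\ref{alg:central_path_maintenance_fullupdate}) line by line, following the inverse‑maintenance cost breakdown of \cite{cls18}, and show that every line is dominated by a single rectangular matrix multiplication of cost $O(r g_r n^{2+o(1)})$. First I would record two normalizations. Since \textsc{FullUpdate} is only invoked from \textsc{Update} when $r\ge n^a$, the definition of $g$ gives $g_r\ge 1$ for all such $r$, so $r g_r\ge n^a\ge 1$; consequently any term of size $O(m\log m)$, $O(nr)$, $O(r^\omega)$ or $O(n^2)$ is automatically absorbed into $O(r g_r n^{2+o(1)})$ (using $m\le n$ and $r^{\omega-1}\le n^{\omega-1}\le n^2$). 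This disposes of: computing all $\ov{y}_i$ and the count $r$ ($O(m)$); the sorting permutation $\ov{\pi}$ ($O(m\log m)$); the \texttt{while} loop, which runs $O(\log_{1.5}m)=O(\log m)$ iterations because $r$ grows geometrically and each iteration costs $O(1)$ given $\ov{\pi}$; and reading $M_{*,S}\in\R^{n\times O(r)}$ and $M_{S,S},\Delta_{S,S}\in\R^{O(r)\times O(r)}$.

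Next I would handle the two genuinely expensive steps. For the Woodbury update of $M$ on line~\ref{lin:offlinewoodburry}, $\Delta=V^{\new}-V$ is block diagonal and supported on the $r$ blocks of $S$, so forming $\Delta_{S,S}^{-1}$ costs $O(r)$ and forming and inverting the $O(r)\times O(r)$ matrix $\Delta_{S,S}^{-1}+M_{S,S}$ costs $O(r^\omega)$, both absorbed. The binding operation is $M_{*,S}\cdot\big(\Delta_{S,S}^{-1}+M_{S,S}\big)^{-1}\cdot(M_{*,S})^\top$: an $[n\times O(r)]\times[O(r)\times O(r)]$ product followed by an $[n\times O(r)]\times[O(r)\times n]$ product. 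Writing $\omega(1,\beta,1)$ for the exponent of multiplying an $n\times n^\beta$ matrix by an $n^\beta\times n$ matrix, both cost at most $n^{\omega(1,\log_n r,1)+o(1)}$; convexity of $\beta\mapsto\omega(1,\beta,1)$ together with $\omega(1,\alpha,1)=2$, $\omega(1,1,1)=\omega$, and the hypothesis $a\le\alpha$ give $\omega(1,\log_n r,1)\le 2+\frac{(\omega-2)(\log_n r-a)}{1-a}$, which is exactly the exponent of $g_r n^2$, so this step costs $O(g_r n^{2+o(1)})\le O(r g_r n^{2+o(1)})$. The same reasoning bounds the $Q$ update $Q^{\new}\leftarrow Q+R(\Gamma M^{\new})+R\sqrt{V}(M^{\new}-M)$: $\Gamma=\sqrt{V^{\new}}-\sqrt{V}$ is supported on $O(r)$ blocks, so $\Gamma M^{\new}$ has only $O(r)$ nonzero row‑blocks (cost $O(rn)$ to form) and $R(\Gamma M^{\new})$ reduces to $n^{o(1)}$ products of shape $[n\times O(r)]\times[O(r)\times n]$; while $M^{\new}-M=-M_{*,S}(\Delta_{S,S}^{-1}+M_{S,S})^{-1}(M_{*,S})^\top$ has rank $O(r)$, so $R\sqrt{V}(M^{\new}-M)$ factors through $R\sqrt{V}M_{*,S}$ (an $[n^{1+o(1)}\times n]\times[n\times O(r)]$ product, again $O(g_r n^{2+o(1)})$ by transpose symmetry of $\omega$), the $O(r)\times O(r)$ middle factor, and $(M_{*,S})^\top$.

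The remaining lines — recomputing $F^{\new}\leftarrow\sqrt{\wt{V}}M$ and $G^{\new}\leftarrow(\sqrt{\wt{V}})^{-1}M$ (row‑scalings of the $n\times n$ matrix $M$), the matrix–vector products $(F-F^{\new})u_2$ and $(G-G^{\new})u_4$, and the $O(r)$‑block assignments to $\ov{x}_{\wh{S}},\ov{s}_{\wh{S}}$ — each cost $O(n^2)\le r g_r n^2$. Summing over all lines yields the claimed $O(r g_r n^{2+o(1)})$. I expect the main obstacle to be the rectangular‑matrix‑multiplication accounting in the middle paragraph: one must check that every product arising in the Woodbury and $Q$ updates really has the shape $[n\times O(r)]\times[O(r)\times n]$ (or a transpose/thickened version), and that the resulting exponent $\omega(1,\log_n r,1)$, bounded via convexity and the constraint $a\le\alpha$, is at most the exponent defining $g_r$ — this is exactly the computation from which $g$ is reverse‑engineered, and the place where using $a\le\alpha$ rather than $a=\alpha$ must be shown only to help.
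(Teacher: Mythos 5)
Your proposal is correct and follows the same route as the paper: the paper's own proof simply observes that both terms of the $Q$-update reduce to products of an $n\times O(r)$ matrix with an $O(r)\times n$ matrix (exactly your reduction via the Woodbury factorization and the $O(r)$-row-block support of $\Gamma$) and then defers the rectangular-matrix-multiplication accounting to Lemma 5.4 of [cls18]. You have essentially inlined that deferred argument — the convexity bound $\omega(1,\log_n r,1)\le 2+\tfrac{(\omega-2)(\log_n r-a)}{1-a}$ matching the exponent of $g_r n^2$, and the absorption of all $O(m\log m)$, $O(nr)$, $O(r^\omega)$, $O(n^2)$ housekeeping terms using $g_r\ge 1$ for $r\ge n^a$ — which is exactly the computation the paper omits.
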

\begin{proof}
The update equation we use for $Q$ is 
\begin{align*}
Q^{\new} \leftarrow Q + R \cdot ( \Gamma \cdot M^{\new} ) + R \cdot \sqrt{V} \cdot ( M^{\new} - M ).
\end{align*}
It can be re-written as 
\begin{align*}
Q^{\new} \leftarrow Q + R \cdot ( \Gamma \cdot M^{\new} ) + R \cdot \sqrt{V} \cdot ( - M_{*,S} \cdot ( \Delta_{S,S}^{-1} + M_{S,S} )^{-1} \cdot (M_{*,S})^\top )
\end{align*}
The running time of computing second term is multiplying a $n \times r$ matrix with another $r \times n$ matrix. The running time of computing third term is also dominated by multiplying a $n \times r$ matrix with another $r \times n$ matrix. 

Thus running time of processing $Q$ update is the same as the processing $M$ update. 

For the running time of other parts, it is dominated by the time of updating $M$ and $Q$.

Therefore, the rest of the proof is almost the same as Lemma 5.4 in \cite{cls18}, we omitted here.
\end{proof}

\begin{lemma}[Query time]\label{lem:central_path_maintenance_query}
The query time of data-structure \textsc{CentralPathMaintenance} (Algorithm~\ref{alg:central_path_maintenance_main}) is $O(n)$ time.
\end{lemma}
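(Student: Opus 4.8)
The plan is immediate once one inspects the pseudocode: the procedure \textsc{Query} in Algorithm~\ref{alg:central_path_maintenance_main} performs no arithmetic at all and simply returns the stored pair $(\ov{x},\ov{s})$. So the entire content of the proof is to check that these two vectors are always available in memory at query time and that emitting them costs $O(n)$.

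First I would recall that $\ov{x},\ov{s}\in\R^{n}$ are declared as private members of \textsc{CentralPathMaintenance}, and that they are (re)initialized in \textsc{Initialize} and subsequently kept current by the update routines: \textsc{PartialUpdate} and \textsc{FullUpdate} overwrite the blocks $\wh{S}$ of $\ov{x},\ov{s}$ on which $\wt{V}$ has changed, and \textsc{Move} either refreshes them from the implicit representation $x=u_1+F u_2$, $s=u_3+G u_4$ (when it invokes \textsc{Initialize}) or adds the sketched increments $\wt{\delta}_x,\wt{\delta}_s$. Hence at any point after \textsc{Initialize} has run, the members $\ov{x}$ and $\ov{s}$ hold well-defined $n$-dimensional vectors; the approximation guarantees $\|\ov{x}-x\|_{\wt{V}^{-1}}\le\epsilon_{mp}$ and $\|\ov{s}-s\|_{\wt{V}}\le t\,\epsilon_{mp}$ are established in the analysis of those routines (Lemmas~\ref{lem:central_path_maintenance_partialupdate}, \ref{lem:central_path_maintenance_fullupdate}, \ref{lem:central_path_maintenance_move}) and are not needed for the present timing claim.

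Then I would conclude: \textsc{Query} copies out the pair $(\ov{x},\ov{s})$, which are two vectors with $n$ real entries each, so its running time is $O(n)$. There is no nontrivial step and hence no real obstacle; the only thing to be careful about is that \textsc{Query} must \emph{not} reconstruct $\ov{x},\ov{s}$ from $(u_1,F,u_2)$ or $(u_3,G,u_4)$ — that would cost an $n\times n$ matrix–vector product, i.e. $\Theta(n^{2})$ — since this reconstruction work is deliberately amortized into \textsc{Move}, \textsc{PartialUpdate} and \textsc{FullUpdate}, leaving \textsc{Query} with only the $O(n)$ cost of returning the answer. This completes the proof.
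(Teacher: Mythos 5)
Your proposal is correct and matches the paper's proof, which simply observes that $\ov{x}$ and $\ov{s}$ are stored explicitly so returning them costs $O(n)$. The extra remarks about where the maintenance work is amortized are accurate but not needed for the timing claim.
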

\begin{proof}
This takes only $O(n)$ time, since we stored $\ov{x}$ and $\ov{s}$.
\end{proof}

\begin{lemma}[Move time]\label{lem:central_path_maintenance_move}
The move time of data-structure \textsc{CentralPathMaintenance} (Algorithm~\ref{alg:central_path_maintenance_multiply_move}) is $O(n^{\omega+o(1)})$ time in the worst case, and is $O(n^{\omega-1/2+o(1)})$ amortized cost per iteration.
\end{lemma}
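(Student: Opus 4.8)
The plan is to bound the cost of the two branches of \textsc{Move} separately, and then amortize the cost of the expensive branch over the iterations that elapse between consecutive reinitializations.

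\textbf{Worst case.} The \emph{else} branch of \textsc{Move} only performs the vector updates $\ov{x}\leftarrow\ov{x}+\wt\delta_x$ and $\ov{s}\leftarrow\ov{s}+\wt\delta_s$, and the vectors $\wt\delta_x,\wt\delta_s$ have already been computed inside \textsc{Multiply}; hence this branch costs $O(n)$. The \emph{if} branch first collapses the implicit representations by forming $x=u_1+Fu_2$ and $s=u_3+Gu_4$, each of which is a matrix--vector product with an $n\times n$ matrix and so costs $O(n^2)$, and then calls \textsc{Initialize}, which by Lemma~\ref{lem:central_path_maintenance_initialize} costs $O(n^{\omega+o(1)})$. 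Since $\omega\ge 2$ we have $n^2=O(n^{\omega+o(1)})$, so one call to \textsc{Move} costs $O(n^{\omega+o(1)})$ in the worst case, which is the first claim.

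\textbf{Amortization.} It remains to show that the \emph{if} branch is executed at most once per $\Omega(\sqrt n)$ calls to \textsc{Move}; charging its $O(n^{\omega+o(1)})$ cost uniformly over those calls, together with the $O(n)$ cost of every other call, then gives an amortized cost of $O(n^{\omega+o(1)}/\sqrt n)+O(n)=O(n^{\omega-1/2+o(1)})$ per iteration. The \emph{if} branch is entered only when the randomness counter satisfies $l>\sqrt n$, or when $t$ has changed by a constant factor relative to the value $t^{\pre}$ recorded at the last reinitialization; each time it is entered, the ensuing call to \textsc{Initialize} resets $l\leftarrow 1$ and the reference value $t^{\pre}$ is reset to the current $t$. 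For the first trigger: every call to \textsc{Multiply}, hence every call to \textsc{Move}, increments $l$ by exactly one, so at least $\sqrt n$ iterations separate two reinitializations caused by $l>\sqrt n$. For the $t$-trigger: each outer iteration of \textsc{RobustIPM} multiplies $t$ by $1-\kappa/\sqrt\nu$, and since each $\nu_i=O(1)$ and $m\le n$ we have $\nu=\sum_i\nu_i=O(n)$, so a constant-factor change of $t$ also requires $\Omega(\sqrt n)$ iterations. Hence a reinitialization occurs at most once every $\Omega(\sqrt n)$ iterations, which yields the amortized bound $O(n^{\omega-1/2+o(1)})$.

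The worst-case part is routine. The point that needs care is the amortization, namely verifying that \emph{neither} trigger of the \emph{if} branch can fire more often than once per roughly $\sqrt n$ iterations: the $l$-counter half is pure bookkeeping, while the $t$-half rests on the geometric decrease of $t$ in \textsc{RobustIPM} together with the bound $\nu=O(n)$ coming from the $O(1)$ self-concordance parameters $\nu_i$ and $m\le n$.
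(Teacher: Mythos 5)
Your proposal takes the same route as the paper's (very terse) proof --- the cheap branch is $O(n)$, the expensive branch is dominated by \textsc{Initialize} at $O(n^{\omega+o(1)})$ --- and you correctly supply the amortization argument that the paper leaves implicit, namely that each trigger of the expensive branch fires at most once per $\Omega(\sqrt n)$ iterations. One small slip in the $t$-trigger justification: you invoke $\nu=\sum_i\nu_i=O(n)$, but the bound you actually need is in the other direction. The number of iterations for $t$ to drop by a constant factor is $\Theta(\sqrt\nu/\kappa)$, so to show this is $\Omega(\sqrt n)$ you need a \emph{lower} bound $\nu=\Omega(n)$, which follows from $\nu_i\ge 1$ (so $\nu\ge m$) together with $n_i=O(1)$ (so $m=\Omega(n)$), combined with $\kappa\le 1$. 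With that correction the argument is complete and, if anything, more careful than the paper's two-sentence proof.
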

\begin{proof}
In one case, it takes only $O(n)$ time. For the other case, the running time is dominated by \textsc{Initialize}, which takes $n^{\omega+o(1)}$ by Lemma~\ref{lem:central_path_maintenance_fullupdate}. 

\end{proof}

\begin{lemma}[Multiply time]\label{lem:central_path_maintenance_multiply}
The multiply time of data-structure \textsc{CentralPathMaintenance} (Algorithm~\ref{alg:central_path_maintenance_multiply_move}) is $O(nb + n^{1+a + o(1)})$ for dense vector $\|h\|_0 = n$, and is $O(nb + n^{a\omega +o(1)} + n^a \| h \|_0 )$ for sparse vector $h$.
\end{lemma}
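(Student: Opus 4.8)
The plan is to walk through procedure \textsc{Multiply} (Algorithm~\ref{alg:central_path_maintenance_multiply_move}) line by line and charge each operation, using the invariant maintained by \textsc{Update}, \textsc{FullUpdate}, \textsc{PartialUpdate} that the set $\wt{S}$ of blocks on which $\wt{V}$ disagrees with $V$ has $|\wt{S}|\le n^a$; hence every submatrix $(\cdot)_{\wt{S},\wt{S}}$, $(\cdot)_{\wt{S},*}$, $(\cdot)_{*,\wt{S}}$ has an $O(n^a)$ dimension, and every vector supported on $\wt{S}$ has $O(n^a)$ nonzeros. The structural point driving the whole bound — the reason this data structure exists — is that we never multiply a dense $n\times n$ matrix ($M$ or $\wt{P}$) by a dense vector: every such product is routed either through the stored sketch $Q=R\sqrt{V}M$ (so only the $b\times n$ piece $Q_l$ ever touches an $h$‑derived vector) or through a Woodbury low‑rank correction whose only dense‑matrix content is the $O(n^a)$ rows/columns $M_{\wt{S},*}$, $M_{*,\wt{S}}$, $M_{\wt{S},\wt{S}}$.

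First I would dispatch the preamble. Detecting $\wt{S}$ scans all $m$ blocks, $O(m)=O(n)$ time, and forming $\wt{\Delta}=\wt{V}-V$ and $\wt{\Gamma}=\sqrt{\wt{V}}-\sqrt{V}$ touches only the $O(n^a)$ blocks in $\wt{S}$. Then $\delta_m=(\wt{\Delta}_{\wt{S},\wt{S}}^{-1}+M_{\wt{S},\wt{S}})^{-1}\cdot(M_{\wt{S},*}\sqrt{\wt{V}}h)$: the vector $M_{\wt{S},*}\sqrt{\wt{V}}h$ is an $O(n^a)\times n$ matrix applied to a vector with $\le\|h\|_0$ nonzeros, costing $O(n^a\|h\|_0)$ (which is $O(n^{1+a})$ for dense $h$); inverting the dense $O(n^a)\times O(n^a)$ matrix $\wt{\Delta}_{\wt{S},\wt{S}}^{-1}+M_{\wt{S},\wt{S}}$ and applying it costs $O(n^{a\omega+o(1)}+n^{2a})$.

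Next come the two main lines computing $\wt{\delta}_x$ and $\wt{\delta}_s$. I would split
$\wt{\delta}_x=\wt{V}h-R_l^\top\big((Q_l+R_l\wt{\Gamma}M)\sqrt{\wt{V}}h-(Q_{l,\wt{S}}+R_l\wt{\Gamma}M_{\wt{S},*})\delta_m\big)$
and charge each piece: $\wt{V}h$ and $\sqrt{\wt{V}}h$ cost $O(\|h\|_0)$; $Q_l\cdot(\sqrt{\wt{V}}h)$ costs $O(b\|h\|_0)\le O(nb)$; $\wt{\Gamma}M\sqrt{\wt{V}}h=\wt{\Gamma}_{\wt{S},\wt{S}}\cdot(M_{\wt{S},*}\sqrt{\wt{V}}h)$ reuses the vector already computed for $\delta_m$, has $O(n^a)$ nonzeros, and so $R_l$ applied to it costs $O(bn^a)\le O(nb)$; the $\delta_m$-branch involves only $O(n^a)$-supported vectors and costs $O(n^{2a}+bn^a)$; and the outermost $R_l^\top(\cdot)$, an $n\times b$ matrix on a vector in $\R^b$, costs $O(nb)$. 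Thus $\wt{\delta}_x$ costs $O(nb+n^a\|h\|_0+n^{2a})$, and $\wt{\delta}_s$ is identical up to the cheap diagonal rescaling by $\wt{V}^{-1}$ and $t$. Finally the implicit updates $u_1\leftarrow u_1+\wt{V}h$, $u_2\leftarrow u_2-\sqrt{\wt{V}}h+\mathbf{1}_{\wt{S}}\delta_m$, $u_3$ (unchanged), $u_4\leftarrow u_4-t\sqrt{\wt{V}}h+t\mathbf{1}_{\wt{S}}\delta_m$ each touch $O(\|h\|_0+n^a)$ coordinates. Summing everything gives $O(nb+n^{a\omega+o(1)}+n^a\|h\|_0+n^{2a})$; since $\omega>2$ we have $n^{2a}\le n^{a\omega}$, yielding $O(nb+n^{a\omega+o(1)}+n^a\|h\|_0)$ for general $h$, and substituting $\|h\|_0=n$ together with $a\le\alpha$ (so $a\omega\le 1+a$ for the current values, i.e. $n^{a\omega}\le n^{1+a}$) gives $O(nb+n^{1+a+o(1)})$ for dense $h$.

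The only place that genuinely needs care is the bookkeeping of which intermediate vectors are $O(n^a)$‑sparse versus dense — specifically, verifying that every time the dense sketch $R_l$ hits a vector, that vector is either the original $h$‑derived vector (absorbed into $nb$, since $b\|h\|_0\le nb$) or is supported on the $O(n^a)$ blocks of $\wt{S}$ (absorbed into $bn^a\le nb$), and that $|\wt{S}|\le n^a$ is guaranteed at every call by the \textsc{Update} invariant. I do not expect any real difficulty beyond this accounting; the argument is essentially the vector‑valued analogue of the multiply‑step timing in \cite{cls18}, adapted to the block structure and to sketching a dense direction $h$.
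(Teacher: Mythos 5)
Your proposal is correct and follows essentially the same route as the paper's proof: a line-by-line charging of \textsc{Multiply} using $|\wt{S}|\le n^a$, the $O(\wt{r}^{\omega+o(1)})$ cost of the small Woodbury inverse, the $O(nb)$ cost of every application of $R_l$ or $R_l^\top$, and the bound $a\omega\le 1+a$ to collapse the dense case to $O(nb+n^{1+a+o(1)})$. Your accounting is, if anything, slightly more explicit than the paper's (e.g.\ in tracking which intermediate vectors are $O(n^a)$-supported), but there is no substantive difference.
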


\begin{proof}

We first analyze the running time of computing vector $\delta_m$, the equation is
\begin{align*}
\delta_m \leftarrow \left( ( ( \wt{\Delta}_{\wt{S},\wt{S}} )^{-1} + M_{\wt{S},\wt{S}} )^{-1}  \cdot ( M_{ \wt{S} , * } ) ^\top \sqrt{ \wt{V} } h \right)
\end{align*}
where $\wt{\Delta} = \wt{V} - V $. Let $\wt{r} = \sum_{i \in \wt{S}}n_i = O(r)$ where $r$ is the number of blocks are different in $\wt{V}$ and $V$.

It contains several parts:

1. Computing $\wt{M}_{\wt{S}}^\top \cdot ( \sqrt{ \wt{V} } h ) \in \R^{\wt{r}}$ takes $O( \wt{r} ) \| h \|_0$. 

2. Computing $ ( \wt{\Delta}_{\wt{S} , \wt{S} }^{-1} + M_{\wt{S} , \wt{S}} )^{-1} \in \R^{ O( \wt{r} ) \times O( \wt{r} ) }$ that is the inverse of a $O( \wt{r} ) \times O( \wt{r} )$ matrix takes $O( \wt{r}^{\omega + o(1)} )$ time.

3. Computing matrix-vector multiplication between $O(\wt{r}) \times O(\wt{r})$ matrix $( ( \wt{\Delta}_{\wt{S}, \wt{S} } + M_{\wt{S}, \wt{S}} )^{-1} ) $ and $O(\wt{r}) \times 1$ vector $( (\wt{M}_{\wt{S},*})^\top \sqrt{ \wt{V} } h)$ takes $O(\wt{r}^2)$ time.

Thus, the running time of computing $\delta_m$ is
\begin{align*}
O(\wt{r} \| h \|_0 + \wt{r}^{\omega + o(1)} + \wt{r}^2 ) = O(\wt{r} \| h \|_0 + \wt{r}^{\omega + o(1)} ).
\end{align*}

Next, we want to analyze the update equation of $\wt{\delta}_x$
\begin{align*}
\wt{\delta}_x \leftarrow \wt{V} h - \Big(  ( R_l^\top \cdot (  (Q_l + R_l \sqrt{ \wt{ \Delta } } M )\cdot \sqrt{\wt{V}} \cdot h ) ) -  ( R_l^\top \cdot( ( Q_{l,\wt{S}} + R_l \wt{\Gamma} M_{\wt{S}} )  \cdot \delta_m ) )  \Big)
\end{align*}
where $\wt{\Gamma} = \sqrt{ \wt{V} } - \sqrt{ V }$ has $O(r)$ non-zero blocks.

It is clear that the running time is dominated by the second term in the equation. We only focus on that term.

1. Computing $R_l^\top Q_l \sqrt{ \wt{V} } h$ takes $O(bn)$ time, because $Q_l, R_l \in \R^{b \times n}$.

2. Computing $R_l^\top R_l \sqrt{ \wt{ \Delta } } M \sqrt{ \wt{V} } h$ takes $O(bn + b \wt{r} + \wt{r} \| h \|_0)$ time. The reason is, computing $\sqrt{\wt{\Delta}} M \sqrt{\wt{V}} h$ takes $\wt{r} \| h \|_0$ time, computing $R_l \cdot (\sqrt{\wt{\Delta}} M \sqrt{\wt{V}} h) $ takes $b \wt{r}$, then finally computing $R_l^\top \cdot ( R_l \sqrt{ \wt{ \Delta } } M \sqrt{ \wt{V} } h )$ takes $nb$.

Last, the update equation of $u_1,u_2,u_3,u_4$ only takes the $O(n)$ time.

Finally, we note that $r \leq O(n^a)$ due to the guarantee of \textsc{FullUpdate} and \textsc{PartialUpdate}.

Thus, overall the running time of the \textsc{Multiply} is
\begin{align*}
  O(\wt{r} \| h \|_0 + \wt{r}^{\omega + o(1)} +  \wt{r}^2 + b \wt{r} + nb ) 
= & ~ O( \wt{r} \| h \|_0 + \wt{r}^{\omega + o(1)} + nb ) \\
= & ~ O( r \| h \|_0 + r^{\omega + o(1)} + nb ) \\
= & ~ O( n^a \| h \|_0 + n^{a\omega + o(1)} + nb )
\end{align*}
where the first step follows from $b \wt{r} \leq nb$ and $\wt{r}^2 \leq \wt{r}^{\omega + o(1)}$, and the second step follows from $\wt{r} = O(r)$, and the last step follows from $r = O(n^a)$.
 
 If $h$ is the dense vector, then the overall time is
\begin{align*}
O(nb + n^{1+a} + n^{a \omega + o(1)}).
\end{align*}

Based on Lemma 5.5 in \cite{cls18}, we know that $a \omega \leq 1+a$. Thus, it becomes $O(nb + n^{1+a+o(1)})$ time.

If $h$ is a sparse vector, then the overall time is
\begin{align*}
O(nb + n^a \| h \|_0 + n^{a \omega + o(1)}).
\end{align*}

\end{proof}

\begin{lemma}[\textsc{MultiplyMove}]\label{lem:central_path_maintenance_multiply_move}
The running time of \textsc{MultiplyMove} (Algorithm~\ref{lem:central_path_maintenance_multiply_move}) is the \textsc{Multiply} time plus \textsc{Move} time.
\end{lemma}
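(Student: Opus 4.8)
The plan is to observe that the procedure \textsc{MultiplyAndMove}$(h,t)$ in Algorithm~\ref{alg:central_path_maintenance_multiply_move} performs no arithmetic of its own: its body consists of exactly one call to \textsc{Multiply}$(h,t)$ followed by exactly one call to \textsc{Move}$()$. Consequently, the running time of a single invocation of \textsc{MultiplyMove} equals, up to the $O(1)$ overhead of dispatching the two subroutine calls, the running time of \textsc{Multiply} plus the running time of \textsc{Move}. This sequential-composition observation is the whole content of the lemma; the two summands have already been bounded separately, in Lemma~\ref{lem:central_path_maintenance_multiply} and Lemma~\ref{lem:central_path_maintenance_move} respectively, so nothing further is required to establish the stated equality.

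To turn this into the quantitative bound used in Theorem~\ref{thm:central_path_maintenance}, I would then simply substitute the two component estimates. By Lemma~\ref{lem:central_path_maintenance_multiply}, \textsc{Multiply}$(h,t)$ runs in time $O(nb + n^{a\omega + o(1)} + n^a \|h\|_0)$ for a general direction $h$ (and in $O(nb + n^{1+a+o(1)})$ time for dense $h$, using $a\omega \le 1+a$); by Lemma~\ref{lem:central_path_maintenance_move}, \textsc{Move}$()$ costs $O(n^{\omega + o(1)})$ in the worst case but only $O(n^{\omega - 1/2 + o(1)})$ amortized per iteration. Adding the two yields the claimed amortized per-call running time.

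The one point deserving a sentence of care — and essentially the only thing beyond "read off the two preceding lemmas" — is that the amortization is inherited entirely from \textsc{Move}: each call to \textsc{MultiplyMove} triggers precisely one call to \textsc{Move}, so the amortized analysis of the latter (namely, that the expensive branch re-invoking \textsc{Initialize} at cost $n^{\omega+o(1)}$ is taken only once every $\Theta(\sqrt n)$ iterations, while every other execution of \textsc{Move} costs $O(n)$) transfers verbatim to a per-\textsc{MultiplyMove} amortized bound. I do not expect any genuine obstacle here; all the real work lies in the earlier lemmas, in particular the potential-function accounting of Lemma~\ref{lem:central_path_maintenance_update} and the sketch-update bookkeeping inside \textsc{Multiply}.
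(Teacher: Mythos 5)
Your proposal is correct and matches the paper's (implicit) reasoning exactly: the paper states this lemma without proof precisely because \textsc{MultiplyAndMove} is just the sequential composition of one call to \textsc{Multiply} and one call to \textsc{Move}, so its cost is the sum of the two, with the amortization inherited from \textsc{Move} as you note. Nothing further is needed.
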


\subsection{Bounding $W$ move}
The goal of this section is to analyze the movement of $W$. \cite{cls18} provided a scalar version of $W$ move, here we provide a matrix version.
\begin{lemma}[$W$ move, matrix version of Lemma 5.7 in \cite{cls18}]\label{lem:W_move}
\begin{align*}
\sum_{i=1}^m g_i \cdot \E \Big[ \psi( y_{\pi(i)}^{ (k) } ) - \psi ( x_{\pi(i)}^{ (k) } ) \Big] = O( C_1 + C_2 / \epsilon_{mp} ) \cdot \sqrt{\log n} \cdot ( n^{-a/2} + n^{\omega - 5/2} ).
\end{align*}
\end{lemma}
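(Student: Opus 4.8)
The plan is to follow the proof of Lemma~5.7 in \cite{cls18}, which establishes the scalar case, and to adapt it to the block/matrix setting: there each $x_i^{(k)}$ is a scalar and $\psi$ acts on $|x_i^{(k)}|$, whereas here $x_i^{(k)}$ is a symmetric matrix and $\psi$ acts on its vectorization via \eqref{eq:def_psi_matrix}, with $\|x_i^{(k)}\|_F$ playing the role of $|x_i^{(k)}|$.

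\textbf{Step 1 (setup and reductions).} Since $\psi(z)=\epsilon_{mp}$ is saturated and $D_z\psi\equiv 0$ for $\|z\|_F>2\epsilon_{mp}$, and $\psi(z)\le\epsilon_{mp}$ always, any block with $\|x_{\pi(i)}^{(k)}\|_F>2\epsilon_{mp}$ contributes at most $0$ to $\psi(y_{\pi(i)}^{(k)})-\psi(x_{\pi(i)}^{(k)})$; hence it suffices to bound the contribution of blocks with $\|x_i^{(k)}\|_F\le 2\epsilon_{mp}$. For such blocks the closeness assumption $\|w_i^{-1/2}(\ov w_i-w_i)w_i^{-1/2}\|_F\le\epsilon_{mp}$ together with how $v$ tracks $\ov w$ gives $v_i^{(k)}\approx w_i^{(k)}$ up to $(1\pm O(\epsilon_{mp}))$, so the $v$-normalized move $\delta_i^{(k)}:=y_i^{(k)}-x_i^{(k)}=(v_i^{(k)})^{-1/2}(w_i^{(k+1)}-w_i^{(k)})(v_i^{(k)})^{-1/2}$ agrees up to constants with the $w$-normalized move; consequently $\|\E[\delta_i^{(k)}\mid\mathcal F_k]\|_F=\Theta(\beta_i)$, $\E[\|\delta_i^{(k)}\|_F^2\mid\mathcal F_k]$ is $\Theta$ of the quantity in the $C_2$ hypothesis, and $\|\delta_i^{(k)}\|_F\le\tfrac12$ deterministically. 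Finally record that the matrix version of $\psi$ has first- and second-order sensitivities $L_1=O(1)$ and $L_2=O(1/\epsilon_{mp})$.

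\textbf{Step 2 (Taylor expansion and split).} Block by block, since $\|\delta_{\pi(i)}^{(k)}\|_F$ is small, $\psi(y_{\pi(i)}^{(k)})-\psi(x_{\pi(i)}^{(k)})\le D_{x_{\pi(i)}^{(k)}}\psi[\delta_{\pi(i)}^{(k)}]+\tfrac{L_2}{2}\|\delta_{\pi(i)}^{(k)}\|_F^2$. Multiply by $g_i$, sum over $i$, and take expectations, splitting into a drift term $\sum_i g_i\,\E\,D_{x_{\pi(i)}^{(k)}}\psi[\delta_{\pi(i)}^{(k)}]$ and a variance term $\tfrac{L_2}{2}\sum_i g_i\,\E\,\|\delta_{\pi(i)}^{(k)}\|_F^2$. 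In the drift term, use the matrix variance decomposition $\delta_{\pi(i)}^{(k)}=\E[\delta_{\pi(i)}^{(k)}\mid\mathcal F_k]+(\text{fluctuation})$: the mean contribution is at most $\sum_i g_i L_1\|\E[\delta_{\pi(i)}^{(k)}\mid\mathcal F_k]\|_F$, which by Cauchy--Schwarz against the weights $g$ and the $C_1$ hypothesis gives the $C_1$-part of the bound; the fluctuation contribution is absorbed into the variance term. For the variance term, observe that $\sum_i g_i\|\delta_{\pi(i)}^{(k)}\|_F^2$ is $g$ paired with a rearrangement of $\{\|\delta_j^{(k)}\|_F^2\}_j$; by monotonicity of $g$ (a rearrangement inequality) it is at most $\sum_i g_i\,(\|\delta^{(k)}\|_F^2)^\downarrow_i$, which is then controlled by the $C_2$ hypothesis exactly as the scalars $x_i^2$ are in \cite{cls18}; this is where the truncated-polynomial shape of $g$, the $\sqrt{\log n}$ factor, and the final $n^{-a/2}+n^{\omega-5/2}$ enter. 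Adding the two contributions yields $O(C_1+C_2/\epsilon_{mp})\cdot\sqrt{\log n}\cdot(n^{-a/2}+n^{\omega-5/2})$.

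\textbf{Main obstacle.} The delicate point is the permutation $\pi$, which sorts $\{\|y_j^{(k)}\|_F\}_j$ and is therefore measurable with respect to $w^{(k+1)}$ rather than $\mathcal F_k$, so one cannot pull $D_{x_{\pi(i)}^{(k)}}\psi$ outside the conditional expectation in the drift term. I would resolve this as in \cite{cls18}: bound $D_{x_{\pi(i)}^{(k)}}\psi[\cdot]$ crudely by $L_1\|\cdot\|_F$ (losing the sign), which removes the obstruction at the cost of one further rearrangement-inequality step, using that $g$ is polynomially graded so that a $\pi$-reweighting costs only a constant factor. The secondary technical point is the matrix extension of $\psi$ itself: one must compute $D_x\psi[H]$ and $D_x^2\psi[H,H]$ for the profile in \eqref{eq:def_psi_matrix} and check that the curvature contributions from differentiating $\|x\|_F$ (not merely the profile's second derivative), including at the transition radii $\|x\|_F=\epsilon_{mp}$ and $\|x\|_F=2\epsilon_{mp}$, remain $O(1)$ and $O(1/\epsilon_{mp})$ respectively.
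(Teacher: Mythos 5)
Your route is essentially the paper's: a second\-/order Taylor (mean value) expansion of $\psi$ splitting a drift term, controlled by $L_1=O(1)$ and the $C_1$ hypothesis, from a variance term, controlled by $L_2=O(1/\epsilon_{mp})$ and the $C_2$ hypothesis; Cauchy--Schwarz against $\|g\|_2\lesssim\sqrt{\log n}\,(n^{-a/2}+n^{\omega-5/2})$; the observation that $v_i^{(k)}$ and $w_i^{(k)}$ are spectrally $O(1)$\-/equivalent so the two normalizations of the increment agree up to constants; and the fact that large\-/norm blocks contribute nothing (the paper handles the set $\{\|x_i^{(k)}\|_F\ge 1\}$ by showing that $\|y_i^{(k)}\|_F<1/2$ would contradict the assumption $\|(w_i^{(k)})^{-1/2}(w_i^{(k+1)}-w_i^{(k)})(w_i^{(k)})^{-1/2}\|_F\le\tfrac14$, so the difference is exactly zero there; your monotonicity argument at radius $2\epsilon_{mp}$ is an acceptable substitute for an upper bound).

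The one step that would fail as written is your resolution of the ``main obstacle.'' The obstacle is not where you place it: the gradient $\psi'(x_i^{(k)})$ is measurable with respect to the conditioning on $w^{(k)}$, so $\E\bigl[\langle\psi'(x_i^{(k)}),\delta_i\rangle\,\big|\,\mathcal F_k\bigr]=\langle\psi'(x_i^{(k)}),\E[\delta_i\mid\mathcal F_k]\rangle$ exactly --- there is no fluctuation term to ``absorb into the variance,'' and this identity is precisely what makes $\beta_i$ (hence $C_1$) appear. The only randomness tied to $\pi$ is in the weights $g_{\pi^{-1}(i)}$, and that is neutralized after Cauchy--Schwarz because $\sum_i g_{\pi^{-1}(i)}^2=\|g\|_2^2$ for every permutation; no rearrangement inequality or crude bound is needed. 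By contrast, your proposed fix --- replacing $D_{x}\psi[\delta_i]$ by $L_1\|\delta_i\|_F$ ``losing the sign'' --- forfeits exactly this cancellation: after taking expectations one is left with $\E\|\delta_i\|_F\le\gamma_i^{1/2}$, and $\sum_i g_i\gamma_i^{1/2}\le\|g\|_2(\sum_i\gamma_i)^{1/2}\le\|g\|_2\,m^{1/4}C_2^{1/2}$ is far larger than the claimed $O\bigl((C_1+C_2/\epsilon_{mp})\|g\|_2\bigr)$, since the hypotheses control $\|\E[\delta_i]\|_F$ (via $C_1$) and $(\E\|\delta_i\|_F^2)$ in $\ell_2$ (via $C_2$) but not $\sum_i\E\|\delta_i\|_F$. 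Keep the decomposition of your Step~2 and drop the crude bound; the remaining secondary point you flag (verifying $L_1=2$ and $L_2=10/\epsilon_{mp}$ for the matrix $\psi$, including the transition radii) is carried out in Lemma~\ref{lem:def_psi_matrix} exactly as you describe.
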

\begin{proof}
In scalar version, \cite{cls18} used absolute ($| \cdot |$) to measure each $x_i^{(k)}$. In matrix version, we use Frobenius norm ($\| \|_F$) to measure each $x_i^{(k)}$. 
Let $I \subseteq [m]$ be the set of indices such that $\| x_i^{(k)} \|_F \leq 1$. We separate the term into two :
\begin{align*}
\sum_{i=1}^m g_i \cdot \E[ \psi( y_{\pi(i)}^{(k)} ) - \psi ( x_{\pi(i)}^{(k)} ) ] = \sum_{i \in I} g_{ \pi^{-1}(i) } \cdot \E[ \psi( y_i^{(k)} ) - \psi ( x_i^{(k)} ) ] + \sum_{i \in I^c} g_{ \pi^{-1}(i) } \cdot \E [ \psi( y_i^{(k)} ) - \psi( x_i^{(k)} ) ] .
\end{align*}

{\bf Case 1.} Let us consider the terms from $I$.

Let $\text{vec}(y_i^{(k)})$ denote the vectorization of matrix $y_i^{(k)}$. Similarly, $\text{vec}(x_i^{(k)})$ denotes the vectorization of $x_i^{(k)}$. Mean value theorem shows that
\begin{align*}
\psi( y_i^{(k)} ) - \psi ( x_i^{(k)} ) 
= & ~ \langle \psi'( x_i^{(k)} ) , y_i^{(k)} - x_i^{(k)} \rangle + \frac{1}{2} \text{vec}( y_i^{(k)} - x_i^{(k)} )^\top \psi''(\zeta) \text{vec}( y_i^{(k)} - x_i^{(k)} ) \\
\leq & ~ \langle \psi'( x_i^{(k)} ) , y_i^{(k)} - x_i^{(k)}  \rangle + \frac{L_2}{2} \|  y_i^{(k)} - x_i^{(k)} \|_F^2 \\
= & ~ \langle \psi'( x_i^{(k)} ) , ( v_i^{(k)} )^{-1/2} ( w_i^{(k+1)} - w_i^{(k)} ) ( v_i^{(k)} )^{-1/2} \rangle \\
& ~ + \frac{L_2}{2} \| ( v_i^{(k)} )^{-1/2} ( w_i^{(k+1)} - w_i^{(k)} ) ( v_i^{(k)} )^{-1/2} \|_F^2
\end{align*}
where the second step follows from definition of $L_2$ (see Part 4 of Lemma~\ref{lem:def_psi_matrix}).

Taking conditional expectation given $w^{(k)}$ on both sides
\begin{align}\label{eq:write_expectation_into_beta_gamma}
\E[ \psi( y_i^{(k)} ) - \psi( x_i^{(k)} ) ]
\leq & ~ \langle \psi'( x_i^{(k)} ) , ( v_i^{(k)} )^{-1/2} ( \E[w_i^{(k+1)}] - w_i^{(k)} ) ( v_i^{(k)} )^{-1/2} \rangle \notag \\
& ~ + \frac{L_2}{2} \E[ \| ( v_i^{(k)} )^{-1/2} ( w_i^{(k+1)} - w_i^{(k)} ) ( v_i^{(k)} )^{-1/2} \|_F^2 ] \notag\\
\leq & ~ L_1 \| ( v_i^{(k)} )^{-1/2} ( \E[w_i^{(k+1)}] - w_i^{(k)} ) ( v_i^{(k)} )^{-1/2} \|_F \notag\\
& ~ + \frac{L_2}{2} \E[ \| ( v_i^{(k)} )^{-1/2} ( w_i^{(k+1)} - w_i^{(k)} ) ( v_i^{(k)} )^{-1/2} \|_F^2 ] \notag\\
\leq & ~ L_1 \| (v_i^{(k)})^{-1/2} ( w_i^{(k)} )^{1/2} \|^2 \cdot \| ( w_i^{(k)} )^{-1/2} ( \E[w_i^{(k+1)}] - w_i^{(k)} ) ( w_i^{(k)} )^{-1/2} \|_F \notag\\
& ~ + \frac{L_2}{2} \| (v_i^{(k)})^{-1/2} ( w_i^{(k)} )^{1/2} \|^4 \cdot \E[ \| ( w_i^{(k)} )^{-1/2} ( w_i^{(k+1)} - w_i^{(k)} ) ( w_i^{(k)} )^{-1/2} \|_F^2 ] \notag\\
= & ~ L_1 \| (v_i^{(k)})^{-1/2} ( w_i^{(k)} )^{1/2} \|^2 \cdot \beta_i + \frac{L_2}{2} \| (v_i^{(k)})^{-1/2} ( w_i^{(k)} )^{1/2} \|^4 \cdot \gamma_i
\end{align}
where the second step follows from definition of $L_2$ (see Part 4 of Lemma~\ref{lem:def_psi_matrix}), the third step follows from $\| A B \|_F \leq \| A \|_F \cdot \| B \|$, and the last step follows from defining $\beta_i$ and $\gamma_i$ as follows:
\begin{align*}
\beta_i = & ~ \Big\| ( w_i^{(k)} )^{-1/2} ( \E[w_i^{(k+1)}] - w_i^{(k)} ) ( w_i^{(k)} )^{-1/2} \Big\|_F \\
\gamma_i = & ~ \E \Big[ \Big\| ( w_i^{(k)} )^{-1/2} ( w_i^{(k+1)} - w_i^{(k)} ) ( w_i^{(k)} )^{-1/2} \Big\|_F^2 \Big] .
\end{align*}

To upper bound $\sum_{i \in I} g_{ \pi^{-1} (i) } \E [ \psi( y_i^{(k)} ) - \psi ( x_i^{(k)} ) ] $, we need to bound the following two terms,
\begin{align}\label{eq:two_terms_beta_gamma}
\sum_{i \in I} g_{\pi^{-1}(i)} L_1 \Big\| (v_i^{(k)})^{-1/2} (w_i^{(k)})^{1/2} \Big\|^2 \beta_i , \text{~and~} \sum_{i \in I} g_{\pi^{-1}(i)} \frac{L_2}{2} \Big\| (v_i^{(k)})^{-1/2} (w_i^{(k)})^{1/2} \Big\|^4 \gamma_i .
\end{align}

For the first term (which is related to $\beta$) in Eq.~\eqref{eq:two_terms_beta_gamma}, we have
\begin{align}\label{eq:upper_bound_beta_part}
\sum_{i \in I} g_{ \pi^{-1}(i) } L_1 \| (v_i^{(k)})^{-1/2} (w_i^{(k)})^{1/2} \|^2 \beta_i 
\leq & ~ \left( \sum_{i \in I} \left( g_{\pi^{-1}(i)} L_1 \| (v_i^{(k)})^{-1/2} (w_i^{(k)})^{1/2} \|^2 \right)^2 \sum_{i \in I} \beta_i^2 \right)^{1/2} \notag \\
\leq & ~ O(L_1) \left( \sum_{i=1}^n g_i^2 \cdot C_1^2 \right)^{1/2} \notag \\
= & ~ O( C_1 L_1 \| g \|_2 ).
\end{align}
where the first step follows from Cauchy-Schwarz inequality, the second step follows from $n_i = O(1)$ and $\| (v_i^{(k)})^{-1/2} (w_i^{(k)})^{1/2} \|^2 = O(1)$.

For the second term (which is related to $\gamma$) in Eq.~\eqref{eq:two_terms_beta_gamma}, we have
\begin{align}\label{eq:upper_bound_gamma_part}
\sum_{i \in I} g_{\pi^{-1}(i)} \frac{L_2}{2} \| (v_i^{(k)})^{-1/2} (w_i^{(k)})^{1/2} \|^4 n_i \gamma_i \leq O(L_2) \cdot \sum_{i=1}^m g_i \cdot \gamma_i = O(C_2 L_2 \| g \|_2).
\end{align}

Putting Eq.~\eqref{eq:write_expectation_into_beta_gamma}, Eq.~\eqref{eq:upper_bound_beta_part} and Eq.~\eqref{eq:upper_bound_gamma_part} together, and using several facts $L_1 = O(1)$, $L_2 = O(1 / \epsilon_{mp} )$ (from part 4 of Lemma~\ref{lem:def_psi_matrix}) and $\| g \|_2 \leq \sqrt{\log n} \cdot O( n^{-a/2} + n^{\omega - 5/2} ) $ (from Lemma~\ref{lem:lemma_5.8_in_cls18}) gives us
\begin{align*}
\sum_{i \in I} g_{\pi^{-1}(i)} \cdot \E[ \psi( y_i^{k)} ) - \psi( x_i^{(k)} ) ] \leq O(C_1 + C_2 / \epsilon_{mp}) \cdot \sqrt{\log n} \cdot ( n^{-a/2} + n^{\omega-5/2} ).
\end{align*}
(Note that, the above Equation is the same as \cite{cls18}.)

{\bf Case 2.} Let us consider the terms from $I^c$.

For each $i \in I^c$, we know $\| x_i^{(k)} \|_F \geq 1$. We observe that $\psi(x)$ is constant for $\|x \|_F^2 \geq ( 2 \epsilon_{mp} )^2$, where $\epsilon_{mp} \leq 1/4$. 
If $\| y_i^{(k)} \|_F \geq 1/2$, then $\psi( y_i^{(k)} ) - \psi( x_i^{(k)} ) = 0 $. Therefore, we only need to focus on the $i \in I^c$ such that $\| y_i^{(k)} \|_F < 1/2$. 

For each $i \in I^c$ with $\| y_i^{(k)} \|_F < 1/2$, we have
\begin{align}\label{eq:1_over_2_less_than_3_over_2}
\frac{1}{2} 
< & ~ \| y_i^{(k)} - x_i^{(k)} \|_F \notag \\
= & ~ \| (v_i^{(k)})^{-1/2} ( w_i^{(k+1)} - w_i^{(k)} ) (v_i^{(k)})^{-1/2} \|_F \notag \\
= & ~ \| (v_i^{(k)})^{-1/2} ( w_i^{(k)} )^{1/2} \cdot ( w_i^{(k)} )^{-1/2} ( w_i^{(k+1)} - w_i^{(k)} ) ( w_i^{(k)} )^{-1/2} \cdot ( w_i^{(k)} )^{1/2} (v_i^{(k)})^{-1/2} \|_F \notag \\
\leq & ~ \| (v_i^{(k)})^{-1/2} ( w_i^{(k)} )^{1/2} \| \cdot \| ( w_i^{(k)} )^{-1/2} ( w_i^{(k+1)} - w_i^{(k)} ) ( w_i^{(k)} )^{-1/2} \|_F \cdot \| ( w_i^{(k)} )^{1/2} (v_i^{(k)})^{-1/2} \| \notag \\
= & ~ \| (v_i^{(k)})^{-1/2}  w_i^{(k)} (v_i^{(k)})^{-1/2} \| \cdot \| ( w_i^{(k)} )^{-1/2} ( w_i^{(k+1)} - w_i^{(k)} ) ( w_i^{(k)} )^{-1/2} \|_F \notag \\
\leq & ~ \frac{3}{2}\| ( w_i^{(k)} )^{-1/2} ( w_i^{(k+1)} - w_i^{(k)} ) ( w_i^{(k)} )^{-1/2} \|_F 
\end{align}
where the last step follows from $\| y_i^{(k)} \|_F = \| \frac{ w_i^{(k+1)} }{ v_i^{(k)} } - I \|_F \leq 1/2$. 

It is obvious that Eq.~\eqref{eq:1_over_2_less_than_3_over_2} implies
\begin{align*}
\| ( w_i^{(k)} )^{-1/2} ( w_i^{(k+1)} - w_i^{(k)} ) ( w_i^{(k)} )^{-1/2} \|_F  > 1/3 > 1/4.
\end{align*}
But this is impossible, since we assume it is $\leq 1/4$.

Thus, we have
\begin{align*}
\sum_{i\in I^c} g_{\pi^{-1} (i) } \cdot \E[ \psi( y_i^{(k)} ) - \psi( x_i^{(k)} ) ] = 0.
\end{align*}
\end{proof}

We state a Lemma that was proved in previous work \cite{cls18}.
\begin{lemma}[Lemma 5.8 in \cite{cls18}]\label{lem:lemma_5.8_in_cls18}
\begin{align*}
\left( \sum_{i=1}^n g_i^2 \right)^{1/2} \leq \sqrt{\log n} \cdot O( n^{-a/2} + n^{\omega - 5/2} )
\end{align*}
\end{lemma}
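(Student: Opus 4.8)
The plan is to observe first that this is precisely Lemma~5.8 of \cite{cls18}, reproduced here with the same weights $g$, so it can legitimately just be cited; for completeness I would give the short self-contained estimate, which is nothing more than a split-and-integrate bound on $\sum_{i=1}^{n} g_i^2$. I would set $c := \frac{\omega-2}{1-a}$, so that on the range $i \ge n^a$ the weight $g_i$ is a power law in $i$ times a fixed $n$-power, and record the two facts that drive the whole computation: $c \ge 0$ (since $\omega \ge 2$), and $c(1-a) = \omega - 2$.

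First I would split the sum at the batch threshold $i = n^a$. For $i < n^a$ we have $g_i = n^{-a}$, so $\sum_{i<n^a} g_i^2 \le \lceil n^a \rceil \cdot n^{-2a} = O(n^{-a})$. For $i \ge n^a$ the weights grow like $i^{2c}$ up to the fixed prefactor, and since $x \mapsto x^{2c}$ is increasing with $2c > -1$, comparison with an integral gives
\[
\sum_{i=\lceil n^a\rceil}^{n} i^{2c} \;\le\; \int_{1}^{n+1} x^{2c}\,\d x \;=\; \frac{(n+1)^{2c+1}}{2c+1} \;=\; O\!\left(n^{2c+1}\right).
\]
Multiplying by the $i$-independent part of $g_i^2$ and collapsing the exponent via $c(1-a)=\omega-2$ turns the high-index contribution into $O(n^{2\omega-5})$ (this is exactly the step that uses the normalization of the second branch of $g$ from \cite{cls18}). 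Hence $\sum_{i=1}^{n} g_i^2 = O(n^{-a} + n^{2\omega-5})$, and taking square roots with $\sqrt{u+v}\le \sqrt u + \sqrt v$ yields $\bigl(\sum_i g_i^2\bigr)^{1/2} = O(n^{-a/2} + n^{\omega-5/2})$.

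The only genuinely delicate point — and where the $\sqrt{\log n}$ factor in the statement comes from — is the handling of the threshold $i = n^a$ itself: in \cite{cls18} the effective block sizes are produced by the geometric refinement inside \textsc{FullUpdate} (the while-loop in Lemma~\ref{lem:central_path_maintenance_fullupdate} that keeps enlarging $r$ by a factor $1.5$ while consecutive sorted Frobenius norms stay within a $1-1/\log m$ factor), so the clean two-regime partition into ``small'' and ``large'' indices is only accurate up to $O(\log n)$ geometric scales, and propagating that slack is what inflates the bound by $\sqrt{\log n}$. I expect this $\log$-accounting, rather than the integral estimate (which is routine), to be the only step that needs care; since it is handled identically to \cite{cls18}, I would simply invoke their Lemma~5.8 for it.
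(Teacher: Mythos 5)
The paper itself gives no proof of this lemma --- it is quoted verbatim from \cite{cls18} --- so invoking that reference is legitimate. The problem is that your ``for completeness'' computation does not actually establish the stated bound, and the failure is precisely at the step you describe as routine. With the definition of $g$ as printed in this paper, $g_i = i^{c}\, n^{-ac}$ for $i \ge n^a$ where $c = \frac{\omega-2}{1-a}$, your integral estimate gives
\[
\sum_{i \ge n^a} g_i^2 \;\le\; n^{-2ac}\cdot O\!\left(n^{2c+1}\right) \;=\; O\!\left(n^{2c(1-a)+1}\right) \;=\; O\!\left(n^{2(\omega-2)+1}\right) \;=\; O\!\left(n^{2\omega-3}\right),
\]
not $O(n^{2\omega-5})$: ``collapsing the exponent via $c(1-a)=\omega-2$'' yields $2\omega-3$, which exceeds the target by a factor of $n^{2}$. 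Indeed the single term $g_n^2 = n^{2\omega-4}$ already violates the claimed bound, so the lemma is \emph{false} for the weights as printed here. It is true for the normalization actually used in \cite{cls18}, namely $g_i = i^{\frac{\omega-2}{1-a}-1}\, n^{-\frac{a(\omega-2)}{1-a}}$ for $i \ge n^a$ (an extra factor of $1/i$, which is forced by the requirement that the rank-$r$ update cost $r\,g_r\,n^2$ equal $n^{\omega}$ at $r=n$); with exponent $\beta = \frac{2(\omega-2)}{1-a}-2$ the tail $n^{-2ac}\sum_{i=n^a}^{n} i^{\beta}$ evaluates to $O(n^{2\omega-5})$ when $\beta>-1$ and to $O(n^{-a})$ when $\beta<-1$, which is exactly the two-term bound in the statement. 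Your write-up silently asserts the correct conclusion while computing with the wrong $g$, so the one identity that needed checking never gets checked.

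Your account of the $\sqrt{\log n}$ factor is also off the mark. It has nothing to do with the while-loop in \textsc{FullUpdate} or with any algorithmic fuzziness at the threshold $i=n^a$; Lemma 5.8 of \cite{cls18} is a purely analytic statement about a fixed deterministic sequence. The logarithm appears because the exponent $\beta=\frac{2(\omega-2)}{1-a}-2$ can equal $-1$ (this happens when $\omega = 2.5 - a/2$, squarely inside the admissible range of parameters), in which case $\sum_{i=n^a}^{n} i^{\beta}$ is a harmonic-type sum of size $\Theta(\log n)$ times the per-block contribution. Equivalently, decomposing $[n^a,n]$ into $O(\log n)$ dyadic blocks bounds the sum by $\log n$ times the larger of the two endpoint contributions uniformly in $\omega$ and $a$, covering the increasing, decreasing, and critical cases in one stroke; that is the clean way to get the stated bound without case analysis.
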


\subsection{Bounding $V$ move}\label{sec:v_move}
In previous work, \cite{cls18} only handled the movement of $V$ in scalar version. Here, the goal of is to understand the movement of $V$ in matrix version. We start to give some definitions about block diagonal matrices.
\begin{definition}
We define block diagonal matrices $X^{(k)}$, $Y^{(k)}$, $X^{(k+1)}$ and $\ov{Y}^{(k)} \otimes_{i\in [m]} \R^{n_i \times n_i}$ as follows
\begin{align*}
x_i^{(k)} = \frac{ w_i^{(k)} }{ v_i^{(k)} } - I, ~~~ y_i^{(k)} = \frac{ w_i^{(k+1)} }{ v_i^{(k)} } - I, ~~~ x_i^{(k+1)} = \frac{ w_i^{(k+1)} }{ v_i^{(k+1)} } - I, ~~~ \ov{y}_i^{ (k) } = \frac{ \ov{w}_i^{(k+1)} }{ v_i^{(k)} } - I .
\end{align*}
Let $\epsilon_w$ denote the error between $W$ and $\ov{W}$
\begin{align*}
\| W_i^{-1/2} ( \ov{W}_i - W_i ) W_i^{-1/2} \|_F \leq \epsilon_w.
\end{align*}
\end{definition}

\begin{lemma}[$V$ move, matrix version of Lemma 5.9 in \cite{cls18}]\label{lem:v_move}
We have,
 \begin{align*}
 \sum_{i = 1}^n g_i \cdot \left( \psi(  y^{(k)}_{\pi(i)} ) - \psi ( x^{(k+1)}_{\tau(i)} ) \right) \geq \Omega( \epsilon_{mp} r_k g_{r_k} / \log n).
\end{align*}
\end{lemma}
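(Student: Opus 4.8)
The plan is to follow the argument of Lemma 5.9 in \cite{cls18}, but replacing the scalar potential by the matrix-Frobenius potential $\psi$ defined in \eqref{eq:def_psi_matrix}. First I would recall the key facts: during a \textsc{FullUpdate}, we set $v_{\ov\pi(i)}^{\new}\leftarrow \ov w_{\ov\pi(i)}^{\new}$ for the first $r=r_k$ indices in the $\ov\pi$-sorted order, so that for $i\in S=\ov\pi([r_k])$ we have $x_i^{(k+1)}=(v_i^{(k+1)})^{-1/2}w_i^{(k+1)}(v_i^{(k+1)})^{-1/2}-I$ with $v_i^{(k+1)}$ now within $\epsilon_w$ of $w_i^{(k+1)}$, hence $\|x_i^{(k+1)}\|_F\le \epsilon_w\le \epsilon_{mp}$. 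On the other hand, those are exactly the indices we chose because $\|\ov y_i^{(k)}\|_F\ge\epsilon_{mp}$, and since $\ov y_i^{(k)}$ and $y_i^{(k)}$ differ by at most $\epsilon_w$ (the $W$-vs-$\ov W$ error) in a suitable relative norm, we get $\|y_i^{(k)}\|_F\gtrsim \epsilon_{mp}$ for those indices. Therefore for each of these $r_k$ indices the potential drops: $\psi(y_i^{(k)})-\psi(x_i^{(k+1)})=\Omega(\epsilon_{mp})$ by the two-piece definition of $\psi$ (the quadratic regime near $0$ gives $\psi(x)\le \|x\|_F^2/(2\epsilon_{mp})$, and for $\|y\|_F\ge c\,\epsilon_{mp}$ one has $\psi(y)\ge c'\epsilon_{mp}$).

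Next I would handle the permutation mismatch. The left side of the lemma sums $g_i\cdot(\psi(y_{\pi(i)}^{(k)})-\psi(x_{\tau(i)}^{(k+1)}))$, where $\pi$ sorts $\|y^{(k)}\|_F$ and $\tau$ sorts $\|x^{(k+1)}\|_F$, whereas the natural drop above is indexed by $S$ under the permutation $\ov\pi$. As in \cite{cls18}, the point is that the weights $g_i$ are monotone nondecreasing in $i$, and the stopping rule of the \textsc{while} loop guarantees that the $r_k$ indices we updated are (up to a constant factor in count and a $(1-1/\log m)$ factor in magnitude) exactly the largest ones; so one can rearrange so that the $g$-weight attached to each of the $r_k$ dropping coordinates is at least $g_{r_k}$ (or $\Omega(g_{r_k})$ after accounting for the $1.5\cdot r_k$ slack), while the coordinates where $\psi$ could \emph{increase} (the $v$-move can only make $x$ smaller on $S$ and leaves $x$ unchanged off $S$, so in fact $\psi$ does not increase off $S$) contribute nonnegatively. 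This yields
\[
\sum_{i=1}^n g_i\cdot\big(\psi(y_{\pi(i)}^{(k)})-\psi(x_{\tau(i)}^{(k+1)})\big)\ \ge\ \Omega(g_{r_k})\cdot r_k\cdot \Omega(\epsilon_{mp}/\log n),
\]
where the extra $1/\log n$ comes from the $(1-1/\log m)$-factor slack in the sorted magnitudes across the geometric doubling of $r$, exactly as in \cite{cls18}.

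The main obstacle I anticipate is making the permutation/sorting bookkeeping rigorous in the matrix setting: one must check that sorting by $\|\cdot\|_F$ (rather than $|\cdot|$) still interacts correctly with the doubling \textsc{while}-loop stopping condition, and that the $\epsilon_w$-perturbation between $\ov y_i^{(k)}$ and $y_i^{(k)}$ — which involves conjugating by $(v_i^{(k)})^{-1/2}(w_i^{(k)})^{1/2}=O(1)$ — really does preserve membership in the ``large'' set up to constants; once $\psi$ is shown (via Lemma~\ref{lem:def_psi_matrix}) to satisfy the same one-sided bounds as the scalar $\psi$, the rest is a line-by-line transcription of \cite[Lemma 5.9]{cls18}, and I would simply cite that lemma for the combinatorial counting step rather than reprove it.
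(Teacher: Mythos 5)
Your overall strategy matches the paper's: transcribe \cite[Lemma 5.9]{cls18} to the matrix setting, using that the updated blocks go from $\|\ov{y}_i^{(k)}\|_F\gtrsim\epsilon_{mp}$ before the update to $\|x_i^{(k+1)}\|_F=O(\epsilon_w)$ after, with $\psi$ converting this into a potential drop. However, the one step you describe in your own words --- reconciling the two different sorting permutations $\pi$ and $\tau$ --- is the heart of the proof, and the mechanism you propose for it would fail. The left-hand side compares $g$-weighted \emph{sorted} sums, and by the rearrangement inequality the sorted pairing of the new values $\{\psi(x_j^{(k+1)})\}$ with a monotone $g$ is the extremal pairing; so the permutation mismatch works \emph{against} the bound, and a per-index drop of $\Omega(\epsilon_{mp})$ on the $r_k$ updated blocks does not transfer to a per-position drop of $\Omega(\epsilon_{mp})$ in the sorted comparison. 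What the paper actually argues is a rank-displacement step: since $r_k$ of the top $1.5r_k$ blocks are zeroed out, positions $i\in[r_k/2,r_k]$ of the new sorted order must be occupied by unchanged blocks that previously ranked at least $1.5r_k$, so the per-position drop there is $\psi(y^{(k)}_{\pi(i)})-\psi(y^{(k)}_{\pi(1.5r_k)})$; it is precisely the while-loop exit condition $\|\ov{y}_{\ov{\pi}(1.5r_k)}\|_F<(1-1/\log n)\|\ov{y}_{\ov{\pi}(r_k)}\|_F$, fed through Part 3 of Lemma~\ref{lem:def_psi_matrix}, that makes this $\Omega(\epsilon_{mp}/\log n)$ rather than zero. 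Your attribution of the $1/\log n$ loss to the accumulated $(1-1/\log m)$ slack over the geometric doublings is the mechanism for a different (and also needed) step, namely showing $\|\ov{y}_{\ov{\pi}(r_k)}\|_F\ge\epsilon_{mp}/100$, not for the sorted-sum comparison.

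Two further points. First, the exit condition you implicitly rely on need not hold: the loop can also terminate because $1.5r_k\ge m$, and the paper treats that as a separate case (there one sums over $i\in(n/3,2n/3]$, where all new sorted values are $O(\epsilon_w)$ because at most $m-r_k\le m/3$ blocks were left unchanged); your plan covers only the gap case. Second, you assert that $g_i$ is nondecreasing and, in the same sentence, that each of the top $r_k$ positions carries weight at least $g_{r_k}$ --- these are inconsistent, and the proof needs $g_i\ge g_{r_k}$ for $i\le r_k$. The parts of your plan that are genuinely matrix-specific (using $\|\cdot\|_F$, the $O(1)$ conjugation by $(v_i^{(k)})^{-1/2}(w_i^{(k)})^{1/2}$ to relate $y$ and $\ov{y}$, and Lemma~\ref{lem:def_psi_matrix}) are the right ingredients and match the paper.
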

\begin{proof}
To prove the Lemma, similarly as \cite{cls18}, we will split the proof into two cases.

Before getting into the details of each case, let us first understand several simple facts which are useful in the later proof. Note that from the definition of the algorithm, we only change the block if $\| \ov{y}_i^{(k)} \|_F$ is larger than the error between $w_i$ and $\ov{w}_i$. Hence, all the changes only decreases the norm, namely $\psi(y_i^{(k)}) \geq \psi( x_i^{(k+1)} ) $ for all $i$. So is their sorted version $\psi( y_{\pi(i)}^{(k)} ) \geq \psi( x_{ \tau(i) } )^{(k+1)}$ for all $i$.

\paragraph{Case 1.} The procedure exits the while loop when $1.5 r_k \geq n$.

Let $u^*$ denote the largest $u$ such that $\| \ov{y}^{(k)}_{ \ov{\pi} (u)} \|_F \geq \epsilon_{mp}$. 

If $u^* = r_k$, we have that 
\begin{align*} 
\| \ov{y} ^{(k)}_{ \ov{\pi} (r_k)} \|_F \geq \epsilon_{mp} \geq \epsilon_{mp}/100.
\end{align*}

If $u^* \neq r_k$, using the condition of the loop, we have that
\begin{align*}
 \| \ov{y}^{(k)}_{ \ov{\pi} (r_k)} \|_F 
 \geq & ~ ( 1 - 1 / \log n)^{\log_{1.5} r_k - \log_{1.5} u^*} \cdot \| \ov{y}^{(k)}_{ \ov{\pi} (u^*)} \|_F \\
 \geq & ~ ( 1 - 1 / \log n )^{\log_{1.5} n} \cdot \epsilon_{mp} \\
 \geq & ~ \epsilon_{mp}/100 .
\end{align*}
where the last step follows from $n \geq 4$. 

Recall the definition of $x^{(k+1)}_{\tau(i)}$. We can lower bound the $\LHS$ in the Lemma statement in the following sense,
\begin{align*}
\sum_{i = 1}^n g_i \cdot \Big( \psi ( y^{(k)}_{\pi(i)} ) - \psi (  x^{(k+1)}_{\tau(i)} ) \Big) 
\geq & ~ \sum_{i = n/3+1}^{ 2n / 3 } g_i \cdot \Big( \psi ( y^{(k)}_{\pi(i)} ) - \psi (  x^{(k+1)}_{\tau(i)} ) \Big) \\
\geq & ~ \sum_{i = n/3+1}^{ 2n / 3 } g_i \cdot ( \Omega( \epsilon_{mp} ) - O(\epsilon_w) ) \\
\geq & ~ \sum_{i = n/3+1}^{ 2n / 3 } g_i \cdot \Omega( \epsilon_{mp} ) \\
= & ~ \Omega (r_k g_{r_k} \epsilon_{mp} ).
\end{align*}
where the second step follows from $\| y^{(k)}_{\pi(i)} \|_F \geq \| y^{(k)}_{\pi(r_k)} \|_F \geq (1-O(\epsilon_w)) \|  \ov{y}_{ \ov{\pi}(r_k) }^{(k)} \|_F \geq \epsilon_{mp}/200$ for all $i<2n/3$.

\paragraph{Case 2.}
The procedure exits the while loop when $1.5 r_k < n$ and  $ \| \ov{y}^{(k)}_{ \ov{\pi} (1.5 r_k)} \|_F < (1- 1 / \log n) \| \ov{y}^{(k)}_{ \ov{\pi} (r_k)} \|_F $.

Using the same argument as Case 1, we have 
\begin{align*}
\| \ov{y}^{(k)}_{ \ov{\pi} (r_k)} \|_F \geq \epsilon_{mp} / 100.
\end{align*}
Using Part 3 of Lemma~\ref{lem:def_psi_matrix} and the following fact
\begin{align*}
\| \ov{y}^{(k)}_{ \ov{\pi} (1.5 r) } \|_F < \min \left( \epsilon_{mp}, \| \ov{y}^{(k)}_{ \ov{\pi}(r)} \|_F \cdot (1 - 1/\log n) \right),
\end{align*}
we can show that
\begin{align}\label{eq:r1.5_at_most_r_ov_pi_version}
\psi( \ov{y}^{(k)}_{ \ov{\pi} (1.5 r) } )  - \psi( \ov{y}^{(k)}_{ \ov{\pi} (r) } ) = \Omega(\epsilon_{mp} / \log n).
\end{align}

Now the question is, how to relax $\psi( \ov{y}^{(k)}_{ \ov{\pi}(1.5r) } )$ to $\psi( {y}^{(k)}_{ {\pi}(1.5r) } )$ and how to relax $\psi( \ov{y}^{(k)}_{ \ov{\pi}(r) } )$ to $\psi( {y}^{(k)}_{ {\pi}(r) } )$

Note that $\| y_i^{(k)} \|_F \geq \| x_i^{(k+1)} \|_F$ for all $i$. Hence, we have $\psi ( y_{ \pi( i ) }^{(k)} ) \geq \psi ( x_{ \tau( i ) }^{ ( k + 1 ) } )$ for all $i$.

 Recall the definition of $y$, $\ov{y}$, $\pi$ and $\ov{\pi}$,
\begin{align*}
y_i^{(k)} = \frac{ w_i^{(k+1)} }{ v_i^{(k)} } - I, ~~~ \ov{y}_i^{(k)} = \frac{ \ov{w}_i^{(k+1)} }{ v_i^{(k)} } - I .
\end{align*}
and $\pi$ and $\ov{\pi}$ denote the permutations such that $\| y_{ \pi(i) }^{ ( k ) } \|_F \geq \| y_{ \pi (i+1) }^{(k)} \|_F$ and $\| \ov{y}_{ \ov{ \pi } (i) }^{(k)} \|_F \geq \| \ov{y}_{ \ov{\pi} (i+1) }^{(k)} \|_F$.

Using Fact~\ref{fac:two_sorted_sequences} and $\| \cdot \|_2 = \Theta(1) \| \cdot \|_F$ when the matrix has constant dimension
\begin{align*}
\| y_{\pi(i)}^{(k)} - \ov{y}_{\ov{\pi}(i)}^{(k)} \|_F \leq O(\epsilon_w). 
\end{align*}
where $\epsilon_w$ is the error between $W$ and $\ov{W}$.

Next, $\forall i$, we have
\begin{align}\label{eq:relax_y_to_ov_y_lower}
\psi( y_{ \pi(i) }^{(k)} ) 
= & ~ \psi( \ov{y}_{ \ov{\pi}(i) }^{(k)} ) \pm O(\epsilon_w \epsilon_{mp})
\end{align}
Next, we note that all the blocks the algorithm updated must lies in the range $i = 1, \cdots, \frac{ 3 r_k }{ 2 } - 1$. After the update, the error of $r_k$ of these block becomes so small that its rank will much higher than $r_k$. Hence, $r_k/2$ of the unchanged blocks in the range $i = 1, \cdots, \frac{3r_k}{2}$ will move earlier in the rank. Therefore, the $r_k/2$-th element in $x^{(k+1)}$ must be larger than the $\frac{3}{2}r_k$-th element in $y^{(k)}$. In short, we have $\psi( x_{\tau(i)}^{(k+1)} ) \leq \psi ( y_{ \pi(1.5 r_k) }^{(k)} )$ for all $i \geq r_k / 2$.


Putting it all together, we have 
 \begin{align*}
& ~ \sum_{i = 1}^n g_i \cdot \Big( \psi( y^{(k)}_{\pi(i)} ) - \psi ( x^{(k+1)}_{\tau(i)} ) \Big) \\
\geq & ~ \sum_{i = r_k/2}^{r_k} g_i \cdot \Big( \psi( y^{(k)}_{\pi(i)} ) - \psi ( x^{(k+1)}_{\tau(i)} ) \Big) \\
\geq & ~ \sum_{i = r_k/2}^{r_k} g_i \cdot \Big( \psi( y^{(k)}_{\pi(i)} ) - \psi ( y^{(k+1)}_{\pi(1.5 r_k)} ) \Big) & \text{~by~} \psi( x_{\tau(i)}^{(k+1)} ) \leq \psi ( y_{ \pi(1.5 r_k) }^{(k)} ), \forall i \geq r_k / 2 \\
\geq & ~ \sum_{i = r_k/2}^{r_k} g_i \cdot \Big( \psi( \ov{y}^{(k)}_{\ov{\pi}(i)} ) - \psi ( \ov{y}^{(k+1)}_{\ov{\pi}(1.5 r_k)} ) - O(\epsilon_w \epsilon_{mp} ) \Big) & \text{~by~\eqref{eq:relax_y_to_ov_y_lower}}  \\
\geq & ~ \sum_{i = r_k/2}^{r_k} g_i \cdot \Big( \psi( \ov{y}^{(k)}_{\ov{\pi}(r_k)} ) - \psi ( \ov{y}^{(k+1)}_{ \ov{\pi}(1.5 r_k)} ) - O(\epsilon_w \epsilon_{mp} ) \Big) & \text{~by~} \psi( \ov{y}^{(k)}_{\ov{\pi}(i)} ) \geq \psi( \ov{y}^{(k)}_{\ov{\pi}(r_k)} ), \forall i \in [r_k/2,r_k] \\
\geq & ~ \sum_{i= r_k/2}^{ r_k } g_{r_k} \cdot \Big( \Omega(\frac{\epsilon_{mp}}{\log n}) - O(\epsilon_w \epsilon_{mp}) \Big) & \text{~by~\eqref{eq:r1.5_at_most_r_ov_pi_version} } \\
\geq & ~ \sum_{i= r_k/2}^{ r_k } g_{r_k} \cdot \Omega(\frac{\epsilon_{mp}}{\log n}) & \text{~by~} \epsilon_w < O( 1 / \log n) \\
= & ~ \Omega\left( \epsilon_{mp} r_k  g_{r_k} / \log n \right) .
\end{align*}
Therefore, we complete the proof.
\end{proof}

\begin{fact}\label{fac:two_sorted_sequences}
Given two length $n$ positive vectors $a,b$. Let $a$ be sorted such that $a_i \geq a_{i+1}$. Let $\pi$ denote the permutation such that $b_{ \pi(i) } \geq b_{\pi(i+1)}$. If for all $i \in [n]$, $|a_i - b_i| \leq \epsilon a_i$. Then for all $i \in [n]$, $| a_i - b_{\pi(i)} | \leq \epsilon a_i$.
\end{fact}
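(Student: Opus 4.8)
The plan is to prove the two one-sided bounds $b_{\pi(i)} \le (1+\epsilon)a_i$ and $b_{\pi(i)} \ge (1-\epsilon)a_i$ separately for each fixed $i \in [n]$, each by a short pigeonhole argument on indices that exploits the fact that $a$ is already sorted in nonincreasing order. Together these two inequalities give exactly $|a_i - b_{\pi(i)}| \le \epsilon a_i$.

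For the upper bound I would argue by contradiction: suppose $b_{\pi(i)} > (1+\epsilon)a_i$. Since $b_{\pi(1)} \ge b_{\pi(2)} \ge \cdots \ge b_{\pi(i)}$, every one of the $i$ distinct indices $\pi(1),\dots,\pi(i)$ satisfies $b_{\pi(k)} > (1+\epsilon)a_i$. Combining this with the hypothesis $b_j \le (1+\epsilon)a_j$ applied at $j=\pi(k)$ yields $(1+\epsilon)a_{\pi(k)} \ge b_{\pi(k)} > (1+\epsilon)a_i$, hence $a_{\pi(k)} > a_i$ for each $k \le i$. But because $a$ is nonincreasing, $a_j > a_i$ forces $j < i$, so there are at most $i-1$ indices with this property, contradicting the existence of $i$ of them. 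Therefore $b_{\pi(i)} \le (1+\epsilon)a_i$.

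For the lower bound I would run the symmetric argument on the tail. If $\epsilon \ge 1$ the bound is vacuous since $b$ is a positive vector, so assume $\epsilon < 1$ and suppose $b_{\pi(i)} < (1-\epsilon)a_i$. Then each of the $n-i+1$ distinct indices $\pi(i),\pi(i+1),\dots,\pi(n)$ satisfies $b_{\pi(k)} < (1-\epsilon)a_i$, while the hypothesis gives $b_{\pi(k)} \ge (1-\epsilon)a_{\pi(k)}$; combining, $a_{\pi(k)} < a_i$. Since $a$ is nonincreasing, $a_j < a_i$ forces $j > i$, so there are at most $n-i$ such indices, again a contradiction. Hence $b_{\pi(i)} \ge (1-\epsilon)a_i$, and the two bounds together complete the proof.

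The one place I would take care is the handling of ties in $a$: the argument needs the implication ``$a_j > a_i \Rightarrow j < i$'' (and its dual ``$a_j < a_i \Rightarrow j > i$''), which are valid for a nonincreasing sequence precisely because they involve \emph{strict} inequalities; one must avoid the tempting but false ``$a_j \ge a_i \Rightarrow j \le i$''. Fortunately the contradiction hypotheses are engineered to produce exactly the strict comparisons $a_{\pi(k)} > a_i$ and $a_{\pi(k)} < a_i$, so the argument goes through cleanly whether or not $a$ has repeated values; this is really the only subtlety, the rest being elementary counting.
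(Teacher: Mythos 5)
Your proof is correct, and it takes a genuinely different route from the paper's. You prove the two one-sided bounds $b_{\pi(i)}\le(1+\epsilon)a_i$ and $b_{\pi(i)}\ge(1-\epsilon)a_i$ by a global rank-counting (pigeonhole) argument: if the $i$-th largest value of $b$ overshot $(1+\epsilon)a_i$, then $i$ distinct indices $\pi(1),\dots,\pi(i)$ would all have to satisfy the strict inequality $a_{\pi(k)}>a_i$ and hence lie in $\{1,\dots,i-1\}$, and symmetrically for the undershoot. The paper instead argues by a case analysis on whether $\pi(i)=i$, $\pi(i)<i$, or $\pi(i)>i$, sandwiching $b_{\pi(i)}$ via chains of comparisons through auxiliary indices $j$ with $\pi(j)<\pi(i)$ or $\pi(j)>\pi(i)$. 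Your approach buys something real here: the paper's intermediate claims are stated for an arbitrary permutation and do not hold verbatim (for instance, $\pi(i)<i$ does not by itself imply $b_{\pi(i)}\ge b_i$, nor does it guarantee the existence of some $j>i$ with $\pi(j)<\pi(i)$ --- take $i=n$), so the paper's write-up needs repair, whereas your counting argument is self-contained and airtight. Your explicit attention to ties (using only the strict implications $a_j>a_i\Rightarrow j<i$ and $a_j<a_i\Rightarrow j>i$) and to the vacuous case $\epsilon\ge1$ for the lower bound are exactly the right places to be careful, and both are handled correctly.
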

\begin{proof}
Case 1. $\pi(i) = i$. This is trivially true.

Case 2. $\pi(i) < i$. 
We have
\begin{align*}
b_{\pi(i)} \geq b_i \geq (1-\epsilon) a_i
\end{align*}
Since $\pi(i) < i$, we know that there exists a $j > i$ such that $\pi(j) < \pi(i)$. Then we have
\begin{align*}
b_{\pi(i)} \leq b_{ \pi(j) } \leq (1+\epsilon) a_j \leq (1+\epsilon) a_i
\end{align*}
Combining the above two inequalities, we have $(1-\epsilon) a_i \leq b_{\pi(i)} \leq (1+\epsilon) a_i$.

Case 3. $\pi(i) > i$.
We have
\begin{align*}
b_{\pi(i)} \leq b_i \leq (1+\epsilon) a_i
\end{align*}
Since $\pi > i$, we know that there exists $j < i$ such that $\pi(j) > \pi(i)$. Then we have
\begin{align*}
b_{\pi(i)} \geq b_{\pi(j)} \geq (1-\epsilon) a_j \geq (1-\epsilon) a_i.
\end{align*}
Combining the above two inequalities gives us $(1-\epsilon) a_i \leq b_{\pi(i)} \leq (1+\epsilon) a_i$.

Therefore, putting all the three cases together completes the proof.

\end{proof}

\subsection{Potential function $\psi$}
\cite{cls18} used a scalar version potential function. Here, we generalize it to the matrix version.
\begin{lemma}[Matrix version of Lemma 5.10 in \cite{cls18}]\label{lem:def_psi_matrix}
Let function $\psi : \mathrm{square~matrix} \rightarrow \R$ (defined as Eq.~\eqref{eq:def_psi_matrix}) satisfies the following properties :\\
1. Symmetric $( \psi(x) = \psi(-x) )$ and $\psi(0) = 0$\\
2. If $\| x \|_F \geq \| y \|_F$, then $\psi(x) \geq \psi(y)$ \\
3. $| f'(x) | = \Omega(1/\epsilon_{mp}), \forall x \in [ (0.01 \epsilon_{mp})^2 , \epsilon_{mp}^2 ]$ \\ 
4. $L_1 \overset{\mathrm{def}}{=} \max_x \frac{ D_x \psi[H] }{ \| H \|_F } = 2 $ and $L_2 \overset{ \mathrm{def} }{=} \max_x \frac{ D_x^2 \psi[H,H] }{ \| H \|_F^2 } = 10 /\epsilon_{mp}$ \\
5. $\psi( x  )$ is a constant for $\| x \|_F \geq 2 \epsilon_{mp}$
\end{lemma}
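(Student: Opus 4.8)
The plan is to reduce everything to the one–dimensional profile of $\psi$. Since $\psi(x)$ depends on $x$ only through $\|x\|_F$, write $\psi(x)=g(\|x\|_F)$ where $g(r)=r^{2}/(2\epsilon_{mp})$ on $[0,\epsilon_{mp}]$, $g(r)=\epsilon_{mp}-(4\epsilon_{mp}^{2}-r^{2})^{2}/(18\epsilon_{mp}^{3})$ on $(\epsilon_{mp},2\epsilon_{mp}]$, and $g(r)=\epsilon_{mp}$ for $r>2\epsilon_{mp}$. Properties~1 and~5 are then immediate: $\|x\|_F=\|-x\|_F$ and $g(0)=0$ give symmetry and $\psi(0)=0$, and the third branch of $g$ gives that $\psi$ is constant for $\|x\|_F\ge 2\epsilon_{mp}$. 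For Property~2 I would show $g$ is nondecreasing: it is increasing on $[0,\epsilon_{mp}]$; on $(\epsilon_{mp},2\epsilon_{mp}]$ the factor $4\epsilon_{mp}^{2}-r^{2}$ decreases from $3\epsilon_{mp}^{2}$ to $0$ as $r$ grows, so $(4\epsilon_{mp}^{2}-r^{2})^{2}$ decreases and $g$ increases; and $g$ is constant afterwards. Checking the two breakpoints — both branches give $g(\epsilon_{mp})=\epsilon_{mp}/2$ and $g(2\epsilon_{mp})=\epsilon_{mp}$ — shows $g$ is globally continuous and nondecreasing, so $\|x\|_F\ge\|y\|_F$ yields $\psi(x)\ge\psi(y)$.

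For Property~3, I would read $\psi$ as a function $f$ of the squared norm $u=\|x\|_F^{2}$ (the natural reading, since the stated interval $[(0.01\epsilon_{mp})^{2},\epsilon_{mp}^{2}]$ consists of squared quantities): on all of $[0,\epsilon_{mp}^{2}]$ one has $f(u)=u/(2\epsilon_{mp})$, hence $f'(u)=1/(2\epsilon_{mp})=\Omega(1/\epsilon_{mp})$. Property~4 is the only computational step. I would use the standard identity for a radial function: at any point of differentiability, with $r=\|x\|_F$ and $\widehat{x}=x/r$, $\nabla\psi(x)=g'(r)\widehat{x}$ and $\nabla^{2}\psi(x)=g''(r)\widehat{x}\widehat{x}^{\top}+\tfrac{g'(r)}{r}(I-\widehat{x}\widehat{x}^{\top})$, the derivatives taken in the Frobenius inner product and $H$ identified with its vectorization. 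Hence $|D_x\psi[H]|\le|g'(r)|\,\|H\|_F$ and $|D_x^{2}\psi[H,H]|\le\max\{|g''(r)|,\,|g'(r)|/r\}\,\|H\|_F^{2}$, so it remains to bound $g'$ and $g''$ branch by branch. On $[0,\epsilon_{mp}]$: $g'(r)=r/\epsilon_{mp}\le 1$ and $g''(r)=g'(r)/r=1/\epsilon_{mp}$. On $(\epsilon_{mp},2\epsilon_{mp}]$: $g'(r)=\tfrac{2r(4\epsilon_{mp}^{2}-r^{2})}{9\epsilon_{mp}^{3}}\in[0,1)$ (one critical-point check at $r=2\epsilon_{mp}/\sqrt 3$), $g'(r)/r=\tfrac{2(4\epsilon_{mp}^{2}-r^{2})}{9\epsilon_{mp}^{3}}\le\tfrac{2}{3\epsilon_{mp}}$, and $g''(r)=\tfrac{2(4\epsilon_{mp}^{2}-3r^{2})}{9\epsilon_{mp}^{3}}$ with $|g''(r)|\le\tfrac{16}{9\epsilon_{mp}}$. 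For $r>2\epsilon_{mp}$ all derivatives vanish. Combining, $L_1\le 1\le 2$ and $L_2\le\tfrac{16}{9\epsilon_{mp}}\le\tfrac{10}{\epsilon_{mp}}$; the constants $2$ and $10/\epsilon_{mp}$ in the statement are just convenient over-estimates.

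The step I would flag is the regularity at the breakpoints. At $r=2\epsilon_{mp}$, $g'$ is continuous (both sides equal $0$), so nothing special occurs. At $r=\epsilon_{mp}$, however, $g'$ jumps \emph{down} from $1$ to $2/3$, so $\psi$ is only globally Lipschitz (with constant $1$) and not $C^{2}$. This does not affect Property~4, since the suprema defining $L_1,L_2$ are taken over points of differentiability and the first-derivative bound survives because $\psi$ is $1$-Lipschitz in $\|\cdot\|_F$; and it does not affect the mean-value / second-order Taylor arguments in which Property~4 is applied downstream, because a downward jump in the derivative is concave and only strengthens the upper estimate $\psi(y)\le\psi(x)+\langle\nabla\psi(x),y-x\rangle+\tfrac{L_2}{2}\|y-x\|_F^{2}$. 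Apart from this remark the proof is a direct piecewise differentiation, which I would organize as: (i) radial reduction plus Properties~1, 2, 5; (ii) Property~3 via the squared-norm parametrization; (iii) Property~4 via the radial Hessian formula and the branchwise derivative bounds above.
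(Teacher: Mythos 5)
Your proof is correct, and it follows the same basic strategy as the paper's: reduce $\psi$ to a scalar profile and bound its derivatives branch by branch via the chain rule. The only mechanical difference is the parametrization: the paper writes $\psi(x)=f(\|X\|_F^2)$ and bounds $D_x\psi[h]=2f'(\|X\|_F^2)\tr[XH]$ and $D_x^2\psi[h,h]=2f''(\|X\|_F^2)(\tr[XH])^2+2f'(\|X\|_F^2)\tr[H^2]$ using $|f'(\|X\|_F^2)|\cdot\|X\|_F\le 1$ and $|f''(\|X\|_F^2)|\cdot\|X\|_F^2\le 4/\epsilon_{mp}$, whereas you parametrize by $r=\|x\|_F$ and use the radial gradient/Hessian formula; the two are equivalent, and your bounds $L_1\le 1$, $L_2\le 16/(9\epsilon_{mp})$ sit comfortably under the stated $2$ and $10/\epsilon_{mp}$ (which, as in the paper, are only needed as upper bounds). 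Two things you do that the paper does not: (i) you work directly from the displayed definition in Eq.~\eqref{eq:def_psi_matrix}, whereas the paper's proof introduces $f(u)=u^2/(2\epsilon_{mp}^3)$ on $[0,\epsilon_{mp}^2]$, whose composition with $\|X\|_F^2$ gives $\|X\|_F^4/(2\epsilon_{mp}^3)$ rather than the displayed $\|X\|_F^2/(2\epsilon_{mp})$ --- a harmless inconsistency in the paper, but your reading is the faithful one and still yields Property~3; (ii) you correctly observe that $g'$ jumps from $1$ to $2/3$ at $r=\epsilon_{mp}$, so $\psi$ is not $C^1$ there, and you justify why the downstream second-order Taylor estimate in Lemma~\ref{lem:W_move} survives (the kink is concave). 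The paper is silent on this point, so your remark is a genuine improvement rather than a gap.
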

\begin{proof}
Let $f : \R_{+} \rightarrow \R$ be defined as
\begin{align*}
f(x) = 
\begin{cases}
\frac{ x^2 }{ 2 \epsilon_{mp}^3 }, & ~ x \in [0, \epsilon_{mp}^2 ] ; \\
\epsilon_{mp} - \frac{ ( 4 \epsilon_{mp}^2 - x )^2 }{18 \epsilon_{mp}^3 }, & ~ x \in ( \epsilon_{mp}^2, 4 \epsilon_{mp}^2 ] ; \\
\epsilon_{mp}, & ~ x \in (4 \epsilon_{mp}^2, + \infty) .
\end{cases}
\end{align*}
We can see that 

\begin{align*}
f(x)' = 
\begin{cases}
\frac{ x }{ \epsilon_{mp}^3 }, & ~ x \in [0, \epsilon_{mp}^2 ] ; \\
\frac{ 4 \epsilon_{mp}^2 - x  }{9 \epsilon_{mp}^3 }, & ~ x \in ( \epsilon_{mp}^2, 4 \epsilon_{mp}^2 ] ; \\
0, & ~ x \in (4 \epsilon_{mp}^2, + \infty) .
\end{cases}
\text{~and~}
f(x)'' = 
\begin{cases}
\frac{ 1 }{ \epsilon_{mp}^3 }, & ~ x \in [0, \epsilon_{mp}^2 ] ; \\ 
- \frac{ 1 }{ 9 \epsilon_{mp}^3 }, & ~ x \in ( \epsilon_{mp}^2, 4 \epsilon_{mp}^2 ] ; \\ 
0, & ~ |x| \in (4 \epsilon_{mp}^2, + \infty) .
\end{cases}
\end{align*}
It implies that $\max_x |f(x)'| \leq \frac{ 1 }{ \epsilon_{mp} }$ and $\max_x |f(x)''| \leq \frac{1}{ \epsilon_{mp}^3 }$. Let $\psi(x) = f( \| X \|_F^2 )$.

{\bf Proof of Part 1,2 and 5.}
These proofs are pretty standard from definition of $\psi$.

{\bf Proof of Part 3.}
This is trivially following from definition of scalar function $f$.

{\bf Proof of Part 4.}
 By chain rule, we have
\begin{align*}
D_x \psi [h] = & ~  2 f'( \| X \|_F^2 ) \cdot \tr[ X H ] \\
D_x^2 \psi [h,h] = & ~ 2 f''( \| X \|_F^2 ) \cdot ( \tr[ X H] )^2 + 2 f'( \| x \|_F^2 ) \cdot \tr[ H^2 ]
\end{align*}
where $x$ is the vectorization of matrix $X$ and $h$ is the vectorization of matrix $H$. We can upper bound
\begin{align*}
| D_x \psi [ h ] | \leq 2 | f'( \| X \|_F^2 ) | \cdot | \tr[ X H ] | 
\leq  2 | f'( \| X \|_F^2 ) | \cdot \| X \|_F \cdot \| H \|_F 
\end{align*}
Then, we have
\begin{align*}
 | f'( \| X \|_F^2 ) | \cdot \| X \|_F
=
\begin{cases}
\| X \|_F^3 / \epsilon_{mp}^3 \leq 1, & \| X \|_F \in [0, \epsilon_{mp}] \\
( 4 \epsilon_{mp}^2 - \| X \|_F^2 ) \| X \|_F / 9 \epsilon_{mp} \leq 2/3, & \| X \|_F \in (\epsilon_{mp}, 2\epsilon_{mp}] \\
0, & \| X \|_F \in (2\epsilon_{mp}, +\infty)
\end{cases}
\end{align*}
It implies that $|D_x \psi[h]| \leq 2 \| H \|_F$, $\forall x$.

By case analysis, we have
\begin{align*}
|f''( \| X \|_F^2 ) | \cdot \| X \|_F^2 \leq
\begin{cases}
\frac{1}{\epsilon_{mp}^3} \| X \|_F^2 \leq 4 / \epsilon_{mp}, & \| X \|_F^2 \in [0, 4 \epsilon_{mp}^2] \\
0, & ~ \| X \|_F^2 \in ( 4 \epsilon_{mp} , + \infty)
\end{cases}
\end{align*}

We can also upper bound
\begin{align*}
| D_x^2 \psi[ h,h ] | \leq & ~ 2 | f''( \| X \|_F^2 ) | \cdot ( \tr[ X H ] )^2 + 2 |f'( \| X \|_F^2 )| \cdot \tr[ H^2 ] \\
\leq & ~ 2 | f''( \| X \|_F^2 ) | \cdot ( \| X \|_F \| H \|_F )^2 + 2 |f'( \| X \|_F^2 )| \cdot \| H \|_F^2 \\
\leq & ~ 2 \cdot \frac{ 4 }{ \epsilon_{mp} } \| H \|_F^2 + 2 \cdot \frac{1}{\epsilon_{mp}} \| H \|_F^2 \\
= & ~ \frac{10}{ \epsilon_{mp} } \| H \|_F^2. 
\end{align*}

\end{proof}

\subsection{$x$ and $\ov{x}$ are close}
\begin{lemma}[$x$ and $\ov{x}$ are close in term of $\wt{V}^{-1}$]\label{lem:accuracy_of_ov_x_respect_to_x}
With probability $1-\delta$ over the randomness of sketching matrix $R \in \R^{b \times n}$, we have 
\begin{align*}
\| \ov{x}_i - x_i \|_{ \wt{V}^{-1}_i } \leq \epsilon_x
\end{align*}
$\epsilon_x = O( \alpha \log^2 ( n / \delta ) \cdot \frac{ n^{1/4} }{ \sqrt{b} } )$, $b$ is the size of sketching matrix.
\end{lemma}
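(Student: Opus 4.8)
The plan is to control how far the explicitly-maintained vector $\ov{x}$ drifts from the exact implicit vector $x$ over the iterations between two consecutive \textsc{Initialize} calls (triggered inside \textsc{Move}), and to show this drift is a zero-mean martingale with small increments; the $n^{1/4}$ in $\epsilon_x$ will come from the variance cancellation of that martingale.

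First I would isolate the per-iteration error. In \textsc{MultiplyMove} the exact point moves by the ideal step $\delta_x=\wt{V}^{1/2}(I-\wt{P})\wt{V}^{1/2}h$ (kept implicitly via $u_1,u_2,F$), while the explicit point moves by the sketched step $\wt{\delta}_x=\wt{V}^{1/2}(I-R_l^\top R_l\wt{P})\wt{V}^{1/2}h$, so in iteration $l$ their difference is $e_l:=\wt{\delta}_x-\delta_x=\wt{V}_l^{1/2}(I-R_l^\top R_l)z_l$ with $z_l:=\wt{P}_l\wt{V}_l^{1/2}h_l$. The \textsc{FullUpdate} and \textsc{PartialUpdate} steps rewrite $F,G,u_1,u_3$ so that $x=u_1+Fu_2$ is unchanged while $\ov{x}$ is re-set to equal $x$ on exactly the blocks where $\wt{V}$ moved, so they only zero accumulated error, never create it. Since \textsc{Move} re-initializes (zeroing $\ov{x}-x$ entirely) once the counter $l$ exceeds $\sqrt{n}$, we get blockwise $\ov{x}_i-x_i=\sum_{l\in S_i}e_{l,i}$ for some set $S_i$ of at most $\sqrt{n}$ iterations, over which $\wt{V}_i$ stays within a constant multiplicative factor of its current value (a block's $\wt{V}_i$ only changes on the same iterations its error is reset, using the slowly-moving-$W$ hypotheses of Theorem~\ref{thm:central_path_maintenance}).

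Next I would set up the martingale. Each sketch $R_l$ is fresh and independent of $R_1,\dots,R_{l-1}$, which are the only randomness influencing $\wt{V}_l,h_l,\wt{P}_l$, hence $z_l$; so conditioning on the past, $\E[e_l]=\wt{V}_l^{1/2}\E[I-R_l^\top R_l]z_l=0$ and $\{e_{l,i}\}_l$ is a martingale difference sequence for each block $i$. I would bound $\|z_l\|_2\le\|\wt{V}_l^{1/2}h_l\|_2=(h_l^\top\wt{V}_l h_l)^{1/2}=O(\alpha)$ using that $\wt{P}_l$ is an orthogonal projection together with the estimate $h^\top\wt{V}h\le 2\alpha^2$ from the proof of Lemma~\ref{lem:alpha_i}. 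Then Lemma~\ref{lem:sketch_vector}, applied to the fixed (given the past) vector $z_l$, gives with probability $1-\delta/\poly(n)$ both $\|(I-R_l^\top R_l)z_l\|_\infty=O(\alpha\log(n/\delta)/\sqrt{b})$ and the conditional second moment $\E\|(I-R_l^\top R_l)z_l\|_\infty^2=O(\alpha^2\log^2(n/\delta)/b)$; since $n_i=O(1)$ and $\wt{V}_{l,i}$ is within a constant factor of $\wt{V}_i$, the same bounds (up to constants) transfer to $\|e_{l,i}\|_{\wt{V}_i^{-1}}$. Applying a vector Freedman/Azuma inequality to $\sum_{l\in S_i}e_{l,i}$ — at most $\sqrt{n}$ increments of size $O(\alpha\log(n/\delta)/\sqrt{b})$, total conditional variance $O(\alpha^2\sqrt{n}\log^2(n/\delta)/b)$ — yields $\|\ov{x}_i-x_i\|_{\wt{V}_i^{-1}}=O(\alpha\,n^{1/4}\log^2(n/\delta)/\sqrt{b})$ with probability $1-\delta/\poly(n)$, and a union bound over the $m\le n$ blocks finishes with the stated $\epsilon_x$.

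The hard part will be the martingale concentration with the adaptively-chosen direction $z_l$ while keeping the logarithmic overhead down to $\log^2(n/\delta)$: the bounded-increment bound and the conditional-variance bound fed into Freedman's inequality have to be simultaneously sharp enough to produce exactly the $n^{1/4}/\sqrt{b}$ rate rather than, say, a larger power of $\log$. A secondary nuisance is checking carefully that $\wt{V}_i$ indeed changes by at most a constant factor along a phase except on the iterations where the corresponding error is reset, which reduces to the slowly-moving-$W$ assumptions of Theorem~\ref{thm:central_path_maintenance}.
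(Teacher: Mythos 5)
Your proposal matches the paper's proof in all essentials: the same per-iteration error decomposition $\wt{\delta}_{x,i}-\delta_{x,i}=\wt{V}_i^{1/2}((I-R^\top R)\wt{P}\wt{V}^{1/2}h)_i$, the same telescoping from the last reset of block $i$ (so at most $\sqrt{n}$ terms, giving the $n^{1/4}$), the same use of Lemma~\ref{lem:sketch_vector} for the mean, variance, and worst-case bounds together with $\|(\wt{P}\wt{V}^{1/2}h)_i\|_2=O(\alpha)$, and a Bernstein-type concentration plus union bound. The only cosmetic difference is that you invoke Freedman/Azuma for the adaptively chosen directions where the paper applies Bernstein directly (and the paper simply notes $\wt{V}_i$ is unchanged between resets, so your constant-factor worry is moot).
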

\begin{proof}

Recall the definition of $\wt{\delta}_x$ and $\delta_x$, we have
\begin{align*}
\wt{\delta}_{x,i} - \delta_{x,i} = \wt{V}_i^{1/2} ( I - R^\top R \wt{P} ) \wt{V}^{1/2} h - \wt{V}_i^{1/2} ( I - \wt{P} ) \wt{V}^{1/2} h = \wt{V}_i^{1/2} ( \wt{P} - R^\top R \wt{P} ) \wt{V}^{1/2} h
\end{align*}
For iteration $t$, the definition should be
\begin{align*}
\wt{\delta}_{x,i}^{(t)} - \delta_{x,i}^{(t)} = ( \wt{V}_i^{(t)} )^{1/2} ( \wt{P}^{(t)} - ( R^{(t)} )^\top R^{(t)} \wt{P}^{(t)} ) ( \wt{V}^{(t)} )^{1/2} h .
\end{align*}

For any $i$, let $k$ be the current iteration, $k_i$ be the last when we changed the $\wt{V}_i$. Then, we have that
\begin{align*}
x_i^{(k)} - \ov{x}_i^{(k)} = \sum_{t = k_i}^k \wt{\delta}_{x,i}^{(t)} - \delta_{x,i}^{(t)}
\end{align*}
because we have $x_i^{(k_i)} = \ov{x}_i^{(k_i)}$ (guaranteed by our algorithm). Since $\wt{V}_i^{(t)}$ did not change during iteration $k_i$ to $k$ for the block $i$. (However, the whole other parts of matrix $\wt{V}$ could change). We consider
\begin{align*}
( x_i^{(k)} - \ov{x}_i^{(k)} )^\top \cdot ( \wt{V}_i^{(k)} )^{-1} \cdot ( x_i^{(k)} - \ov{x}_i^{(k)} )
= & ~ \left( \sum_{t = k_i}^k \wt{\delta}_{x,i}^{(t)} - \delta_{x,i}^{(t)} \right)^\top \cdot ( \wt{V}_i^{(k)} )^{-1} \cdot \left( \sum_{t = k_i}^k \wt{\delta}_{x,i}^{(t)} - \delta_{x,i}^{(t)} \right) \\
= & ~ \left\| \sum_{t = k_i}^k \left( ( I - ( R^{(t)})^\top R^{(t)} ) \wt{P}^{(t)} ( \wt{V}^{(t)} )^{1/2} h^{(t)} \right)_i \right\|_2^2 .
\end{align*}

We consider block $i$ and a coordinate $j \in $ block $i$. We define random vector $X_t \in \R^{n_i}$ as follows:
\begin{align*}
X_t = \left( ( I - R^{(t) \top} R^{(t)} ) \wt{P}^{(t)} ( \wt{V}^{(t)} )^{1/2} h^{(t)} \right)_i.
\end{align*}  
Let $(X_t)_j$ denote the $j$-th coordinate of $X_t$, for each $j \in [n_i]$.

By Lemma~\ref{lem:sketch_vector} in Section~\ref{sec:fastjl}, we have for each $t$,
\begin{align*}
\E[ X_t ] = 0, \text{~~~and~~~} \E[ ( X_t )_j^2 ] = \frac{1}{b} \| ( \wt{P}^{(t)} ( \wt{V}^{(t)} )^{1/2} h^{(t)} )_i \|_2^2
\end{align*}
and with probability $1-\delta$,
\begin{align*}
| (X_t)_j | \leq \| ( \wt{P}^{(t)} ( \wt{V}^{(t)} )^{1/2} h^{(t)} )_i \|_2 \frac{ \log ( n / \delta ) }{ \sqrt{b} } := M.
\end{align*}

Now, we apply Bernstein inequality (Lemma~\ref{lem:bernstein_inequality}),
\begin{align*}
\Pr \left[ \sum_t (X_t)_j > \tau \right] \leq \exp \left( - \frac{ \tau^2 / 2 }{ \sum_{t} \E[ (X_t)_j^2 ] + M \tau / 3} \right)
\end{align*}

Choosing $\tau = 10^3 \frac{ \sqrt{T} }{ \sqrt{b} } \log^2 (n/\delta) \cdot \| ( \wt{P}^{(t)} ( \wt{V}^{(t)} )^{1/2} h^{(t)} )_i \|_2$
\begin{align*}
 & ~ \Pr \left[ \sum_t (X_t)_j > 10^3 \frac{ \sqrt{T} }{ \sqrt{b} } \log^2(n/\delta) \cdot \| ( \wt{P}^{(t)} ( \wt{V}^{(t)} )^{1/2} h^{(t)} )_i \|_2 \right] \\
\leq & ~ \exp \left( - \frac{ 10^6 \frac{T}{b} \log^4 (n/\delta) \cdot \| ( \wt{P}^{(t)} ( \wt{V}^{(t)} )^{1/2} h^{(t)} )_i \|_2^2 / 2 }{ \frac{T}{b} \| ( \wt{P}^{(t)} ( \wt{V}^{(t)} )^{1/2} h^{(t)} )_i \|_2^2 + 10^3 \frac{ \sqrt{T} }{ b } \log^3 (n/\delta) \| ( \wt{P}^{(t)} ( \wt{V}^{(t)} )^{1/2} h^{(t)} )_i \|_2^2 / 3 } \right) \\
\leq & ~ \exp( - 100 \log(n/\delta) )
\end{align*}

Now, taking a union, we have
\begin{align*}
\left\| \sum_{t = k_i}^k \left( ( I - ( R^{(t)})^\top R^{(t)} ) \wt{P}^{(t)} ( \wt{V}^{(t)} )^{1/2} h^{(t)} \right)_i \right\|_2 
= & ~ O \left( \frac{\sqrt{T}}{ \sqrt{b} } \log^2 (n/\delta) \left\| ( \wt{P}^{(t)} ( \wt{V}^{(t)} )^{1/2} h^{(t)} )_i \right\|_2 \right) \\
\leq & ~ O \left( \frac{ \sqrt{T} }{ \sqrt{b} } \log^2 ( n / \delta ) \alpha \right)
\end{align*}
where we use that $\| (\wt{P}^{(t)} ( \wt{V}^{(t)} )^{1/2} h^{(t)} )_i \|_2 \leq \| ( ( \wt{V}^{(t)} )^{1/2} h^{(t)} )_i \|_2 =O(\alpha )$, $n_i=O(1)$.

Finally, we use the fact that the algorithm reset $\ov{x} = x$, $\ov{s} = s$ in less than $\sqrt{n}$ iterations. 

\end{proof}

\subsection{$s$ and $\ov{s}$ are close}

\begin{lemma}[$s$ and $\ov{s}$ are close]\label{lem:accuracy_of_ov_s_respect_to_s}
With probability $1 - \delta $ over the randomness of sketching matrix $R \in \R^{b \times n}$, we have
\begin{align*}
t^{-1} \| \ov{s}_i - s_i \|_{ \wt{V}_i } \leq \epsilon_s,
\end{align*}
$\epsilon_s = O( \alpha \log^2 (n / \delta) \cdot  \frac{ n^{1/4} }{ \sqrt{b} } $, and $b$ is the size of sketching matrix.
\end{lemma}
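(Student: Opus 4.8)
The plan is to mirror the proof of Lemma~\ref{lem:accuracy_of_ov_x_respect_to_x} almost verbatim, replacing the $x$-update formula by the $s$-update formula and tracking the extra factor of $t$. First I would recall from the data structure that in iteration $t$ (reusing the letter for the path parameter but really indexing iterations) the algorithm computes
\[
\wt{\delta}_{s,i} = t \cdot \wt V_i^{-1/2} R^\top R \wt P \, \wt V^{1/2} h, \qquad \delta_{s,i} = t\cdot \wt V_i^{-1/2} \wt P\, \wt V^{1/2} h,
\]
so that $\wt{\delta}_{s,i} - \delta_{s,i} = t\cdot \wt V_i^{-1/2}(R^\top R \wt P - \wt P)\wt V^{1/2}h$. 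As before, since the algorithm guarantees $s_i = \ov s_i$ at the last iteration $k_i$ when block $i$ of $\wt V$ was changed (and $\wt V_i$ is frozen between $k_i$ and the current iteration $k$), we have $s_i^{(k)} - \ov s_i^{(k)} = \sum_{t=k_i}^{k} (\wt{\delta}_{s,i}^{(t)} - \delta_{s,i}^{(t)})$. Hence
\[
t^{-2}\,(s_i^{(k)} - \ov s_i^{(k)})^\top \wt V_i^{(k)} (s_i^{(k)} - \ov s_i^{(k)}) = \left\| \sum_{t=k_i}^{k} \big( (I - (R^{(t)})^\top R^{(t)})\wt P^{(t)} (\wt V^{(t)})^{1/2} h^{(t)}\big)_i \right\|_2^2,
\]
where the two factors of $\wt V_i^{-1/2}$ in $\wt{\delta}_{s,i}$ cancel the $\wt V_i$ weight in the norm, leaving exactly the same quantity that appeared in the $x$-analysis (up to the $t^{-1}$ normalization on the left, which is why the lemma statement has $t^{-1}\|\ov s_i - s_i\|_{\wt V_i}$).

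Next I would define $X_t = \big((I - R^{(t)\top}R^{(t)})\wt P^{(t)}(\wt V^{(t)})^{1/2}h^{(t)}\big)_i \in \R^{n_i}$, which is literally the same random vector as in Lemma~\ref{lem:accuracy_of_ov_x_respect_to_x}. Invoking Lemma~\ref{lem:sketch_vector} from Section~\ref{sec:fastjl} gives $\E[X_t]=0$, $\E[(X_t)_j^2] = \frac1b\|(\wt P^{(t)}(\wt V^{(t)})^{1/2}h^{(t)})_i\|_2^2$, and the subgaussian tail bound $|(X_t)_j| \le \|(\wt P^{(t)}(\wt V^{(t)})^{1/2}h^{(t)})_i\|_2 \cdot \frac{\log(n/\delta)}{\sqrt b}$ with probability $1-\delta$. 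Then Bernstein's inequality (Lemma~\ref{lem:bernstein_inequality}) with the same threshold $\tau = 10^3 \frac{\sqrt T}{\sqrt b}\log^2(n/\delta)\cdot\|(\wt P^{(t)}(\wt V^{(t)})^{1/2}h^{(t)})_i\|_2$ yields, after a union bound over the $O(n)$ coordinates,
\[
\left\| \sum_{t=k_i}^{k} X_t \right\|_2 = O\!\left(\frac{\sqrt T}{\sqrt b}\log^2(n/\delta)\,\big\|(\wt P^{(t)}(\wt V^{(t)})^{1/2}h^{(t)})_i\big\|_2\right) \le O\!\left(\frac{\sqrt T}{\sqrt b}\log^2(n/\delta)\,\alpha\right),
\]
using that $\wt P$ is an orthogonal projection so $\|(\wt P^{(t)}(\wt V^{(t)})^{1/2}h^{(t)})_i\|_2 \le \|((\wt V^{(t)})^{1/2}h^{(t)})_i\|_2 = O(\alpha)$ and $n_i = O(1)$. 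Finally, using that the algorithm resets $\ov s = s$ every fewer than $\sqrt n$ iterations (so $T = O(\sqrt n)$ between resets), we get $t^{-1}\|\ov s_i - s_i\|_{\wt V_i} = O(\alpha \log^2(n/\delta)\cdot n^{1/4}/\sqrt b) = \epsilon_s$, as claimed.

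I do not expect any genuine obstacle here: the content is entirely parallel to the $x$-case, and the only thing to be careful about is bookkeeping the factor $t$ — specifically that the weight in the norm on the $s$-side is $\wt V_i$ (not $\wt V_i^{-1}$) precisely so that it cancels against the $\wt V_i^{-1/2}$ factors sitting in $\wt{\delta}_{s,i}$, and that pulling the scalar $t$ out front is what produces the $t^{-1}$ normalization in the statement. One should also double-check that $t$ is treated as (essentially) fixed over the window $[k_i,k]$ or, more carefully, that the $\wt{\delta}_{s,i}^{(t)}$ already carry their own $t$ so the telescoping identity is exact regardless; this is immediate from the \textsc{Multiply} pseudocode. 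Everything else — the concentration step, the projection bound $\|\wt P z\|\le\|z\|$, the $O(1)$ block size, and the $\sqrt n$-iteration reset — is identical to the preceding lemma.
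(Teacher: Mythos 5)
Your proposal matches the paper's proof, which is itself just the observation that $\wt{\delta}_{s,i}-\delta_{s,i}=t\,\wt V_i^{-1/2}(R^\top R-I)\wt P\,\wt V^{1/2}h$ followed by "the rest is identical to the $x$-lemma," plus the same reliance on the reset of $\ov s=s$. One small imprecision: your displayed identity pulls a single $t^{-1}$ out of the telescoped sum, but each summand carries its own path parameter $t^{(t')}$, so the identity only holds up to the ratios $t^{(t')}/t^{(k)}$; these are bounded by $2$ precisely because the data structure reinitializes whenever $t$ changes by a constant factor (the fact the paper explicitly invokes), so this affects only constants and your argument goes through.
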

\begin{proof}

Recall the definition of $\wt{\delta}_s$, $\delta_s$, we have
\begin{align*}
\wt{\delta}_{s,i} - \delta_{s,i} = t \wt{V}_i^{-1/2} ( R^\top R - I ) \wt{P} \wt{V}^{1/2} h
\end{align*}
The rest of the proof is identical to Lemma~\ref{lem:accuracy_of_ov_x_respect_to_x} except we use also the fact we make $\ov{s} = s$ whenever our $t$ changed by a constant factor. We omitted the details here.
\end{proof}

\subsection{Data structure is maintaining $(x,s)$ implicitly over all the iterations}

\begin{lemma}
Over all the iterations, $u_1 + F u_2$ is always maintaining $x$ implicitly, $u_3 + G u_4$ is always maintaining $s$ implicitly.
\end{lemma}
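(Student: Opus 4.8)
The plan is to prove the statement as a loop invariant maintained by every operation of the data structure. Write $x^{(k)},s^{(k)}$ for the exact iterates produced by accumulating the updates that \textsc{MultiplyMove} is meant to realize, i.e.\ $x\leftarrow x+\wt V^{1/2}(I-\wt P)\wt V^{1/2}h$ and $s\leftarrow s+t\wt V^{-1/2}\wt P\wt V^{1/2}h$ in each call, using the $\wt V$ and $\wt P=\wt V^{1/2}A^\top(A\wt V A^\top)^{-1}A\wt V^{1/2}$ of that iteration. I would carry two invariants in parallel: the target identities
\[
x = u_1 + F u_2, \qquad s = u_3 + G u_4,
\]
and a \emph{shape} invariant pinning down the stored matrices, namely $M = A^\top(AVA^\top)^{-1}A$ and $F = \sqrt{\wt V}\,M$, $G = \wt V^{-1/2}M$ for the current block-diagonal $V$ and $\wt V$. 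The proof then proceeds by induction on the sequence of operations \textsc{Initialize}, \textsc{PartialUpdate}, \textsc{FullUpdate}, \textsc{Multiply}, \textsc{Move}.

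The base case is \textsc{Initialize}, where $u_2=u_4=0$, $u_1=x$, $u_3=s$, and $M,F,G$ are assigned exactly in the claimed shape, so both invariants hold. For \textsc{FullUpdate} and \textsc{PartialUpdate} the iterates $x,s$ do not change, so there are two things to check: (i) that replacing $(\wt V,V,M,F,G)$ by its new value re-establishes the shape invariant --- for \textsc{FullUpdate} this is the Sherman--Morrison--Woodbury identity applied on line~\ref{lin:offlinewoodburry} to the rank-$|S|$ perturbation $\Delta = V^{\new}-V$, which is block-supported on $S$, using $A_{*,S}^\top(AVA^\top)^{-1}A_{*,S} = M_{S,S}$ and $A_{*,S}^\top(AVA^\top)^{-1}A = M_{S,*}$; for \textsc{PartialUpdate} $M$ is untouched and $F^{\new} = F + (\sqrt{\wt V^{\new}}-\sqrt{\wt V})M$ is immediate --- and (ii) that $u_1+Fu_2$ and $u_3+Gu_4$ stay put, which is exactly the purpose of the lines $u_1\leftarrow u_1+(F-F^{\new})u_2$ and $u_3\leftarrow u_3+(G-G^{\new})u_4$ executed \emph{before} $F,G$ are overwritten: $u_1^{\new}+F^{\new}u_2 = u_1+(F-F^{\new})u_2+F^{\new}u_2 = u_1+Fu_2 = x$, and likewise for $s$. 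The partial refreshes of $\ov x,\ov s$ on the block set $\wh S$ touch only the explicit approximations and never enter the identities. For \textsc{Move} there are two branches: if $l>\sqrt n$ or $t\ge t^{\pre}/2$, the procedure sets $x\leftarrow u_1+Fu_2$ and $s\leftarrow u_3+Gu_4$ and then calls \textsc{Initialize}, which resets $u_2=u_4=0$ and re-establishes both invariants; otherwise only $\ov x,\ov s$ move while $u_1,u_2,F$ and $u_3,u_4,G$ are untouched, so the invariants persist trivially.

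The only substantive case is \textsc{Multiply}$(h,t)$, where $u_1\leftarrow u_1+\wt V h$ and $u_2\leftarrow u_2-(\sqrt{\wt V}h-\mathbf{1}_{\wt S}\delta_m)$, with $u_3\leftarrow u_3$ and $u_4\leftarrow u_4-t(\sqrt{\wt V}h-\mathbf{1}_{\wt S}\delta_m)$. By the inductive hypothesis the claim reduces to showing that the increment of $u_1+Fu_2$, namely $\wt V h - F(\sqrt{\wt V}h-\mathbf{1}_{\wt S}\delta_m)$, equals $\wt V^{1/2}(I-\wt P)\wt V^{1/2}h$, and analogously for $s$ using $G$. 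The key fact is again Woodbury: since $M = A^\top(AVA^\top)^{-1}A$ is built from the \emph{lazily} updated $V$ while the target projection uses $\wt V$, and $\wt\Delta = \wt V - V$ is block-supported on $\wt S$, one has $A^\top(A\wt V A^\top)^{-1}A = M - M_{*,\wt S}(\wt\Delta_{\wt S,\wt S}^{-1}+M_{\wt S,\wt S})^{-1}M_{\wt S,*}$, and $\delta_m = (\wt\Delta_{\wt S,\wt S}^{-1}+M_{\wt S,\wt S})^{-1}M_{\wt S,*}\sqrt{\wt V}h$ is exactly the vector supported on $\wt S$ for which $M\sqrt{\wt V}h - M\mathbf{1}_{\wt S}\delta_m = A^\top(A\wt V A^\top)^{-1}A\sqrt{\wt V}h$. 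Propagating this through $F = \sqrt{\wt V}M$ and $G = \wt V^{-1}F$, and matching the $\sqrt{\wt V}$ and $t$ factors against the definitions of $\wt P$, $\delta_x$ and $\delta_s$, closes the induction.

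I expect this \textsc{Multiply} step to be the main obstacle, for two reasons: one must keep very careful track of which of $V$ or $\wt V$ each stored object depends on, so that the rank-$|\wt S|$ correction $\delta_m$ cancels precisely the gap between $(AVA^\top)^{-1}$ and $(A\wt V A^\top)^{-1}$; and one must handle the block-index restrictions $M_{*,\wt S},M_{\wt S,*},M_{\wt S,\wt S}$ together with the interleaving $\sqrt{\wt V}$ (and the extra $t$ in the $s$-branch) without dropping a factor. The remaining operations only require the bookkeeping identity $u^{\new}_1 + F^{\new}u_2 = u_1 + Fu_2$ noted above, so the whole proof is organized around this one Woodbury computation, carried out once for $x$ through $F$ and once for $s$ through $G$.
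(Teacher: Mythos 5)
Your proposal is correct and follows essentially the same route as the paper: an induction over the data-structure operations in which the update/partial-update lines $u_1\leftarrow u_1+(F-F^{\new})u_2$ keep $u_1+Fu_2$ fixed, and the \textsc{Multiply} step reduces to checking that $\wt V h - F\sqrt{\wt V}h + F\mathbf{1}_{\wt S}\delta_m = \wt V^{1/2}(I-\wt P)\wt V^{1/2}h$ via $F=\sqrt{\wt V}M$ and the Woodbury correction. If anything you are more explicit than the paper, which declares \textsc{FullUpdate} trivial and folds the Woodbury identity into "by definition of $\delta_x$," whereas you state the shape invariant $M=A^\top(AVA^\top)^{-1}A$, $F=\sqrt{\wt V}M$, $G=\wt V^{-1/2}M$ and verify it case by case.
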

\begin{proof}

We only focus on the \textsc{PartialUpdate}. The \textsc{FullUpdate} is trivial, we ignore the proof. 

{\bf For $x$.}

Note that $M$ is not changing. Let's assume that $u_1 + F u_2 = x$, we want to show that
\begin{align*}
u_1^{\new} + F^{\new} u_2^{\new} = x^{\new}.
\end{align*}
which is equivalent to prove
\begin{align*}
u_1^{\new} + F^{\new} u_2^{\new} - (u_1 + F u_2) = \delta_x
\end{align*}

Let $\Delta u_1 = u_1^{\new} - u_1$ be the change of $u_1$ over iteration $t$, then
\begin{align*}
\Delta u_1 
= & ~ \wt{V}^{\new} h + ( F - F^{\new} ) u_2
\end{align*}
Let $\Delta u_2 = u_2^{\new} - u_2$ be the change of $u_2$ over iteration $t$, then
\begin{align*}
\Delta u_2 =  - ( \wt{V}^{\new} )^{1/2} h + {\bf 1}_{ \wt{S} } ( \wt{\Delta}_{ \wt{S} , \wt{S} }^{-1} + M_{ \wt{S} , \wt{S} } )^{-1} M_{\wt{S}}^\top ( \wt{V}^{\new} )^{1/2} h .
\end{align*} 

By definition of $\delta_x$ at iteration $t$, we have
\begin{align*}
\delta_x = \wt{V}^{\new} h - \left( \sqrt{ \wt{V}^{\new} } M \sqrt{ \wt{V}^{\new} } h - \sqrt{\wt{V}^{\new} } M_{\wt{S}} ( \wt{\Delta}_{\wt{S},\wt{S}}^{-1} + M_{\wt{S},\wt{S}} )^{-1} ( M_{\wt{S}} )^\top \sqrt{ \wt{V}^{\new} } h \right).
\end{align*} 

We can compute 
\begin{align*}
  & ~ u_1^{\new} + F^{\new} u_2^{\new} - (u_1 + F u_2) \\
 = & ~ \Delta u_1 + (F^{\new} u_2^{\new} - F u_2  )\\
 = & ~ \wt{V}^{\new} h + ( F - F^{\new} ) u_2 + (F^{\new} u_2^{\new} - F u_2 ) \\
 = & ~ \wt{V}^{\new} h + F^{\new} (u_2^{\new} - u_2) \\
 = & ~ \wt{V}^{\new} h + F^{\new} \Delta u_2 \\
 = & ~ \wt{V}^{\new} h - F^{\new} \sqrt{ \wt{V}^{\new} } h + F^{\new} {\bf 1}_{\wt{S}} ( \wt{\Delta}_{\wt{S},\wt{S}}^{-1} + M_{\wt{S},\wt{S}} )^{-1} ( M_{\wt{S}} )^\top \sqrt{ \wt{V}^{\new} } h \\
 = & ~ \delta_x
\end{align*}
where we used $F^{\new} = \sqrt{ \wt{V}^{\new} } M$ in the last step.

{\bf For $s$.}

We have
\begin{align*}
G^{\new} = \frac{1}{ \sqrt{ \wt{V}^{\new} } } M, G = \frac{1}{\sqrt{ \wt{V} }} M
\end{align*}

Let $\Delta u_3 = u_3^{\new} - u_3$ be the change of $u_3$ over iteration $t$, then
\begin{align*}
\Delta u_3 = (G - G^{\new}) u_4
\end{align*}
Let $\Delta u_4 = u_4^{\new} - u_4$ be the change of $u_4$ over iteration $t$, then
\begin{align*}
\Delta u_4 = t \cdot \Delta u_2
\end{align*}

By definition of $\delta_s$ in iteration $t$,

\begin{align*}
\delta_s =  \left( \frac{1}{ \sqrt{ \wt{V}^{\new} } } M \sqrt{ \wt{V}^{\new} }  (t h) - \frac{1}{ \sqrt{\wt{V}^{\new} } } M_{\wt{S}} ( \wt{\Delta}_{\wt{S},\wt{S}}^{-1} + M_{\wt{S},\wt{S}} )^{-1} ( M_{\wt{S}} )^\top \sqrt{ \wt{V}^{\new} } (t h) \right)
\end{align*}

We can compute
\begin{align*}
 ( u_3^{\new} + G^{\new} u_4^{\new} ) - ( u_3 + G u_4 ) 
= & ~ \Delta u_3  + ( G^{\new} u_4^{\new} - G u_4 ) \\
= & ~ (G - G^{\new}) u_4 + ( G^{\new} u_4^{\new} - G u_4 ) \\
= & ~ G^{\new} ( u_4^{\new} - u_4 ) \\
= & ~ G^{\new} t \Delta u_2 \\
= & ~ \delta_s
\end{align*}
where the last step follows by definition of $\Delta u_2$.

\end{proof}

\newpage
\section{Combining Robust Central Path with Data Structure}\label{sec:combine}
The goal of this section is to combine Section~\ref{sec:robust_central_path} and Section~\ref{sec:central_path_maintenance}.

\begin{table}[!h]
\begin{center}
    \begin{tabular}{ | l | l | l | l | }
    \hline
    Notation & Choice of Parameter & Statement & Comment \\ \hline
    $C_1$ & $\Theta(1/\log^2 n)$ & Lem.~\ref{lem:guarantee_of_a_sequence_of_W}, Thm.~\ref{thm:central_path_maintenance} & $\ell_2$ accuracy of $W$ sequence \\ \hline
    $C_2$ & $\Theta(1/\log^4 n)$ & Lem.~\ref{lem:guarantee_of_a_sequence_of_W}, Thm.~\ref{thm:central_path_maintenance} & $\ell_4$ accuracy of $W$ sequence \\ \hline
    $\epsilon_{mp}$ & $\Theta(1/\log^2 n)$ & \textsc{RobustIPM} Alg in Sec.~\ref{sec:robust_central_path} & accuracy for data structure \\ \hline
    $T$ & $\Theta(\sqrt{n} \log^2 n \log (n/\delta) )$ & Thm.~\ref{thm:roubst_IPM} & \#iterations \\ \hline
    $\alpha$ & $\Theta(1/\log^2 n)$ & \textsc{RobustIPM} Alg in Sec.~\ref{sec:robust_central_path} & step size in Hessian norm \\ \hline
    $b$ & $\Theta(\sqrt{n} \log^6 (nT)$ & Lem.~\ref{lem:accuracy_of_ov_x_respect_to_x}, Lem.~\ref{lem:accuracy_of_ov_s_respect_to_s}, Lem.~\ref{lem:accuracy_of_ov_W_respect_to_W} & sketch size \\ \hline
    $\epsilon_x$ & $\Theta( 1/\log^3 n )$ & Lem.~\ref{lem:accuracy_of_ov_x_respect_to_x} & accuracy of $\ov{x}$ (respect to $x$)  \\ \hline
    $\epsilon_s$ & $\Theta( 1/\log^3 n )$ & Lem.~\ref{lem:accuracy_of_ov_s_respect_to_s} & accuracy of $\ov{s}$ (respect to $s$) \\ \hline
    $\epsilon_w$ & $\Theta(1/\log^3 n)$ & Lem.~\ref{lem:accuracy_of_ov_W_respect_to_W} & accuracy of $\ov{W}$ (respect to $W$) \\ \hline
    $a$ & $\min (2/3, \alpha_m)$ & $\alpha_m$ is the dual exponent of MM & batch size \\ \hline
    \end{tabular}
\end{center}\caption{Summary of parameters}\label{tab:summary_of_parameters_in_combine}
\end{table}

\subsection{Guarantee for $W$ matrices}

\begin{lemma}[Guarantee of a sequence of $W$]\label{lem:guarantee_of_a_sequence_of_W}
Let $x^{\new} = x + \delta_x$. Let $W^{\new} = ( \nabla^2 \phi ( x^{\new} ) )^{-1}$ and $W = ( \nabla^2 \phi ( x ) )^{-1} $. Then we have
\begin{align*}
\sum_{i=1}^m \left\| w_i^{-1/2} ( w_i^{\new} - w_i ) w_i^{-1/2}  \right\|_F^2 \leq & ~ C_1^2 , \\
\sum_{i=1}^m \left\| w_i^{-1/2} ( w_i^{\new} - w_i ) w_i^{-1/2}  \right\|_F^4 \leq & ~ C_2^2 , \\
\left\| w_i^{-1/2} ( w_i^{\new} - w_i ) w_i^{-1/2} \right\|_F \leq & ~ \frac{1}{4} .
\end{align*}
where $C_2 = \Theta(\alpha^2)$ and $C_1 = \Theta( \alpha )$.
\end{lemma}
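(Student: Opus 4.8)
The plan is to reduce everything to the single quantity $\|\delta_{x,i}\|_{\ov x_i}=\alpha_i$, for which we already have the crucial bound $\sum_i\alpha_i^2\le 4\alpha^2$ from Lemma~\ref{lem:alpha_i}, together with the individual bound $\alpha_i\le 2\alpha$. First I would relate the Frobenius quantity $\|w_i^{-1/2}(w_i^{\new}-w_i)w_i^{-1/2}\|_F$ to $\alpha_i$. Since $w_i=(\nabla^2\phi_i(x_i))^{-1}$ and $w_i^{\new}=(\nabla^2\phi_i(x_i^{\new}))^{-1}$, and since $x_i$ and $\ov x_i$ are $\alpha$-close while $\|x_i^{\new}-x_i\|_{\ov x_i}=\alpha_i\le 2\alpha$, we have $\|x_i^{\new}-x_i\|_{x_i}\lesssim \alpha_i$ (up to the constant from self-concordance, Theorem~\ref{thm:hessiansc}). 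Then self-concordance gives the two-sided operator bound $(1-O(\alpha_i))\nabla^2\phi_i(x_i)\preceq \nabla^2\phi_i(x_i^{\new})\preceq (1+O(\alpha_i))\nabla^2\phi_i(x_i)$, equivalently $(1-O(\alpha_i))(w_i^{\new})^{-1}\preceq w_i^{-1}\preceq (1+O(\alpha_i))(w_i^{\new})^{-1}$. Conjugating, $I-O(\alpha_i)I\preceq w_i^{-1/2}w_i^{\new}w_i^{-1/2}\preceq I+O(\alpha_i)I$, so all eigenvalues of $w_i^{-1/2}(w_i^{\new}-w_i)w_i^{-1/2}$ are $O(\alpha_i)$ in absolute value; since $n_i=O(1)$, the Frobenius norm is $O(\alpha_i)$ as well.

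Given $\|w_i^{-1/2}(w_i^{\new}-w_i)w_i^{-1/2}\|_F = O(\alpha_i)$, the three claims follow immediately: for the first, $\sum_i \|w_i^{-1/2}(w_i^{\new}-w_i)w_i^{-1/2}\|_F^2 = O(\sum_i\alpha_i^2) = O(\alpha^2)$, which is $C_1^2$ with $C_1=\Theta(\alpha)$; for the second, $\sum_i \|w_i^{-1/2}(w_i^{\new}-w_i)w_i^{-1/2}\|_F^4 = O(\sum_i\alpha_i^4) \le O(\max_i\alpha_i^2)\cdot O(\sum_i\alpha_i^2) = O(\alpha^2\cdot\alpha^2)=O(\alpha^4)$, which is $C_2^2$ with $C_2=\Theta(\alpha^2)$; and for the third, since $\alpha_i\le 2\alpha$ and $\alpha\le 2^{-20}\lambda^{-2}$ is tiny, the bound $O(\alpha_i)\le \tfrac14$ holds with room to spare. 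Throughout one should track the absolute constants carefully enough to confirm the hidden constant in $O(\alpha_i)$ times $2$ stays below $\tfrac14$, which is routine given how small $\alpha$ is.

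The one technical point that needs care — the only place the argument is not entirely mechanical — is making sure we may invoke self-concordance at all, i.e. that $\|x_i^{\new}-x_i\|_{x_i}<1$ (in fact $\ll 1$). This is where I would be careful: $\|x_i^{\new}-x_i\|_{\ov x_i}=\alpha_i\le 2\alpha$ by Lemma~\ref{lem:alpha_i}, and $\|x_i-\ov x_i\|_{\ov x_i}<\alpha$, so by the triangle inequality and one application of Theorem~\ref{thm:hessiansc} to pass from the $\ov x_i$-norm to the $x_i$-norm we get $\|x_i^{\new}-x_i\|_{x_i}\le (1+O(\alpha))\cdot 2\alpha \le 3\alpha$, which is well below $1$; then Theorem~\ref{thm:hessiansc} applies along the segment from $x_i$ to $x_i^{\new}$ and yields the operator-norm sandwich with deviation $O(\alpha_i)$. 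Everything else is bookkeeping with constants.
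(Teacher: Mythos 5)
Your proposal is correct and follows essentially the same route as the paper's proof: use self-concordance (Theorem~\ref{thm:hessiansc}) to bound $\|w_i^{-1/2}(w_i^{\new}-w_i)w_i^{-1/2}\|_F$ by $O(\alpha_i)$ (using $n_i=O(1)$ to pass between operator and Frobenius norms), then invoke $\sum_i\alpha_i^2\le 4\alpha^2$ and $\alpha_i\le 2\alpha$ from Lemma~\ref{lem:alpha_i}. If anything, you are slightly more complete than the paper's written proof, which only spells out the $\ell_2$ claim and leaves the $\ell_4$ and per-block bounds implicit, and you correctly flag the one point needing care, namely verifying $\|x_i^{\new}-x_i\|_{x_i}\ll 1$ before applying self-concordance.
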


\begin{proof}

For each $i\in [m]$, we have
\begin{align*}
& ~ \left\| W_i^{-1/2} ( W_i^{\new} - W_i ) W_i^{-1/2}  \right\|_F^2 \\
= & ~ n_i \left\| W_i^{-1/2} ( W_i^{\new} - W_i ) W_i^{-1/2}  \right\|^2 \\
= & ~ n_i \left\| (\nabla^2 \phi (x_i) )^{1/2} ( \nabla^2 \phi (x_i^{\new})^{-1} - \nabla^2 \phi ( x_i )^{-1}  ) (\nabla^2 \phi (x_i) )^{1/2} \right\|^2 \\
\leq & ~ \left( \frac{1}{ ( 1 -  \| x_i^{\new} - x_i \|_{\nabla^2 \phi(x_i)} )^2 } - 1 \right)^2 \cdot \left\| (\nabla^2 \phi ( x_i ) )^{1/2}  \nabla^2 \phi ( x_i )^{-1} (\nabla^2 \phi (x_i) )^{1/2} \right\|^2 \\
= & ~ n_i \left( \frac{1}{ ( 1 -  \| x_i^{\new} - x_i \|_{\nabla^2 \phi(x_i)} )^2 } - 1 \right)^2 \\
\leq & ~ 100 n_i \| x_i^{\new} - x_i \|_{ \nabla^2 \phi( x_i ) }^2,
\end{align*}
where the second step follows by Theorem~\ref{thm:hessiansc}.

In our problem, we assume that $n_i = O(1)$. It remains to bound
\begin{align*}
\| x_i^{\new} - x_i \|_{ \nabla^2 \phi (x_i) }^2 
=  \| \delta_{x,i} \|_{ \nabla^2 \phi (x_i) }^2
\lesssim \| \delta_{x,i} \|_{\ov{x}_i}^2 
= \alpha_i^2
\end{align*}
where the last step follows from definition $\alpha_i = \| \delta_{x,i} \|_{\ov{x}_i}$.

Then, we have
\begin{align*}
\sum_{i=1}^m \| x_i^{\new} - x_i \|_{ \nabla^2 \phi (x_i) }^2 \leq \sum_{i=1}^m O( \alpha_i^2 ) \leq O( \alpha^2 ).
\end{align*}
where the last step follows by Lemma~\ref{lem:alpha_i}.
\end{proof}

\begin{lemma}[Accuracy of $\ov{W}$]\label{lem:accuracy_of_ov_W_respect_to_W}
Let $x$ and $\ov{x}$ be the vectors maintained by data-structure \textsc{StochasticProjectionMaintenance}. Let $W = ( \nabla^2 \phi(x) )^{-1}$ and $\ov{W} = ( \nabla^2 \phi ( \ov{x} ) )^{-1}$. Then we have
\begin{align*}
\| w_i^{-1/2} ( \ov{w}_i - w_i ) w_i^{-1/2} \|_F \leq \epsilon_w,
\end{align*}
where $\epsilon_w = O \Big( \alpha \log^2 ( n T ) \cdot \frac{ n^{1/4} }{ \sqrt{b} } \Big) $, $b$ is the size of sketching matrix.
\end{lemma}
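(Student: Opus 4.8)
The plan is to reduce the claim to the already-established closeness of $\ov{x}$ and $x$ from Lemma~\ref{lem:accuracy_of_ov_x_respect_to_x}, and then to rerun the Hessian-stability computation from the proof of Lemma~\ref{lem:guarantee_of_a_sequence_of_W}. First I would fix a block $i \in [m]$. Lemma~\ref{lem:accuracy_of_ov_x_respect_to_x} gives, with probability $1-\delta$ over the sketching matrices, that $\|\ov{x}_i - x_i\|_{\wt{V}_i^{-1}} \leq \epsilon_x$ with $\epsilon_x = O(\alpha \log^2(nT) \cdot n^{1/4}/\sqrt{b})$. The data structure maintains $(1-\epsilon_{mp})\wt{v}_i \preceq \ov{w}_i \preceq (1+\epsilon_{mp})\wt{v}_i$, and by definition $\ov{w}_i = (\nabla^2 \phi_i(\ov{x}_i))^{-1}$; inverting the lower bound gives $\nabla^2 \phi_i(\ov{x}_i) \preceq (1-\epsilon_{mp})^{-1}\wt{V}_i^{-1}$, hence $\|\ov{x}_i - x_i\|_{\ov{x}_i} \leq (1-\epsilon_{mp})^{-1/2}\epsilon_x \leq 2\epsilon_x$ using $\epsilon_{mp} < 1/4$.

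For the parameter choices in Table~\ref{tab:summary_of_parameters_in_combine} we have $2\epsilon_x = O(1/\log^3 n) \leq 1/2$, so next I would invoke Theorem~\ref{thm:hessiansc} at the point $\ov{x}_i$ with perturbation $x_i$. It yields both $\|\ov{x}_i - x_i\|_{x_i} \leq (1 - \|\ov{x}_i - x_i\|_{\ov{x}_i})^{-1}\|\ov{x}_i - x_i\|_{\ov{x}_i} = O(\epsilon_x)$ and the two-sided bound $(1 - \|\ov{x}_i - x_i\|_{x_i})^2 \nabla^2 \phi_i(x_i) \preceq \nabla^2 \phi_i(\ov{x}_i) \preceq (1 - \|\ov{x}_i - x_i\|_{x_i})^{-2}\nabla^2 \phi_i(x_i)$.

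With these in hand, and using $n_i = O(1)$ to pass between the Frobenius and operator norms on each block, the computation from Lemma~\ref{lem:guarantee_of_a_sequence_of_W} gives
\begin{align*}
& ~ \left\| w_i^{-1/2} ( \ov{w}_i - w_i ) w_i^{-1/2} \right\|_F^2 \\
\lesssim & ~ \left\| ( \nabla^2 \phi_i(x_i) )^{1/2} ( \nabla^2 \phi_i(\ov{x}_i)^{-1} - \nabla^2 \phi_i(x_i)^{-1} ) ( \nabla^2 \phi_i(x_i) )^{1/2} \right\|^2 \\
\leq & ~ \left( \frac{1}{ ( 1 - \|\ov{x}_i - x_i\|_{x_i} )^2 } - 1 \right)^2 \\
\leq & ~ 100 \cdot \|\ov{x}_i - x_i\|_{x_i}^2 = O(\epsilon_x^2) .
\end{align*}
Taking square roots yields $\|w_i^{-1/2}(\ov{w}_i - w_i)w_i^{-1/2}\|_F = O(\epsilon_x) = O(\alpha \log^2(nT) \cdot n^{1/4}/\sqrt{b})$, which is the claimed $\epsilon_w$; since the bad event in Lemma~\ref{lem:accuracy_of_ov_x_respect_to_x} already covers all blocks and iterations, no additional union bound is needed. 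The argument is routine; the part requiring the most care is the bookkeeping of which point each Hessian norm is measured at — sliding among $\wt{V}_i^{-1}$, $\nabla^2 \phi_i(\ov{x}_i)$ and $\nabla^2 \phi_i(x_i)$ — together with checking that $\epsilon_x$ and $\epsilon_{mp}$ stay below the thresholds (here $1/4$ and $1/2$) needed to invoke the data-structure guarantee and Theorem~\ref{thm:hessiansc} with harmless constants, which the parameter table guarantees. I do not foresee a genuine obstacle here.
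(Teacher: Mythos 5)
Your proposal is correct and follows essentially the same route as the paper: the paper's (two-line) proof likewise reduces $\|w_i^{-1/2}(\ov{w}_i - w_i)w_i^{-1/2}\|_F$ to $O(1)\cdot\|\ov{x}_i - x_i\|_{\nabla^2\phi(x_i)}$ via the self-concordance calculation of Lemma~\ref{lem:guarantee_of_a_sequence_of_W} and then invokes Lemma~\ref{lem:accuracy_of_ov_x_respect_to_x} with $\delta = 1/T$. You merely make explicit the norm bookkeeping (passing from $\wt{V}_i^{-1}$ to $\nabla^2\phi_i(\ov{x}_i)$ to $\nabla^2\phi_i(x_i)$) that the paper elides.
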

\begin{proof}
By similar calculation, we have
\begin{align*}
\| w_i^{-1/2} ( \ov{w}_i - w_i ) w_i^{-1/2} \|_F = O(1) \cdot \| \ov{x}_i - x_i \|_{ \nabla^2 \phi( x_i ) }.
\end{align*}
Then, using Lemma~\ref{lem:accuracy_of_ov_x_respect_to_x} with $\delta = 1/T$
\begin{align*}
\| \ov{x}_i - x_i \|_{ \nabla^2 \phi( x_i ) } 
\leq  O \left( \alpha \log^2 ( n T ) \cdot \frac{ \sqrt{ n^{1/4} } }{ \sqrt{ b } } \right) .
\end{align*}
\end{proof}

\begin{algorithm}[t]\caption{Robust Central Path}\label{alg:robust_central_path}
\begin{algorithmic}[1]
\Procedure{\textsc{CentralPathStep}}{$\ov{x},\ov{s},t,\lambda,\alpha$} 
    
    \State{\bf for} {$i = 1 \to m$} {\bf do} \Comment{Figure out direction $h$}
        \State \hspace{4mm} $\mu_i^t \leftarrow \ov{s}_i/t + \nabla \phi_i ( \ov{x}_i )$ \Comment{According to Eq.~\eqref{eq:def_mu_i_t}}
        \State \hspace{4mm} $\gamma_i^t \leftarrow \| \mu_i^t \|_{\nabla^2 \phi_i(\ov{x}_i)^{-1} } $ \Comment{According to Eq.~\eqref{eq:def_gamma_i_t}}
        \State \hspace{4mm} $c_i^t \leftarrow  \frac{ \exp(\lambda \gamma_i^t ) / \gamma_i^t }{ ( \sum_{i=1}^m \exp( 2 \lambda \gamma_i^t ) )^{1/2} }$ if $\gamma_i^t \geq 96 \sqrt{\alpha}$ and $c_i^t \leftarrow 0$ otherwise \Comment{According to Eq.~\eqref{eq:def_c_i_t}}
        \State \hspace{4mm} $h_i \leftarrow - \alpha \cdot c_i^t  \cdot \mu_i^t  $ \Comment{According to Eq.~\eqref{eq:def_h_i_t}}
    \State {\bf end for}
    \State $\ov{W} \leftarrow ( \nabla^2 \phi ( \ov{x} ) )^{-1}$ \Comment{Computing block-diagonal matrix $\ov{W}$}
    \State \Return $h,\ov{W}$
\EndProcedure
\State
\Procedure{\textsc{RobustCentralPath}}{$\text{mp},t,\lambda,\alpha$} \Comment{Lemma~\ref{lem:robust_central_path_part3}}
    \State \Comment{Standing at $(x,s)$ implicitly via data-structure}
    \State \Comment{Standing at $(\ov{x},\ov{s})$ explicitly via data-structure}
    \State $(\ov{x}, \ov{s}) \leftarrow \text{mp}.\textsc{Query}()$ \Comment{ Algorithm~\ref{alg:central_path_maintenance_main}, Lemma~\ref{lem:central_path_maintenance_query}}
    \State
    \State $h, \ov{W} \leftarrow \textsc{CentralPathStep}(\ov{x},\ov{s},t,\lambda,\alpha)$
    \State
    \State $\text{mp}.\textsc{Update} ( \ov{W} )$ \Comment{Algorithm~\ref{alg:central_path_maintenance_update}, Lemma~\ref{lem:central_path_maintenance_update}}
    \State $\text{mp}.\textsc{MultiplyMove}( h , t )$ \Comment{Algorithm~\ref{alg:central_path_maintenance_multiply_move}, Lemma~\ref{lem:central_path_maintenance_multiply}, Lemma~\ref{lem:central_path_maintenance_move}}
    \State \Comment{$x \leftarrow x + \delta_x$, $s \leftarrow s + \delta_s$, achieved by data-structure implicitly}
    \State \Comment{$\ov{x} \leftarrow \ov{x} + \wt{\delta}_x$, $\ov{s} \leftarrow \ov{s} + \wt{\delta}_s$, achieved by data-structure explicitly}
    \State \Comment{If $x$ is far from $\ov{x}$, then $\ov{x} \leftarrow x$}
\EndProcedure
\end{algorithmic}
\end{algorithm}

\begin{algorithm}[!h]\caption{Our main algorithm (More detailed version of \textsc{RobustIPM} in Section~\ref{sec:robust_central_path_intro})}\label{alg:main}
\begin{algorithmic}[1]
\Procedure{\textsc{Main}}{$A,b,c,\phi,\delta$} \Comment{Theorem~\ref{thm:main_result_informal}, Theorem~\ref{thm:main_result_formal}}
    \State $\lambda \leftarrow 2^{16} \log (m)$, $\alpha \leftarrow 2^{-20} \lambda^{-2}$ , $\kappa \leftarrow 2^{-10} \alpha$
    \State $\delta \leftarrow \min( \frac{1}{ \lambda } , \delta )$ \Comment{Choose the target accuracy}
    \State $a \leftarrow \min (2/3,\alpha_m)$ \Comment{Choose the batch size}
    \State $b_{\text{sketch}} \leftarrow 2^{10}  \sqrt{\nu} \log^6 (n/\delta) \cdot \log \log (1/\delta) $ \Comment{Choose the size of sketching matrix}
    \State Modify the ERM($A,b,c,\phi$) and obtain an initial $x$ and $s$
    \State \textsc{CentralPathMaintenance} $\text{mp}$ \Comment{Algorithm~\ref{alg:central_path_maintenance_main}, Theorem~\ref{thm:central_path_maintenance}}
    \State $\text{mp}.\textsc{Initialize}(A,x,s,\alpha,a,b_{\text{sketch}})$ \Comment{Algorithm~\ref{alg:central_path_maintenance_main}, Lemma~\ref{lem:central_path_maintenance_initialize}}
    \State $\nu \leftarrow \sum_{i=1}^m \nu_i$ \Comment{$\nu_i$ are the self-concordant parameters of $\phi_i$}
    \State $t \leftarrow 1$
    \State {\bf while} {$t > \delta^2 / (4 \nu)$} {\bf do}
        \State \hspace{8mm} $t^{\new} \leftarrow (1 - \frac{ \kappa }{  \sqrt{\nu} } ) t$
        \State \hspace{8mm} $\textsc{RobustCentralPath}(\text{mp},t,\lambda,\alpha)$ \Comment{Algorithm~\ref{alg:robust_central_path}}
        \State \hspace{8mm} $t \leftarrow t^{\new}$
    \State {\bf end while}
    \State Return an approximate solution of the original ERM according to Section~\ref{sec:initial_point_termination_condition} 
\EndProcedure
\end{algorithmic}
\end{algorithm}

\subsection{Main result}
The goal of this section is to prove our main result.
\begin{theorem}[Main result, formal version of Theorem~\ref{thm:main_result_informal}]\label{thm:main_result_formal} 
Consider a convex problem
\begin{align*}
\min_{Ax=b,x \in \prod_{i=1}^{m} K_{i} } c^{\top} x
\end{align*}
where $K_{i}$ are
compact convex set. For each $i \in [m]$, we are given a $\nu_{i}$-self
concordant barrier function $\phi_{i}$ for $K_{i}$. Also, we are
given $x^{(0)}=\arg\min_{x}\sum_{i}\phi_{i}(x_{i})$. Assume that
\begin{enumerate}
\item Diameter of the set: For any $x\in\prod_{i=1}^m K_{i}$, we have that $\|x\|_{2}\leq R$.
\item Lipschitz constant of the program: $\|c\|_{2}\leq L$.
\end{enumerate}
Then, the algorithm $\textsc{Main}$ finds a vector $x$ such
that 
\begin{align*}
c^{\top}x & \leq\min_{Ax=b, x \in \prod_{i=1}^m K_{i}} c^{\top}x+LR\cdot\delta,\\
\|Ax-b\|_{1} & \leq3\delta\cdot\left(R\sum_{i,j}|A_{i,j}|+\|b\|_{1}\right),\\
x & \in \prod_{i=1}^m K_{i}.
\end{align*}
in time 
\begin{align*}
O  ( n^{\omega+o(1)} + n^{2.5-\alpha/2 + o(1)} + n^{2+1/6 + o(1) } ) \cdot \wt{O}( \log ( n / \delta ) ).
\end{align*}
where $\omega$ is the exponent of matrix multiplication \cite{w12,l14}, and $\alpha$ is the dual exponent of matrix multiplication \cite{gu18}.
\end{theorem}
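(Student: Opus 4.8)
The plan is to assemble \textsc{Main} from two pieces already proved: the robust interior point method of Theorem~\ref{thm:roubst_IPM}, which supplies correctness and the $\wt O(\sqrt{\nu})$ iteration count, and the \textsc{CentralPathMaintenance} data structure of Theorem~\ref{thm:central_path_maintenance}, which executes one centering iteration in amortized $\wt O(n^{\omega-1/2})$ time. The conceptual point to make precise is that \textsc{Main} is exactly a run of \textsc{RobustIPM} in which $x,s$ live only implicitly (as $u_1+Fu_2$ and $u_3+Gu_4$), while the off-center vectors $\ov x,\ov s$ required by the robust analysis are precisely what \textsc{Query} returns, and the centering step is applied through \textsc{MultiplyMove}. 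Throughout I use the parameter choices of Table~\ref{tab:summary_of_parameters_in_combine} and write $\alpha_m$ for the dual exponent of matrix multiplication (the $\alpha$ of the theorem statement).

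\textbf{Correctness.} First I would check that the data structure meets the preconditions of the \textsc{RobustIPM} loop. Since $\epsilon_{mp}=\Theta(1/\log^2 n)\le\alpha$ and $\ov W=(\nabla^2\phi(\ov x))^{-1}$, the guarantee $(1-\epsilon_{mp})\wt v_i\preceq\ov w_i\preceq(1+\epsilon_{mp})\wt v_i$ from Theorem~\ref{thm:central_path_maintenance} gives the spectral approximation $(1-\alpha)(\nabla^2\phi_i(\ov x_i))^{-1}\preceq\wt V_i\preceq(1+\alpha)(\nabla^2\phi_i(\ov x_i))^{-1}$ that \textsc{RobustIPM} demands. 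Lemmas~\ref{lem:accuracy_of_ov_x_respect_to_x} and \ref{lem:accuracy_of_ov_s_respect_to_s}, combined with self-concordance (Theorem~\ref{thm:hessiansc}) to pass between the $\ov x_i$- and $x_i$-norms, give $\|\ov x_i-x_i\|_{\ov x_i}<\alpha$ and $\|\ov s_i-s_i\|_{\ov x_i}^*<t\alpha$ for the sketch size $b=\Theta(\sqrt n\log^6(nT))$, which is the exact tolerance in the loop. The lemma that $u_1+Fu_2$ tracks $x$ and $u_3+Gu_4$ tracks $s$ over all iterations shows that \textsc{MultiplyMove} implements the ideal update $x\gets x+\wt V^{1/2}(I-\wt P)\wt V^{1/2}h$, $s\gets s+t\wt V^{-1/2}\wt P\wt V^{1/2}h$, even though the explicit $\ov x,\ov s$ are moved only by the sketched directions $\wt\delta_x,\wt\delta_s$. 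Hence the implicit trajectory $(x,s)$ is a legal run of \textsc{RobustIPM}, so Theorem~\ref{thm:roubst_IPM} yields the claimed bounds on $c^\top x$, $\|Ax-b\|_1$ and membership in $\prod_i K_i$ for the modified program, and Lemma~\ref{lem:feasible_LP} (via Section~\ref{sec:initial_point_termination_condition}) translates these into the stated $LR\cdot\delta$ objective bound and feasibility terms for the original ERM. All randomized guarantees hold with probability $1-1/\poly(nT)$ per call, and there are $\poly(n)$ calls, so a union bound suffices.

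\textbf{Running time.} To use the timing part of Theorem~\ref{thm:central_path_maintenance} I would verify its structural hypotheses on the weight sequence $W^{(k)}=(\nabla^2\phi(x^{(k)}))^{-1}$. Lemma~\ref{lem:guarantee_of_a_sequence_of_W}, driven by the key inequality $\sum_i\alpha_i^2\le4\alpha^2$ of Lemma~\ref{lem:alpha_i}, gives $\sum_i\|w_i^{-1/2}(w_i^{\new}-w_i)w_i^{-1/2}\|_F^2\le C_1^2$ together with the $\ell_4$ analogue and the per-block bound $\le\tfrac14$, with $C_1=\Theta(\alpha)=\Theta(1/\log^2 n)$ and $C_2=\Theta(\alpha^2)=\Theta(1/\log^4 n)$; Lemma~\ref{lem:accuracy_of_ov_W_respect_to_W} gives $\|w_i^{-1/2}(\ov w_i-w_i)w_i^{-1/2}\|_F\le\epsilon_{mp}$. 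Plugging $C_1,C_2,\epsilon_{mp}=\Theta(1/\poly\log n)$ into Theorem~\ref{thm:central_path_maintenance}, \textsc{Update} costs $\wt O(n^{\omega-1/2}+n^{2-a/2})$ amortized; \textsc{MultiplyMove} with a dense direction ($\|h\|_0\le n$) and $b=\wt O(\sqrt n)$ costs $O(nb+n^{a\omega+o(1)}+n^{1+a}+n^{1.5})=\wt O(n^{1.5}+n^{1+a})$ amortized, using $a\omega\le1+a$ (valid for $a\le\min(2/3,\alpha_m)$), plus the amortized $\wt O(n^{\omega-1/2})$ for the periodic re-initializations inside \textsc{Move} (Lemma~\ref{lem:central_path_maintenance_move}); and \textsc{Query} and \textsc{CentralPathStep} each cost $O(n)$ since every block has $O(1)$ size. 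Since $\nu=\sum_i\nu_i=O(n)$, Theorem~\ref{thm:roubst_IPM} gives $T=O(\sqrt\nu\log^2 m\log(\nu/\delta))=\wt O(\sqrt n\log(n/\delta))$ iterations; summing per-iteration costs over $T$ and adding the one-time $n^{\omega+o(1)}$ of \textsc{Initialize} (Lemma~\ref{lem:central_path_maintenance_initialize}) yields total expected time $\wt O\big((n^{\omega}+n^{2.5-a/2}+n^2+n^{1.5+a})\log(n/\delta)\big)$. Choosing $a=\min(2/3,\alpha_m)$ we get $n^{1.5+a}\le n^{1.5+2/3}=n^{2+1/6}$, $n^{2.5-a/2}\le n^{2.5-\alpha_m/2}+n^{2+1/6}$, and $n^2\le n^{2+1/6}$, collapsing the bound to $(n^{\omega+o(1)}+n^{2.5-\alpha_m/2+o(1)}+n^{2+1/6+o(1)})\cdot\wt O(\log(n/\delta))$ as claimed.

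\textbf{Main obstacle.} I expect the heart of the argument to be the correctness bookkeeping for the sketched centering step: showing that moving $\ov x,\ov s$ by $\wt\delta_x=\wt V^{1/2}(I-R^\top R\,\wt P)\wt V^{1/2}h$ (and the analogue for $s$) keeps them inside the $O(\alpha)$-ball around the \emph{exact} implicit $x,s$ over all iterations between resets. This is where using a fresh sketch $R_l$ each iteration and the Bernstein concentration of Lemmas~\ref{lem:accuracy_of_ov_x_respect_to_x}--\ref{lem:accuracy_of_ov_s_respect_to_s} are indispensable, and it is the one place where $b=\Theta(\sqrt n\,\poly\log)$ is forced; the remainder is a matter of chaining the already-proven lemmas and optimizing the batch parameter $a$.
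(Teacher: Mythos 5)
Your proposal is correct and follows essentially the same route as the paper: combine Theorem~\ref{thm:roubst_IPM} for correctness and the iteration count with Theorem~\ref{thm:central_path_maintenance} for the amortized per-iteration cost, verify the data-structure hypotheses via Lemmas~\ref{lem:guarantee_of_a_sequence_of_W}, \ref{lem:accuracy_of_ov_x_respect_to_x}--\ref{lem:accuracy_of_ov_W_respect_to_W}, and then multiply $T=\wt O(\sqrt{n}\log(n/\delta))$ by the per-iteration cost and optimize $a=\min(2/3,\alpha_m)$. The paper's own proof of the formal theorem in fact only writes out this runtime calculation, leaving the correctness bookkeeping implicit in the cited lemmas, so your more explicit treatment of that part is a faithful expansion rather than a different argument.
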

\begin{proof}
The number of iterations is 
\begin{align*}
O( \sqrt{\nu} \log^2 (m) \log (\nu /\delta) ) = O( \sqrt{n} \log^2 (n) \log ( n / \delta ) ).
\end{align*} For each iteration, the amortized cost per iteration is
\begin{align*}
& ~ O(nb + n^{1+a} + n^{1.5}) + O( C_1 / \epsilon_{mp} + C_2 / \epsilon_{mp}^2 ) \cdot ( n^{\omega-1/2 + o(1)} + n^{2-a/2+o(1)} ) + O(n^{\omega-1/2+o(1)}) \\
= & ~ O(nb + n^{1+a} + n^{1.5}) + O( \alpha + \alpha^2 ) \cdot ( n^{\omega-1/2 + o(1)} + n^{2-a/2+o(1)} ) + O(n^{\omega-1/2+o(1)}) \\
= & ~ O(nb + n^{1+a} + n^{1.5}) + O( 1 / \log^4 n ) \cdot ( n^{\omega-1/2 + o(1)} + n^{2-a/2+o(1)} ) + O(n^{\omega-1/2+o(1)}) \\
= & ~ O( n^{1.5 + o(1)} \log^6 \log (1/\delta) + n^{1+a + o(1)} ) + O( n^{\omega-1/2 + o(1)} + n^{2-a/2+o(1)} ) .
\end{align*}
where the last step follows from choice of $b$ (see Table~\ref{tab:summary_of_parameters_in_combine}).

Finally, we have
\begin{align*}
 & ~ \text{total~time} \\
= & ~ \text{\#iterations} \cdot \text{cost~per~iteration} \\
= & ~ \underbrace{ O\left( \sqrt{n} \log^2 n \log (n / \delta) \right) }_{ \text{\#iterations} } \cdot \underbrace{ O \left( n^{1.5+o(1)} \log^6 \log(1/\delta) + n^{1+a + o(1)} +  n^{\omega-1/2 + o(1)} + n^{2-a/2+o(1)} \right) }_{ \text{cost~per~iteration} } \\
= & ~ O \left( n^{1.5+a + o(1)} + n^{\omega+o(1)} + n^{2.5-a/2+o(1)} \right) \cdot \log (n/\delta) \cdot \log^6 \log (1/\delta) \\
= & ~ O \left( n^{2+1/6 + o(1)} + n^{\omega+o(1)} + n^{2.5-\alpha_m/2+o(1)} \right) \cdot \log (n/\delta) \cdot \log^6 \log (1/\delta)
\end{align*}
where we pick $a=\min (2/3,\alpha_m)$ and $\alpha_m$ is the dual exponent of matrix multiplication\cite{gu18}.
 
Thus, we complete the proof.
\end{proof}

\begin{corollary}[Empirical risk minimization]
Given convex function $f_i (y) : \R \rightarrow \R$. Suppose the solution $x^* \in \R^d$ lies in $\ell_{\infty}$-$\mathrm{Ball}(0,R)$. Suppose $f_i$ is $L$-Lipschitz in region $\{ y : |y| \leq 4 \sqrt{n} \cdot M \cdot R \}$. Given a matrix $A \in \R^{d \times n }$ with $\| A \| \leq M$ and $A$ has no redundant constraints, and a vector $b \in \R^d$ with $\| b \|_2 \leq M \cdot R$. We can find $x \in \R^d$ s.t.
\begin{align*}
\sum_{i=1}^n f_i(a_i^\top x + b_i) \leq \min_{x\in \R^d} \sum_{i=1}^n f_i (a_i^\top x + b_i) + \delta M R
\end{align*}
in time
\begin{align*}
O  ( n^{\omega+o(1)} + n^{2.5-\alpha/2 + o(1)} + n^{2+1/6 + o(1) } ) \cdot \wt{O}( \log ( n / \delta ) ).
\end{align*}
where $\omega$ is the exponent of matrix multiplication \cite{w12,l14}, and $\alpha$ is the dual exponent of matrix multiplication \cite{gu18}.
\end{corollary}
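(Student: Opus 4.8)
The plan is to reduce the ERM to the block-separable form $\min_{\bar A\bar x=\bar b,\ \bar x\in\prod_j\bar K_j}\bar c^\top\bar x$ that Theorem~\ref{thm:main_result_formal} handles, exactly along the lines of \eqref{eq:apply_main_theorem_and_get_ERM}, and then translate the guarantee of that theorem back to the ERM using the Lipschitz hypothesis on the $f_i$. Since the dependence of Theorem~\ref{thm:main_result_formal} on the diameter and accuracy is only logarithmic, the slack introduced by the reduction costs only polylogarithmic factors.

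\textbf{The reduction.} Because $A$ has no redundant constraints, $A$ is full rank and $d\le n$. Introduce auxiliary vectors $y,z\in\R^n$ and consider the problem of minimizing $\mathbf{1}^\top z$ subject to $A^\top x+b=y$, $x_j\in[-2R,2R]$ for each $j\in[d]$, and $(y_i,z_i)\in K_i:=\{(y_i,z_i):f_i(y_i)\le z_i,\ |y_i|\le 4\sqrt{n}MR,\ |z_i|\le B\}$ for each $i\in[n]$, where $B:=8\sqrt{n}MRL$ and, subtracting the harmless constant $f_i(0)$ (which changes neither the minimizer nor the inequality to be proved), we assume $f_i(0)=0$. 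This is an instance of the general model with $d$ blocks of dimension $1$ and $n$ blocks of dimension $2$, so total dimension $N=d+2n=O(n)$ and $\bar c=(0_d,0_n,\mathbf{1}_n)$ with $\|\bar c\|_2=\sqrt{n}$. Each block is a compact $O(1)$-dimensional convex set, hence carries an $O(1)$-self-concordant barrier with $O(1)$-time gradient and Hessian oracles (a logarithmic barrier for the boxes; for $K_i$ the explicit barrier attached to the particular loss, or the universal barrier of \cite{n98}), matching the standing assumptions of Section~\ref{sec:robust_central_path_intro}. The truncation constants are chosen so that the ERM optimum $x^*$, together with $y_i^*=a_i^\top x^*+b_i$ and $z_i^*=f_i(y_i^*)$, is feasible: using $\|a_i\|_2\le\|A\|\le M$ and $\|x^*\|_\infty\le R$, we get $|y_i^*|\le\|a_i\|_2\|x^*\|_2+|b_i|\le M\sqrt{n}R+MR<4\sqrt{n}MR$, and then $|z_i^*|\le L|y_i^*|<B$. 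Hence the reduced problem has the same optimal value $\mathrm{OPT}$ as the ERM.

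\textbf{Applying the theorem and post-processing.} Apply Theorem~\ref{thm:main_result_formal} to this instance with accuracy $\delta'$ (chosen below), diameter parameter $R'=\Theta(nMRL)$ (since $\|(x,y,z)\|_2\le 2R\sqrt{d}+4nMR+8nMRL$), and Lipschitz parameter $\sqrt{n}$. It returns $(\bar x,\bar y,\bar z)$ with $\bar x_j\in[-2R,2R]$, $(\bar y_i,\bar z_i)\in K_i$, $\mathbf{1}^\top\bar z\le\mathrm{OPT}+R'\sqrt{n}\,\delta'$, and $\|A^\top\bar x+b-\bar y\|_1\le 3\delta'\,(R'\sum_{i,j}|A_{i,j}|+\|b\|_1)=:\eta$. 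I output $x:=\bar x$. Since $\bar x$ lies in the $2R$-box, $a_i^\top\bar x+b_i$ lies in $[-4\sqrt{n}MR,4\sqrt{n}MR]$, so the $L$-Lipschitz bound yields
\[
\sum_{i=1}^n f_i(a_i^\top\bar x+b_i)\le\sum_{i=1}^n f_i(\bar y_i)+L\sum_{i=1}^n|a_i^\top\bar x+b_i-\bar y_i|\le\mathbf{1}^\top\bar z+L\eta\le\mathrm{OPT}+R'\sqrt{n}\,\delta'+L\eta.
\]
Choosing $\delta'$ so that $R'\sqrt{n}\,\delta'+L\eta\le\delta MR$ — which only requires $\delta'\le\delta MR/\poly(n,M,R,L)$, i.e. $\log(1/\delta')=O(\log(n/\delta)+\log(MRL))$ — completes the correctness argument.

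\textbf{Running time and the main obstacle.} The reduced instance has $N=O(n)$ variables, so by Theorem~\ref{thm:main_result_formal} the total cost is $(n^{\omega+o(1)}+n^{2.5-\alpha/2+o(1)}+n^{2+1/6+o(1)})\cdot\wt{O}(\log(n/\delta))$, absorbing the $\log(MRL)$ term into the polylogarithmic factors since the algorithm depends only logarithmically on the diameter and conditioning. I expect the one genuinely delicate ingredient to be the barrier for the epigraph sets $K_i$ together with the $O(1)$-time oracle assumption: for arbitrary convex $f_i$ this rests on the existence of the Nesterov--Nemirovsky universal $O(1)$-self-concordant barrier in constant dimension (and in applications one uses the explicit barrier of the specific loss), whereas the rest is routine bookkeeping of how $\delta'$ propagates back to $\delta$ through the Lipschitz estimate and the constraint-violation term.
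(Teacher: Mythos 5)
Your proposal is correct and takes essentially the same route as the paper, which simply applies Theorem~\ref{thm:main_result_formal} to the epigraph reformulation \eqref{eq:apply_main_theorem_and_get_ERM} with the extra $\ell_\infty$-ball constraint on $x$ (so $n_i=O(1)$ and $m=O(n)$). Your write-up is in fact more careful than the paper's one-line argument, since you make the compactification of the $K_i$, the feasibility of the optimum, and the propagation of $\delta'$ through the Lipschitz and constraint-violation terms explicit.
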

\begin{proof}
It follows from applying Theorem~\ref{thm:main_result_formal} on convex program \eqref{eq:apply_main_theorem_and_get_ERM} with an extra constraint $x^*$ lies in $\ell_{\infty}$-$\mathrm{Ball}(0,R)$. Note that in program \eqref{eq:apply_main_theorem_and_get_ERM}, $n_i = 2$. Thus $m = O(n)$.
\end{proof}

\section{Initial Point and Termination Condition}\label{sec:initial_point_termination_condition}

We first need some result about self concordance.

\begin{lemma}[Theorem 4.1.7, Lemma 4.2.4 in \cite{n98}]\label{lem:self_concordant}
Let $\phi$ be any $\nu$-self-concordant barrier. Then, for any $x,y\in\mathrm{dom}\phi$,
we have
\begin{align*}
\left\langle \nabla\phi(x),y-x\right\rangle  & \leq\nu,\\
\left\langle \nabla\phi(y)-\nabla\phi(x),y-x\right\rangle  & \geq\frac{\|y-x\|_{x}^{2}}{1+\|y-x\|_{x}}.
\end{align*}
Let $x^{*}=\arg\min_{x}\phi(x)$. For any $x\in\R^{n}$ such that
$\|x-x^{*}\|_{x^{*}}\leq1$, we have that $x\in\mathrm{dom}\phi$.
\[
\|x^{*}-y\|_{x^{*}}\leq\nu+2\sqrt{\nu}.
\]

\end{lemma}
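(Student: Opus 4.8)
The plan is to prove all three statements by one-variable calculus along the relevant segment, using only the two defining properties of a $\nu$-self-concordant barrier: the third-order bound $|D^3\phi(z)[u,u,u]|\le 2\|u\|_z^3$ and the gradient bound $\|\nabla\phi(z)\|_z^*\le\sqrt\nu$. In each case I fix two points of $\mathrm{dom}\phi$ (a convex set, so the segment joining them lies in $\mathrm{dom}\phi$), set $z_t=x+t(y-x)$, $f(t)=\phi(z_t)$, and record $f'(t)=\langle\nabla\phi(z_t),y-x\rangle$, $f''(t)=\|y-x\|_{z_t}^2\ge 0$, and $|f'''(t)|\le 2f''(t)^{3/2}$ (the last from the third-order bound with $u=y-x$). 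I also assume, as is standard for barriers, that $\nabla^2\phi$ is nondegenerate so the local norms are genuine norms.

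For the first inequality I would set $g=f'$, note $g'=f''\ge 0$ so $g$ is nondecreasing, and dispose of the case $g(0)\le 0$ at once (since $\nu\ge 1>0$). If $g(0)>0$ then $g>0$ on $[0,1]$, and Cauchy--Schwarz together with the gradient bound give $g(t)^2\le\|\nabla\phi(z_t)\|_{z_t}^{*2}\,\|y-x\|_{z_t}^2\le\nu\,g'(t)$, so $\frac{d}{dt}\!\left(-\tfrac1{g(t)}\right)=\frac{g'(t)}{g(t)^2}\ge\tfrac1\nu$; integrating over $[0,1]$ and using $g(1)>0$ yields $g(0)\le\nu$, i.e.\ $\langle\nabla\phi(x),y-x\rangle\le\nu$. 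For the second inequality, with $r=\|y-x\|_x$ (the case $y=x$ being trivial), I would study $\rho(t)=f''(t)^{-1/2}$: from $\rho'=-\tfrac12 f''^{-3/2}f'''$ we get $|\rho'|\le 1$, and $\rho(0)=1/r$ gives $\rho(t)\le 1/r+t$, hence $f''(t)\ge r^2/(1+tr)^2$; then $\langle\nabla\phi(y)-\nabla\phi(x),y-x\rangle=\int_0^1 f''(t)\,dt\ge\int_0^1\frac{r^2}{(1+tr)^2}\,dt=\frac{r^2}{1+r}$. I note that the Hessian comparison of Theorem~\ref{thm:hessiansc}, applied naively, gives only $f''(t)\ge(1-tr)^2r^2$, which is not sharp enough; the pointwise bound $f''(t)\ge r^2/(1+tr)^2$ from the $\rho$-ODE is the right tool.

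For the third statement, I would treat the Dikin-ellipsoid containment $\{x:\|x-x^*\|_{x^*}\le 1\}\subseteq\mathrm{dom}\phi$ as the classical fact it is: along $[x^*,x]$ with $\|x-x^*\|_{x^*}<1$ the upper half of the same $\rho$-estimate keeps $f''$, hence $\phi$, bounded on the segment, contradicting the barrier blow-up at $\partial(\mathrm{dom}\phi)$; the endpoint case follows by a limiting argument, or one simply cites \cite{n98}. For the diameter bound, fix $y\in\mathrm{dom}\phi$, put $r=\|y-x^*\|_{x^*}$, $u=(y-x^*)/r$ so $\|u\|_{x^*}=1$, and set $\psi(t)=\phi(x^*+tu)$ on $[0,r]$, so $\psi'(0)=\langle\nabla\phi(x^*),u\rangle=0$ and $\psi''(0)=1$. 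As above, $\rho(t)=\psi''(t)^{-1/2}\le 1+t$, hence $\psi''(t)\ge(1+t)^{-2}$ and $\psi'(t)=\int_0^t\psi''\ge\frac{t}{1+t}>0$ for $t>0$; in particular $1/\psi'(\sqrt\nu)\le 1+1/\sqrt\nu$. On the other hand Cauchy--Schwarz and the gradient bound give $\psi'(t)^2\le\nu\,\psi''(t)$, so $\frac{d}{dt}\!\left(-1/\psi'(t)\right)\ge1/\nu$ on $(0,r]$; integrating from $\sqrt\nu$ to $r$ (the case $r\le\sqrt\nu$ being trivial) and using $1/\psi'(r)>0$ gives $1+\tfrac1{\sqrt\nu}>\tfrac{r-\sqrt\nu}{\nu}$, i.e.\ $r<\nu+2\sqrt\nu$. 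The main obstacle is exactly this last estimate: one must play the self-concordance lower bound on $\psi'$ against the barrier differential inequality and split at the non-obvious point $\sqrt\nu$ (which minimizes $a+\nu/a$); everything else is routine one-dimensional bookkeeping.
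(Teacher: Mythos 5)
The paper gives no proof of this lemma; it is stated purely as a citation to Nesterov--Nemirovskii, so there is nothing internal to compare against. Your self-contained derivation is correct and is essentially the classical argument from that source: the differential inequality $g^2\le\nu g'$ integrated along the segment for the first bound, the $|(\,f''^{-1/2})'|\le 1$ estimate for the second and for the Dikin-ellipsoid/diameter facts, with the split at $t=\sqrt{\nu}$ giving exactly $\nu+2\sqrt{\nu}$. (One side remark: you correctly use the standard self-concordance condition $|D^3\phi[u,u,u]|\le 2\|u\|_x^{3}$; the exponent $3/2$ in the paper's displayed definition is a typo.)
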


\begin{lemma}\label{lem:feasible_LP} Consider a convex problem $\min_{Ax=b,x\in\prod_{i=1}^{m}K_{i}}c^{\top}x$
where $K_{i}$ are compact convex set. For each $i \in [m]$, we are given
a $\nu_{i}$-self concordant barrier function $\phi_{i}$ for $K_{i}$.
Also, we are given $x^{(0)}=\arg\min_{x}\sum_{i}\phi_{i}(x_{i})$.
Assume that
\begin{enumerate}
\item Diameter of the set: For any $x\in\prod_{i=1}^m K_{i}$, we have that $\|x\|_{2}\leq R$.
\item Lipschitz constant of the program: $\|c\|_{2}\leq L$.
\end{enumerate}
For any $\delta>0$, the modified program $\min_{\overline{A}\overline{x}=\overline{b},\overline{x}\in \prod_{i=1}^m K_{i}\times\R_{+}}\overline{c}^{\top}\overline{x}$
with 
\[
\overline{A}=[A\ |\ b-Ax^{(0)}],\overline{b}=b\text{, and }\overline{c}=\left[\begin{array}{c}
\frac{\delta}{LR}\cdot c\\
1
\end{array}\right]
\]
satisfies the following:
\begin{enumerate}
\item $\overline{x}=\left[\begin{array}{c}
x^{(0)}\\
1
\end{array}\right]$, $\overline{y}=0_{d}$ and $\overline{s}=\left[\begin{array}{c}
\frac{\delta}{LR}\cdot c\\
1
\end{array}\right]$ are feasible primal dual vectors with $\|\overline{s}+\nabla\overline{\phi}(\overline{x})\|_{\overline{x}}^{*}\leq\delta$
where $\overline{\phi}(\overline{x})=\sum_{i=1}^{m}\phi_{i}(\overline{x}_{i})-\log(\overline{x}_{m+1})$.
\item For any $\overline{x}$ such that $\overline{A}\overline{x}=\overline{b},\overline{x}\in\prod_{i=1}^m K_{i}\times\R_{+}$
and $\overline{c}^{\top}\overline{x}\leq\min_{\overline{A}\overline{x}=\overline{b},\overline{x}\in\prod_{i=1}^m K_{i}\times\R_{+}}\overline{c}^{\top}\overline{x}+\delta^{2}$,
the vector $\overline{x}_{1:n}$ ($\overline{x}_{1:n}$ is the first
$n$ coordinates of $\overline{x}$) is an approximate solution to
the original convex program in the following sense 
\begin{align*}
c^{\top}\overline{x}_{1:n} & \leq\min_{Ax=b,x\in \prod_{i=1}^m K_{i}}c^{\top}x+LR\cdot\delta,\\
\|A\overline{x}_{1:n}-b\|_{1} & \leq3\delta\cdot\left(R\sum_{i,j}|A_{i,j}|+\|b\|_{1}\right),\\
\overline{x}_{1:n} & \in\prod_{i=1}^m K_{i}.
\end{align*}
\end{enumerate}
\end{lemma}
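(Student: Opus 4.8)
The proof splits along the two enumerated claims, and both become short once one records a single geometric fact about $x^{(0)}$, namely that $\nabla^2\phi(x^{(0)})^{-1}\preceq R^2 I$ where $\phi=\sum_i\phi_i$. I would prove this fact first: by Lemma~\ref{lem:self_concordant} the Dikin ellipsoid of $\phi$ at its minimizer $x^{(0)}$, that is $\{x^{(0)}+v:\|v\|_{x^{(0)}}\le1\}$, lies in $\mathrm{dom}\,\phi\subseteq\prod_i K_i$, which by hypothesis is contained in the Euclidean ball of radius $R$; hence for any $v$ with $\|v\|_{x^{(0)}}\le1$ both $x^{(0)}+v$ and $x^{(0)}-v$ have norm at most $R$, so $\|v\|_2=\tfrac12\|(x^{(0)}+v)-(x^{(0)}-v)\|_2\le R$. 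Thus $v^\top\nabla^2\phi(x^{(0)})v\le1$ implies $\|v\|_2\le R$, i.e. $\nabla^2\phi(x^{(0)})^{-1}\preceq R^2 I$.

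\textbf{Part 1 (feasibility and centrality).} Feasibility is immediate from the definitions: $\overline A\,\overline x = Ax^{(0)}+(b-Ax^{(0)})=b$, the block $x^{(0)}$ lies in $\mathrm{dom}\,\phi$ and the last coordinate is $1>0$, and $\overline A^\top\overline y+\overline s=\overline s=\overline c$ since $\overline y=0$. For the centrality bound, since $x^{(0)}$ minimizes $\sum_i\phi_i$ we have $\nabla\overline\phi(\overline x)=(\nabla\phi(x^{(0)}),-1)=(0,-1)$, so $\overline s+\nabla\overline\phi(\overline x)=(\tfrac{\delta}{LR}c,0)$, while $\nabla^2\overline\phi(\overline x)^{-1}=\mathrm{blockdiag}(\nabla^2\phi(x^{(0)})^{-1},1)$. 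Therefore $\|\overline s+\nabla\overline\phi(\overline x)\|_{\overline x}^{*2}=\tfrac{\delta^2}{L^2R^2}\,c^\top\nabla^2\phi(x^{(0)})^{-1}c\le\tfrac{\delta^2}{L^2R^2}\cdot R^2\|c\|_2^2\le\delta^2$, using the PSD fact above and $\|c\|_2\le L$.

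\textbf{Part 2 (rounding).} Write $\overline x=(u,\theta)$ with $u\in\prod_iK_i$ and $\theta\ge0$; in particular $\overline x_{1:n}=u\in\prod_iK_i$, the third conclusion. Rearranging $\overline A\,\overline x=\overline b$ gives $Au-b=\theta\,(Ax^{(0)}-b)$, so the constraint violation of $u$ is controlled by $\theta$. To bound $\theta$, first note that if $x^*$ is an optimal solution of the original program (attained since $\prod_iK_i$ is compact), then $(x^*,0)$ is feasible for the modified program with value $\tfrac{\delta}{LR}c^\top x^*$, so $\mathrm{OPT}_{\mathrm{mod}}\le\tfrac{\delta}{LR}c^\top x^*$. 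Combining with the hypothesis $\tfrac{\delta}{LR}c^\top u+\theta\le\mathrm{OPT}_{\mathrm{mod}}+\delta^2$ and dropping $\theta\ge0$ yields $c^\top u\le c^\top x^*+LR\delta$, the first conclusion. Since $\|x^{(0)}\|_2,\|u\|_2,\|x^*\|_2\le R$ we get $|c^\top x^{(0)}|,|c^\top u|,|c^\top x^*|\le LR$, so the same inequality gives $\theta\le\tfrac{\delta}{LR}(c^\top x^*-c^\top u)+\delta^2\le2\delta+\delta^2\le3\delta$ (using $\delta\le1$, which the driver algorithm enforces via $\delta\leftarrow\min(1/\lambda,\delta)$). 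Finally $\|Au-b\|_1=\theta\|Ax^{(0)}-b\|_1\le3\delta(\|Ax^{(0)}\|_1+\|b\|_1)\le3\delta(R\sum_{i,j}|A_{i,j}|+\|b\|_1)$, where the last step uses $|(Ax^{(0)})_i|\le\|x^{(0)}\|_\infty\sum_j|A_{i,j}|\le R\sum_j|A_{i,j}|$.

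\textbf{Main obstacle.} There is no genuinely hard step here; the only place that needs care is the geometric fact $\nabla^2\phi(x^{(0)})^{-1}\preceq R^2 I$, i.e. correctly combining the Dikin-ellipsoid-inside-the-domain property with the passage between the ambient norm bound $R$ and the local norm at $x^{(0)}$. The rest is bookkeeping with the homotopy coordinate $\theta$ and elementary norm inequalities, together with the (standard) remark that the original optimum is attained so that the comparison point $(x^*,0)$ is available for bounding $\mathrm{OPT}_{\mathrm{mod}}$.
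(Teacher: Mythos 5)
Your proposal is correct and follows essentially the same route as the paper's proof: the same Dikin-ellipsoid argument yielding $(\nabla^{2}\phi(x^{(0)}))^{-1}\preceq R^{2}I$ for Part~1, and the same comparison of $\overline{\mathrm{OPT}}$ with $\frac{\delta}{LR}\cdot\mathrm{OPT}$ followed by the bound $\tau\leq 3\delta$ (your $\theta$) for Part~2. Your write-up is in fact slightly more explicit than the paper's in spelling out $\nabla\overline{\phi}(\overline{x})=(0,-1)$ and in flagging that $\delta\le 1$ is needed for $\delta^{2}\le\delta$, which the paper uses implicitly.
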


\begin{proof} For the first result, straightforward calculations
show that $(\overline{x},\overline{y},\overline{s})$ are feasible.

To compute $\|\overline{s}+\nabla\overline{\phi}(\overline{x})\|_{\overline{x}}^{*}$,
note that
\[
\|\overline{s}+\nabla\overline{\phi}(\overline{x})\|_{\overline{x}}^{*}=\|\frac{\delta}{LR}\cdot c\|_{\nabla^{2}\phi(x^{(0)})^{-1}}.
\]
Lemma \ref{lem:self_concordant} shows that $x\in\R^{n}$ such that
$\|x-x^{(0)}\|_{x^{(0)}}\leq1$, we have that $x \in \prod_{i=1}^m K_{i}$ because
$x^{(0)}=\arg\min_{x}\sum_{i}\phi_{i}(x_{i})$. Hence, for any $v$
such that $v^{\top}\nabla^{2}\phi(x^{(0)})v\leq1$, we have that $x^{(0)}\pm v\in\prod_{i=1}^m K_{i}$
and hence $\|x^{(0)}\pm v\|_{2}\leq R$. This implies $\|v\|_{2}\leq R$
for any $v^{\top}\nabla^{2}\phi(x^{(0)})v\leq1$. Hence, $(\nabla^{2}\phi(x^{(0)}))^{-1}\preceq R^{2}\cdot I$.
Hence, we have
\[
\|\overline{s}+\nabla\overline{\phi}(\overline{x})\|_{\overline{x}}^{*}=\|\frac{\delta}{LR}\cdot c\|_{\nabla^{2}\phi(x^{(0)})^{-1}}\leq\|\frac{\delta}{L}\cdot c\|_{2}\leq\delta.
\]

For the second result, we let $\text{OPT}=\min_{Ax=b,x\in\prod_{i=1}^m K_{i}}c^{\top}x$
and $\overline{\text{OPT}}=\min_{\overline{A}\overline{x}=b,\overline{x}\in \prod_{i=1}^m K_{i}\times\R_{+}}\overline{c}^{\top}\overline{x}$.
For any feasible $x$ in the original problem, $\overline{x}=\left[\begin{array}{c}
x\\
0
\end{array}\right]$ is a feasible in the modified problem. Therefore, we have that 
\[
\overline{\text{OPT}}\leq\frac{\delta}{LR}\cdot c^{\top}x=\frac{\delta}{LR}\cdot\text{OPT}.
\]

Given a feasible $\overline{x}$ with additive error $\delta^{2}$.
Write $\overline{x}=\left[\begin{array}{c}
\overline{x}_{1:n}\\
\tau
\end{array}\right]$ for some $\tau\geq0$. We can compute $\ov c^{\top}\ov x$ which
is $\frac{\delta}{LR}\cdot c^{\top}\ov x_{1:n}+\tau$. Then, we have
\begin{equation}
\frac{\delta}{LR}\cdot c^{\top}\overline{x}_{1:n}+\tau\leq\overline{\text{OPT}}+\delta^{2}\leq\frac{\delta}{LR}\cdot\text{OPT}+\delta^{2}.\label{eq:eq1_appendix}
\end{equation}
Hence, we can upper bound the OPT of the transformed program as follows:
\[
c^{\top}\overline{x}_{1:n}=\frac{LR}{\delta}\cdot\frac{\delta}{LR}c^{\top}\ov x_{1:n}\leq\frac{LR}{\delta}\left(\frac{\delta}{LR}\cdot\text{OPT}+\delta^{2}\right)=\text{OPT}+LR\cdot\delta,
\]
where the second step follows by \eqref{eq:eq1_appendix}.

For the feasibility, we have that $\tau\leq-\frac{\delta}{LR}\cdot c^{\top}\overline{x}_{1:n}+\frac{\delta}{LR}\cdot\text{OPT}+\delta^{2}\leq\delta+\delta+\delta$
because $\text{OPT}=\min_{Ax=b,x\geq0}c^{\top}x\leq LR$ and that
$c^{\top}\overline{x}_{1:n}\leq LR$. The constraint in the new polytope
shows that 
\[
A\overline{x}_{1:n}+(b-Ax^{(0)})\tau=b.
\]
Rewriting it, we have $A\overline{x}_{1:n}-b=(Ax^{(0)}-b)\tau$ and
hence 
\[
\|A\overline{x}_{1:n}-b\|_{1}\leq\|Ax^{(0)}-b\|_{1}\cdot\tau.
\]
\end{proof}

\begin{lemma}\label{lem:apx_center_imply_gap}Let $\phi_{i}(x_{i})$
be a $\nu_{i}$-self-concordant barrier. Suppose we have $\frac{s_{i}}{t}+\nabla\phi_{i}(x_{i})=\mu_{i}$
for all $i\in[m]$, $A^{\top}y+s=c$ and $Ax=b$. Suppose that $\|\mu_{i}\|_{x,i}^{*}\leq1$
for all $i$, we have that 
\[
\left\langle c,x\right\rangle \leq\left\langle c,x^{*}\right\rangle +4t\nu
\]
where $x^{*}=\arg\min_{Ax=b,x\in \prod_{i=1}^m K_{i}}c^{\top}x$ and $\nu = \sum_{i=1}^m \nu_{i}$.

\end{lemma}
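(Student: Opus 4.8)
\medskip
\noindent\textbf{Proof proposal.}
Write $G := \langle c,x\rangle - \langle c,x^{*}\rangle \ge 0$; the goal is $G \le 4t\nu$. The plan is to first reduce to a ``gap in $s$'' and then estimate it block by block. Since $c = A^{\top}y + s$ and $A(x-x^{*}) = Ax - Ax^{*} = b-b = 0$, one gets $G = \langle A^{\top}y+s,\,x-x^{*}\rangle = \langle s,\,x-x^{*}\rangle = \sum_{i=1}^{m}\langle s_{i},\,x_{i}-x_{i}^{*}\rangle$, so it is enough to show $\langle s_{i},x_{i}-x_{i}^{*}\rangle \le t\bigl(\nu_{i}+O(\sqrt{\nu_{i}})\bigr)$ for each $i$: using $\nu_{i}\ge 1$ (hence $m\le\nu$) and Cauchy--Schwarz, $\sum_{i}\sqrt{\nu_{i}}\le\sqrt{m\nu}\le\nu$, so the total is $t(\nu+O(\nu))$, which is $\le 4t\nu$ after pinning down the constants.

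For the per-block bound I would \emph{not} split $s_{i} = t\mu_{i}-t\nabla\phi_{i}(x_{i})$ and estimate $\langle\mu_{i},x_{i}-x_{i}^{*}\rangle$ and $\langle\nabla\phi_{i}(x_{i}),x_{i}^{*}-x_{i}\rangle$ separately: the second term is $\le\nu_{i}$ by the first inequality of Lemma~\ref{lem:self_concordant} (applied at $x_{i}$ with the point $x_{i}^{*}\in K_{i}=\overline{\mathrm{dom}\,\phi_{i}}$, valid by continuity since that expression is affine in $x_{i}^{*}$), but the first term can be arbitrarily large in absolute value, and only the sum is controlled -- so this split destroys a necessary cancellation. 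Instead, the key idea is to keep $s_{i}/t = \mu_{i}-\nabla\phi_{i}(x_{i})$ intact and route through the minimizer of the shifted barrier $\psi_{i}(w):=\phi_{i}(w)+\langle s_{i}/t,\,w\rangle$. Because $K_{i}$ is compact and $\phi_{i}\to+\infty$ on $\partial K_{i}$, $\psi_{i}$ attains its minimum at some $w_{i}^{*}\in\mathrm{int}\,K_{i}$ with $\nabla\phi_{i}(w_{i}^{*}) = -s_{i}/t$. Since $\psi_{i}$ differs from $\phi_{i}$ by a linear term it has the same Hessian and inherits the third--derivative bound, so it is self-concordant, and $\nabla\psi_{i}(x_{i}) = \mu_{i}$ has Newton decrement $\|\mu_{i}\|_{\nabla^{2}\phi_{i}(x_{i})^{-1}}\le 1$; by the usual self-concordant Newton analysis together with Theorem~\ref{thm:hessiansc} this gives $\|x_{i}-w_{i}^{*}\|_{w_{i}^{*}} = O(1)$ (the borderline value $\|\mu_{i}\|^{*}=1$ is harmless -- in every use of this lemma that quantity is $\ll 1$, or one may simply strengthen the hypothesis to $\|\mu_{i}\|^{*}\le\tfrac12$).

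Then I would write
\begin{align*}
\langle s_{i}/t,\,x_{i}-x_{i}^{*}\rangle
= \langle -\nabla\phi_{i}(w_{i}^{*}),\,x_{i}-x_{i}^{*}\rangle
= \langle \nabla\phi_{i}(w_{i}^{*}),\,x_{i}^{*}-w_{i}^{*}\rangle + \langle \nabla\phi_{i}(w_{i}^{*}),\,w_{i}^{*}-x_{i}\rangle ,
\end{align*}
bound the first term by $\nu_{i}$ (first inequality of Lemma~\ref{lem:self_concordant}, now at $w_{i}^{*}$ with the point $x_{i}^{*}\in K_{i}$), and bound the second, by Cauchy--Schwarz in $\|\cdot\|_{w_{i}^{*}}$, by $\|\nabla\phi_{i}(w_{i}^{*})\|_{w_{i}^{*}}^{*}\cdot\|w_{i}^{*}-x_{i}\|_{w_{i}^{*}} \le \sqrt{\nu_{i}}\cdot O(1)$, where $\|\nabla\phi_{i}(w_{i}^{*})\|_{w_{i}^{*}}^{*}\le\sqrt{\nu_{i}}$ is precisely the gradient condition in the definition of a $\nu_{i}$-self-concordant barrier. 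Summing over $i$ then closes the argument. The hard part is exactly this block bound: seeing that the obvious estimates ($\|\cdot\|^{*}\le 1$ plus Cauchy--Schwarz, which is useless because $\|x_{i}-x_{i}^{*}\|_{x_{i}}$ has no bound in terms of $\nu_{i}$, or the $\mu$/$\nabla\phi$ split) are all too lossy, and that the cure is to realize $s_{i}/t$ as $-\nabla\phi_{i}(w_{i}^{*})$ so that the barrier's own gradient bound and the $O(1)$ proximity of $x_{i}$ to $w_{i}^{*}$ carry the estimate; the leftover details -- the case $x_{i}^{*}\in\partial K_{i}$ (perturb $x^{*}$ inward, pass to the limit) and tracking constants to reach $4$ -- are routine.
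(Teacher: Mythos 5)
Your proposal is correct in outline but takes a genuinely different route from the paper. The paper never introduces an auxiliary central-path point: it interpolates $x_{\alpha}=(1-\alpha)x+\alpha x^{*}$ and applies \emph{both} inequalities of Lemma~\ref{lem:self_concordant} at $x_{\alpha}$, so that the dangerous term $\sum_{i}\|\mu_{i}\|_{x_{i}}^{*}\|x_{i}^{*}-x_{i}\|_{x_{i}}$ (which, as you correctly observe, has no a priori bound) is absorbed by the monotonicity lower bound $\sum_{i}\frac{\alpha\|x_{i}^{*}-x_{i}\|_{x_{i}}^{2}}{1+\alpha\|x_{i}^{*}-x_{i}\|_{x_{i}}}$, leaving a residue of at most $m/\alpha$ per the elementary identity $u-\frac{\alpha u^{2}}{1+\alpha u}=\frac{u}{1+\alpha u}\le\frac{1}{\alpha}$; taking $\alpha=\frac12$ and $m\le\nu$ gives the factor $4$. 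You instead reduce to the duality gap $\langle s,x-x^{*}\rangle$ and realize $s_{i}/t$ as $-\nabla\phi_{i}(w_{i}^{*})$ for the minimizer $w_{i}^{*}$ of the shifted barrier, then invoke the gradient bound $\|\nabla\phi_{i}(w_{i}^{*})\|_{w_{i}^{*}}^{*}\le\sqrt{\nu_{i}}$ together with Newton-decrement proximity of $x_{i}$ to $w_{i}^{*}$ — the classical central-path duality-gap argument. Both are sound, but note the trade-off you already half-acknowledge: the Newton-decrement proximity bound $\|x_{i}-w_{i}^{*}\|_{w_{i}^{*}}\le O(1)$ genuinely degenerates at $\|\mu_{i}\|_{x_{i}}^{*}=1$ (compactness gives existence of $w_{i}^{*}$ but not a distance bound), so your proof as written only establishes the lemma under a hypothesis such as $\|\mu_{i}\|_{x_{i}}^{*}\le\frac12$, whereas the paper's cancellation works up to and including $1$. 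This weakening is harmless for the paper's single application in Theorem~\ref{thm:roubst_IPM}, where the relevant quantity is $\log(80m/\alpha)/\lambda=O(2^{-16})$, but it does mean your argument proves a slightly weaker statement than the one asserted. Your block-wise summation $\sum_{i}\sqrt{\nu_{i}}\le\sqrt{m\nu}\le\nu$ and the boundary perturbation for $x^{*}\in\partial K_{i}$ are fine.
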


\begin{proof}

Let $x_{\alpha}=(1-\alpha)x+\alpha x^{*}$ for some $\alpha$ to be
chosen. By Lemma \ref{lem:self_concordant}, we have that $\left\langle \nabla\phi(x_{\alpha}),x^{*}-x_{\alpha}\right\rangle \leq\nu$.
Hence, we have $\frac{\nu}{1-\alpha}\geq\left\langle \nabla\phi(x_{\alpha}),x^{*}-x\right\rangle $.
Hence, we have
\begin{align*}
\frac{\nu\alpha}{1-\alpha} & \geq\left\langle \nabla\phi(x_{\alpha}),x_{\alpha}-x\right\rangle \\
 & =\left\langle \nabla\phi(x_{\alpha})-\nabla\phi(x),x_{\alpha}-x\right\rangle +\left\langle \mu-\frac{s}{t},x_{\alpha}-x\right\rangle \\
 & \geq\sum_{i=1}^m \frac{\|x_{\alpha,i}-x_{i}\|_{x_{i}}^{2}}{1+\|x_{\alpha,i}-x_{i}\|_{x_{i}}}+\left\langle \mu,x_{\alpha}-x\right\rangle -\frac{1}{t}\left\langle c-A^{\top}y,x_{\alpha}-x\right\rangle \\
 & \geq\sum_{i=1}^m \frac{\alpha^{2}\|x_{i}^{*}-x_{i}\|_{x_{i}}^{2}}{1+\alpha\|x_{i}^{*}-x_{i}\|_{x_{i}}}-\alpha \sum_{i=1}^m \|\mu_{i}\|_{x_{i}}^{*}\|x_{i}^{*}-x_{i}\|_{x_{i}}-\frac{\alpha}{t}\left\langle c,x^{*}-x\right\rangle .
\end{align*}
where we used Lemma \ref{lem:self_concordant} on the second first,
$Ax_{\alpha}=Ax$ on the second inequality. Hence, we have
\[
\frac{\left\langle c,x\right\rangle }{t}\leq\frac{\left\langle c,x^{*}\right\rangle }{t}+\frac{\nu}{1-\alpha}+\sum_{i=1}^m \|\mu_{i}\|_{x_{i}}^{*}\|x_{i}^{*}-x_{i}\|_{x_{i}}-\sum_{i=1}^m \frac{\alpha\|x_{i}^{*}-x_{i}\|_{x_{i}}^{2}}{1+\alpha\|x_{i}^{*}-x_{i}\|_{x_{i}}}.
\]
Using $\|\mu_{i}\|_{x_{i}}^{*}\leq1$ for all $i$, we have
\[
\frac{\left\langle c,x\right\rangle }{t}\leq\frac{\left\langle c,x^{*}\right\rangle }{t}+\frac{\nu}{1-\alpha} + \sum_{i=1}^m \frac{\|x_{i}^{*}-x_{i}\|_{x_{i}}}{1+\alpha\|x_{i}^{*}-x_{i}\|_{x_{i}}}\leq\frac{\left\langle c,x^{*}\right\rangle }{t}+\frac{\nu}{1-\alpha}+\frac{m}{\alpha}.
\]
Setting $\alpha=\frac{1}{2}$, we have $\left\langle c,x\right\rangle \leq\left\langle c,x^{*}\right\rangle +2t(\nu+m)\leq\left\langle c,x^{*}\right\rangle +4t\nu$
because the self-concordance $\nu_{i}$ is always larger than $1$.

\end{proof}

\section{Basic Properties of Subsampled Hadamard Transform Matrix}\label{sec:fastjl}
This section provides some standard calculations about sketching matrices, it can be found in previous literatures \cite{psw17}. Usually, the reason for using subsampled randomized Hadamard/Fourier transform \cite{ldfu13} is multiplying the matrix with $k$ vectors only takes $k n \log n$ time. Unfortunately, in our application, the best way to optimize the running is using matrix multiplication directly (without doing any fast Fourier transform \cite{ct65}, or more fancy sparse Fourier transform \cite{hikp12b,hikp12a,price13,ikp14,ik14,ps15,ckps16,k16,k17,nsw19}). In order to have an easy analysis, we still use subsampled randomized Hadamard/Fourier matrix.

\subsection{Concentration inequalities}
We first state a useful for concentration,
\begin{lemma}[Lemma 1 on page 1325 of \cite{lm00}]\label{lem:chi_squared}
Let $X \sim {\cal X}_k^2$ be a chi-squared distributed random variable with $k$ degrees of freedom. Each one has zero mean and $\sigma^2$ variance. Then
\begin{align*}
& ~ \Pr[ X - k \sigma^2 \geq (2 \sqrt{ k t } + 2 t ) \sigma^2 ] \leq \exp( - t ) \\
& ~ \Pr[ k \sigma^2 - X \geq 2 \sqrt{ k t } \sigma^2 ] \leq \exp( - t )
\end{align*}
\end{lemma}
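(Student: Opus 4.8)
The plan is to prove Lemma~\ref{lem:chi_squared} by the classical exponential‑moment (Chernoff) argument of Laurent and Massart. First I would reduce to the unit‑variance case: writing $X = \sigma^2 \sum_{i=1}^k g_i^2$ with $g_1, \dots, g_k$ i.i.d.\ standard Gaussians, both claimed inequalities follow from the corresponding statements for $Y := \sum_{i=1}^k g_i^2$, namely $\Pr[Y - k \geq 2\sqrt{kt} + 2t] \leq e^{-t}$ and $\Pr[k - Y \geq 2\sqrt{kt}] \leq e^{-t}$, simply by scaling the deviation by $\sigma^2$. Then I would record the moment generating function of a single squared Gaussian, $\E[e^{\lambda (g_i^2 - 1)}] = (1 - 2\lambda)^{-1/2} e^{-\lambda}$ valid for $\lambda < 1/2$, and set $\psi(\lambda) := \log \E[e^{\lambda(g_i^2 - 1)}] = -\tfrac{1}{2} \log(1 - 2\lambda) - \lambda$.

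The two elementary estimates I would establish are $\psi(\lambda) \leq \frac{\lambda^2}{1 - 2\lambda}$ for $0 \leq \lambda < 1/2$ and $\psi(-\lambda) \leq \lambda^2$ for $\lambda \geq 0$; the first follows by comparing the power series $-\tfrac{1}{2}\log(1-2\lambda) = \lambda + \sum_{j \geq 2} \frac{(2\lambda)^j}{2j}$ termwise against $\lambda + \tfrac{1}{4}\sum_{j \geq 2}(2\lambda)^j$, and the second from $\tfrac{d}{d\lambda}(\lambda - \tfrac{1}{2}\log(1+2\lambda)) = \frac{2\lambda}{1+2\lambda} \leq 2\lambda$. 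By independence these give $\log \E[e^{\lambda(Y-k)}] \leq \frac{k\lambda^2}{1-2\lambda}$ and $\log \E[e^{-\lambda(Y-k)}] \leq k\lambda^2$. Applying Markov to the exponentials, for the upper tail $\Pr[Y - k \geq u] \leq \exp(-\lambda u + \frac{k\lambda^2}{1-2\lambda})$; the substitution $\mu = 1 - 2\lambda$ turns the exponent into $-\tfrac{(1-\mu)u}{2} + \frac{k(1-\mu)^2}{4\mu}$, whose infimum over $\mu \in (0,1]$ is attained at $\mu = \sqrt{k/(k+2u)}$ and equals $-\tfrac{1}{4}(\sqrt{k+2u} - \sqrt{k})^2$. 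Taking $u = 2\sqrt{kt} + 2t$ so that $k + 2u = (\sqrt{k} + 2\sqrt{t})^2$ makes this exponent exactly $-t$, which is the first bound. For the lower tail, $\Pr[k - Y \geq u] \leq \exp(-\lambda u + k\lambda^2)$ is minimized at $\lambda = u/(2k)$ with value $\exp(-u^2/(4k))$, and $u = 2\sqrt{kt}$ gives $e^{-t}$, the second bound. Undoing the normalization recovers the statement for $X$ and $\sigma^2$.

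As this is a standard concentration inequality, nothing here is deep; the only computations with content are the two termwise/derivative comparisons bounding $\psi$ and the one‑line minimization of the upper‑tail Chernoff exponent. The latter is the only place where an arithmetic slip is easy, so I would take care to verify both that $k + 2u$ is a perfect square at $u = 2\sqrt{kt} + 2t$ and that the minimizing $\mu$ lies in $(0,1]$ (equivalently $\lambda^\star \in [0, 1/2)$). Alternatively, since the paper already cites \cite{lm00}, one may simply invoke that reference and omit the derivation.
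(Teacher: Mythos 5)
Your proposal is correct: the reduction to the unit-variance case, the two bounds on the log-moment-generating function $\psi$, and the optimization of the Chernoff exponent (including the check that $k+2u=(\sqrt{k}+2\sqrt{t})^2$ at $u=2\sqrt{kt}+2t$ and that the minimizer satisfies $\lambda^\star\in[0,1/2)$) all go through. The paper gives no proof of this lemma — it is imported verbatim from Laurent and Massart — and your argument is exactly the standard derivation from that reference, so there is nothing to reconcile.
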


\begin{lemma}[Khintchine's Inequality]
Let $\sigma_1, \cdots, \sigma_n$ be i.i.d. sign random variables, and let $z_1, \cdots, z_n$ be real numbers. Then there are constants $C, C' > 0$ so that
\begin{align*}
\Pr \left[ \left| \sum_{i=1}^n z_i \sigma_i \right| \geq C t \| z \|_2 \right] \leq \exp( - C' t^2 )
\end{align*}
\end{lemma}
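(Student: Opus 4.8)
The plan is to prove the stated tail bound via the standard subgaussian moment-generating-function argument (Chernoff/Markov on the exponential), which yields exactly a Gaussian-type tail for Rademacher sums. First I would reduce to the normalized case: since both sides are homogeneous of degree one in $z$, it suffices to treat $\|z\|_2 = 1$, so write $S = \sum_{i=1}^n z_i \sigma_i$ and aim to bound $\Pr[|S| \ge a]$ for $a = Ct$.

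The key step is to control the moment generating function of $S$. By independence of the $\sigma_i$, for any $\lambda \in \R$ we have
\begin{align*}
\E\left[e^{\lambda S}\right] = \prod_{i=1}^n \E\left[e^{\lambda z_i \sigma_i}\right] = \prod_{i=1}^n \cosh(\lambda z_i).
\end{align*}
Next I would invoke the elementary inequality $\cosh(x) \le e^{x^2/2}$ for all real $x$ (which follows by comparing Taylor coefficients: $\frac{1}{(2k)!} \le \frac{1}{2^k k!}$). This gives $\E[e^{\lambda S}] \le \prod_{i=1}^n e^{\lambda^2 z_i^2/2} = e^{\lambda^2 \|z\|_2^2 / 2} = e^{\lambda^2/2}$. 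Applying Markov's inequality to $e^{\lambda S}$ for $\lambda > 0$ yields $\Pr[S \ge a] \le e^{-\lambda a + \lambda^2/2}$, and optimizing by taking $\lambda = a$ gives $\Pr[S \ge a] \le e^{-a^2/2}$. By symmetry of the Rademacher distribution, the same bound holds for $\Pr[-S \ge a]$, so a union bound gives $\Pr[|S| \ge a] \le 2 e^{-a^2/2}$.

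Finally I would translate this into the claimed form with absolute constants. Setting $a = 2t$ (i.e.\ taking $C = 2$ in the normalized case, which becomes $C\|z\|_2$ in general by undoing the normalization), we get $\Pr[|S| \ge 2t\|z\|_2] \le 2 e^{-2t^2}$; and since $2 e^{-2t^2} \le e^{-t^2}$ whenever $t^2 \ge \ln 2$, while the bound is vacuous (probability $\le 1$) for smaller $t$ after possibly shrinking the constant, one can choose, e.g., $C = 2$ and $C' = 1$ (or any pair making $2e^{-2C't^2/C'} \le e^{-C't^2}$ hold trivially in all regimes) to obtain the statement. The only mild obstacle is this last bit of constant bookkeeping — making sure the factor of $2$ from the symmetrization is absorbed cleanly into the exponential across all ranges of $t$ — but this is routine, since the inequality is trivially true when its right-hand side exceeds $1$.
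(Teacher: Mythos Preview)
The paper does not prove this lemma; it is stated without proof as a standard fact. Your argument via the moment-generating-function bound $\E[e^{\lambda S}] \le e^{\lambda^2 \|z\|_2^2/2}$ (using $\cosh(x) \le e^{x^2/2}$ and independence) followed by the Chernoff--Markov method is exactly the standard proof and is correct in substance.

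One small remark on your final paragraph: as literally stated, a bound of the form $\Pr[\,\cdot\,] \le e^{-C't^2}$ with no leading constant cannot hold for \emph{all} $t>0$ and all $z$. For instance, take $n=1$, $z_1=1$: then $|S|=1$ almost surely, so $\Pr[|S|\ge Ct]=1$ for every $0<t\le 1/C$, while $e^{-C't^2}<1$. Your escape clause (``the inequality is trivially true when its right-hand side exceeds $1$'') does not apply, since $e^{-C't^2}\le 1$ always. This is a minor sloppiness in the lemma's statement rather than a defect in your argument; the clean and provable bound is $2e^{-C't^2}$, or equivalently the stated form restricted to $t\ge 1$. The paper only invokes the lemma with $t=\sqrt{\log(n/\delta)}$, so the issue is immaterial for its application.
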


\begin{lemma}[Bernstein Inequality]\label{lem:bernstein_inequality}
Let $X_1, \cdots, X_n$ be independent zero-mean random variables. Suppose that $|X_i| \leq M$ almost surely, for all $i$. Then, for all positive $t$,
\begin{align*}
\Pr \left[ \sum_{i=1}^n X_i > t \right] \leq \exp \left( - \frac{ t^2 / 2 }{ \sum_{j=1}^n \E [X_j^2] + M t / 3 } \right)
\end{align*}
\end{lemma}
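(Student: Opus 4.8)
The plan is to prove this by the classical exponential-moment (Chernoff) method; the statement is one-sided, so no symmetrization is needed. First I would fix a parameter $\theta > 0$, to be chosen at the end, and apply Markov's inequality to the nonnegative random variable $\exp(\theta \sum_{i=1}^n X_i)$:
\[
\Pr\left[\sum_{i=1}^n X_i > t\right] \le e^{-\theta t}\, \E\left[\exp\left(\theta \sum_{i=1}^n X_i\right)\right] = e^{-\theta t} \prod_{i=1}^n \E[\exp(\theta X_i)],
\]
where the last equality uses independence of the $X_i$. So the whole problem reduces to controlling the moment generating function of a single bounded, mean-zero variable.

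For that step I would expand $\exp(\theta X_i) = 1 + \theta X_i + \sum_{k\ge 2} \theta^k X_i^k / k!$, take expectations, use $\E[X_i] = 0$ to kill the linear term, and bound $|\E[X_i^k]| \le M^{k-2}\E[X_i^2]$ for $k \ge 2$ (this is exactly where $|X_i| \le M$ is used). Using the elementary inequality $k! \ge 2 \cdot 3^{k-2}$ for $k \ge 2$, the tail of the series is dominated by a geometric series with ratio $\theta M/3$, so that for $\theta M < 3$,
\[
\E[\exp(\theta X_i)] \le 1 + \frac{\theta^2 \E[X_i^2]/2}{1 - \theta M/3} \le \exp\left(\frac{\theta^2 \E[X_i^2]/2}{1 - \theta M/3}\right),
\]
where the last step is $1 + x \le e^x$. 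Writing $\sigma^2 = \sum_{i=1}^n \E[X_i^2]$ and multiplying over $i$ gives $\Pr[\sum_i X_i > t] \le \exp\big(-\theta t + \frac{\theta^2 \sigma^2/2}{1-\theta M/3}\big)$.

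Finally I would optimize the exponent over $\theta \in (0, 3/M)$. Taking $\theta = t/(\sigma^2 + Mt/3)$ makes $1 - \theta M/3 = \sigma^2/(\sigma^2 + Mt/3)$, so the second term becomes $\tfrac12 \theta t$ and the exponent collapses to $-\theta t + \tfrac12 \theta t = -\tfrac12 t^2/(\sigma^2 + Mt/3)$, which is exactly the claimed bound. The main (mild) obstacles are the moment estimate in the middle step — obtaining the constant $3$ in the denominator requires the bound $k! \ge 2\cdot 3^{k-2}$ rather than the cruder $k! \ge 2^{k-1}$ — and checking that the chosen $\theta$ is admissible, i.e. $t/(\sigma^2 + Mt/3) < 3/M$, which holds whenever $\sigma^2 > 0$ (and the claim is trivial when $\sigma^2 = 0$). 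This is essentially a textbook computation and could alternatively simply be cited.
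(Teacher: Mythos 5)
Your proof is correct and complete: the Chernoff exponential-moment argument, the moment bound $|\E[X_i^k]|\leq M^{k-2}\E[X_i^2]$ combined with $k!\geq 2\cdot 3^{k-2}$, and the choice $\theta = t/(\sigma^2+Mt/3)$ all check out, including the admissibility of $\theta$. The paper states this lemma as a classical result without giving any proof, so there is nothing to compare against; your derivation is the standard textbook proof of Bernstein's inequality and fills the gap correctly.
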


\subsection{Properties obtained by random projection}

\begin{remark}
The Subsampled Randomized Hadamard Transform \cite{ldfu13} can be defined as $R = S H_n \Sigma \in \R^{b \times}$, where $\Sigma$ is an $n \times n$ diagonal matrix with i.i.d. diagonal entries $\Sigma_{i,i}$ in which $\Sigma_{i,i}=1$ with probability $1/2$, and $\Sigma_{i,i} = -1$ with probability $1/2$. $H_n$ refers to the Hadamard matrix of size $n$, which we assume is a power of $2$. The $b \times n$ matrix $S$ samples $b$ coordinates of $n$ dimensional vector uniformly at random. If we replace the definition of sketching matrix in Lemma~\ref{lem:sketch_vector} by Subsampled Randomized Hadamard Transform and let $\ov{R} = S H_n$, then the same proof will go through.
\end{remark}

\begin{lemma}[Expectation, variance, absolute guarantees for sketching a fixed vector]\label{lem:sketch_vector}
Let $h \in \R^n$ be a fixed vector. Let $\ov{R} \in \R^{b \times n}$ denote a random matrix where each entry is i.i.d. sampled from $+1/\sqrt{b}$ with probability $1/2$ and $-1/\sqrt{b}$ with probability $1/2$. Let $\Sigma \in \R^{n \times n}$ denote a diagonal matrix where each entry is $1$ with probability $1/2$ and $-1$ with probability $1/2$. Let $R = \ov{R} \Sigma$, then we have
\begin{align*}
\E[ R^\top R h ] = h, ~~~ \E[ (R^\top R h)_i^2 ] \leq h_i^2 + \frac{1}{b} \| h \|_2^2, ~~~ \Pr \left[ | ( R^\top R h )_i - h_i | > \| h \|_2 \frac{\log ( n / \delta ) }{ \sqrt{b} } \right] \leq \delta .
\end{align*}
\end{lemma}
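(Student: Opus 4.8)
The plan is to expand $(R^{\top}R h)_i$ coordinate-wise and dispatch the three claims in increasing order of difficulty. First I would observe that, since $\Sigma$ is a diagonal sign matrix independent of $\overline{R}$, the matrix $R=\overline{R}\Sigma$ has exactly the same entrywise law as $\overline{R}$ (each entry is still an independent symmetric $\pm 1/\sqrt{b}$ variable, independent across entries), so it suffices to analyze $\overline{R}$. Write $\overline{R}_{k,j}=\epsilon_{k,j}/\sqrt{b}$ with $\{\epsilon_{k,j}\}$ i.i.d.\ Rademacher. The diagonal of $R^{\top}R$ is then exactly $\sum_{k}\overline{R}_{k,i}^{2}=1$, so $(R^{\top}R h)_i = h_i + Z_i$ with
\[
Z_i \;=\; \sum_{j\neq i} h_j\,(R^{\top}R)_{i,j} \;=\; \frac{1}{b}\sum_{k=1}^{b}\epsilon_{k,i}\,Y_k, \qquad Y_k:=\sum_{j\neq i}\epsilon_{k,j}h_j .
\]

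For the expectation claim, every product $\epsilon_{k,i}\epsilon_{k,j}$ with $j\neq i$ has mean zero by independence, so $\E[Z_i]=0$ and hence $\E[R^{\top}R h]=h$. For the variance claim, I would square $Z_i$ and take the expectation: expanding $\E[Z_i^2]=\tfrac{1}{b^2}\sum_{j,j'\neq i}h_jh_{j'}\sum_{k,k'}\E[\epsilon_{k,i}\epsilon_{k,j}\epsilon_{k',i}\epsilon_{k',j'}]$, a Rademacher expectation is nonzero only when every variable appears with even multiplicity, which forces $k=k'$ and $j=j'$; the surviving terms give $\E[Z_i^2]=\tfrac{1}{b}\sum_{j\neq i}h_j^2\le \tfrac{1}{b}\|h\|_2^2$. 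Since $\E[(R^{\top}R h)_i^2]=h_i^2+2h_i\E[Z_i]+\E[Z_i^2]=h_i^2+\E[Z_i^2]$, the second bound follows.

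The concentration statement is the only real work, because $Z_i$ is a \emph{degree-two} homogeneous polynomial in the Rademacher variables and a single Khintchine bound does not apply. I would run a two-level argument. Conditioning on the columns $\{\epsilon_{k,j}\}_{j\neq i}$, the quantities $Y_k$ become fixed, and $Z_i=\tfrac{1}{b}\sum_k \epsilon_{k,i}Y_k$ is a Rademacher sum in the remaining independent variables $\{\epsilon_{k,i}\}_k$ with coefficient $\ell_2$-norm $\tfrac{1}{b}(\sum_k Y_k^2)^{1/2}$; Khintchine then gives $|Z_i|\lesssim \tfrac{t}{b}(\sum_k Y_k^2)^{1/2}$ with failure probability $e^{-\Omega(t^2)}$. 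It remains to control $\sum_k Y_k^2$: unconditionally, each $Y_k$ is itself a Rademacher sum with coefficient norm $(\sum_{j\neq i}h_j^2)^{1/2}\le\|h\|_2$, so Khintchine plus a union bound over $k\in[b]$ gives $|Y_k|\lesssim t\|h\|_2$ for all $k$ simultaneously, whence $\sum_k Y_k^2\lesssim b\,t^2\|h\|_2^2$. Combining, $|Z_i|\lesssim \tfrac{t^2}{\sqrt{b}}\|h\|_2$, and choosing $t=\Theta(\sqrt{\log(n/\delta)})$ (so all the union-bounded events over the at most $n$ relevant indices hold with probability $\ge 1-\delta$) yields $|Z_i|\le \|h\|_2\,\tfrac{\log(n/\delta)}{\sqrt{b}}$, which is the third claim. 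I expect the bookkeeping of the nested conditioning together with the union bound to absorb the failure probabilities to be the main (though routine) obstacle; alternatively one could quote the Hanson--Wright inequality for the bilinear form directly, using $\|A\|_F\asymp \tfrac{1}{\sqrt b}\|h\|_2$ and $\|A\|_{\mathrm{op}}\asymp \tfrac{1}{\sqrt b}\|h\|_2$, but the elementary two-step route keeps the argument self-contained and transfers verbatim to the subsampled Hadamard variant noted in the remark.
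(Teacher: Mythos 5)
Your proof is correct, and the first two claims (expectation and second moment) follow the paper's computation essentially verbatim. For the concentration bound, however, you decouple the degree-two Rademacher chaos in the opposite direction from the paper. The paper keeps the factorization $R=\ov{R}\Sigma$ and writes $(R^{\top}Rh)_i-h_i=\sum_{l\neq i}h_l\sigma_l\langle\ov{R}_{*,l},\sigma_i\ov{R}_{*,i}\rangle$; it applies Khintchine over the $\Sigma$-signs $\{\sigma_l\}$ conditioned on $\ov{R}$, and then controls the $n-1$ column inner products $|\langle\ov{R}_{*,l},\ov{R}_{*,i}\rangle|\lesssim\sqrt{\log(n/\delta)}/\sqrt{b}$ by a union bound over columns. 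You instead first observe (correctly) that $\ov{R}\Sigma$ has the same entrywise i.i.d.\ Rademacher law as $\ov{R}$, discard $\Sigma$, and decouple row-wise: Khintchine over the $i$-th column $\{\epsilon_{k,i}\}_k$ conditioned on the rest, followed by Khintchine plus a union bound over the $b$ rows to control $\sum_k Y_k^2$. Both are two-level Khintchine arguments producing the same $\|h\|_2\log(n/\delta)/\sqrt{b}$ bound (as the product of two $\sqrt{\log}$ factors); your union bound is over $b$ events rather than $n$, which is immaterial here. The one thing your reduction loses is the modularity the paper wants for its remark on the subsampled Hadamard transform, where $\ov{R}=SH_n$ is not entrywise random and only the $\Sigma$-signs carry the randomness used in the outer Khintchine step — so your "discard $\Sigma$" shortcut does not transfer verbatim to that variant, contrary to your closing claim, though your Hanson--Wright aside would (and there the operator norm of the coefficient matrix is in fact $O(\|h\|_2/b)$, smaller than you state, which only helps).
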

\begin{proof}

Let $R_{i,j}$ denote the entry at $i$-th row and $j$-th column in matrix $R \in \R^{b \times n}$. Let $R_{*,i} \in \R^b$ denote the vector in $i$-th column of $R$.

We first show expectation,
\begin{align*}
\E[ (R^\top R h)_i ] = & ~ \E[ \langle R, R_{*,i} h^\top \rangle ] \\
= & ~ \E \left[ \sum_{j=1}^b \sum_{l=1}^n R_{j,l} R_{j,i} h_l \right] \\
= & ~ \E \left[ \sum_{j=1}^b R_{j,i}^2 h_i \right] + \E \left[ \sum_{j=1}^b \sum_{l \in [n] \backslash i} R_{j,l} R_{j,i} h_l \right] \\
= & ~ h_i + 0 \\
= & ~ h_i
\end{align*}

Secondly, we prove the variance is small
\begin{align*}
\E [ ( R^\top R h_i )^2 ]
= & ~ \E [ \langle R , R_{*,i} h^\top \rangle^2 ] \\
= & ~ \E \left[ \left( \sum_{j=1}^b \sum_{l=1}^n R_{j,l} R_{j,i} h_l \right)^2 \right] \\
= & ~ \E \left[ \left( \sum_{j=1}^b R_{j,i}^2 h_i + \sum_{j=1}^b \sum_{l\in [n] \backslash i} R_{j,l} R_{j,i} h_l \right)^2 \right] \\
= & ~ \E \left[ \left( \sum_{j=1}^b R_{j,i}^2 h_i \right)^2 \right] + 2 \E \left[ \sum_{j'=1}^b R_{j',i}^2 h_i \sum_{j=1}^b \sum_{l\in [n] \backslash i} R_{j,l} R_{j,i} h_l  \right] \\
& ~ + \E \left[ \left( \sum_{j=1}^b \sum_{l\in [n] \backslash i} R_{j,l} R_{j,i} h_l \right)^2 \right] \\
= & ~ C_1 + C_2 + C_3,
\end{align*}
where the last step follows from defining those terms to be $C_1, C_2$ and $C_3$. For the term $C_1$, we have
\begin{align*}
C_1 = h_i^2 \E \left[ \left( \sum_{j=1}^b R_{j,i}^2 \right)^2 \right] = h_i^2 \E \left[ \sum_{j=1}^b R_{j,i}^4 +  \sum_{j'\neq j} R_{j,i}^2 R_{j',i}^2 \right] = h_i^2 \left( b \cdot \frac{1}{b^2} + b(b-1) \cdot \frac{1}{b^2} \right) = h_i^2
\end{align*}
For the second term $C_2$,
\begin{align*}
C_2 = 0.
\end{align*}
For the third term $C_3$,
\begin{align*}
C_3 = & ~ \E \left[ \left( \sum_{j=1}^b \sum_{ l \in [n] \backslash i } R_{j,l} R_{j,i} h_l \right)^2 \right] \\
= & ~ \E \left[ \sum_{j=1}^b \sum_{ l \in [n] \backslash i } R_{j,l}^2 R_{j,i}^2 h_l^2 \right] + \E \left[ \sum_{j=1}^b \sum_{ l \in [n] \backslash i } R_{j,l} R_{j,i} h_l \sum_{j' \in [b] \backslash j} \sum_{l' \in [n] \backslash i \backslash l} R_{j',l'} R_{j',i} h_{l'} \right] \\
= & ~ \sum_{j=1}^b \sum_{l\in [n] \backslash i} \frac{1}{d} \frac{1}{d} h_l^2 + 0 \leq \frac{1}{d} \| h \|_2^2
\end{align*}
Therefore, we have
\begin{align*}
\E[ ( R^\top R h )_i^2 ] \leq C_1 + C_2 + C_3 \leq h_i^2 + \frac{1}{b} \| h \|_2^2.
\end{align*}

Third, we prove the worst case bound with high probability. We can write $(R^\top R h)_i - h_i$ as follows
\begin{align*}
(R^\top R h)_i - h_i 
= & ~ \langle R, R_{*,i} h^\top \rangle - h_i \\
= & ~ \sum_{j=1}^b \sum_{l=1}^n R_{j,l} \cdot R_{j,i} \cdot h_l - h_i \\
= & ~ \sum_{j=1}^b R_{j,i}^2 h_i - h_i + \sum_{j=1}^b \sum_{l \in [n] \backslash i} R_{j,l} R_{j,i} \cdot h_l \\
= & ~ \sum_{j=1}^b \sum_{l \in [n] \backslash i} R_{j,l} R_{j,i} \cdot h_l & \text{~by~} R_{j,i}^2 = 1/b \\
= & ~ \sum_{l \in [n] \backslash i} h_l \langle R_{*,l}, R_{*,i} \rangle \\
= & ~ \sum_{l \in [n] \backslash i} h_l  \cdot  \langle \sigma_l \ov{R}_{*,l}, \sigma_i \ov{R}_{*,i} \rangle & \text{~by~} R_{*,l} = \sigma_l \ov{R}_{*,l}
\end{align*}

First, we apply Khintchine's inequality, we have
\begin{align*}
\Pr \left[ \left| \sum_{l \in [n] \backslash i} h_l \cdot \sigma_l \cdot \langle \ov{R}_{*,l}, \sigma_i \ov{R}_{*,i} \rangle  \right| \geq C t \left( \sum_{l \in [n] \backslash i} h_l^2 ( \langle \ov{R}_{*,l}, \sigma_i \ov{R}_{*,i} \rangle )^2 \right)^{1/2} \right] \leq \exp(-C't^2 )
\end{align*}
and choose $t = \sqrt{ \log ( n / \delta ) }$.

For each $l\neq i$, using \cite{ldfu13} we have
\begin{align*}
\Pr \left[ | \langle \ov{R}_{*,l} , \ov{R}_{*,i} \rangle | \geq \frac{ \sqrt{ \log ( n / \delta ) } }{ \sqrt{b} } \right] \leq \delta / n .
\end{align*}

Taking a union bound over all $l \in [n] \backslash i$, we have
\begin{align*}
| (R^\top R h)_i - h_i | \leq \| h \|_2 \frac{ \log ( n / \delta)  }{ \sqrt{b} }
\end{align*}
with probability $1 - \delta$.
\end{proof}

\section*{Acknowledgments}
The authors would like to thank Haotian Jiang, Swati Padmanabhan, Ruoqi Shen, Zhengyu Wang, Xin Yang and Peilin Zhong for useful discussions.
\end{document}